\documentclass[12pt]{article}
\usepackage[utf8]{inputenc}
\usepackage{amsmath}
\usepackage{amsfonts}
\usepackage{amssymb}
\usepackage{amsthm}
\usepackage{bm, bbm}
\usepackage{natbib}
\usepackage{mathrsfs}
\defcitealias{AOT}{AOT}
\makeatletter
\@ifpackageloaded{hyperref}{\newcommand{\mylabel}[2]{\protected@write\@auxout{}{\string\newlabel{#1}{{#2}{\thepage}{\@currentlabelname}{\@currentHref}{}}}}}{\newcommand{\mylabel}[2]{\protected@write\@auxout{}{\string\newlabel{#1}{{#2}{\thepage}}}}}
\makeatother

\usepackage{makeidx}
\usepackage{graphicx}
\usepackage{xcolor}
\usepackage{subcaption}
\usepackage{mwe}
\usepackage{enumerate,enumitem}

\usepackage{float}
\usepackage{longtable}
\usepackage{sectsty}
\usepackage{tabulary}
\usepackage{booktabs}

\usepackage[colorlinks=true,linkcolor=black,anchorcolor=black,citecolor=black,filecolor=black,menucolor=black,runcolor=black,urlcolor=black]{hyperref}

\usepackage{chngcntr}
\usepackage{apptools}
\AtAppendix{\counterwithin{proposition}{section}}
\AtAppendix{\counterwithin{lemma}{section}}
\AtAppendix{\counterwithin{corollary}{section}}

\usepackage{tikz}
\usetikzlibrary{arrows,shapes,trees,cd}
\usepackage[normalem]{ulem}
\newtheorem{definition}{Definition}

\newtheorem{corollary}{Corollary}
\newtheorem{lemma}{Lemma}

\newtheorem{proposition}{Proposition}

\usepackage[margin=1in]{geometry}

\def\FT{{\hat{\mathbf{F}}}\!{\,^\top}}

\title{SYSTEMIC RISK IN FINANCIAL NETWORKS REVISITED: DEBT DILUTION AS A BACKDOOR BAIL-IN\thanks{For helpful comments, thanks to
Franklin Allen,
Vladimir Asriyan, Ana Babus,
Bruno Biais,
Philip Bond,
Agostino Capponi,
Tetiana Davydiuk,
Doug Diamond,
Darrell Duffie,
Phil Dybvig,
Douglas Gale,
Todd Gormley,
Denis Gromb,
Zhiguo He,
Florian Heider,
Laurie Hodrick,
Sebastian Infante,
Dalida Kadyrzhanova,
Martin Oehmke,
Mina Lee,
Francesco Palazzo,
Uday Rajan,
Adriano Rampini,
Victoria Vanasco,
Laura Veldkamp,
Thao Vuong,
Chaojun Wang,
David Webb,
Alexander Zentefis,
Jeff Zwiebel,
and audiences at the 2018 AFA,
Banca d'Italia,
EIEF,
the 2017 ECWFC Meeting,
the 2017 Summer Meeting of the FTG,
the 2017 FTG Summer School,
the FRA,
the 2017 Summer Symposium at Gerzensee,
the 2018 Maryland Junior Finance Conference, Washington University in St.\ Louis, and the 2017 Yale Junior Finance conference.
Thanks to Shunsuke Matsuno and Dayou Xi for research assistance.
This paper replaces \cite{Netting}.}}

\author{ Jason Roderick Donaldson\footnote{USC and CEPR.}
     \and Giorgia Piacentino\footnote{USC, CEPR, ECGI, and NBER.}
      \and Xiaobo Yu\footnote{CU--Boulder.}   }
\date{   August 12, 2026
}

\begin{document}

\maketitle

\begin{abstract}

\noindent We develop a model of interbank networks with random liquidity shocks. Networks of dilutable debt---e.g., long-term, unsecured---facilitate efficient liquidity transfers: Shocked banks pledge interbank claims as collateral for new senior debt, diluting existing debt. Unlike with non-dilutable debt, indebtedness and connectedness are sources of stability, not fragility. {}Dilution is thus a ``backdoor bail-in'' that reallocates losses absent a resolution authority, trigger security, or ex post renegotiation. We uncover a class of networks, ``exponential networks,'' that implement optimal contingent transfers via plain debt. Yet exponential networks are not pairwise stable, whereas some core--periphery networks are, rationalizing observed interbank structures and their under-insurance against crises.

\end{abstract}

\thispagestyle{empty}

\newpage

\section{Introduction} \setcounter{page}{1} \thispagestyle{empty}

Interbank debts are viewed as a threat to financial stability, partly due to a theory literature that shows how a shock to one bank can transmit to its counterparties; as a result, tightly interconnected networks of debt are ``robust yet fragile,'' absorbing everyday shocks but amplifying extraordinary ones (see, notably, \cite*{AOT}, hereinafter AOT, \cite{Allen-Gale-2000}, and \cite{Eisenberg-Noe-2001}). \label{pp:bail-in-intro}{}This fragility has motivated policies that shift a distressed bank's losses onto its healthy counterparties, aiming to reallocate liquidity and contain distress---statutory bail-ins, contingent convertibles, and regulator-coordinated private recapitalizations.\footnote{Examples include statutory bail-in powers (\cite{Bolton-Oehmke-2019} and \cite{Kanik-2022}); contingent convertible securities (\cite{Flannery-2014} and \cite{Sundaresan-Wang-2015}); private recapitalizations and rescue arrangements (\cite{Leitner-2005}, \cite{Rogers-Veraart-2013}, \cite{Kanik-2020}, and \cite*{Bernard-Capponi-Stiglitz-2022}); and public interventions targeted to financial networks (\cite{Erol-2019}, \cite*{Capponi-Corell-Stiglitz-2022} and \cite{Jackson-Pernoud-2024}).} {}Such policies generally require something beyond ordinary debt: a resolution authority, an explicit contingent security, or an ex post agreement among counterparties.

The corporate finance literature points to a different possibility: Ordinary debt may itself implement such transfers. Debt that can be diluted---so new senior debt can be issued before existing debt is paid---allows a distressed debtor to raise new debt at the expense of old, thereby avoiding inefficient liquidation.\footnote{See, among others, \cite{Stulz-Johnson-1985}, \cite{Diamond-1993}, \cite{Hart-Moore-1995}, and \citeauthor*{Paradox} (\citeyear{Paradox}, \citeyear{Donaldson-Gromb-Piacentino-2025}); see also \cite*{Collateral} for a survey of the literature on collateral, dilution, and priority.} The mechanism, which relies on gross interbank claims, suggests a downside of netting out.\footnote{Existing evidence suggests that a substantial fraction of interbank debt has maturity long enough for dilution to be relevant. E.g.,\ several papers using German data report average maturities exceeding one year, with overnight debt accounting for only about 10\% of exposures (e.g., \cite{Bluhm-et-al-2016}, \cite{Craig-Ma-2021}, \cite{Craig-vonPeter-2014}, \cite{gabrieli2014network}, \cite{upperEstimatingBilateralExposures2004}). \cite{Kuo-et-al-2014} document the scarcity of U.S. maturity data and infer from payment records that roughly one quarter of interbank debt has term maturity.}$^,$\footnote{See, e.g., \cite{Allen-Gale-2000}, \cite{AOT}, \cite{ALLE/BABU/CARL/12}, \cite{Kusnetsov-Veraart-2018}, \cite{He-Li-2022} and \cite{Shu-2025}.}
Can gross positions be a source of insurance rather than contagion in the financial system?
If so, would netting them out destroy coinsurance?

 We develop a model of a financial system with dilutable interbank debt that suggests that the answer to both questions is yes: Gross debt positions insure banks against liquidity shocks, so netting them out destroys valuable risk sharing. In diametric contrast to what happens in networks of non-dilutable debt, high indebtedness and connectedness become sources of stability: More debt provides collateral against which distressed banks can borrow, and denser connections allow liquidity to flow from banks that have it to banks that need it.

High indebtedness alone does not guarantee efficiency. The distribution of debt across banks matters too. However, we identify a heretofore unexplored class of networks, which we call ``exponential networks,'' that implement the efficient allocation of liquidity for every realization of shocks. They are {}robust and never fragile, implementing a ``backdoor bail-in'': Ordinary debt achieves the desired contingent transfers without a resolution authority, contingent security, or ex post negotiation. \label{pp:R1-2}

{}That does not mean there is no role for outside intervention. Exponential networks are unlikely to arise via free contracting (in the jargon, they are not pairwise stable). Core--periphery networks like those observed in practice\footnote{See, e.g., \cite{Craig-vonPeter-2014}, \cite{gabrieli2014network}, and \cite{Craig-Ma-2021}.} are more likely to arise but under-insure against crises. Our analysis therefore suggests that, when the loss-absorbing capacity comes from within the financial system through dilution, some banks being ``too big to fail'' can improve rather than undermine systemic stability.

{}Prima facie, the model tells a policy maker to impose an exponential network of interbank debts. We would not be so literal given how abstract the model is. But two lessons seem robust: Symmetric networks cannot insure banks against large shocks and networks of non-dilutable debt cannot insure them at all. Policies that compress gross exposures, cap the largest ones, or curtail the superpriority of new secured claims may therefore trade real insurance for the appearance of safety.\footnote{Examples include EMIR portfolio-compression requirements, Basel capital deductions for banks' holdings of other G-SIBs' TLAC instruments and large-exposure limits, and U.S. QFC stay and clean-holding-company rules.}

To study debt dilution in the financial system, we embed the liquidity-risk framework of \cite{Holmstrom-Tirole-1998} in a network. Banks have imperfectly pledgeable long-term assets and may suffer liquidity shocks before those assets mature. They also owe debts to and from one another, which can be seen as cross-holdings of loans or bonds. In the baseline, interbank debt matures after the shock, so it can be diluted with new senior debt, representing, e.g., repos, which have superpriority in bankruptcy (\cite{Duffie-Skeel-2012}). In a benchmark, it matures at the time of the shock, too soon to be diluted to raise liquidity.

The benchmark turns out to be isomorphic to \citetalias{AOT}'s model (Lemma~\ref{l:isomorphism}), allowing us to import their measures of connectedness and recover (and sometimes strengthen) their results: With non-dilutable debt, netting improves stability, connectedness reduces it, and there is a ``default radius'' around a shocked bank (Lemmata~\ref{l:ST netting}--\ref{l:ST bottleneck}).

Our first set of results shows how each of these conclusions reverses when debt can be diluted. In that case, netting undermines stability (Proposition~\ref{p:netting}), connectedness enhances it (Propositions~\ref{p:delta} and~\ref{p:bottleneck}), and there is a ``salvation radius'' around a healthy bank (Proposition \ref{p:radius}). Behind the reversal is a change in what the network transmits: Rather than spreading liquidity shortages from shocked banks, it spreads liquidity surpluses from healthy ones.

High connectedness and indebtedness are not enough for (constrained) efficiency. The distribution of links matters too. For instance, symmetric networks tend to be inefficient because they allocate liquidity equally to all shocked banks, ``wasting'' it on banks that cannot be saved.

In our second set of results, we explore what we call exponential networks.
We show that these networks, which are tightly connected but asymmetric, implement an efficient priority rule, allocating all free liquidity to the largest set of banks that can be saved and writing off other shocked banks entirely (Proposition~\ref{p:constrained efficiciency}): An optimal state-contingent transfer scheme built entirely from plain, non-contingent debt.
That shows how debt dilution extends the contingencies that ordinary debt can implement beyond what can be done via default alone  (\cite{Allen-Gale-1998}, \cite*{Dubey-et-al-1988}, \cite{Zame-1993}).\footnote{Since dilution requires new debt to be issued before outstanding debt matures, the mechanism also points to a different way that maturity can substitute for explicit state contingency, complementing \cite{Angeletos-2002} and  \cite{Gale-1990}.}

Our third set of results pertains to a version of endogenous network formation (pairwise stability of the face values between linked banks\footnote{So netting out is always allowed, but forming completely new links is not.}).
The exponential network is---alas!---not pairwise stable. Versions of core--periphery networks are. Together these results suggest our model can explain why core–periphery structures can persist and how banks underinsure against crises--the networks banks sustain privately can be pairwise stable but fail to provide the socially optimal allocation of liquidity.

\label{pp:hetero}{}For our fourth set of results, we extend the model, allowing banks to differ in their assets and liquidity shocks. We show that the planner's problem is equivalent to what is known as the knapsack problem (Proposition~\ref{p:PP=KP}) and that the exponential network implements the greedy algorithm, a method used to solve it approximately (Proposition~\ref{p:exponential_implement_greedy}). Here the network's structure fixes the order in which banks are saved: The same ranking applies in every state, so the network is efficient if the planner's priorities are not state contingent as is the case, e.g., if banks have identical liquidity needs (Corollary~\ref{c:optimality of greedy}); otherwise it admits explicit efficiency guarantees (Corollary~\ref{c:greedy small}). Exponential networks are robust not only to which banks are shocked, but also to heterogeneity across banks.

    {}We analyze three other extensions, the first two of which qualify our rosy view of large long-term debts (Section~\ref{s:extensions}). In one, liquidation is efficient, so not every shocked bank should necessarily be rescued; in the other, any default is socially costly, not only those that induce inefficient liquidation. In both cases, we characterize debt levels that implement the efficient outcome, albeit only for specific network structures. The message is the same as in each: As in the baseline, debts should be large enough to provide liquidity insurance when liquidation is inefficient, but, unlike in the baseline, not so large that they induce inefficiency (be it excessive continuation or excessive default).\label{pp:extensions}

\label{pp:theta}{}In our final extension, we let banks choose their asset pledgeability at a cost. We find, albeit in a two-bank example, that cross-holdings of dilutable debt can support efficient pledgeability choices: As long as pledgeability is worth having in autarky, every socially optimal profile of choices is an equilibrium. The reason is that a bank's liquidity reaches its counterparty only in the states in which the bank is not shocked itself, so a bank that cuts back loses its own protection first. As elsewhere, interbank debts cannot be too low: It is the claim on a counterparty that makes its liquidity available.

\label{pp:inside-liquidity}{}The results offer a new perspective on inside liquidity. In \citeauthor{Holmstrom-Tirole-1998} (\citeyear{Holmstrom-Tirole-1998}, \citeyear{Holmstrom-Tirole-2011}), there is a continuum of agents with idiosyncratic shocks. They show that everyone holding a share of a diversified market index implements efficiency. That is not so in our setting with a finite number of agents, as the cross-section of shocks is a source of risk. In this case, we show that efficiency requires triage among shocked banks: Saving some and writing off others. That could be implemented by pooling assets in a portfolio, \`a la \citeauthor{Holmstrom-Tirole-1998}, with triage implemented by an active fund manager. But intermediation is costly and could introduce frictions of its own. Our answer requires no intermediary at all. Bilateral debts encode the priority ordering ex ante in the profile of face values. Inside liquidity can thus be reallocated among a few large, interconnected institutions with no fund, no manager, and no contingent claims.

   {}Overall, we provide a new way to think about financial stability and bank resolution. Much of the literature studies mechanisms to reallocate losses once a bank is in distress; we show that ordinary debt can do the same thing. Building on the corporate-finance insight that debt dilution provides liquidity, we show that interbank debt acts as a private bail-in in our model, reallocating losses through contractual priority rather than regulatory intervention. As a result, gross interbank debt need not propagate illiquidity; it can propagate liquidity too. As such,  netting could reduce risk sharing and, moreover, large, highly interconnected banks stabilize rather than destabilize the financial system. Debt dilution thus reverses the financial stability role of financial networks.\footnote{Surveys of the financial networks literature include \cite*{allen2009financial}, \cite{Allen-Walther-2021}, \cite{glasserman2016contagion}, and \cite{jackson2021systemic}.}\label{pp:R1-2b}

{}The bail-in mechanisms in the literature tend to require that someone do something or hold something beyond plain debt: Statutory bail-ins and public interventions require an authority to act ex post
(\cite{Bolton-Oehmke-2019}, \cite{Erol-2019}, \cite{Kanik-2022},
\cite*{Capponi-Corell-Stiglitz-2022}, \cite{Jackson-Pernoud-2024});
contingent capital requires an explicit contingent security, the trigger of which must be designed with care (\cite{Flannery-2014},
\cite{Sundaresan-Wang-2015}); private recapitalizations require creditors to be willing to inject liquidity or accept haircuts ex post (\cite{Leitner-2005}, \cite{Rogers-Veraart-2013}, \cite{Kanik-2020}, \cite*{Bernard-Capponi-Stiglitz-2022}). Dilution requires none of these: It is private---no authority stands between distress and transfer; it is written in plain debt---the contingency comes from the shocked bank's borrowing, not from a contractual trigger, so there is no trigger to design and no trigger-induced multiplicity (our payment equilibrium is generically unique; Proposition~\ref{p:existence}); and it embeds
commitment for free---banks cannot refuse to be diluted.

Still, the mechanism has its own limitations: It is only approximately efficient when banks are heterogeneous (Section~\ref{s:hetero}); it relies on high debts, which can induce excessive continuation or costly default (Section~\ref{s:extensions}); because the right network is unlikely to arise via free contracting (Section~\ref{s:stability}), it calls for regulation of network structure; and it requires the debt market to be open for shocked banks. But each limitation comes with a mitigant: The approximation admits explicit efficiency guarantees
(Corollaries~\ref{c:optimality of greedy} and~\ref{c:greedy small});
debt levels can be calibrated to trade insurance off against excessive
continuation and default; the regulation required is ex ante---of the
network, not of banks in distress---and so creates no expectation of
rescue to distort incentives; and the required market is for safe,
secured debt, mitigating concerns of information-based freezes.

The rest of the paper proceeds as follows. Section \ref{s:model} presents the model. Section \ref{s:ST} considers the benchmark without dilution. Section \ref{s:LT} states the qualitative properties of networks when debt can be diluted. Section \ref{s:exp} analyzes the exponential network. Section \ref{s:stability} asks which networks arise from decentralized contracting. Section \ref{s:hetero} allows banks to be heterogeneous. Section \ref{s:extensions} analyzes further extensions.
Section \ref{s:conclusion} concludes.  The Appendix contains all proofs, a worked example on the ring network, and a table of notations.

\section{Model} \label{s:model}

    We consider a model with two dates $t \in \{1 , 2 \}$ and $N \geq 2$ agents $\mathrm{B}_1,..., \mathrm{B}_N$, which we refer to as ``banks.'' Each has the risk of a liquidity shock $\ell$ at Date~1  and a long-term asset worth $y$ if held until Date~2 but with value only $\theta y < y$ to outsiders (at either date).\footnote{\label{fn:deposits}{}The liquidity shock can be interpreted as a bank run. Under that interpretation, banks that are \emph{not} run should still owe deposits at Date~2---senior claims that, unlike interbank debts, cannot be diluted. Adding such deposits, of size $\tau\ell$ with $\tau\ell<\theta y$, gives back the baseline model with transformed parameters: Replace $y,$ $\ell,$ and $ \theta$ with $y^\tau:=y-\tau\ell$, $\ell^\tau:=(1-\tau)\ell$, and $\theta^\tau:=(\theta y-\tau\ell)/(y-\tau\ell)$, and all results apply as stated. Economically, Date-2 deposits encumber part of a healthy bank's pledgeable assets, reducing the liquidity it can supply to others without changing the mechanism. (If $\tau\ell\geq\theta y$, healthy banks have no unencumbered liquidity at all and the mechanism shuts down.) We thank a referee for suggesting this extension. (Cash holdings are the mirror image; see footnote~\ref{fn:cash}.)}\label{pp:deposits}$^,$\footnote{\label{fn:cash}{}The model can also be interpreted as including  cash holdings: If cash is perfectly pledgeable, then a bank endowed with  cash $y^{\$}$ and productive assets $y^{\neg\$}$ with pledgeable fraction  $\theta^{\neg\$}$ is just the baseline bank relabelled, with $y=y^{\neg\$}+y^{\$}$ and $\theta y=\theta^{\neg\$}y^{\neg\$}+y^{\$}$, and  all results apply verbatim. Cash is the mirror image of the deposits in footnote~\ref{fn:deposits}: Deposits encumber a bank's pledgeable assets; cash augments them, without changing the mechanism. We thank a referee for  suggesting this interpretation. Section~\ref{s:endogenous theta} lets banks choose $\theta$, and hence, under this interpretation, how much cash to hold.}
    We refer to $\theta y$ as the ``pledgeable part'' of a bank's (non financial) asset and to $(1-\theta)y$ as the ``non-pledgeable'' part.\footnote{The formulation can be interpreted literally in terms of private benefits or cash diversion; for micro-foundations in terms of other agency problems, see, e.g., \cite{DeMarzo-Fishman-2007b} and \cite*{Reallocation}.}  There is universal risk neutrality and no discounting. {}All consumption is at Date~2.\label{pp:date2}

    Banks also have debts both to and from other banks---``interbank liabilities'' and ``interbank claims''---due at Date 2.\label{pp:AE-1c}{}    The (gross) face value of B$_i$'s liability to B$_j$ is denoted by $F_{i \to j} \geq 0$\label{pp:AE-1d}, of all its liabilities to other banks by $F_{i\rightrightarrows} :=\sum_{j\neq i} F_{i\to j}$, and of its claims on other banks by $F_{i\leftleftarrows}:=\sum_{j\neq i} F_{ j\to i}$; $\mathbf{F}_{\rightrightarrows}$ denotes the vector with $i$th element $F_{i \rightrightarrows}$.
    The matrix $\textbf{F} := [F_{i \to j}]_{ij}$ defines the interbank network.\footnote{{}For most of the analysis, we focus on what debts in place do, staying agnostic about where they come from, be it to allocate liquidity, to incentivize monitoring, or to make markets; in Section~\ref{s:stability}, however, we explore endogenous debts.\label{pp:origination}}$^,$\footnote{\label{pp:debt-vs-equity}{}The interbank positions are modeled as plain debts. But many of our results are more general: Relying only on dilutability, they should also apply to cross-holdings of equity as in \cite*{ELLI/GOLU/JACK/14} (see Section~\ref{s:LT}). Other results, like those on network design (Section~\ref{s:exp}), are specific to debt. Indeed, we see the strength of these results as implementing state-contingent transfers with non-contingent contracts, i.e.\ with debt.}
    Following \citetalias{AOT}, we assume throughout that  $F_{i\leftleftarrows}=F_{i\rightrightarrows}$ for all $i$,  i.e.\ banks have zero net interbank positions.\footnote{\label{pp:zero net1}{}The zero-net assumption simplifies much of the analysis and is necessary for Definition~\ref{d:HD}. \label{pp:zero net2}{}But the proofs of many results do not use it, notably those of Propositions~\ref{p:existence}, \ref{p:netting}, and~\ref{p:delta} and, under strong connectedness (directed paths between all pairs of banks), that of Lemma~\ref{l:high debt}. And nothing in our efficiency benchmark relies on it: The planner in Definition~\ref{d:constrained efficiency} is free to choose any transfers, so a network that attains its solution---as the exponential network does---is best among all networks, zero-net or not.} That is a reasonable approximation of reality, as gross interbank positions are often an order of magnitude larger than net.\footnote{\label{fn:balancesheets}{}For example, in 2021, Barclays PLC's net interbank position was about an eighth of its gross, Lloyd's about a fourteenth, and HSBC's about a fifth.
          Specifically, their loans to and from other banks were, respectively, about 13.9 and 16.4, 7.0 and 7.6, and 83.1 and 101.1 billion GBP; see \href{https://home.barclays/content/dam/home-barclays/documents/investor-relations/reports-and-events/annual-reports/2021/Barclays-Bank-PLC-2021-AR.pdf}{home.barclays/content/dam/home-barclays/documents/}{investor-relations/}{reports-and-events/}{annual-reports/}{2021/}{Barclays-Bank-PLC-2021-}{AR.}{pdf}, p.\ 207, \href{https://www.lloydsbankinggroup.com/assets/pdfs/investors/annual-report/2021/2021-lbg-annual-report.pdf}{lloydsbanking}{group.com/}{assets/pdfs/}{investors/}{annual-report/}{2021/}{2021-}{lbg-}{annual-}{report.}{pdf}, pp.\ 207--208, and \href{https://www.hsbc.com/investors/results-and-announcements/annual-report}{hsbc.com/}{investors/}{results-and-announcements/}{annual-report},  p.\ 310.}
        Whereas the face values $F_{i \to j}$ of interbank debt are exogenous for now (see, however, Section~\ref{s:stability}), the associated repayments  are to be determined in equilibrium.{}
B$_i$'s repayment to B$_j$ is denoted by $R_{i\rightarrow j}$, its total repayment to all other banks by $R_{i\rightrightarrows}:=\sum_{j\neq i}R_{i\rightarrow j}$, and its total repayment from all other banks by
$R_{i\leftleftarrows}:=\sum_{j\neq i}R_{j\rightarrow i}$;
$\textbf{R}_{\rightrightarrows}$ denotes the vector with $i$th element
$R_{i\rightrightarrows}$.\label{pp:typo}

     At Date~1, banks' liquidity shocks are realized.\footnote{{}\label{pp:R1-MC-1}Given the two-date structure, all liquidity shocks arrive at Date~1.  What matters is not that they arrive at the same time, but rather while gross positions are still outstanding. That seems reasonable to the extent that interbank debts are long term. In contrast, dilution does nothing for shocks that arrive after the debts they would dilute have matured; see the discussion of random-time liquidity shocks in \cite{donaldson2018resaleable}.}
    We write $\sigma_i =1$ if B$_i$ is shocked and $\sigma_i = 0$ otherwise. There is no other risk, so the ``state,'' which is realized at Date 1, is the profile $\{\sigma_i\}_i =: \bm{\sigma}$; the number of shocked banks is denoted by $S =: \sum \sigma_i$. (We impose no restrictions on the shock distribution.)
    If the shock exceeds B$_i$'s total pledgeable assets,
    \begin{equation} \label{eq:liquidation}
        \theta y + R_{i \leftleftarrows} < \ell \sigma_i,
    \end{equation}
    it repays zero to other banks and gets payoff zero itself.\footnote{\label{fn:seniority}The assumption that interbank repayments are zero follows \citetalias{AOT}, in which liquidity shocks---their ``outside obligations''---are senior to interbank debts.
    The opposite assumption makes the repayment in equation~\eqref{eq:optimal_repayment_cases} discontinuous and can lead to multiple payment equilibria (see Definition~\ref{d:equilibrium} below).
    It does not, however, alter our central insights on the benefits of high indebtedness and connectedness with long-term debt; see \cite{Netting}.}
    The non-pledgeable part of its asset $(1- \theta ) y$ is destroyed.  We say B$_i$ is ``liquidated.''
    Otherwise, it continues to Date~2.

    \label{pp:AE-1a-1} Our definition of liquidation assumes that B$_i$'s interbank liabilities, which are due at Date~2, do not impede its ability to meet the shock, which occurs at Date~1.
    That contrasts with the networks literature (see \citetalias{AOT} and Section~\ref{s:ST}), but is in keeping with papers in corporate finance (the assumptions are  close to \citeauthor*{Donaldson-Gromb-Piacentino-2025}'s (\citeyear{Donaldson-Gromb-Piacentino-2025})).
    It captures the idea that B$_i$ can issue new liabilities of high priority, paid ahead of existing interbank liabilities at Date 2, which are thereby diluted, in that their claims on assets are now subordinated to new creditors'.\footnote{Note that dilution, as used here, need not decrease the value of existing liabilities. Although existing liabilities have a smaller slice of the asset pie, these assets might be more valuable if dilution increases the size of the pie (e.g., by helping to meet a liquidity shock).}
    These new liabilities could represent repos, which have  ``super-senior'' claims on assets in bankruptcy.\footnote{As such, the model could reflect how banks suffering liquidity shocks use super-senior (repo) financing to relax their borrowing constraints, something LTCM, Bear Stearns, and Lehman Brothers all did (or tried to do). (See, e.g., \citeauthor{Jorion-2000} (\citeyear{Jorion-2000}, pp.\ 282--284) on LTCM, \citeauthor*{BearCase} (\citeyear{BearCase}, pp.\ 11--13) on Bear, and \citeauthor{LehmanReport} (\citeyear{LehmanReport} , pp.\ 3 and 9--10) on Lehman.\label{fn:lehman})}$^,$\footnote{\label{fn:priority}Thus, there are two priority classes of debt in our model: new repo-type debt paid first and interbank debts paid next pro rata. This is a good approximation of reality, in which there are two main priority classes: secured debt paid first and unsecured paid next pro rata (see, e.g., \cite{Schwartz-1989}). (\citetalias{AOT} also features two priority classes, but the senior debt is in place at inception.) {}What matters is that the new claim is senior, not that it is debt: A shocked bank can raise cash only against a claim that its existing counterparties cannot absorb first.}
    As a result, banks can borrow against all their pledgeable assets---the pledgeable part $\theta y$ of their long-term assets and their interbank claims $R_{i\leftleftarrows}$.\footnote{\label{fn:microfoundation}{}We interpret $(1-\theta)y$ as cash that insiders can freely
divert at Date~2---a pure private benefit, consumed by the bank---so
creditors can reach at most $\theta y$ in any state, in or out of default.
In some contexts, this is a catch-all for richer agency frictions, such as asymmetric information as \cite{DeMarzo-Fishman-2007b} show formally. However, {}the interpretation can matter for welfare. For example, if the private benefit were to stem from socially wasteful behavior, liquidation need not be inefficient, something we show in Section~\ref{s:extensions} below.}

 At Date~2, a bank that is not liquidated continues to produce $y$. As $(1-\theta)y$ is not pledgeable, its total
    pledgeable assets are therefore
    $\theta y-\ell\sigma_i+R_{i\leftleftarrows}$.\footnote{\label{fn:AE-2}{}Two things are implicit behind this expression. As banks consume only at Date~2 and have no outside technologies,  there is no use for funds raised at Date~1 except to meet liquidity shocks. In principle, we could allow them to borrow, hold cash, and repay. That would net out, so we abstract from it for notational simplicity.

   You could imagine allowing for consumption at Date~1, which could capture, e.g., dividend payouts. We assume that is not allowed because the law places tight limits on
    payouts by distressed firms (e.g., unlawful dividend rules, fraudulent transfer
    law, and bankruptcy clawback provisions).
    That is not inconsistent with distressed firms issuing high priority debt, which is less restricted in practice
    (e.g., DIP financing, priming debt, or liability-management transactions). Moreover, bank debt in particular carries few covenants (\cite{Goyal-2005}) and negative pledge clauses, where present, are deemed ineffective (\cite{Bjerre-1999}; \cite*{Donaldson-Gromb-Piacentino-2025}).

    Our analysis below suggests a possible efficiency rationale for this asymmetry between limits on payouts and dilution, namely that     coinsurance relies on banks being able to dilute existing creditors to raise
    liquidity, but not simply to expropriate value, which would lead the coinsurance mechanism to unravel.}
     If they exceed its debt
$F_{i\rightrightarrows}$, it repays and keeps the remainder of its total assets. Otherwise, i.e.\ if
    \begin{equation}\label{e:default}
        \theta y-\ell\sigma_i+R_{i\leftleftarrows}
        <
        F_{i\rightrightarrows},
    \end{equation}
    it defaults, its creditors get its pledgeable assets, and it keeps its non-pledgeable assets. In summary, if not liquidated, B$_i$'s creditors get
    $\min\big\{\theta y-\ell\sigma_i+R_{i\leftleftarrows},
    F_{i\rightrightarrows}\big\}$
   and B$_i$ gets
    $\max\big\{(1-\theta)y,\,
    y-\ell\sigma_i+R_{i\leftleftarrows}-F_{i\rightrightarrows}\big\}$.

\label{pp:AE-3}{}
    We assume that in default a bank's total pledgeable assets go to its counterparties with its total interbank repayments being divided in pro rata shares:
    \begin{equation}\label{eq:pro_rata}
        R_{i\to j}
        =
        \hat F_{i\to j}R_{i\rightrightarrows},
    \end{equation}
    where $\hat F_{i\to j} := F_{i\to j}/F_{i\rightrightarrows}$ if $F_{i\rightrightarrows} \neq 0$ and $\hat F_{i\to j} := 0 $ otherwise.
    The pro rata rule follows the literature (\cite{Eisenberg-Noe-2001}) and reflects
    bankruptcy law and practice.\footnote{{\cite{csoka2021axiomatization} provides an axiomatic foundation for the pro rata assumption.}} (The identity also holds out of default, as $F_{i \to j} = \hat F_{i \to j} F_{i \rightrightarrows}.$)

    Whereas liquidation entails a deadweight loss (the only one in the model), default is just a transfer from creditors to debtors (except in extensions; see Section \ref{s:risky assets} and Section \ref{s:default costs}).

    To define the equilibrium, we also require that markets clear: The repayments B$_i$ receives from other banks coincide with the repayments other banks make to it:
    \begin{equation}\label{eq:payment_clearing}
    R_{i\leftleftarrows}=\sum_{j\neq i}R_{j\to i} .
    \end{equation}

\begin{definition}[Payment equilibrium]\label{d:equilibrium}
        A \emph{payment equilibrium} is a repayment vector  $\{R_{i\to j}\}_{i\neq j}$ for each state $\bm \sigma$ such that the repayments
        \begin{itemize}
            \item [(i)] are sequentially rational, or, combining the above,
            \begin{equation}
            R_{i\rightrightarrows} \label{eq:optimal_repayment_cases}
            =
            \renewcommand\arraystretch{1.5}
            \left\{
            \begin{array}{cl}
            0 & \text{ if } \theta y-  \ell\sigma_i  + R_{i\leftleftarrows} \leq 0 \,,
            \\
            \theta y-  \ell\sigma_i  + R_{i\leftleftarrows}
            & \text{ if }
            \theta y-  \ell\sigma_i  + R_{i\leftleftarrows}\in (0, F_{i\rightrightarrows} ] \,,
            \\
            F_{i\rightrightarrows} & \text{otherwise}
            \end{array}
            \right.
        \end{equation}
            \item [(ii)] are paid pro rata (equation (\ref{eq:pro_rata})), and
            \item [(iii)] clear the market (equation (\ref{eq:payment_clearing})).
        \end{itemize}

    \end{definition}
\noindent It is convenient to write the sequential rationality condition as
    \begin{equation} \label{e:total repayment}
            R_{i \rightrightarrows} = \max\Big\{ \, 0 \, , \, \min\big\{ \,  \theta y -  \ell\sigma_i  + R_{i\leftleftarrows}
            \, , \,
            F_{i\rightrightarrows}
            \, \big\} \, \Big\} .
    \end{equation}
    That can be combined with the other equilibrium conditions to write a vector fixed point equation:
\begin{equation}
\label{eq:payment_equilibrium}
    \textbf{R}_{\rightrightarrows}
    =
    \Big[
    \min
        \big\{\textbf{F}_{\rightrightarrows} \, , \, \theta y \mathbf{1} - \ell\bm{\sigma}
        + \FT\textbf{R}_{\rightrightarrows}
        \big\}
        \Big]^+ ,
    \end{equation}
    a solution of which is often called the ``clearing vector.'' ($\mathbf{1}$ denotes the vector of $N$ ones: $(1,  ... , 1) \in \mathbb{R}^N$.)

        As liquidation is the only inefficiency in the model, we adopt the following notion of efficiency. \label{pp:AE-5}

\begin{definition}[Efficiency] \label{d:efficiency}
{}One network is (weakly) more \emph{efficient} than another if (weakly) fewer banks are liquidated in equilibrium for every state $\bm \sigma$.

\end{definition}

\noindent This is a strengthening of \citetalias{AOT}'s notions of ``stability''  (fewer liquidations on average for a given number of shocks) and ``resilience'' (fewer liquidations in the worst case scenario) in that if one network is more efficient than another it is more stable and more resilient too. The efficiency ranking is not a total order on the set of networks, but we derive strong enough results that it suffices for our purposes (except in the extension with heterogeneous banks in Section~\ref{s:hetero}, where we modify it; see Definition \ref{d:pp}).

For several of our results, it is useful to define the following network structures:

\begin{definition}[Network typology] \label{d:typology}
    {}A network $\mathbf{F}$ is
    \emph{regular} if $F_{i\rightrightarrows}=F_{j\rightrightarrows}$ for $i \neq j$,
    \emph{symmetric} if $F_{i\to j}=F_{j\to i}$ for $i \neq j$,
     \emph{complete} if $F_{i\to j} > 0$ for $i \neq j$, \emph{uniform complete} if it is complete and  $F_{i\to j}$ is constant,
     \emph{star} if $F_{i \to j} = F_{j \to i}$ is constant for some $i$ and all $j$ and $F_{j \to k} = 0$ for all $j, k \neq i$,
     and
    \emph{ring}  if $F_{i\to j}= 0$ unless $j = i +1$ (mod $N$).

    \end{definition}

    \noindent In words, in a regular network, each bank has the same total liabilities; in a symmetric network, each pair of banks has zero net positions; in a complete network, every pair of distinct banks is directly connected (every ordered pair has a positive exposure); in a uniform complete network, every pair of distinct banks is directly connected by a liability of the same size; in a star network, one bank---the ``core''---owes and is owed the same amount by every other, and the others owe nothing to each other; in a ring network, each bank owes only one other bank and is owed by one other bank in a circle.

It is also useful to define several properties of networks, capturing how closely banks are connected to one another.

\begin{definition}[Delta connectedness] \label{d:delta} A network $\mathbf{F}$ is $\delta$-connected if there is a non-empty proper subset of banks $\mathscr{B}$ such that $\hat F_{i \to j} \leq \delta$ and $\hat F_{j \to i} \leq \delta$ for all $i \in \mathscr{B}$ and $j \in \mathscr{B}^c$.

   It is \emph{connected} if it is not $\delta$-connected for $\delta = 0$.
\end{definition}

\noindent {}In words, $\delta$ measures the strength of the weakest link between two parts of the network. Small $\delta$ means the network can be partitioned into two groups connected only by weak liabilities. At $\delta=0$ the two groups have no liabilities between them at all, so a network that is \emph{not} $0$-connected is connected in the usual sense of graph theory: There is an undirected path between every pair of banks. Without loss, we focus on connected networks for the following definitions and results unless otherwise stated.

\begin{definition}[Harmonic distance] \label{d:HD}
For a regular network $\mathbf{F}$, the \emph{harmonic distance} from B$_i$ to B$_j$
is the solution to $d_{i \to j} :=
        1+\sum_{k\neq i} d_{i\to k}\hat F_{k\to j}$ for  $i \neq j$  and $d_{i \to i} = 0$.

        \end{definition}

\noindent{}In words,  $d_{i \to j}$ is the liability-weighted distance from $i$ to $j$, which captures how easily liquidity (or the lack thereof) can flow from B$_i$ to B$_j$:  It is small when payments travel in few steps over links that carry a large share of each bank's liabilities. What matters is the share, not the count: A bank connected to many others by thin links can be ``far'' from all of them.

\begin{definition}[Bottleneck parameter] \label{d:bottleneck}
    For a regular network $\mathbf{F}$, the
     \emph{bottleneck parameter}  is
    \begin{equation}
        \beta =\min _{\varnothing\neq\mathscr{B}\subsetneq\{1,\ldots,N\}} \sum_{i\in \mathscr{B}} \sum_{j\in \mathscr{B}^c}\frac{\hat F_{i\to j}}{|\mathscr{B}||\mathscr{B}^c |}.
    \end{equation}

\end{definition}
\noindent {}In words, $\beta$ is the narrowest pipe out of any group of banks: A network has low $\beta$ if one of its components has relatively low liabilities to the rest of it, so that little can drain out of that component. It is similar to $\delta$ above, but directional (in that it is silent about the liabilities from the rest to the component).

\section{Benchmark Without Dilution} \label{s:ST}

Here we consider the network in which the interbank liabilities $F_{i \to j}$ are due at Date 1 instead of Date 2.  This benchmark helps us both to compare our model to the literature and to contrast our results to those therein.

Now interbank liabilities, being due immediately, cannot be diluted with new debt at Date 1. Thus B$_i$ is liquidated if its pledgeable assets are insufficient to cover not only its liquidity needs $\ell\sigma_i $ but also its interbank liabilities $F_{i \rightrightarrows}$, or
    \begin{equation} \label{e:ST default}
    \theta y +    R_{i\leftleftarrows} <
        \ell\sigma_i + F_{i\rightrightarrows}.
    \end{equation}

    \noindent   This condition for liquidation coincides with that for default.
       In that respect, the benchmark contrasts with the baseline, in which the conditions are different.
       Nonetheless, the equations for the clearing vector are the same in both versions, as the liquidation value at Date~1 coincides with the pledgeable value at Date~2: Banks that would default at Date 2 in the baseline model make the same repayment when they are liquidated at Date~1 here. Thus we do not need to adjust the equilibrium definition here. Only efficiency changes.

    Without the distinction between liquidation and default, this benchmark coincides with \citetalias{AOT}.
\begin{lemma}[Isomorphism between benchmark and \citetalias{AOT}] \label{l:isomorphism} {}There is an isomorphism  between payment equilibria in our  benchmark and those in AOT in the case in which long-term assets are fully destroyed by default and cash holdings are zero.\footnote{This is the case under which \citetalias{AOT} derive most of their results.}{}The payment equilibria in the two models have the same liquidation outcomes; hence, they have generically the same efficiency ranking of networks.\footnote{Here and hereinafter, ``generically'' means everywhere except on the ``knife-edge'' set of parameters on which the equilibrium is not unique; see footnote~\ref{fn:knife-edge} below for the specifics. The efficiency ranking in the lemma holds there too under any selection rule applied to both models.}
\end{lemma}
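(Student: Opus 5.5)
I will construct the isomorphism by mapping the primitives of the benchmark into those of \citetalias{AOT} and then verifying that the fixed-point equations coincide. In \citetalias{AOT}, bank $i$ has interbank liabilities $y_{ij}$ (total $\bar y$, here $F_{i\rightrightarrows}$), a senior outside obligation $v$, cash $c$ (set to zero), a project return at the interim date, and a liquidation value that is destroyed on default in the case considered. Their clearing condition reads $x_i=\big[\min\{\bar y_i,\, c+e_i-v+\sum_j \hat y_{ji}x_j\}\big]^+$, where $e_i$ denotes the realized interim asset value available to creditors. I will map their payments $x_{ij}$ to our $R_{i\to j}$ and their liability matrix to $\mathbf{F}$, so that the pro rata shares coincide. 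The cash-flow term $c+e_i-v$ is mapped to $\theta y-\ell\sigma_i$. Concretely, I will take the pledgeable value $\theta y$ as the interim asset value and the liquidity shock $\ell\sigma_i$ as the senior outside obligation, or equivalently as a negative shock to the interim return. With $c=0$, their equation becomes exactly equation~\eqref{eq:payment_equilibrium}. As noted above, the benchmark uses the same clearing equations as the baseline, so a repayment vector is a payment equilibrium in our benchmark if and only if its image is a clearing vector in \citetalias{AOT}. This bijection of solution sets is the isomorphism.

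Next, I will match liquidation outcomes. In the benchmark, B$_i$ is liquidated iff condition~\eqref{e:ST default} holds, that is, iff $\theta y-\ell\sigma_i+R_{i\leftleftarrows}<F_{i\rightrightarrows}$, which is exactly when $R_{i\rightrightarrows}<F_{i\rightrightarrows}$. In \citetalias{AOT}, with long-term assets fully destroyed on default, a bank defaults, and its long-term project is lost, iff it cannot pay its liabilities in full, i.e.\ $x_i<\bar y_i$. These conditions are identical under the mapping, so corresponding equilibria liquidate the same set of banks in every state. It follows that the efficiency comparison of Definition~\ref{d:efficiency}, which counts liquidations state by state, coincides across the two models whenever the equilibrium is unique. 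When it is not unique, I will apply a common selection rule to both models.

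The main obstacle will be the \emph{generically} qualifier: uniqueness of the clearing vector. I plan to invoke the uniqueness argument of \citetalias{AOT}, which adapts Eisenberg--Noe and relies on monotonicity and the Tarski fixed point theorem together with a strict-loss condition on connected components. This argument applies verbatim because the equations are identical, and it shows that multiplicity arises only on a knife-edge set of parameters. The part requiring care is verifying that the liquidation correspondence holds at the boundary where $\theta y-\ell\sigma_i+R_{i\leftleftarrows}=F_{i\rightrightarrows}$ exactly. Both models treat this boundary case as full repayment, since the inequality in \eqref{e:ST default} is strict. I will check that \citetalias{AOT}'s default convention uses the same weak inequality.
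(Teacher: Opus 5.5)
Your proposal is correct and follows the paper's own proof. You map AOT's primitives (with $\zeta=0$ and zero cash) so that each bank's net interim resources become $\theta y-\ell\sigma_i$, note that the two clearing-vector fixed-point equations then coincide, and conclude that default sets, which equal liquidation sets in both models, coincide, so the efficiency rankings agree, with generic uniqueness coming from Proposition~\ref{p:existence} or AOT's own result.

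One small correction: use your second reading of the shock, where $\ell$ corresponds to AOT's negative return shock $\epsilon$, with $a-v\leftrightarrow\theta y$ and $A\leftrightarrow(1-\theta)y$. It should not be read as the outside obligation $v$, which in AOT is a constant common to all banks; in their notation $e_i$ already nets out cash and $v$.
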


\label{pp:AE-4}

\noindent (The isomorphism between our benchmark and \citetalias{AOT}, while mechanical, was unexpected to us, as the models seemed different prima facie. Our model seemed to be about liquidity, theirs about solvency. We now see both as about both: In both, a shock decreases total asset value (solvency) and might not be met only because long-term assets are not fully pledgeable (liquidity).)

We now restate, and sometimes strengthen, several of \citetalias{AOT}'s results, focusing on those that contrast with our results on the long-term debt network.

\begin{lemma}[Netting in benchmark] \label{l:ST netting}
        For any network $\mathbf{F}$ and any $\alpha > 1$, $\alpha\mathbf{F}$ is generically less efficient than $\mathbf{F}$.
\end{lemma}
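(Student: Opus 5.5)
The plan is to compare, state by state, the clearing vector under $\alpha\mathbf{F}$ with $\alpha$ times the clearing vector under $\mathbf{F}$, and to translate a componentwise inequality between repayments into an inclusion between liquidation sets. Fix a state $\bm\sigma$ and write $e_i:=\theta y-\ell\sigma_i$ for B$_i$'s own buffer. In the benchmark, B$_i$ is liquidated iff condition~\eqref{e:ST default} holds. By equation~\eqref{e:total repayment} this is the same as defaulting, i.e.\ $R_{i\rightrightarrows}<F_{i\rightrightarrows}$. So it suffices to show that every bank that defaults under $\mathbf{F}$ also defaults under $\alpha\mathbf{F}$. Since $\hat{\mathbf{F}}$ is invariant to scaling, both networks share the same pro rata matrix. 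The only differences are that the cap $\mathbf{F}_{\rightrightarrows}$ becomes $\alpha\mathbf{F}_{\rightrightarrows}$ and the buffers $e_i$ do not scale.

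The first step is to work in relative terms. Let $T_\alpha(\mathbf{R}):=[\min\{\alpha\mathbf{F}_{\rightrightarrows},\,\mathbf{e}+\FT\mathbf{R}\}]^+$. This map is monotone, so by Tarski its fixed points form a lattice; generically it has a unique fixed point, and off the knife-edge set I apply the same selection (say, the greatest fixed point) in both networks. Let $\mathbf{R}$ be the fixed point of $T_1$. For every bank with $e_i\ge 0$ we have $e_i\le \alpha e_i$, and hence $[T_\alpha(\alpha\mathbf{R})]_i\le \alpha [T_1(\mathbf{R})]_i=\alpha R_{i\rightrightarrows}$.

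Banks with $e_i<0$ need separate treatment. By the zero-net assumption, $R_{i\leftleftarrows}\le F_{i\leftleftarrows}=F_{i\rightrightarrows}$. A bank with $e_i<0$ therefore satisfies $e_i+R_{i\leftleftarrows}<F_{i\rightrightarrows}$ in any network, so it is liquidated under both $\mathbf{F}$ and $\alpha\mathbf{F}$. Its status never changes; the issue is only that its repayment $[e_i+R_{i\leftleftarrows}]^+$ is not bounded by $\alpha$ times its old repayment.

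The second step is to show that $\alpha\mathbf{R}$ is a super-solution, $T_\alpha(\alpha\mathbf{R})\le\alpha\mathbf{R}$. Iterating $T_\alpha$ downward from $\alpha\mathbf{R}$ then yields a fixed point $\mathbf{R}^\alpha\le\alpha\mathbf{R}$. Once this holds, any B$_i$ that defaults under $\mathbf{F}$ satisfies
\[
e_i+R^\alpha_{i\leftleftarrows}\le e_i+\alpha R_{i\leftleftarrows}\le \alpha\big(e_i+R_{i\leftleftarrows}\big)<\alpha F_{i\rightrightarrows}
\]
when $e_i\ge0$, so it defaults under $\alpha\mathbf{F}$ too. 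When $e_i<0$ it defaults under $\alpha\mathbf{F}$ by the zero-net argument above. Hence the liquidation set weakly expands in every state, which is the claimed efficiency ranking under Definition~\ref{d:efficiency}.

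The main obstacle is establishing the super-solution property at the banks with $e_i<0$, where the componentwise bound can fail by up to $(\alpha-1)|e_i|$. What I would try first is to treat these banks as pass-throughs. Such a bank repays $e_i+R_{i\leftleftarrows}$ whenever this is positive, so it forwards its inflows minus a fixed loss $|e_i|$, and this loss does not scale with $\alpha$ while all caps do. I would then use a conservation argument: summing the balance $e_i+R_{i\leftleftarrows}-R_{i\rightrightarrows}$ over a set of banks closed under the flow of payments, and using zero net positions, to show that the extra repayments these banks forward cannot push any nonnegative-buffer bank from default to solvency. The difficulty is that any unabsorbed slack circulates through cycles. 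If this argument breaks down, as the cycle example with $e_1<0<e_2$ suggests it might at nongeneric parameters, I would restrict the claim to the generic set, where the fixed point is unique and strict inequalities can be perturbed. I expect the genericity qualifier in the statement is what absorbs exactly these knife-edge cases.
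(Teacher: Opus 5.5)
There is a genuine gap. Your whole argument rests on the super-solution property $T_\alpha(\alpha\mathbf R)\le\alpha\mathbf R$, i.e.\ on $\mathbf R^\alpha\le\alpha\mathbf R$. This fails on open sets of parameters, not only at knife edges. So the genericity qualifier cannot absorb it: in the paper, genericity only excludes multiplicity of clearing vectors.

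Take the ring $1\to2\to3\to1$ with face value $3$, $\theta y=1$, $\ell=5/2$, and only B$_1$ shocked, so $\mathbf e=(-3/2,1,1)$. The unique clearing vectors are $\mathbf R=(3/2,5/2,3)$ and, for $\alpha=2$, $\mathbf R^\alpha=(9/2,11/2,6)\not\le(3,5,6)=\alpha\mathbf R$. The overshoot at the negative-buffer bank is passed on. B$_2$ has $e_2=1\ge0$ and defaults under $\mathbf F$, yet $R^\alpha_{2\leftleftarrows}=9/2>3=\alpha R_{2\leftleftarrows}$. So the first inequality in your displayed chain fails for exactly the kind of bank it is meant to cover. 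B$_2$ does still default ($11/2<6$), but not for your reason. Small perturbations of all parameters change none of this. Your conservation sketch for the $e_i<0$ banks is not yet an argument, and this example shows it cannot deliver the multiplicative bound you need.

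The missing idea is that the invariant is additive, not multiplicative. What is preserved is the shortfall $\mathbf F_{\rightrightarrows}-\mathbf R$, which in the example equals $(3/2,1/2,0)$ for both $\alpha=1$ and $\alpha=2$. Zero net positions give $\FT\mathbf F_{\rightrightarrows}=\mathbf F_{\rightrightarrows}$. Hence $\bar{\mathbf R}:=\mathbf R+(\alpha-1)\mathbf F_{\rightrightarrows}$ satisfies
\[
T_\alpha(\bar{\mathbf R})=\big[\min\{\mathbf F_{\rightrightarrows},\mathbf e+\FT\mathbf R\}+(\alpha-1)\mathbf F_{\rightrightarrows}\big]^+\le\big[\min\{\mathbf F_{\rightrightarrows},\mathbf e+\FT\mathbf R\}\big]^++(\alpha-1)\mathbf F_{\rightrightarrows}=\bar{\mathbf R},
\]
with no case split on the sign of $e_i$.

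With this super-solution your skeleton goes through unchanged:
\begin{itemize}
\item monotonicity yields a fixed point $\mathbf R^\alpha\le\bar{\mathbf R}$;
\item generic uniqueness identifies it with the equilibrium;
\item $R_{i\rightrightarrows}<F_{i\rightrightarrows}$ then implies $R^\alpha_{i\rightrightarrows}<\alpha F_{i\rightrightarrows}$.
\end{itemize}

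This is the paper's proof in other coordinates. The paper rewrites the clearing equation for $\mathbf D=\alpha\mathbf F_{\rightrightarrows}-\mathbf R$ as $\mathbf D=[\min\{\alpha\mathbf F_{\rightrightarrows},\FT\mathbf D-\theta y\bm{1}+\ell\bm\sigma\}]^+$, where $\alpha$ enters only through the cap. It then shows this map sends $\prod_i[D^1_i,\alpha F_{i\rightrightarrows}]$ into itself, so $\mathbf D^\alpha\ge\mathbf D^1$ and every bank that defaults (is liquidated) under $\mathbf F$ also defaults under $\alpha\mathbf F$.
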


\noindent This generalizes \citetalias{AOT}'s Proposition 3 (p.~574) in two ways: To an arbitrary number of shocked banks (they prove it for one) and beyond regular networks.
It says that less debt is a good thing. Indeed, it would be better to have none whatsoever ($\alpha = 0$).
Intuitively, when one bank defaults on its liability to another, the other finds it harder to pay its liability to yet another.
So distress propagates from shocked banks to otherwise healthy ones, especially when debts are high $(\alpha > 1)$.

We now turn to network connectedness.

\begin{lemma}[Delta connectedness in benchmark] \label{l:ST delta}
Suppose $\ell$ is sufficiently large and exactly one bank is shocked ($\lvert \bm \sigma \rvert= 1$).

\begin{itemize}

    \item[(i)] A ring network with $F > (N-1) \theta y$ is the least efficient among all networks.

    \item[(ii)] Any $\delta$-connected network with $N \delta \max_k F_{k\rightrightarrows}<\theta y$ is strictly more efficient than the ring.

\end{itemize}

\end{lemma}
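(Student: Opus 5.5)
The plan is to compute the ring's equilibrium explicitly for (i), and for (ii) to exhibit a group of banks that provably survives. Throughout, I use the benchmark liquidation condition \eqref{e:ST default}, the fixed-point characterization \eqref{eq:payment_equilibrium}, and generic uniqueness of the clearing vector (footnote~\ref{fn:knife-edge}, Proposition~\ref{p:existence}). Where uniqueness might fail, I would work with the greatest clearing vector. It exists by Tarski's theorem, because the map in \eqref{eq:payment_equilibrium} is monotone on the lattice $[\mathbf{0},\mathbf{F}_{\rightrightarrows}]$. Label the shocked bank B$_1$ and take $\ell > N\theta y + \max_k F_{k\rightrightarrows}$, so that $\ell$ is sufficiently large in the sense of the statement. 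Since $R_{1\leftleftarrows} \le F_{1\leftleftarrows} = F_{1\rightrightarrows}$, we get $\theta y + R_{1\leftleftarrows} - \ell \le 0$ in every network. Hence B$_1$ is liquidated and repays zero in every network.

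For (i), I would trace payments around the ring starting from B$_1$. B$_2$ receives $0$ and owes $F > \theta y$, so it is liquidated and repays $\theta y$. B$_3$ receives $\theta y$ and repays $\min\{2\theta y, F\} = 2\theta y$. By induction, B$_k$ receives $(k-2)\theta y$ and repays $(k-1)\theta y$. Each B$_k$ is liquidated, because $\theta y + (k-2)\theta y < F$ for $k \le N$; this is exactly where $F > (N-1)\theta y$ is used. Closing the loop, B$_1$ receives $(N-1)\theta y$, which does not alter its zero repayment given the choice of $\ell$. This confirms a fixed point. It is the unique one, since B$_1$'s repayment is pinned at zero and every other repayment is then determined recursively. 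So all $N$ banks are liquidated. No network can have more liquidations, so the ring is (weakly) the least efficient.

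For (ii), take the partition $\mathscr{B}, \mathscr{B}^c$ from Definition~\ref{d:delta}. The $\delta$-condition is symmetric in the two groups, so without loss let B$_1 \in \mathscr{B}$. The claim is that every bank in $\mathscr{B}^c$ survives. Fix $j \in \mathscr{B}^c$. Each creditor link from $\mathscr{B}$ satisfies $F_{i\to j} = \hat F_{i\to j}F_{i\rightrightarrows} \le \delta \max_k F_{k\rightrightarrows}$. Suppose all banks in $\mathscr{B}^c$ pay in full. Then zero net positions give
\begin{equation*}
R_{j\leftleftarrows} \ \ge\ F_{j\leftleftarrows} - \sum_{i\in\mathscr{B}} F_{i\to j} \ \ge\ F_{j\rightrightarrows} - |\mathscr{B}|\,\delta \max_k F_{k\rightrightarrows} \ >\ F_{j\rightrightarrows} - \theta y ,
\end{equation*}
where the last step uses $|\mathscr{B}| < N$ and $N\delta\max_k F_{k\rightrightarrows} < \theta y$. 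Since $\sigma_j = 0$, this yields $\theta y + R_{j\leftleftarrows} > F_{j\rightrightarrows}$. So $j$ is neither liquidated nor in default, and it does pay in full, whatever the banks in $\mathscr{B}$ pay.

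The main obstacle is turning this self-consistency into a statement about the equilibrium rather than a candidate. I would iterate \eqref{eq:payment_equilibrium} downward from $\mathbf{R}_{\rightrightarrows} = \mathbf{F}_{\rightrightarrows}$, which converges to the greatest clearing vector. By induction, at every step all banks in $\mathscr{B}^c$ remain at full repayment, because the bound above holds for any payments from $\mathscr{B}$. Generic uniqueness then makes this the equilibrium. If one wants to avoid uniqueness altogether, I would instead show that in any fixed point the total shortfall of $\mathscr{B}^c$ is zero: the shortfalls are subadditive and each is bounded by a strict contraction of the others plus something strictly less than $\theta y$, which forces them all to vanish. Either way, at most $|\mathscr{B}| < N$ banks are liquidated in this state, while the ring liquidates all $N$. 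The network is therefore strictly more efficient than the ring in every such state.
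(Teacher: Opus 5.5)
Your proof is correct, but it takes a different route from the paper's in both parts.

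\textbf{Part (i).} You make B$_1$'s zero repayment trivial by taking $\ell>N\theta y+\max_k F_{k\rightrightarrows}$. For the ring, this threshold grows with $F$, which is unbounded over the class $F>(N-1)\theta y$. That is admissible under a literal reading of ``sufficiently large,'' but it is weaker than what the paper proves. The paper takes $\ell>N\theta y$ uniformly. It then pins $R_{1\to2}=0$ in every equilibrium via the cycle bound $R_{i\to i+1}\le\theta y+R_{i-1\to i}$. This gives $R_{N\to1}\le R_{1\to2}+(N-1)\theta y$, hence $R_{1\to2}\le[R_{1\to2}+N\theta y-\ell]^+$. You can adopt this verbatim and drop the dependence on $F$.

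\textbf{Part (ii).} The paper labels the shock-free side $\mathscr B$ and argues in aggregate. The thin cut caps total payments out of $\mathscr B$ below $\theta y|\mathscr B|$. If every bank in $\mathscr B$ were liquidated, each would have net outflow exactly $\theta y$, a contradiction. That argument holds for every clearing vector at once and uses neither zero net positions nor any selection step, but it delivers only one survivor. Your per-bank argument observes that each shock-free bank's claims on the shocked side total less than $\theta y$, so the shock-free side can pay itself in full. This gives the stronger conclusion that every bank on the shock-free side survives. The cost is that you use zero net positions and must pass from a self-consistent candidate to the equilibrium.

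That passage is unproblematic, for two reasons.
\begin{itemize}
\item Uniqueness here is exact, not merely generic. By the proof of Proposition~\ref{p:existence}, multiplicity requires a set $\mathscr C$ with no outgoing liabilities (hence, under zero net positions, no incoming ones) and $S_{\mathscr C}\ell=|\mathscr C|\theta y$. This is impossible when $S_{\mathscr C}\le1$ and $\ell>N\theta y$.
\item Your fallback also works, though not through a contraction. In your labeling, each shock-free $j$ satisfies $D_j=[\sum_i\hat F_{i\to j}D_i-\theta y]^+$. Summing over the set $A$ of shock-free banks with $D_j>0$ gives $\sum_{j\in A}D_j\le\sum_{j\in A}D_j-|A|\epsilon$, where $\epsilon:=\theta y-\max_{j\in\mathscr B^c}\sum_{i\in\mathscr B}F_{i\to j}>0$. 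The strictness comes from this slack, since the weights $\hat F_{i\to j}$ within the shock-free side can have row sums equal to one.
\end{itemize}
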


\noindent This strengthens some of the statements in \citetalias{AOT}'s Proposition 6 (p.\ 577--578) by adapting them to our notion of efficiency (Definition \ref{d:efficiency}) and dropping regularity. It says that less connectedness is a good thing. The ring network, in which every bank has a large exposure to another, is the worst. It is better to weaken the exposures in the sense of lowering delta connectedness. Intuitively, $\delta$ captures how much risk can transmit between two components, so, unlike in the ring network, risk cannot spillover from one to another when $\delta$ is low.

We now show that there is a ``default radius'' around a shocked bank.

\begin{lemma}[Default radius in benchmark] \label{l:ST radius}
    Let $\mathbf{F}$ be a regular network with $F_{i \rightrightarrows} \equiv F$ and suppose that exactly one bank, say B$_j$, is shocked and does not meet its liquidity shocks ($R_{j\rightrightarrows}=0$).  Define $d^{ST}:=\frac{F}{\theta y}$.
    \begin{enumerate}
        \item[(i)]  If $d_{j\to i}<d^{ST}$, then B$_i$ is liquidated.
        \item[(ii)]  If all banks are liquidated, then $d_{j\to i}<d^{ST}$ for all $i$.
    \end{enumerate}
\end{lemma}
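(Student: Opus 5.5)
The plan is to rewrite the equilibrium in terms of repayment shortfalls and compare it with a linear system whose solution is given by the harmonic distances of Definition~\ref{d:HD}. Fix the state in which only B$_j$ is shocked, and define the shortfall $s_i := F - R_{i\rightrightarrows}$, so $s_j = F$ because $R_{j\rightrightarrows}=0$.

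Regularity plus the zero-net assumption give the key column identity: $\sum_{k} \hat F_{k\to i} = F_{i\leftleftarrows}/F = 1$ for every $i$. Hence $R_{i\leftleftarrows} = \sum_k \hat F_{k\to i}(F-s_k) = F - \sum_k \hat F_{k\to i} s_k$.

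Substituting into the clearing equation~\eqref{eq:payment_equilibrium} gives, for each unshocked $i \neq j$,
\begin{equation*}
s_i = \min\Big\{F,\ \Big[\textstyle\sum_k \hat F_{k\to i} s_k - \theta y\Big]^+\Big\}.
\end{equation*}
The liquidation condition~\eqref{e:ST default} becomes $\sum_k \hat F_{k\to i} s_k > \theta y$. Whenever $s_i>0$ this holds, since $s_i>0$ forces the bracket to be positive. Conversely, if it holds then $s_i>0$. So B$_i$ is liquidated if and only if $s_i>0$.

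Next define the candidate $x_i := F - \theta y\, d_{j\to i}$, so $x_j = F$. Using the defining equation of $d_{j\to i}$ together with the column identity, a one-line computation gives
\begin{equation*}
\textstyle\sum_k \hat F_{k\to i} x_k - \theta y = F - \theta y\big(1+\sum_{k} \hat F_{k\to i} d_{j\to k}\big) = x_i \qquad (i\neq j).
\end{equation*}
So $x$ solves the untruncated linear version of the shortfall system. Moreover $x_i>0$ exactly when $d_{j\to i}<d^{ST}$, and $x_i \le F$.

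For part~(i), I will show $s \ge x$ componentwise. Let $v := x - s$, so $v_j = 0$. Take any $i$ with $v_i>0$. Then $s_i < x_i \le F$, so $s_i$ is not capped at $F$, and therefore $s_i \ge \sum_k \hat F_{k\to i}s_k - \theta y$. Subtracting this from the linear identity for $x_i$ gives $v_i \le \sum_{k} \hat F_{k\to i} v_k \le \sum_{k\neq j}\hat F_{k\to i} v_k^+$. Thus $v^+ \le Q^\top v^+$, where $Q$ is $\hat{\mathbf F}$ restricted to the banks other than B$_j$.

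Because $\mathbf F$ is connected and the harmonic distances exist (they solve $(I-Q^\top)d = \mathbf 1$ with a finite nonnegative solution), $Q$ has spectral radius below one. Iterating then yields $v^+ \le (Q^\top)^n v^+ \to 0$. Hence $s_i \ge x_i > 0$ whenever $d_{j\to i}<d^{ST}$, and such banks are liquidated.

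For part~(ii), suppose every bank is liquidated, so $s_i>0$ for all $i$. For each $i\neq j$, the map then gives $s_i \le \sum_k \hat F_{k\to i}s_k - \theta y$, because the positive-part truncation is not binding and the cap at $F$ only lowers $s_i$. Setting $w := s - x$, with $w_j=0$, the same comparison gives $w^+ \le Q^\top w^+$, hence $w \le 0$. Therefore $x_i \ge s_i > 0$, i.e.\ $d_{j\to i} < d^{ST}$ for all $i$.

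The main obstacle is the step asserting that $Q$ has spectral radius strictly less than one. Row sums of $\hat{\mathbf F}$ equal one, so $Q$ is substochastic. I would argue that a finite nonnegative solution $d$ of $d = \mathbf 1 + Q^\top d$ exists by Definition~\ref{d:HD}. A Perron--Frobenius argument then rules out $\rho(Q)\ge 1$: pairing $d$ with a nonnegative left Perron vector $u$ of $Q^\top$ gives $\rho\, u^\top d = u^\top d - u^\top \mathbf 1 < u^\top d$. If that fails, I would fall back on connectedness, which guarantees that mass leaks to B$_j$ from every other bank.
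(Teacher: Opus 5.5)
Your proof is correct, but it takes a different route from the one the paper relies on. The paper omits the proof and defers to AOT's Proposition~8. That argument, which the paper mirrors in its proof of Proposition~\ref{p:radius}, partitions banks by equilibrium regime and solves the linear clearing equations on the defaulting block. It then substitutes these into the inequalities for the complementary block and compares the result with the block form of the harmonic-distance equations, by inverting a Schur complement that is a nonsingular $M$-matrix.

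You instead write down the explicit solution $x=F\mathbf{1}-\theta y\,d_{j\to\cdot}$ of the untruncated shortfall system and use a comparison (discrete maximum-principle) argument. The positive part $v^+$ of $x-s$ in part (i), or of $s-x$ in part (ii), satisfies $v^+\leq Q^\top v^+$, and $\rho(Q)<1$ forces it to vanish.

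Your route has several advantages:
\begin{itemize}
\item You never need to know in advance which banks default, and the same one-line inequality delivers both parts.
\item The only linear-algebra fact you need is $\rho(Q)<1$. Your Perron-vector derivation from a finite nonnegative $d$ is fine. Alternatively, $Q$ is substochastic, so $\rho(Q)\leq 1$, and $\rho(Q)=1$ would be an eigenvalue of $Q$ and make $\mathbf{I}-Q^\top$ singular, contradicting uniqueness of $d$.
\item You get a quantitative statement stronger than the lemma: $R_{i\rightrightarrows}\leq\theta y\,d_{j\to i}$ in every equilibrium. Equality holds when all banks are liquidated, since then $s$ and $x$ solve the same nonsingular linear system, so your part (ii) inequality is in fact an identity.
\end{itemize}

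The block approach is heavier, but it keeps explicit track of which banks are in which regime. The paper exploits this when it adapts the argument to the dilutable case, where default and liquidation separate and there are three regimes rather than two.
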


\noindent This is \citetalias{AOT}'s Proposition 8 (p.\ 579). It says that the harmonic distance $d$ is, in a sense, the right measure of one bank's exposure to another, in that it captures exactly whether a shocked bank's distress will transmit to an otherwise healthy bank through the network.
It defines a radius around a shocked bank within which all banks are liquidated.

 \citetalias{AOT}  link the harmonic distance $d$  to the bottleneck parameter $\beta$ using Markov chains.
 They show, roughly, that $d_{i \to j}$ is the mean hitting time of a Markov chain from state $i$ to $j$ and that the bottleneck parameter is closely related to the ``conductance'' of a graph, which measures how hard it is for a Markov chain on a graph to leave a set of nodes.
    Hence the next result:

\begin{lemma}[Bottleneck connectedness in benchmark] \label{l:ST bottleneck}

    Suppose the conditions of Lemma \ref{l:ST radius} are met and that, additionally, the network $\mathbf{F}$ is symmetric.
    Define  $\beta^{ST} := 4 \sqrt{\frac{\theta y}{NF}}$ and  $\beta_{ST} := \min\left\{ \, \frac{\theta y}{2 N F}\, , \, 1 \right\}$.
    \begin{enumerate}
        \item[(i)] If $\beta>\beta^{ST}$, then all banks are liquidated.
        \item[(ii)] If $\beta<\beta_{ST}$, then at least one bank is not liquidated.
    \end{enumerate}

\end{lemma}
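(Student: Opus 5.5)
The plan is to combine Lemma~\ref{l:ST radius} with a two-sided bound relating harmonic distances to the bottleneck parameter $\beta$, which is essentially \citetalias{AOT}'s Markov-chain argument. First, in a regular symmetric network $\hat{\mathbf{F}}$ is a symmetric stochastic matrix, hence doubly stochastic. The random walk with transition matrix $\hat{\mathbf{F}}$ is therefore reversible with respect to the uniform distribution. Reading Definition~\ref{d:HD} as a first-step equation, $d_{i\to j}$ is the expected hitting time of $j$ starting from $i$. (The definition is written with $\hat F_{k\to j}$ summed over intermediate $k$; under symmetry this is the same as conditioning on the first step, so no transposition issue arises.) The bottleneck parameter is, up to the normalization $|\mathscr{B}||\mathscr{B}^c|\ge N\min\{|\mathscr{B}|,|\mathscr{B}^c|\}/2$, the conductance $\Phi$ of this chain. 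Specifically, $\Phi/N\le\beta\le 2\Phi/N$, with $\Phi:=\min_{|\mathscr{B}|\le N/2}\sum_{i\in\mathscr{B},j\notin\mathscr{B}}\hat F_{i\to j}/|\mathscr{B}|$. The proof then consists of translating bounds on $\Phi$ into bounds on $d_{j\to i}$ and applying Lemma~\ref{l:ST radius}.

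For part (i) I would show that $\beta>\beta^{ST}$ forces $d_{j\to i}<F/(\theta y)=d^{ST}$ for every $i$, so every bank is liquidated by Lemma~\ref{l:ST radius}(i). The shocked bank B$_j$ itself is liquidated by hypothesis. The route is Cheeger's inequality: the spectral gap satisfies $1-\lambda_2\ge\Phi^2/2\ge N^2\beta^2/8$. For a reversible chain with uniform stationary distribution, hitting times are bounded by $d_{j\to i}\le 2N/(1-\lambda_2)$, either via the commute-time identity with effective resistance or via $\pi$-weighted Green's function bounds. This gives $d_{j\to i}\le 16/(N\beta^2)$. Plugging in $\beta^2>16\theta y/(NF)$ yields $d_{j\to i}<F/(\theta y)$, which is exactly what we need. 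The constants $4$ and $\sqrt{\cdot}$ in $\beta^{ST}$ match this chain of inequalities, which suggests it is the intended path. The main obstacle is getting the hitting-time-from-spectral-gap constant exactly right, so that the factor $16$ comes out. If it does not, I would fall back on AOT's Proposition 9 argument and follow their constants.

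For part (ii), suppose $\beta<\beta_{ST}$ and let $\mathscr{B}$ attain the minimum. I would bound hitting times from below: a walk started in the larger side must cross the cut to reach a target in the other side. The escape probability per step is at most the average outflow, so the expected hitting time is at least of order $1/(N\beta)$. More precisely, averaging over starting points in $\mathscr{B}^c$ gives some $i$ with $d_{j\to i}\ge 1/(2N\beta)$, choosing $\mathscr{B}$ to contain or exclude $j$ as appropriate. This in turn exceeds $F/(\theta y)$ when $\beta<\theta y/(2NF)$. The cap at $1$ in $\beta_{ST}$ handles the trivial regime. Lemma~\ref{l:ST radius}(ii), in contrapositive form, then says that not all banks are liquidated. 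The delicate point here is the averaging, namely ensuring the bound applies from $j$'s position specifically. I would handle this by using symmetry of $d$ under reversibility on average: the commute time $d_{j\to i}+d_{i\to j}$ is symmetric, and doubly stochastic structure gives $\sum_i d_{j\to i}$-type identities.
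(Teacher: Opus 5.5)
Your part (i) matches the paper's argument. The paper invokes AOT's upper bound $\max_{i\neq k} d_{k\to i}\le 16/(N\beta^2)$ (Lemma~\ref{l:beta and d}) and then applies Lemma~\ref{l:ST radius}(i). Your Cheeger-plus-commute-time derivation gives that constant exactly: $\Phi\ge N\beta/2$, $1-\lambda_2\ge N^2\beta^2/8$, and commute time at most $2N/(1-\lambda_2)$. Because it bounds the commute time, it does not matter which way $d$ points, so you need no fallback.

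For part (ii) you take a genuinely different route from the paper, which avoids harmonic distances for exactly the reason you flag (AOT's lower bound is on the maximum over all ordered pairs). The paper argues directly on payments:
\begin{itemize}
\item choose a minimizing $\mathscr{B}$ on the side not containing the shocked bank, which symmetry allows;
\item suppose every bank is liquidated, and note that a liquidated unshocked bank in the benchmark has net payment exactly $\theta y$;
\item sum over $\mathscr{B}$ and cancel internal flows to get $\theta y|\mathscr{B}|\le\sum_{i\in\mathscr{B},k\in\mathscr{B}^c}F_{i\to k}=F\beta|\mathscr{B}||\mathscr{B}^c|\le FN\beta|\mathscr{B}|$, which contradicts $\beta<\theta y/(2NF)$.
\end{itemize}
That argument is elementary, uses no Markov-chain facts, and does not rely on Lemma~\ref{l:ST radius}, whose proof the paper omits. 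Your route keeps (ii) parallel to (i) and exhibits a specific bank outside the default radius, at the cost of more machinery.

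Your route can be completed, but your proposed fix for the "delicate point" does not work as stated. Commute-time symmetry alone cannot transfer a lower bound from one direction to the other. The real issue is that you have $d$ transposed. With $\hat F$ symmetric, Definition~\ref{d:HD} reads $d_{i\to j}=1+\sum_k\hat F_{j\to k}d_{i\to k}$. That is the first-step equation for a walk started at $j$ hitting $i$. So $d_{j\to i}$ is the expected time for a walk started at $i$ to hit the shocked bank B$_j$, and symmetry of $\hat F$ does not make hitting times symmetric.

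With the correct reading, your averaging is exactly what is needed:
\begin{itemize}
\item Start the walk uniformly on the side $\mathscr{A}$ of the minimizing cut that excludes $j$.
\item Since $\hat{\mathbf F}$ is doubly stochastic, the law of the walk killed on exiting $\mathscr{A}$ stays pointwise below $1/|\mathscr{A}|$.
\item Hence the probability of exiting by date $t$ is at most $t\,\beta|\mathscr{A}^c|\le tN\beta$.
\item So the average time to hit $j$ is at least $\sum_{t\ge 0}(1-tN\beta)^+\ge 1/(2N\beta)>d^{ST}$, and Lemma~\ref{l:ST radius}(ii) finishes the argument.
\end{itemize}
Under your reading you would instead need the random-target identity, that $\frac1N\sum_i E_j\tau_i$ does not depend on $j$, combined with the same cut bound averaged over both sides of the cut.
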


\noindent This is \citetalias{AOT}'s Corollary 2 (p.\ 581).
It captures the idea that if all banks are closely connected then risk is so easily transmitted to other banks that a shock at one can lead all to fail, whereas if they are not, it cannot. Specifically, if at least two components are not closely connected, so $\beta$ is small, risk in one of them cannot spread to the other.

\section{Properties of Dilutable Debt Networks}  \label{s:LT}

We now turn to the properties of networks in our baseline model, in which interbank debt is dilutable. For each result in the non-dilutable benchmark, we prove a counterpart with the opposite sign: Whereas indebtedness and connectedness do harm when debt cannot be diluted, they do good when it can. The reason is that dilution changes what the network transmits---rather than propagating liquidity shortages from shocked banks, it lets liquidity flow from healthy banks to shocked ones.

We start with existence and uniqueness.

\begin{proposition}[Existence and uniqueness] \label{p:existence}
        For any network $\mathbf{F}$, a payment equilibrium exists and is generically unique.
    \end{proposition}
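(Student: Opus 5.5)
Fix a state $\bm\sigma$ and write the fixed-point equation \eqref{eq:payment_equilibrium} as $\mathbf{R}_{\rightrightarrows}=\Phi(\mathbf{R}_{\rightrightarrows})$, where
\[
\Phi(\mathbf{x}) := \Big[\min\big\{\mathbf{F}_{\rightrightarrows}\,,\,\theta y\mathbf{1}-\ell\bm\sigma+\FT\mathbf{x}\big\}\Big]^+ .
\]
By pro rata (ii) and clearing (iii), a payment equilibrium is the same thing as a fixed point of $\Phi$, with $R_{i\to j}=\hat F_{i\to j}R_{i\rightrightarrows}$. So it suffices to study fixed points of $\Phi$.

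\emph{Existence.} $\Phi$ maps the lattice $[\mathbf{0},\mathbf{F}_{\rightrightarrows}]$ into itself. It is also monotone (weakly increasing), since $\FT$ has nonnegative entries and $\min$ and $[\cdot]^+$ are increasing. Tarski's fixed-point theorem then gives a nonempty complete lattice of fixed points, with a least element $\underline{\mathbf{R}}$ and a greatest element $\overline{\mathbf{R}}$. Continuity of $\Phi$ plus Brouwer would also suffice.

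\emph{Generic uniqueness.} I follow Eisenberg and Noe's argument, adapted to the extra floor at zero. Suppose $\underline{\mathbf{R}}\le\overline{\mathbf{R}}$ are distinct fixed points, and let $\mathscr{D}$ be the set of banks with $\underline R_{i\rightrightarrows}<\overline R_{i\rightrightarrows}$.
\begin{itemize}
\item For $i\in\mathscr{D}$, bank $i$ cannot be capped at $F_{i\rightrightarrows}$ in $\underline{\mathbf{R}}$.
\item Nor can $i$ be at zero in $\overline{\mathbf{R}}$.
\end{itemize}
The piecewise-linear map is nonexpansive in the relevant coordinates, since $\Phi$ has slope at most one. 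Hence $\overline R_{i\rightrightarrows}-\underline R_{i\rightrightarrows}\le\sum_{k}\hat F_{k\to i}(\overline R_{k\rightrightarrows}-\underline R_{k\rightrightarrows})$ for $i\in\mathscr{D}$. Summing over $i\in\mathscr{D}$ and using $\sum_i\hat F_{k\to i}\le1$, every inequality must bind. Two consequences follow. First, no repayment mass leaks from $\mathscr{D}$ to $\mathscr{D}^c$: $\hat F_{k\to i}=0$ for $k\in\mathscr{D}$, $i\notin\mathscr{D}$. Second, on $\mathscr{D}$ both fixed points lie in the interior linear regime $R_{i\rightrightarrows}=\theta y-\ell\sigma_i+R_{i\leftleftarrows}$. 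The first point makes $\mathscr{D}$ a closed set: $\mathscr{D}$ receives from outside but sends nothing out. Summing the linear equations over $\mathscr{D}$, the total outflow $\sum_{i\in\mathscr{D}}R_{i\rightrightarrows}$ cancels against the internal inflow. This yields the identity
\[
\sum_{i\in\mathscr{D}}(\theta y-\ell\sigma_i)+\sum_{k\notin\mathscr{D}}\sum_{i\in\mathscr{D}}\hat F_{k\to i}R_{k\rightrightarrows}=0 .
\]
The inflow term is common to both fixed points, because repayments outside $\mathscr{D}$ coincide.

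\emph{Knife-edge set.} This identity is a nontrivial linear equation in $(\theta y,\ell)$ for each of the finitely many choices of state, set $\mathscr{D}$, and regime pattern outside $\mathscr{D}$. In each such case the outside repayments are themselves affine in the parameters. Multiplicity therefore occurs only on a finite union of hyperplanes, which has measure zero; this set is the knife-edge set referred to in the footnote. Zero net positions are not needed anywhere.

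\emph{Main obstacle.} The delicate step is the summation argument. I must show rigorously that strict gaps force every bank in $\mathscr{D}$ into the interior regime in both equilibria, and that the difference equation with row sums of $\hat{\mathbf{F}}$ at most one forces closure. The floor at zero can create a kink where a bank in $\underline{\mathbf{R}}$ sits at zero while it is interior in $\overline{\mathbf{R}}$; I would handle this by showing such a configuration again implies an equality of the form above. It remains to check that the identity is genuinely nontrivial, i.e.\ that it does not hold identically in the parameters. If that direct approach gets messy, my fallback is to prove uniqueness whenever some bank in each closed communicating class is at a corner. The exceptional case of an entirely interior closed class then reduces to the single equality $\sum_{i\in\mathscr{D}}(\theta y-\ell\sigma_i)+\text{inflow}=0$.
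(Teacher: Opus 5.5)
Your proposal is correct and follows essentially the paper's own route: Tarski for existence, then the set of banks whose repayments differ across two ordered equilibria (your $\mathscr{D}$, the paper's $\mathscr{C}$), then the slope-at-most-one inequality summed over that set, which forces both no outflow from the set and the interior regime in both equilibria, and finally the resulting aggregate identity, which can hold only on a lower-dimensional parameter set. The loose ends you flag close as in the paper: the zero-floor kink is already ruled out because the summed inequality must bind bank by bank, and the identity is nontrivial because its left side strictly increases in $\theta y$---it contains $|\mathscr{D}|\theta y$, while the inflows from $\mathscr{D}^c$, which are determined by a self-contained monotone clearing problem there, are weakly increasing in $\theta y$.
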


\noindent Existence, as ever, follows from Tarski's fixed-point theorem.  {}Uniqueness comes from payments being pinned down by resources, not expectations:  Default is only a transfer (equation~(\ref{e:default})), in which a bank hands its creditors all of its pledgeable resources, so its repayment moves at most one for one with the repayments it receives (equation \eqref{e:total repayment})---a bank with a cushion absorbs part of any shortfall and a bank already repaying nothing passes on none of it.
Any shortfall thus dissipates as it circulates through the network, no matter what banks expect of one another.
\footnote{{}In related models, an equilibrium in which all banks in a cycle repay, and hence each has the resources to repay, can coexist with a default equilibrium with the opposite outcome.
There, default is costly: A counterparty's default inflicts a discrete loss on its creditors, so a shortfall compounds as it circulates, and pessimism around a cycle of claims validates itself (\cite{Rogers-Veraart-2013}; \cite{Jackson-Pernoud-2024}); relatedly, in  \cite*{Roukny-Battiston-Stiglitz-2018} banks' default conditions depend on their counterparties' defaults around a credit cycle.
Liquidation destroys value in our model too, but only the non-pledgeable part $(1-\theta)y$, which creditors could never have reached, so not even liquidation inflicts a discrete loss on creditors.
(That relies on liquidity shocks being senior to interbank debts; see footnote~\ref{fn:seniority}.)}
{}That said, there is a knife-edge case in which shocks exactly exhaust the total resources so repayments exactly cancel out, leading to multiple possible repayments supporting the same outcome.\footnote{\label{fn:knife-edge}{}For a connected network, the knife-edge condition is $N\theta y=S\ell$.
More generally, in a disconnected zero-net-position network, multiplicity can arise only if $|K|\theta y=S_K\ell$ for some connected component $K$ where $S_K$ is the number of shocked banks in $K$.
To see where it comes from, suppose two banks, say B$_i$ and B$_j$, owe each other $F$ and have no outside liabilities. Knife-edge multiplicity can arise in the case in which they both default.
Then, from equation~\eqref{e:total repayment}, $R_{i \to j} =\theta y-\ell \sigma_i +R_{j \to i}$ and $R_{j \to i} =\theta y-\ell \sigma_j +R_{i\to j}$. Adding equations says any solution must satisfy $2 \theta y = (\sigma_i + \sigma_j) \ell$, the two-bank version of the knife-edge condition.
In this case, any pair of repayments such that the difference in payments equals the difference in liquidity shortfalls, $R_{i \to j} - R_{j \to i} = \theta y - \sigma_i \ell =  \sigma_j \ell-\theta y$ is a solution. In particular, if $\sigma_i  = \sigma_j = 1$ then that the last condition holds for any $R_{i \to j} = R_{j \to i}\in[0,F]$:
The payments cancel out, so it does not matter what they are.  }

We now turn to the role of debt levels.
\begin{proposition}[Netting] \label{p:netting}
        For any network $\mathbf{F}$, $\alpha\mathbf{F}$ is more efficient than $\mathbf{F}$ whenever $\alpha > 1$.

\end{proposition}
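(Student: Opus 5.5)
The plan is a monotone comparison of clearing vectors. By the liquidation condition~\eqref{eq:liquidation}, B$_i$ is liquidated if and only if $\theta y + R_{i\leftleftarrows} < \ell\sigma_i$. So it suffices to show that, in every state $\bm\sigma$, each bank's receipts $R_{i\leftleftarrows}$ are weakly higher under $\alpha\mathbf{F}$ than under $\mathbf{F}$. Since receipts are $\FT\mathbf{R}_{\rightrightarrows}$ and $\FT$ has nonnegative entries, it is enough to show that the clearing vector itself is weakly higher componentwise.

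First I set up the fixed-point maps. Write \[\Phi_{\mathbf F}(\mathbf R) := \Big[\min\big\{\mathbf F_{\rightrightarrows},\, \theta y\mathbf 1-\ell\bm\sigma+\FT\mathbf R\big\}\Big]^+,\] as in equation~\eqref{eq:payment_equilibrium}. The key observation is that the relative liabilities $\hat F_{i\to j}=F_{i\to j}/F_{i\rightrightarrows}$ are invariant to scaling, so $\widehat{\alpha\mathbf F}=\hat{\mathbf F}$. Scaling therefore only raises the cap, from $\mathbf F_{\rightrightarrows}$ to $\alpha\mathbf F_{\rightrightarrows}$. Two facts follow:
\begin{itemize}
\item[(a)] $\Phi_{\mathbf F}$ and $\Phi_{\alpha\mathbf F}$ are monotone (isotone) in $\mathbf R$, because $\FT\geq 0$ and $\min$ and $[\cdot]^+$ are monotone;
\item[(b)] $\Phi_{\mathbf F}(\mathbf R)\leq\Phi_{\alpha\mathbf F}(\mathbf R)$ pointwise for every $\mathbf R$, because $\alpha>1$ raises the cap.
\end{itemize}
Both maps send the complete lattice $[\mathbf 0,\alpha\mathbf F_{\rightrightarrows}]$ into itself. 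Hence, by Tarski's theorem, as in Proposition~\ref{p:existence}, each map has a greatest fixed point, $\overline{\mathbf R}_{\mathbf F}$ and $\overline{\mathbf R}_{\alpha\mathbf F}$ respectively. For $\Phi_{\mathbf F}$, every fixed point automatically lies in $[\mathbf 0,\mathbf F_{\rightrightarrows}]$.

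Next I compare the greatest fixed points. From (b), $\overline{\mathbf R}_{\mathbf F}=\Phi_{\mathbf F}(\overline{\mathbf R}_{\mathbf F})\leq \Phi_{\alpha\mathbf F}(\overline{\mathbf R}_{\mathbf F})$. Thus $\overline{\mathbf R}_{\mathbf F}$ is a post-fixed point of $\Phi_{\alpha\mathbf F}$. Tarski's characterization of the greatest fixed point as the supremum of all post-fixed points then gives $\overline{\mathbf R}_{\mathbf F}\leq\overline{\mathbf R}_{\alpha\mathbf F}$. Premultiplying by $\FT\geq 0$ shows that receipts are weakly higher for every bank. So any bank that continues under $\mathbf F$ also continues under $\alpha\mathbf F$, which is weakly fewer liquidations state by state, as Definition~\ref{d:efficiency} requires.

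The main obstacle is the selection of equilibrium, since Proposition~\ref{p:existence} gives only generic uniqueness. Off the knife-edge set described in footnote~\ref{fn:knife-edge}, the greatest fixed point is the payment equilibrium, and the argument above is complete. On the knife-edge set, I would argue that liquidation outcomes do not depend on which fixed point is selected. The footnote indicates that the multiplicity arises only among banks that all default within a component exactly exhausting its resources, and the equilibria there support the same outcome. I would verify this claim directly: in such a component every bank pays its total pledgeable assets net of its shock, so whether $\theta y+R_{i\leftleftarrows}\geq\ell\sigma_i$ holds is pinned down by the component's total resources rather than by the particular repayments. If that verification proves delicate, the fallback is to state the result under the greatest-fixed-point selection applied to both networks, mirroring the convention in Lemma~\ref{l:isomorphism}.
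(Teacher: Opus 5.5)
Your proposal is correct and is essentially the paper's argument. The paper likewise uses that scaling leaves $\hat{\mathbf F}$ unchanged and only relaxes the cap, shows that $\Phi^\alpha$ maps the box $\prod_i[R^1_{i\rightrightarrows},\alpha F_{i\rightrightarrows}]$ into itself, and obtains a fixed point there by Brouwer; you get the same monotone comparison, somewhat more cleanly, from Tarski's post-fixed-point characterization of the greatest fixed point. On the knife-edge set the paper does what you propose, citing the outcome-equivalence established in the proof of Proposition~\ref{p:existence}: banks whose repayments differ across equilibria are in the slope-one region in both, so they are never liquidated, and they owe nothing outside their group. Your fallback selection convention is therefore not needed.
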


\noindent This is the counterpart of Lemma \ref{l:ST netting}.
It says that  \emph{more} (dilutable) debt is a good thing here.\footnote{Although, per the result, increasing debt cannot hurt if the relative debts stay the same---if $\alpha \mathbf{F}$ increases but $\mathbf{F}$ stays constant---it can if they change---i.e.\ if some debts increase and some do not---as illustrated by how making a network ``more symmetric'' can decrease efficiency (in Section \ref{s:exp}).}
{}The reason is that high debts $(\alpha > 1)$ allow shocked banks to weather shocks.
    To see why, suppose two banks, B$_i$ and B$_j$, have perfectly off-setting debts, owing each other the same amount: $\alpha F_{i \to j} = \alpha F_{j \to i} = \alpha F.$
    Suppose B$_i$ suffers a liquidity shock and B$_j$ does not.
    In this case, B$_i$ gets liquidity by pledging both its own assets and the claim it has on B$_j$ to raise $\theta y + R_{j \to i }$.
    The larger B$_i$'s claim $\alpha F$ on B$_j$ is, the higher its value $R_{j \to i}$ is, and the more liquidity it can raise.
    \label{pp:AE-1b}

      True, increasing $\alpha$ gives B$_i$ not only  a larger claim to pledge on the left-hand side of its balance sheet, but also a larger liability to repay on the right.
      However, B$_i$, being shocked, is likely to default, whereas, not being shocked, B$_j$ is not.\footnote{\label{pp:sc7}{}Indeed, the shocked bank generally defaults, whereas the not-shocked one repays in full (cf.\ equation \eqref{e:default}): The shocked bank's default passes B$_j$ a shortfall $\ell - \theta y$, which B$_j$'s own slack $\theta y$ covers given $\ell < 2\theta y$ (the algebra is that of footnote~\ref{fn:knife-edge} with one shock). Moreover, the mechanism is self-enforcing, in that, as banks cannot avoid being diluted, they are committed ex ante to transfers they would prefer not to make ex post. \cite{Leitner-2005} uncovers another self-enforcing mechanism to transfer liquidity in financial networks: Healthy banks commit to transfer liquidity to distressed ones by exposing themselves to their default through the network.}
      {}Therefore, the claim B$_i$ has on B$_j$ is worth more than the liability it has to it. So the apparently offsetting debts are a transfer of value from the not-shocked to the shocked bank, providing insurance for the shocked bank.

            This transfer benefits whichever bank is shocked at the expense of whichever is not shocked.
            Liquidity gets transferred from the bank with liquidity to the bank that needs it:  Zero-net long-term debt has positive net present value.
            Per the result, increasing debt increases overall efficiency.\footnote{Efficient ``dilutable debt'' also appears in \cite{Diamond-1993}, \citeauthor*{Paradox} (\citeyear{Paradox}, \citeyear{Donaldson-Gromb-Piacentino-2025}), and  \cite{Hart-Moore-1995}.}

            That is not so in the benchmark (and most other models), in which zero-net debts have zero NPV at most (Lemma \ref{l:ST netting}).
            The reason is that short-term debt, being due right away, cannot be diluted:  The claims on the left side of bank balance sheets increase debt capacity but the liabilities on the right decrease it; they cancel each other out at best.

    We turn to network connectedness next.

\begin{proposition}[Delta connectedness]
\label{p:delta}

Suppose exactly one bank is \emph{not} shocked ($\lvert \mathbf{1} - \bm \sigma \rvert= 1$).

\begin{itemize}

    \item[(i)] A ring network with $F > \theta y$ is the most efficient among all networks.

    \item[(ii)] If $\ell \leq 2 \theta y$,\footnote{This is the condition for there to be enough liquidity in total to save a shocked bank; otherwise no network does better than the empty network (cf.\ equation~\eqref{e:L*}).} for any $\delta$, there is a $\delta$-connected network that is strictly less efficient than a ring.
\end{itemize}

\end{proposition}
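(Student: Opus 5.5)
Throughout, fix a state in which exactly one bank, say B$_h$, is not shocked. The plan is to bound the number of banks any network can save in this state, and then to show that the ring attains the bound in every such state (part (i)). For part (ii) I will exhibit a $\delta$-connected network that saves nobody in some such state, while the ring saves at least one bank.

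\emph{Step 1: an upper bound valid for every network.} Let $\mathscr{C}$ be the set of continuing banks and $\mathscr{L}$ the set of liquidated banks. B$_h$ always continues, since $\theta y+R_{h\leftleftarrows}>0$. For each $i\in\mathscr{C}$, equation~\eqref{e:total repayment} gives $R_{i\rightrightarrows}\le\theta y-\ell\sigma_i+R_{i\leftleftarrows}$. Each liquidated bank pays nothing, so the net flow of repayments from $\mathscr{C}$ to $\mathscr{L}$ is nonnegative. Summing over $\mathscr{C}$ and using market clearing~\eqref{eq:payment_clearing}, which cancels the repayments within $\mathscr{C}$, yields $|\mathscr{C}|\theta y-(|\mathscr{C}|-1)\ell\ge 0$. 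Hence the number $m$ of shocked banks saved satisfies $m\le m^*:=\max\{m\in\{0,\ldots,N-1\}:(m+1)\theta y\ge m\ell\}$, whatever the network.

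\emph{Step 2: the ring attains $m^*$ (part (i)).} Order the banks along the ring starting from B$_h$. I will guess the equilibrium and then verify it; by Proposition~\ref{p:existence} it is generically the unique one.
\begin{itemize}
\item If $m^*<N-1$, let B$_{h+m^*+1}$ be the first bank that is liquidated. Its repayment is zero, so $R_{h\leftleftarrows}=0$.
\item B$_h$ then pays $\min\{\theta y,F\}=\theta y$, because $F>\theta y$.
\item Inductively, B$_{h+k}$ receives $(k)\theta y-(k-1)\ell$ and, after meeting its shock, has $(k+1)\theta y-k\ell$ left.
\item This amount is nonnegative exactly for $k\le m^*$, so these banks continue and pass the remainder on.
\item Every such payment is at most $\theta y<F$, so the face value never binds.
\end{itemize}
The first $k$ with $(k+1)\theta y<k\ell$ is $k=m^*+1$. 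That bank is liquidated, which confirms the guess. If $m^*=N-1$, all banks continue and the circulating payments are consistent. The step I expect to need the most care is checking that this guessed profile really satisfies~\eqref{eq:payment_equilibrium}, in particular that no face value binds. That is exactly where $F>\theta y$ enters. Combined with Step 1, the ring is weakly more efficient than every network in every state with one unshocked bank.

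\emph{Step 3: a worse $\delta$-connected network (part (ii)).} Since $\ell\le 2\theta y$, we have $m^*\ge1$, so by Step 2 the ring saves at least one shocked bank. Take any $\delta$-connected network shape, for instance one with two groups joined only by links whose shares $\hat F$ are at most $\delta$; the empty network also works trivially. Then scale it by $\alpha>0$ small enough that $\alpha F_{i\rightrightarrows}<\ell-\theta y$ for all $i$. If $\ell\le\theta y$, every network saves all banks and there is nothing to prove, so assume $\ell>\theta y$. Scaling leaves the shares $\hat F$ unchanged, so the scaled network is still $\delta$-connected. By the zero-net assumption, each shocked bank receives at most $F_{i\leftleftarrows}=F_{i\rightrightarrows}<\ell-\theta y$. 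Hence $\theta y+R_{i\leftleftarrows}<\ell$, and every shocked bank is liquidated. This network therefore saves none, strictly fewer than the ring saves in that state. It is never better than the ring in any state, by Steps 1--2, so it is strictly less efficient.
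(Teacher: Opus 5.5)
Your proof is correct and follows essentially the paper's route. Step~1 is the aggregate-liquidity bound behind $L^*$ in Lemma~\ref{l:constrained efficiency}, derived directly from equilibrium flows rather than via the planner (your $m^*$ equals $N-1-L^*$ at $S=N-1$); Step~2 is the same chain computation around the ring; and Step~3 generalizes the paper's linkless network by scaling down any $\delta$-connected network.

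Two small repairs are needed. First, when $m^*=N-1$ the face value does bind: the unshocked bank repays $F$ in full and B$_{h+k}$ passes on $F-k(\ell-\theta y)$, so ``the circulating payments are consistent'' needs that vector. Alternatively, note that your chain computation applies to every equilibrium with a liquidation, since a liquidation forces $R_{h\leftleftarrows}=0$. That rules out liquidations when $m^*=N-1$, and together with Step~1 it pins down the outcome of every equilibrium, so you no longer need to say ``generically.'' Second, $\ell\le\theta y$ is excluded by the standing assumption $\ell>\theta y$; it is not a case with ``nothing to prove,'' since there no network is strictly worse than the ring and (ii) would fail.
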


This is the counterpart of Lemma \ref{l:ST delta}. It says that \emph{more} connectedness is a good thing---connectedness in the sense of the bottleneck $\delta$, which captures the strength of exposures between groups of banks, not their number. The ring network, in which every bank has a large exposure to another, is the best (given a single not-shocked bank). It is better not to weaken interbank exposures in the sense that making $\delta$ small could lead to a strictly worse outcome. Intuitively, $\delta$ captures how easily a shocked bank can raise liquidity using claims on healthy banks as collateral. When $\delta$ is low, that mechanism is less effective. The ring wins here because it sends the not-shocked bank's liquidity to a single shocked bank rather than splitting it---the principle the exponential network generalizes to every state (Section~\ref{s:exp}).

We turn to the long-term-debt counterpart to short-term debt's default radius around a shocked bank. It is a ``salvation radius'' around a healthy bank.

\begin{proposition}[Salvation radius] \label{p:radius}
    Let $\mathbf{F}$ be a regular network with $F_{i \rightrightarrows} \equiv F$ and suppose that exactly one bank, say B$_j$, is \emph{not} shocked and that it does not default.  Define $d^{LT}:=\frac{F}{\ell - \theta y}$.
    \begin{enumerate}
        \item[(i)]  If $d_{j\to i}<d^{LT}$, then B$_i$ is not liquidated.
        \item[(ii)]  If no bank is liquidated, then $d_{j\to i}\leq d^{LT}$ for all $i$.
    \end{enumerate}

\end{proposition}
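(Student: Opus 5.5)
The plan is to rewrite the clearing condition \eqref{eq:payment_equilibrium} in terms of \emph{shortfalls} relative to face value and compare them with the linear system that defines the harmonic distance in Definition~\ref{d:HD}. Write $c:=\ell-\theta y$. If $c\le 0$, no shocked bank can be liquidated, so (i) and (ii) hold trivially; assume $c>0$. Define $x_i:=F-R_{i\rightrightarrows}\in[0,F]$. B$_j$ does not default, so $x_j=0$. Because the network is regular with zero net positions, $\sum_k \hat F_{k\to i}F=F_{i\leftleftarrows}=F$, and therefore
\[
R_{i\leftleftarrows}=\sum_{k}\hat F_{k\to i}R_{k\rightrightarrows}=F-\sum_k \hat F_{k\to i}x_k .
\]
Substituting into \eqref{e:total repayment} for a shocked B$_i$ ($i\neq j$) gives
\[
x_i=\min\Big\{F,\ \big[c+\textstyle\sum_k \hat F_{k\to i}x_k\big]^+\Big\}=\min\Big\{F,\ c+\textstyle\sum_{k\neq j}\hat F_{k\to i}x_k\Big\},
\]
where the positive part can be dropped because $c>0$ and $x\ge 0$. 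By \eqref{eq:liquidation}, B$_i$ is liquidated if and only if $c+\sum_{k\neq j}\hat F_{k\to i}x_k>F$.

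Next, set $z_i:=c\,d_{j\to i}$. By Definition~\ref{d:HD}, $z$ solves the uncapped linear system $z_i=c+\sum_{k\neq j}z_k\hat F_{k\to i}$ for $i\neq j$, with $z_j=0$. In matrix form, let $P$ be $\hat{\mathbf F}$ with row and column $j$ deleted. The shortfalls then satisfy $(I-P^\top)x\le c\mathbf 1$, while the harmonic distances satisfy $(I-P^\top)z=c\mathbf 1$.

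The key technical step is to show that $(I-P^\top)$ is invertible with a nonnegative inverse $\sum_n (P^\top)^n$, i.e.\ that $P$ has spectral radius below one. The matrix $\hat{\mathbf F}$ is row-stochastic. A connected network with zero net positions is a balanced directed graph, and a weakly connected balanced graph is strongly connected. Hence every bank has a directed path to $j$, so the chain killed at $j$ is transient and $\rho(P)<1$. This is the step I expect to need the most care; it is essentially the Markov-chain reading of $d$ that \citetalias{AOT} use.

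Given invertibility, both parts follow. For (i), multiplying by the nonnegative inverse gives $x\le z$ componentwise. Then, for any $i\neq j$,
\[
c+\textstyle\sum_{k\neq j}\hat F_{k\to i}x_k\;\le\; c+\sum_{k\neq j}\hat F_{k\to i}z_k \;=\; z_i \;=\; c\,d_{j\to i}\;<\;F
\]
whenever $d_{j\to i}<d^{LT}$, so B$_i$ is not liquidated. This argument applies to any equilibrium, so the knife-edge multiplicity of Proposition~\ref{p:existence} causes no trouble. For (ii), if no bank is liquidated then the cap never binds: $c+\sum_{k\neq j}\hat F_{k\to i}x_k\le F$ for every $i$, so $x$ solves the same linear system as $z$. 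Uniqueness of that system's solution gives $x=z$, and hence $c\,d_{j\to i}=x_i\le F$, i.e.\ $d_{j\to i}\le d^{LT}$, for all $i$.
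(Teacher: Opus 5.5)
Your proof is correct. Part (ii) is the paper's argument: with no liquidation the cap is slack, so the shortfalls divided by $\ell-\theta y$ solve the harmonic-distance system, and uniqueness identifies them with $d_{j\to\cdot}$.

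Part (i), however, takes a more direct route. The paper splits the shocked banks into two groups: those that default but are not liquidated ($\mathscr{D}\setminus\mathscr{L}$) and those that are liquidated ($\mathscr{L}$). It then proceeds in four steps:
\begin{enumerate}
\item it solves the linear shortfall equations on the first group;
\item it substitutes them into the strict liquidation inequalities on the second group;
\item it eliminates the same group from the harmonic-distance system;
\item it concludes $d_{j\to i}\geq d^{LT}$ on $\mathscr{L}$, because the Schur complement $\mathbf{\tilde Q}$ is again a nonsingular M-matrix.
\end{enumerate}
Since every bank other than B$_j$ defaults, the M-matrix whose Schur complement is taken is exactly your $I-P^\top$, ordered by the two groups.

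You never form this partition. The whole shortfall vector is a subsolution of the system that $c\,d_{j\to\cdot}$ solves exactly, so a single application of $(I-P^\top)^{-1}=\sum_n (P^\top)^n\geq 0$ gives $x\leq c\,d_{j\to\cdot}$ for every bank. This avoids the Schur-complement machinery and any bookkeeping of which regime each bank is in. You also supply the reason $I-P^\top$ is a nonsingular M-matrix: zero net positions plus connectedness give strong connectedness, hence $\rho(P)<1$. The paper's footnote takes this for granted.

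Your inequality chain also gives slightly more than (i). Liquidation requires a strict excess over $F$, so B$_i$ survives even when $d_{j\to i}=d^{LT}$. Combined with (ii), under the proposition's hypotheses no bank is liquidated if and only if $\max_i d_{j\to i}\leq d^{LT}$. The paper's version, in exchange, mirrors AOT's proof of their Proposition 8 and keeps the default and liquidation regimes explicit.

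One small fix: drop the aside on $c\leq 0$. The model maintains $\ell>\theta y$, and for $c<0$ statement (ii) would actually fail, since then $d^{LT}<0=d_{j\to j}$.
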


\noindent This is the counterpart of Lemma \ref{l:ST radius}. It says that the harmonic distance captures not only how defaults transmit from shocked to healthy banks when debt cannot be diluted, but also how the option to dilute allows shocked banks to raise liquidity against claims on not-shocked banks. It defines a radius around a not-shocked bank within which no bank is liquidated.

\citetalias{AOT}'s link between the harmonic distance and the bottleneck parameter also applies to our long-term debt network---it relies on only network structures, not equilibrium behavior. Hence we have the next result:

\begin{proposition}[Bottleneck connectedness] \label{p:bottleneck}
    Suppose the conditions of Proposition \ref{p:radius} are met and that, additionally, the network $\mathbf{F}$ is symmetric.
    Define  $\beta^{LT} := 4 \sqrt{\frac{\ell - \theta y}{NF}}$ and  $\beta_{LT} := \min \left\{ \,  \frac{\ell - \theta y}{2 N F} \, , \, 1 \, \right\}$.
    \begin{enumerate}
        \item[(i)] If $\beta>\beta^{LT}$, then no bank is liquidated.
        \item[(ii)] If $\beta<\beta_{LT}$, then at least one bank is liquidated.
    \end{enumerate}

\end{proposition}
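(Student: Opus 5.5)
The plan is to reduce Proposition~\ref{p:bottleneck} to Proposition~\ref{p:radius}, mirroring how Lemma~\ref{l:ST bottleneck} follows from Lemma~\ref{l:ST radius} in \citetalias{AOT}. Proposition~\ref{p:radius} already gives a characterization in terms of harmonic distances from the not-shocked bank B$_j$, with threshold $d^{LT}=F/(\ell-\theta y)$. What remains is purely graph-theoretic: bound $\max_i d_{j\to i}$ from above and below in terms of the bottleneck parameter $\beta$. Those bounds concern only the matrix $\hat{\mathbf F}$ and not equilibrium behavior, so I would import them verbatim from \citetalias{AOT}. Their Corollary~2 is derived from exactly such bounds, combined with the threshold $d^{ST}=F/(\theta y)$. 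Our statement should then follow by replacing $\theta y$ with $\ell-\theta y$, which is precisely how $\beta^{LT},\beta_{LT}$ are obtained from $\beta^{ST},\beta_{ST}$.

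First, I would record the Markov-chain interpretation. Because the network is regular with zero net positions, $\hat{\mathbf F}$ is doubly stochastic, and by symmetry it is a symmetric stochastic matrix. The harmonic distance $d_{i\to j}$ in Definition~\ref{d:HD} is then the mean hitting time of $j$ from $i$ for the chain with transition matrix $\hat{\mathbf F}$. Since the stationary distribution is uniform, $\beta$ is a normalized conductance (Cheeger constant) of the chain. Second, I would invoke the two hitting-time bounds \citetalias{AOT} use. The upper bound comes from the Cheeger inequality: the spectral gap is at least of order $\beta^2 N^2/\ldots$, and hitting times are controlled by $N$ over the spectral gap, giving roughly $\max_{i} d_{j\to i}\le 16/(N\beta^2)$. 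The lower bound is elementary: if a set $\mathscr B$ with $j\notin\mathscr B$ attains $\beta$, the chain started in $\mathscr B$ leaves it slowly, giving roughly $\max_i d_{j\to i}\ge 1/(2N\beta)$. The exact constants I would take from their paper and match to the definitions of $\beta^{ST},\beta_{ST}$.

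Third, I would combine these bounds with the radius result. For part~(i), if $\beta>\beta^{LT}=4\sqrt{(\ell-\theta y)/(NF)}$, then $16/(N\beta^2)<F/(\ell-\theta y)=d^{LT}$. Hence $d_{j\to i}<d^{LT}$ for all $i$, and Proposition~\ref{p:radius}(i) says no bank is liquidated. For part~(ii), if $\beta<\beta_{LT}$, then the lower bound gives $\max_i d_{j\to i}>d^{LT}$. The contrapositive of Proposition~\ref{p:radius}(ii) then says some bank is liquidated. The cap at $1$ in $\beta_{LT}$ mirrors the cap in $\beta_{ST}$ and handles the trivial regime in which the bound is vacuous.

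The main obstacle is the lower-bound step. Part~(ii) of Proposition~\ref{p:radius} is stated with a weak inequality, $d_{j\to i}\le d^{LT}$, whereas Lemma~\ref{l:ST radius}(ii) is strict. I therefore need the lower hitting-time bound to deliver $\max_i d_{j\to i}>d^{LT}$ strictly. That should come from the strict inequality $\beta<\beta_{LT}$. I also need to check that the $\beta$-minimizing set can be taken to exclude $j$, or otherwise use its complement; since $\beta$ is symmetric under $\mathscr B\leftrightarrow\mathscr B^c$ for symmetric $\hat{\mathbf F}$, this is harmless. I would verify these two points carefully and otherwise cite \citetalias{AOT} for the spectral estimates.
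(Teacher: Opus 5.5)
Your part (i) is exactly the paper's argument: the upper bound of Lemma~\ref{l:beta and d} combined with Proposition~\ref{p:radius}(i).

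\textbf{Part (ii): the paper's route.} Here the paper takes a different and more elementary route that never uses harmonic distance. It picks the minimizing set $\mathscr B$ on the side not containing B$_j$, so every bank in $\mathscr B$ is shocked. If none were liquidated, each would default with net inflow exactly $\ell-\theta y$. Summing over $\mathscr B$, internal payments cancel, and
$(\ell-\theta y)|\mathscr B|\leq\sum_{i\in\mathscr B}\sum_{k\in\mathscr B^c}F_{k\to i}=F\beta|\mathscr B||\mathscr B^c|\leq FN\beta|\mathscr B|$,
which contradicts $\beta<\beta_{LT}$. This liquidity-conservation argument has several advantages:
\begin{itemize}
\item it needs no Markov-chain estimate;
\item it does not go through Proposition~\ref{p:radius}(ii);
\item it never uses that B$_j$ repays in full;
\item it even has a factor-$2$ slack.
\end{itemize}
Your route instead keeps the salvation-radius interpretation and the exact parallel with AOT.

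\textbf{What your route needs.} It is sound only if you prove the lower bound for the \emph{fixed} not-shocked bank rather than importing it. Lemma~\ref{l:beta and d} bounds $\max_{i\neq k}d_{k\to i}$ over all ordered pairs, and the maximizing pair need not involve $j$. The paper points out, in the proof of Lemma~\ref{l:ST bottleneck}, that this is exactly why the AOT bound is not directly applicable. Your exit-time idea does deliver what you need, with one correction to orientation. By the proof of Proposition~\ref{p:radius}(ii), $x_i=d_{j\to i}$ solves $x_i=1+\sum_k Q_{ik}x_k$ with $x_j=0$. So $d_{j\to i}$ is the mean time for the chain $\mathbf Q=\FT$ \emph{started at $i$} to hit $j$. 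That is the reverse of what you wrote in your first step, although your lower-bound sketch implicitly uses the right orientation. It matters, because the averaging must run over starting points in $\mathscr B$.

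\textbf{How the fixed-$j$ bound goes.}
\begin{itemize}
\item Start the chain uniformly on $\mathscr B\not\ni j$. Its average one-step exit probability is $\beta|\mathscr B^c|$.
\item Since $\mathbf Q$ is doubly stochastic, the uniform measure is sub-invariant for the chain killed on leaving $\mathscr B$.
\item Hence $\Pr(\tau\leq t)\leq t\beta|\mathscr B^c|$, and summing $\Pr(\tau>t)$ gives $\mathbb E[\tau]>1/(2\beta|\mathscr B^c|)>1/(2N\beta)>d^{LT}$.
\item So some $i\in\mathscr B$ has $d_{j\to i}>d^{LT}$ strictly, and the contrapositive of Proposition~\ref{p:radius}(ii) gives a liquidated bank.
\end{itemize}
Your concern about the weak inequality in Proposition~\ref{p:radius}(ii) therefore disappears.
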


 \noindent This is the counterpart of Lemma \ref{l:ST bottleneck}.     It captures the idea that if all banks are closely connected, one bank's excess liquidity can flow through the system  to save all banks and, conversely, if they are not, it cannot.            Specifically, if at least two components are not closely connected ($\beta$ is small), the banks in one can raise little liquidity by diluting their liabilities to banks in the other. What they can raise can be so limited that they end up unable to save themselves from liquidation.

\section{Efficiency and Exponential Networks}  \label{s:exp}

Here we define constrained efficiency and construct a class of networks---the ``exponential networks''---that implement it. We conclude the section with an example contrasting complete and exponential networks, which captures the main ideas.

\begin{definition}[Planner's problem and constrained efficiency]\label{d:constrained efficiency} The \emph{planner's problem} is to find a set of transfers $\{t_i\}_i$ for each $\bm \sigma$, with $t_i$ paid to each B$_i$, to  minimize the number of liquidated banks $ \big\lvert \big\{ i : \theta y + t_i - \ell \sigma_i < 0 \big\} \big \vert$ subject to each bank's liquidity constraint $ t_i \geq  \min \{\ell \sigma_i -\theta y,0\}$ and to liquidity being conserved $\sum t_i \leq 0$.

A network is \emph{constrained efficient} if the equilibrium is no less efficient than the planner's solution.
\end{definition}

\noindent \label{def 7}{}In words, the planner wants to minimize the number of liquidated banks by transferring liquidity within the system. It must respect the aggregate resource constraint and the limited pledgeability friction---it cannot raise more from any one bank than the net liquidity it has (that is no more than $\theta y$ from a not-shocked bank and zero from a shocked one, per the constraint $t_i \geq \min\{\ell \sigma_i - \theta y, 0\}$, given $\ell > \theta y$).
But {}it has no other constraints. (The problem does not even depend on the network $\mathbf{F}$ as the planner's transfers can replicate---or undo---any interbank payments.)
The solution, characterized next, is triage: Tax every healthy bank to the hilt, fully rescue as many shocked banks as that budget allows, and abandon the rest, as spreading liquidity thinly saves no one.

\begin{lemma}[Constrained efficiency] \label{l:constrained efficiency} A network is constrained efficient if the number of liquidated banks is
    \begin{equation} \label{e:L*}
        L^* : =  \max   \left\{
                            \, 0 \, , \,
                          \left\lceil \frac{ S \ell - N \theta y }{\ell - \theta y} \right\rceil
                           \,
                        \right\}
    \end{equation}
    for each state $\bm \sigma$ where $S$ banks are shocked.
\end{lemma}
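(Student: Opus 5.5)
The plan is to show that $L^*$ is exactly the optimal value of the planner's problem in Definition~\ref{d:constrained efficiency}, state by state. Then any network whose equilibrium liquidates $L^*$ banks in every state $\bm\sigma$ is no less efficient than the planner's solution, which is what constrained efficiency requires. Throughout I take $\ell>\theta y$, as the paper does when interpreting the constraint. Otherwise no bank is ever liquidated, and $L^*=0$ holds trivially, since $S\ell-N\theta y\le 0$.

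First, the lower bound: any feasible transfer scheme liquidates at least $L^*$ banks. Fix $\bm\sigma$ and a feasible $\{t_i\}$.
\begin{itemize}
\item A not-shocked bank is never liquidated, because $t_i\ge -\theta y$ gives $\theta y+t_i\ge 0$.
\item A shocked bank is liquidated if $t_i<\ell-\theta y$. So a saved shocked bank requires $t_i\ge \ell-\theta y$.
\item A liquidated shocked bank still has $t_i\ge 0$, since its constraint is $\min\{\ell-\theta y,0\}=0$.
\end{itemize}
Let $K$ be the number of shocked banks saved. Summing transfers and using $\sum_i t_i\le 0$,
\[
0\ \ge\ \sum_i t_i\ \ge\ K(\ell-\theta y)+0\cdot(S-K)-(N-S)\theta y .
\]
Hence $K\le (N-S)\theta y/(\ell-\theta y)$, and the number liquidated satisfies
\[
S-K\ \ge\ \frac{S\ell-N\theta y}{\ell-\theta y}.
\]
Since $S-K$ is an integer and nonnegative, $S-K\ge L^*$.

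Second, the upper bound: exhibit a feasible scheme attaining $L^*$. Let $K^*:=\min\{S,\lfloor (N-S)\theta y/(\ell-\theta y)\rfloor\}$.
\begin{itemize}
\item Tax each not-shocked bank $t_i=-\theta y$.
\item Give exactly $\ell-\theta y$ to $K^*$ arbitrarily chosen shocked banks.
\item Give $0$ to the remaining shocked banks.
\end{itemize}
Feasibility follows because $K^*(\ell-\theta y)\le (N-S)\theta y$, so $\sum_i t_i\le 0$, and every individual constraint holds. If aggregate conservation is meant with equality, the slack can be handed to one of the saved banks or back to a healthy one without breaking any constraint. The number liquidated is then $S-K^*$.

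It remains to check the identity $S-\lfloor (N-S)\theta y/(\ell-\theta y)\rfloor=\lceil (S\ell-N\theta y)/(\ell-\theta y)\rceil$. This follows from $S-\lfloor x\rfloor=\lceil S-x\rceil$ and
\[
S-\frac{(N-S)\theta y}{\ell-\theta y}=\frac{S\ell-N\theta y}{\ell-\theta y}.
\]
Taking the maximum with $0$ handles the case where the cap $K^*=S$ binds. So the planner's optimum equals $L^*$.

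The main obstacle is bookkeeping rather than substance. Two points need care: the sign convention in the liquidity constraint $t_i\ge\min\{\ell\sigma_i-\theta y,0\}$, which implies shocked banks cannot be taxed, and the integer rounding. I would also state that the planner's problem does not depend on $\mathbf F$. The benchmark "no less efficient than the planner's solution" is therefore the state-by-state number $L^*$, and matching it in every state is sufficient for constrained efficiency. It is also necessary, because no equilibrium allocation can do better: equilibrium repayments induce net transfers $t_i=R_{i\leftleftarrows}-R_{i\rightrightarrows}$ that satisfy the planner's constraints.
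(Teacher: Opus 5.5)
Your proof is correct and follows the paper's own argument: tax every healthy bank $\theta y$, give each shocked bank either $\ell-\theta y$ or nothing, and solve the budget inequality $(S-L)(\ell-\theta y)\le(N-S)\theta y$ for the smallest admissible integer $L$. Your version is more complete than the paper's sketch, because you prove the lower bound for arbitrary feasible transfers, which the paper only asserts. One small correction: your aside on $\ell\le\theta y$ is wrong, since there the formula for $L^*$ is undefined or strictly positive rather than $0$. That case is moot, however, because the paper maintains $\theta y<\ell$ throughout.
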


\noindent It turns out that the social planner should generally raise as much liquidity as possible from each not-shocked bank, levying the tax $-t_i = \theta y$ if $\sigma_i = 0$, and transfer shocked banks either just enough liquidity to survive or none at all, i.e., either $t_i = \ell - \theta y$ or $t_i = 0$ if $\sigma_i = 1$.
Using $S$ and $L$ to denote the numbers of shocked and liquidated banks, the planner's budget constraint says that the total subsidy---the transfer  $\ell - \theta y$ to each of the $S-L$ shocked banks that is not liquidated---must be less than the total tax---the transfer $\theta y$ from each of the $N-S$ banks that is not shocked:
\begin{equation} \label{e:residual liquidity}
    (S - L)(\ell - \theta y ) \leq (N - S) \theta y .
\end{equation}
Solving for the smallest non-negative integer $L$ that satisfies the above gives the result.

That argument points to two key properties of the planner's solution, both of which help avoid ``wasting liquidity'':

    \begin{enumerate}

        \item[(i)] It extracts the maximum tax from not-shocked banks, since they are not liquidated anyway.

        \item[(ii)] It gives nothing to liquidated banks, since, analogously, they are liquidated anyway.

    \end{enumerate}

\noindent We aim to construct a network with both properties (Proposition \ref{p:constrained efficiciency} below). We now build up to it in steps, showing how to achieve one and then the other, starting with the first:

\begin{lemma}[High debt mutualizes assets] \label{l:high debt} Let $\mathbf{F}$ be a connected network. If $\alpha$ is sufficiently large, then in the equilibrium of $\alpha \mathbf{F}$ either (i) all not-shocked banks make the maximum net payment,
     $
        {R}_{i \rightrightarrows}- {R}_{i \leftleftarrows} = \theta y,
    $
    or (ii) no bank is liquidated.

\end{lemma}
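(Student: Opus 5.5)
The plan is to work with \emph{shortfalls} rather than repayments, and to show that once $\alpha$ is large a single liquidation creates a shortfall of order $\alpha$ that reaches every bank and forces every not-shocked bank into default. A defaulting, non-liquidated not-shocked bank makes exactly the maximum net payment. The dichotomy is then: either no bank is liquidated, which is (ii), or some bank is, and then (i) holds.

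\textbf{Step 1 (which not-shocked banks pay net $\theta y$).} Let $i$ be a not-shocked bank. It is never liquidated, since $\theta y+R_{i\leftleftarrows}\geq \theta y>0$. If it defaults, equation~\eqref{e:total repayment} gives $R_{i\rightrightarrows}=\theta y+R_{i\leftleftarrows}$, so its net payment is exactly $\theta y$. Hence (i) can fail only if some not-shocked bank B$_j$ is solvent, i.e.\ $\theta y+R_{j\leftleftarrows}\geq \alpha F_{j\rightrightarrows}$. It therefore suffices to show: for $\alpha$ large, if some bank is liquidated, then every bank defaults.

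\textbf{Step 2 (shortfall recursion).} Define $s_k:=\alpha F_{k\rightrightarrows}-R_{k\rightrightarrows}\geq 0$. By pro rata and market clearing, $R_{k\leftleftarrows}=\alpha F_{k\leftleftarrows}-(\FT \mathbf{s})_k$. Zero net positions give $\alpha F_{k\leftleftarrows}=\alpha F_{k\rightrightarrows}$. Substituting into equation~\eqref{e:total repayment} yields three cases:
\begin{itemize}
\item a liquidated bank has $s_k=\alpha F_{k\rightrightarrows}$;
\item a defaulting, non-liquidated bank has $s_k=(\FT\mathbf{s})_k+\ell\sigma_k-\theta y\geq (\FT\mathbf{s})_k-\theta y$;
\item a solvent bank has $s_k=0$ and incoming shortfall $(\FT\mathbf{s})_k\leq \theta y-\ell\sigma_k\leq\theta y$.
\end{itemize}
So a bank can absorb at most $\theta y$ of incoming shortfall and still be solvent. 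Every other bank passes on its incoming shortfall minus at most $\theta y$.

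\textbf{Step 3 (propagation).} First, a connected network with zero net positions is strongly connected: a weakly connected balanced digraph has a directed path between any ordered pair. Let $w>0$ be the minimum of $\hat F_{k\to m}$ over existing links and $\underline F:=\min_k F_{k\rightrightarrows}>0$; both are invariant to scaling by $\alpha$.

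Suppose B$_i$ is liquidated and some bank is solvent. Take a shortest directed path from B$_i$ to the set of solvent banks, ending at a solvent bank B$_j$; every bank before B$_j$ on the path defaults. By Step 2, each defaulting intermediate bank B$_m$ has $s_m\geq w\,s_{\text{prev}}-\theta y$, where $s_{\text{prev}}$ is the shortfall of its predecessor on the path. Any liquidated intermediate simply restarts the chain with $s=\alpha F\geq\alpha\underline F$. Iterating over at most $N$ steps gives
\[
(\FT\mathbf{s})_j\;\geq\; w^{N}\alpha\underline F-N\theta y ,
\]
which exceeds $\theta y$ once $\alpha$ is large. This contradicts B$_j$ being solvent, so every bank defaults and Step 1 delivers (i). If no bank is liquidated, (ii) holds.

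\textbf{Main obstacle.} The threshold on $\alpha$ must be uniform across states and across equilibria. It is, because $w$, $\underline F$ and $N$ depend only on $\mathbf{F}$, and the $2^N$ states are finitely many. The argument uses only the equilibrium conditions and no selection, so it also covers knife-edge multiplicity. The remaining care point is the strong-connectivity step, which uses zero net positions via the balanced-digraph argument; it is routine but easy to overlook.
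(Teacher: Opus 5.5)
Your proof is correct, and it runs the paper's argument in mirror image. Both proofs rest on the same two ingredients: strong connectivity, which you derive from zero net positions exactly as the paper does, and a directed path along which an order-$\alpha$ quantity survives.

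The paper propagates repayments rather than shortfalls, and it runs the path in the opposite logical direction. It supposes that some not-shocked bank is solvent, so its net payment is below $\theta y$ and it repays $\alpha F_{i\rightrightarrows}$ in full. It then shows, by induction along a path to a liquidated bank B$_j$, that each edge carries at least $\alpha a_t-b_t$ with $a_t>0$. Once $\alpha$ is large, B$_j$ would receive more than its deficit, a contradiction.

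Your shortfall version gives an explicit threshold, $\alpha>(N+1)\theta y/(w^N\underline{F})$, that does not involve $\ell$ or $\bm\sigma$. Uniformity across states is therefore automatic. The paper instead takes the largest of finitely many pair- and state-specific thresholds, which depend on the liquidity deficits.

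The paper's version is more general. Your recursion $s_k=(\hat{\mathbf{F}}^\top\mathbf{s})_k+\ell\sigma_k-\theta y$ relies on zero net positions to cancel $\alpha F_{k\leftleftarrows}$ against $\alpha F_{k\rightrightarrows}$. Without that cancellation, a net creditor on your path or at its end has an order-$\alpha$ cushion, and your lower bound on the propagated shortfall collapses. The repayment argument uses zero net positions only to obtain strong connectivity. That is why the paper can state, in its footnote on the zero-net assumption, that the lemma holds for any strongly connected network.

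One wording point is inaccurate: a liquidated bank does not pass on ``its incoming shortfall minus at most $\theta y$,'' because its outgoing shortfall is capped at $\alpha F_{k\rightrightarrows}$. Your restart device in Step~3 handles exactly this case, so nothing breaks.
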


\noindent This says that if debts are sufficiently high and liquidity is sufficiently scarce (in the sense that at least one bank is liquidated), then each not-shocked bank provides the maximum amount of liquidity.
         Intuitively, to increase interbank debts is to make each bank's assets a larger fraction of others' balance sheets---it ``mutualizes'' the banking system, making each bank more like the whole system.\footnote{This role of  default in facilitating a socially efficient transfer of liquidity contrasts with the literature, in which it typically constitutes a social cost, facilitating  rent extraction at best (see \cite{Farboodi-2021} and \cite{Perotti-Spier-1993}).}
         As a result, when liquidity is scarce overall, no surviving bank retains excess liquidity.
         In contrast, if liquidity is not scarce then no bank is liquidated if debt levels are sufficiently high:

\begin{corollary}[First best with high debt for small shocks] \label{c:high debt}
Let $\mathbf{F}$ be a connected network and suppose that $N  \theta y > S \ell $ (i.e.\ $L^* = 0$). If $\alpha$ is sufficiently large, then no bank is liquidated in the equilibrium of $\alpha \mathbf{F}$.

\end{corollary}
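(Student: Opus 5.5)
The plan is to derive the corollary from Lemma~\ref{l:high debt} by contradiction. Fix a state $\bm\sigma$ with $N\theta y > S\ell$, and take $\alpha$ large enough that Lemma~\ref{l:high debt} applies to $\alpha\mathbf{F}$. Only finitely many states exist, so a single threshold on $\alpha$ works for all of them. Suppose, contrary to the claim, that some bank is liquidated in equilibrium. Then alternative (ii) of the lemma fails, so alternative (i) must hold: every not-shocked bank makes the maximum net payment, $R_{i\rightrightarrows}-R_{i\leftleftarrows}=\theta y$.

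The next step is an aggregate resource accounting. Market clearing \eqref{eq:payment_clearing} gives $\sum_i (R_{i\rightrightarrows}-R_{i\leftleftarrows})=0$. I then classify each bank's net payment using \eqref{e:total repayment}:
\begin{itemize}
\item A not-shocked bank nets out exactly $\theta y$.
\item A liquidated shocked bank has $R_{i\rightrightarrows}=0$, so its net payment is $-R_{i\leftleftarrows}\le 0$. Its inflows are wasted, or at best it receives nothing.
\item A surviving shocked bank has $R_{i\rightrightarrows}\le \theta y-\ell+R_{i\leftleftarrows}$, so its net payment is at most $\theta y-\ell$.
\end{itemize}
Summing over banks, with $L\ge 1$ the number liquidated, gives
\[
0=\sum_i (R_{i\rightrightarrows}-R_{i\leftleftarrows}) \le (N-S)\theta y + (S-L)(\theta y-\ell) - \sum_{i\ \text{liquidated}} R_{i\leftleftarrows}.
\]

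To reach a contradiction, I would show that a liquidated bank must in fact receive a substantial inflow. Each liquidated bank satisfies $\theta y + R_{i\leftleftarrows} < \ell$ by \eqref{eq:liquidation}, but that inequality caps $R_{i\leftleftarrows}$ from above rather than bounding it from below. So a cleaner route is needed. I would sharpen the classification by writing the net payment of a liquidated bank as $\theta y-\ell-(\theta y-\ell+R_{i\leftleftarrows})$. Because its shortfall $\ell-\theta y-R_{i\leftleftarrows}$ is strictly positive, each liquidated bank "absorbs" strictly more than $\ell-\theta y$ net of its inflows only if inflows are large. That points to the real lever, which is the surplus in the system. The inequality above says the positive total $(N-S)\theta y$ is absorbed by the survivors, each taking at most $\ell-\theta y$, plus the inflows to liquidated banks. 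Since $N\theta y>S\ell$ means $(N-S)\theta y > S(\ell-\theta y)$, the liquidated banks together must receive more than $L(\ell-\theta y)$. Hence some liquidated bank has $R_{i\leftleftarrows}>\ell-\theta y$, which contradicts \eqref{eq:liquidation}. This averaging step closes the argument: liquidated banks sum to more than $L(\ell-\theta y)$, yet each is capped below $\ell-\theta y$.

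The main obstacle is getting the per-bank inequalities exactly right. First, the bound for surviving shocked banks requires that those banks satisfy $R_{i\rightrightarrows}\le\theta y-\ell+R_{i\leftleftarrows}$, which holds in both the default and repay branches of \eqref{e:total repayment}. Second, the knife-edge multiplicity does not interfere, because $N\theta y\neq S\ell$ here. I expect the Lemma~\ref{l:high debt} threshold itself to carry all the real difficulty, and that is taken as given.
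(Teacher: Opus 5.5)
Your strategy matches the paper's: contradiction via Lemma~\ref{l:high debt} plus aggregate accounting under market clearing. But the key step runs in the wrong direction. Your classification gives each surviving shocked bank a net payment \emph{at most} $\theta y-\ell$. Summing then yields
\[
\sum_{i\ \text{liquidated}} R_{i\leftleftarrows}\le (N-S)\theta y-(S-L)(\ell-\theta y),
\]
which is an \emph{upper} bound on what liquidated banks receive. Your averaging step needs the reverse inequality, namely survivors ``each taking at most $\ell-\theta y$.'' That is the opposite of your bound: net payment $\le\theta y-\ell$ means a survivor absorbs \emph{at least} $\ell-\theta y$. The inequality you flag as the main obstacle, $R_{i\rightrightarrows}\le\theta y-\ell+R_{i\leftleftarrows}$ ``in both branches,'' is exactly the unhelpful direction. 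A shocked survivor that repaid in full with slack would retain liquidity, so the liquidated banks would not need to receive much, and your contradiction disappears.

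The missing idea is that, with zero net positions, a shocked bank can never repay in full. Indeed, $\theta y-\ell+R_{i\leftleftarrows}<R_{i\leftleftarrows}\le\alpha F_{i\leftleftarrows}=\alpha F_{i\rightrightarrows}$. So every surviving shocked bank is in the middle branch of \eqref{eq:optimal_repayment_cases}, and its net payment is exactly $\theta y-\ell$. With that fact your display becomes an equality, and the liquidated banks receive exactly $(N-S)\theta y-(S-L)(\ell-\theta y)>L(\ell-\theta y)$. This contradicts $R_{i\leftleftarrows}<\ell-\theta y$ for each liquidated bank, as you intended. The paper does the same accounting more directly. Each non-liquidated bank nets exactly $\theta y-\ell\sigma_i$, and each liquidated bank nets $-R_{i\leftleftarrows}>\theta y-\ell$. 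Summing against market clearing then gives $0>N\theta y-S\ell$.
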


The second property---that liquidated banks are transferred nothing---points to how the planner's allocation is necessarily discriminatory: It prioritizes some shocked banks over others. (If in contrast, it allocated the excess liquidity equally among all shocked banks it ends up saving none of them unless it can save them all. As we illustrate in an example below, that symmetry makes the complete network ``robust yet fragile,'' per \citetalias{AOT}'s result.)
That suggests that whenever liquidity is scarce ($S$ is large) a network must be asymmetric to be (constrained) efficient.
The following definitions characterize ways in which a network can be asymmetric.

\begin{definition}[Assortativity]\label{d:assortativity}
A network $\mathbf{F}$ is \emph{assortative} if there is a permutation
$\pi$ on $\{1,\ldots,N\}$ such that
\begin{equation}\label{eq:assortativity}
    F_{i\to\pi(k)}\geq F_{i\to\pi(l)}
\end{equation}
for all $k<l$ and all $i\notin\{\pi(k),\pi(l)\}$.
\end{definition}

\noindent {}In words, a network is assortative if whenever one bank owes more to B$_k$ than to B$_l$, every other bank owes weakly more to B$_k$ than to B$_l$. Assortativity thus allows us to rank banks by the size of the liabilities they are owed: The network has a single weak hierarchy on which all banks agree, with ties allowed,  rather than each bank owing most to its own neighbors as in, e.g., the ring. The next definition quantifies how steep that hierarchy is.

\begin{definition}[$s$-dominance] \label{d:s-dominance} For a network $\mathbf{F}$, B$_i$'s liabilities are \emph{$s$-dominated} for $s\in(0,1)$, if there is a permutation $\pi_i$ on $\{1, ..., N\}$ with $\pi_i(i) =i$, such that
    \begin{equation}\label{eq:s-dominance}
        F_{i\to \pi_i(j+k)}\leq s^k F_{i\to \pi_i(j)}
    \end{equation}
    for all $j$ and $k \geq 0$ such that $j \neq i$ and $j +k \neq i$.
\end{definition}

\noindent {}In words, for $s < 1$, B$_i$'s liabilities to others decay rapidly---its second largest liability is only at most a fraction $s$ of its largest, and so on: The pecking order is steep, with each bank's payments concentrated at the top of its ranking rather than spread down it. (The permutation $\pi_i$ in the definition just ranks the debts by size.)

Together the definitions above define what we call exponential networks:

\begin{definition}[Exponential networks] \label{d:exponential networks} A network is an \emph{exponential network} (with base $s$) if it is connected, it is assortative, and B$_i$'s debts are $s$-dominated for all $i$.

\end{definition}

\noindent {}In words, every bank's debt to B$_2$ is only at most a fraction $s$ of its debt to B$_1$ and so on---the permutation $\pi_i$ in Definition \ref{d:s-dominance} can be chosen to rank each bank's creditors the same way (assumed w.l.o.g.\ to be the same as their index ordering per Definition \ref{d:assortativity}), creating a pecking order:  Liquidity flows to the top-ranked banks first, rather than being divided evenly and wasted, as we show below.
The exponential network generates an approximately exponential distribution of bank asset size, $y + F_{i \leftleftarrows}$.\footnote{Another paper in which intermediation networks give rise to an endogenous bank size distribution is \cite*{Farboodi-et-al-2017}.}

The next two results characterize the payments made to liquidated banks in an exponential network.

    \begin{lemma}[Controlling relative payments to liquidated banks]\label{lm:exponentially_dominated}
    Suppose $\mathbf{F}$ is an exponential network with base $s$ and
    B$_{i^*}$ is liquidated. For every lower-ranked bank B$_j$, $j>i^*$,
    \begin{equation}
        R_{j\leftleftarrows}\leq s^{j-i^*}R_{i^*\leftleftarrows}.
    \end{equation}
\end{lemma}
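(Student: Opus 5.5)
The plan is to compare the two inflows term by term, counterparty by counterparty, using the pro rata rule together with the two defining properties of an exponential network. By clearing and pro rata (equations \eqref{eq:payment_clearing} and \eqref{eq:pro_rata}),
\begin{equation*}
R_{j\leftleftarrows}=\sum_{k\neq j}\hat F_{k\to j}R_{k\rightrightarrows},
\qquad
R_{i^*\leftleftarrows}=\sum_{k\neq i^*}\hat F_{k\to i^*}R_{k\rightrightarrows}.
\end{equation*}
Both inflows are therefore weighted sums of the same total repayments $R_{k\rightrightarrows}\geq 0$. The weights are each debtor's pro rata shares toward B$_j$ and toward B$_{i^*}$, respectively.

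First I dispose of the two indices at which the sums do not line up. In the sum for $R_{j\leftleftarrows}$, the term $k=i^*$ vanishes. The reason is that B$_{i^*}$ is liquidated, so by \eqref{eq:liquidation} and the first case of \eqref{eq:optimal_repayment_cases} it repays nothing: $R_{i^*\rightrightarrows}=0$. In the sum for $R_{i^*\leftleftarrows}$, the term $k=j$ is nonnegative, so dropping it can only decrease that sum. It therefore suffices to compare the two sums over the common index set $k\notin\{i^*,j\}$.

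Next I bound the weights for each such $k$. I use the labelling the paper adopts after Definition~\ref{d:exponential networks}: assortativity lets every $\pi_k$ in Definition~\ref{d:s-dominance} be taken as the common index ordering, with B$_k$ sitting in its own slot $\pi_k(k)=k$. Applying \eqref{eq:s-dominance} with base index $i^*$ and shift $j-i^*\geq 1$ gives $F_{k\to j}\leq s^{j-i^*}F_{k\to i^*}$. This application is legitimate because neither $i^*$ nor $j$ equals $k$. Dividing by $F_{k\rightrightarrows}$, or noting that both shares are zero if $F_{k\rightrightarrows}=0$, yields $\hat F_{k\to j}\leq s^{j-i^*}\hat F_{k\to i^*}$. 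Multiplying by $R_{k\rightrightarrows}\geq0$ and summing over $k\notin\{i^*,j\}$ gives
\begin{equation*}
R_{j\leftleftarrows}\leq s^{j-i^*}\sum_{k\notin\{i^*,j\}}\hat F_{k\to i^*}R_{k\rightrightarrows}\leq s^{j-i^*}R_{i^*\leftleftarrows}.
\end{equation*}

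The step I expect to need care is the exponent bookkeeping in $s$-dominance. I must check that the decay exponent is the full index gap $j-i^*$ even when the debtor's own index $k$ lies strictly between $i^*$ and $j$. This holds because Definition~\ref{d:s-dominance} fixes $\pi_k(k)=k$ and only excludes $j+k'=i$. Hence rank positions are counted with B$_k$'s own slot included, and the gap in the exponent is exactly $j-i^*$ rather than $j-i^*-1$. I must also confirm that the w.l.o.g.\ identification of each $\pi_k$ with the assortative ordering is consistent with $\pi_k(k)=k$; I take this from the paper's discussion following Definition~\ref{d:exponential networks}. Beyond these two checks, the argument uses only nonnegativity of repayments and the fact that liquidated banks repay zero.
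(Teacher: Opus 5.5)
Your proposal is correct and follows essentially the same route as the paper: you compare inflows term by term using pro rata shares and $s$-dominance in the common (assortative) ordering. You drop B$_{i^*}$'s zero repayment from $R_{j\leftleftarrows}$ and the nonnegative $k=j$ term from $R_{i^*\leftleftarrows}$. Your extra checks on the exponent bookkeeping (B$_k$'s own slot counted via $\pi_k(k)=k$) and on the case $F_{k\rightrightarrows}=0$ are points the paper's proof uses implicitly.
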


\noindent  This says that when debt levels in the network  are exponentially controlled (given $s$-dominance), so are the repayments to liquidated banks in equilibrium.

The previous result says that $s$ controls the relative payments among liquidated banks. The next says that it controls the total payment to all of them.

        \begin{lemma}[Controlling total payments to liquidated banks]\label{lm:total_liquidity}
        Suppose $\mathbf{F}$ is an exponential network with base $s$.
        If at least one bank is liquidated in equilibrium, then
            \begin{equation}
                \sum_{i\in\mathscr{L}} R_{i \leftleftarrows} < \frac{\ell-\theta y}{1 - s},
            \end{equation}
            where $\mathscr{L}$ denotes the set of liquidated banks.

                \end{lemma}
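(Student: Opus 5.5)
The plan is to reduce the bound to Lemma~\ref{lm:exponentially_dominated} plus a geometric series. Since at least one bank is liquidated, the set $\mathscr{L}$ is non-empty. Index banks by the assortative ranking, as the paper does without loss of generality after Definition~\ref{d:exponential networks}. Let $i^*:=\min \mathscr{L}$ be the highest-ranked liquidated bank, so every other $j\in\mathscr{L}$ satisfies $j>i^*$.

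The first step is to bound $R_{i^*\leftleftarrows}$ directly from the liquidation condition~\eqref{eq:liquidation}. A not-shocked bank can never be liquidated, because $\theta y + R_{i\leftleftarrows}\geq \theta y>0=\ell\sigma_i$. Hence $\sigma_{i^*}=1$. Liquidation of B$_{i^*}$ then reads $\theta y + R_{i^*\leftleftarrows}<\ell$, that is,
\begin{equation*}
R_{i^*\leftleftarrows}<\ell-\theta y .
\end{equation*}

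The second step applies Lemma~\ref{lm:exponentially_dominated} to each $j\in\mathscr{L}\setminus\{i^*\}$. That lemma holds for every lower-ranked bank, liquidated or not, so in particular $R_{j\leftleftarrows}\leq s^{j-i^*}R_{i^*\leftleftarrows}$. Distinct banks $j$ give distinct exponents $k=j-i^*\geq 1$. Summing over $\mathscr{L}$ and enlarging the sum to all $k\geq 0$, which is legitimate because repayments are non-negative, gives
\begin{equation*}
\sum_{i\in\mathscr{L}}R_{i\leftleftarrows}\leq R_{i^*\leftleftarrows}\sum_{k\geq 0}s^k=\frac{R_{i^*\leftleftarrows}}{1-s}<\frac{\ell-\theta y}{1-s},
\end{equation*}
where the last step uses $s\in(0,1)$ and the strict bound from the first step.

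There is no real obstacle once Lemma~\ref{lm:exponentially_dominated} is available. The only points needing care are the following:
\begin{itemize}
\item The anchor must be the \emph{top}-ranked liquidated bank, so that every other liquidated bank lies below it and the lemma applies.
\item Strictness comes from the strict inequality in~\eqref{eq:liquidation}.
\item A liquidated bank must be shocked, which is what makes $\ell-\theta y$, rather than $-\theta y$, the relevant threshold.
\end{itemize}
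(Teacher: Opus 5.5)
Your proof is correct and follows the paper's own argument: anchor at the highest-ranked liquidated bank B$_{i^*}$, obtain $R_{i^*\leftleftarrows}<\ell-\theta y$ from its liquidation, apply Lemma~\ref{lm:exponentially_dominated} to the other liquidated banks, and bound by the geometric series $\sum_{k\geq 0}s^k=1/(1-s)$. Your remark that a liquidated bank must be shocked, which is what makes $\ell-\theta y$ the relevant threshold, is a step the paper leaves implicit.
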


            \noindent This says that the total transfer to liquidated banks can be made arbitrarily close to a single shocked bank's liquidity shortfall, $\ell -\theta y$, by making $s$ sufficiently small.
             Intuitively, for small $s$, the liquidity wasted by transferring it to banks that end up being liquidated anyway would barely have been enough to save even a single one of them (per property (ii) above).

To sum up, if debts are large then no liquidity is wasted on banks that would not have been liquidated anyway (Lemma \ref{l:high debt}) and if the network is exponential then little is wasted on those that would have been (Lemma \ref{lm:total_liquidity}). The next result builds on these findings to show how to achieve constrained efficiency.

\begin{proposition}[Efficiency of exponential networks] \label{p:constrained efficiciency}
Define
        \begin{equation} \label{e:s*}
            s^* := 1 - \frac{ \ell - \theta y}{N \theta y - S \ell + (1 + L^*) ( \ell - \theta y)}.
        \end{equation}
Let $\mathbf{F}$ be an exponential network with base $s \leq s^*$. For $\alpha$ sufficiently large, $\alpha \mathbf{F}$ is constrained efficient as long as  $\frac{S \ell - N \theta y}{\ell -\theta y}$ is not an integer.

\end{proposition}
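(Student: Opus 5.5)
The plan is to argue by contradiction, counting liquidity. Fix a state $\bm\sigma$ with $S$ shocked banks and suppose $\alpha$ is large enough that Lemma~\ref{l:high debt} applies to $\alpha\mathbf{F}$. Note that $\alpha\mathbf{F}$ is still exponential with base $s$, since assortativity and $s$-dominance are scale invariant. If $L^*=0$, Corollary~\ref{c:high debt} already gives that no bank is liquidated: $L^*=0$ together with non-integrality of $\frac{S\ell-N\theta y}{\ell-\theta y}$ forces $N\theta y>S\ell$. So assume $L^*\geq 1$, and suppose toward a contradiction that the number $L$ of liquidated banks satisfies $L\geq L^*+1$. In particular at least one bank is liquidated, so by Lemma~\ref{l:high debt} alternative (i) holds: every not-shocked bank makes net payment exactly $\theta y$.

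The next step is aggregate accounting of net payments. Sum $R_{i\rightrightarrows}-R_{i\leftleftarrows}$ over all banks; market clearing \eqref{eq:payment_clearing} makes the sum zero. The three groups contribute as follows.
\begin{itemize}
\item Not-shocked banks contribute $(N-S)\theta y$.
\item Each surviving shocked bank has $\theta y-\ell+R_{i\leftleftarrows}\geq 0$ and pays at most that amount, so its net receipt $R_{i\leftleftarrows}-R_{i\rightrightarrows}$ is at least $\ell-\theta y$.
\item Each liquidated bank repays zero, so its net receipt is exactly $R_{i\leftleftarrows}$.
\end{itemize}
Balancing the total paid against the total received gives
\begin{equation*}
(N-S)\theta y \;\geq\; (S-L)(\ell-\theta y)+\sum_{i\in\mathscr L}R_{i\leftleftarrows}.
\end{equation*}
Strictly, any slack in the inequality only strengthens the conclusion. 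Combined with $\sum_{i\in\mathscr L}R_{i\leftleftarrows}\geq 0$ alone, this just reproduces \eqref{e:residual liquidity}. The extra ingredient is that liquidated banks are shocked but still unable to survive.

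The key step, and the main obstacle, is to show that the liquidated banks must have absorbed a lot of liquidity whenever $L>L^*$, and to contradict Lemma~\ref{lm:total_liquidity}. Rearranging, the liquidity received by liquidated banks is at most $N\theta y-S\ell+L(\ell-\theta y)$. That bound is not by itself contradictory, so I would argue from the other side. Since $L-1\geq L^*$ exceeds $\frac{S\ell-N\theta y}{\ell-\theta y}$ strictly (using non-integrality), the residual liquidity
\begin{equation*}
(N-S)\theta y-(S-L)(\ell-\theta y)=\sum_{i\in\mathscr L}R_{i\leftleftarrows}
\end{equation*}
is at least $\ell-\theta y$ plus a margin. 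The equality here comes from (i) together with the observation that surviving shocked banks receive exactly their shortfall net at large $\alpha$, which I would need to verify or else handle via inequalities.

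Concretely, I would use the exact identity that the total received by liquidated banks equals $N\theta y-S\ell+L(\ell-\theta y)$ minus the excess retained by surviving shocked banks. I then need to show that this excess is zero, i.e.\ surviving shocked banks are not paid more than they need. This should follow as in Lemma~\ref{l:high debt}: with large debts any surplus is forwarded onward, because a surviving shocked bank with surplus must be solvent and would then be repaying in full. Granting this, $\sum_{\mathscr L}R_{i\leftleftarrows}=N\theta y-S\ell+L(\ell-\theta y)\geq N\theta y-S\ell+(1+L^*)(\ell-\theta y)$. By Lemma~\ref{lm:total_liquidity} this is strictly less than $\frac{\ell-\theta y}{1-s}\leq\frac{\ell-\theta y}{1-s^*}$. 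By the definition \eqref{e:s*} of $s^*$, that bound equals exactly $N\theta y-S\ell+(1+L^*)(\ell-\theta y)$, which is a contradiction. Hence $L\leq L^*$, and since no allocation does better than the planner (Lemma~\ref{l:constrained efficiency}), $L=L^*$ and the network is constrained efficient.

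The delicate point is ruling out leftover surplus at surviving shocked banks (and at not-shocked ones, already handled by Lemma~\ref{l:high debt}). I would try first to extend the argument behind Lemma~\ref{l:high debt} to all non-liquidated banks.
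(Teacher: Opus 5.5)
Your structure is the paper's. Lemma~\ref{l:high debt} fixes the not-shocked banks' net payments at $\theta y$, and surviving shocked banks pay net exactly $\theta y-\ell$. Market clearing then makes the liquidated banks' total receipts equal to $N\theta y-S\ell+L(\ell-\theta y)$. Lemma~\ref{lm:total_liquidity}, combined with the definition of $s^*$, caps those receipts strictly below $N\theta y-S\ell+(1+L^*)(\ell-\theta y)$, which forces $L<L^*+1$. Your observation that non-integrality is what turns $L^*=0$ into the strict inequality $N\theta y>S\ell$ needed for Corollary~\ref{c:high debt} is a detail the paper skips.

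The step you leave open is that surviving shocked banks keep no surplus. It is not delicate and needs no extension of Lemma~\ref{l:high debt}: it follows in one line from zero net positions, which is what the paper relies on when it cites \eqref{eq:optimal_repayment_cases}. For a shocked bank in $\alpha\mathbf{F}$, $R_{i\leftleftarrows}\leq\alpha F_{i\leftleftarrows}=\alpha F_{i\rightrightarrows}$, so $\theta y-\ell+R_{i\leftleftarrows}<\alpha F_{i\rightrightarrows}$. Hence a shocked bank is never in the full-repayment branch. If it is not liquidated, it repays exactly $\theta y-\ell+R_{i\leftleftarrows}$, so its net receipt is exactly $\ell-\theta y$ and your identity holds with equality. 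Your fallback would also work, since the path argument in the proof of Lemma~\ref{l:high debt} never uses that the starting bank is unshocked, but it is unnecessary.

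Your aside that slack in the aggregate inequality ``only strengthens the conclusion'' is backwards. That inequality bounds the liquidated banks' receipts from above, whereas the contradiction needs a lower bound. Any surplus kept by survivors would therefore break the argument, which is exactly why the equality is needed.
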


\noindent In words, an exponential network with a rapidly decaying distribution of debts is optimal in all but a knife-edge case (i.e.\ all but the case in which $\frac{S\ell - N \theta y}{\ell - \theta y}$ is an integer).

This result has a limitation when achieving constrained efficiency requires almost all of the liquidity available, i.e.\ when the slack in inequality \eqref{e:residual liquidity} becomes small for $L = L^*$ ($(N - S) \theta y - (S-L^*) ( \ell - \theta y) \to 0$).
In this case, $s^*$ becomes small so the largest bank in the exponential network is many times larger than the second largest one, and so on. Moreover, in the limit when it equals zero, the network does not achieve constrained efficiency---that is the integer case from the proposition.
But the next result suggests this limitation is not too worrisome if we accept a weaker notion of efficiency:

\begin{proposition}[Approximate efficiency of exponential networks] \label{p:exponential almost} Let $\mathbf{F}$ be an exponential network with base $s \leq 1/2$. For $\alpha$ sufficiently large, $\alpha \mathbf{F}$ is ``almost constrained efficient'' in that at most $L^* + 1$ banks are liquidated.

\end{proposition}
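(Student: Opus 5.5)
The plan is to combine Lemma~\ref{l:high debt} with Lemma~\ref{lm:total_liquidity} in a liquidity-accounting argument. First, if no bank is liquidated in the equilibrium of $\alpha\mathbf{F}$, there is nothing to prove, since $0 \le L^*+1$. Otherwise, by Lemma~\ref{l:high debt} (which applies because an exponential network is connected by Definition~\ref{d:exponential networks}), for $\alpha$ sufficiently large every not-shocked bank makes the maximum net payment $R_{i\rightrightarrows}-R_{i\leftleftarrows}=\theta y$. Summing net positions over all banks, total net receipts are zero, so
\[
\sum_{i:\sigma_i=1}\big(R_{i\leftleftarrows}-R_{i\rightrightarrows}\big)=(N-S)\theta y .
\]

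Next, I split the shocked banks into the liquidated set $\mathscr{L}$ (with $L:=|\mathscr{L}|$) and the surviving shocked banks. A liquidated bank repays zero, by equation~\eqref{eq:optimal_repayment_cases}, so its net receipt is $R_{i\leftleftarrows}$. A surviving shocked bank receives net at least $\ell-\theta y$, because it repays $\min\{\theta y-\ell+R_{i\leftleftarrows},F_{i\rightrightarrows}\}$ and hence
\[
R_{i\leftleftarrows}-R_{i\rightrightarrows}\ \ge\ \ell-\theta y .
\]
This yields
\[
(S-L)(\ell-\theta y)+\sum_{i\in\mathscr{L}}R_{i\leftleftarrows}\ \le\ (N-S)\theta y .
\]
Because some bank is liquidated, Lemma~\ref{lm:total_liquidity} with $s\le 1/2$ gives
\[
\sum_{i\in\mathscr{L}}R_{i\leftleftarrows}\ \ge\ 0 ,
\]
which is only the trivial bound, so the accounting above does not by itself bound $L$ from above.

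The real step is therefore the reverse direction: showing that the surviving shocked banks use up almost all the free liquidity, so that $L$ cannot exceed $L^*+1$. I would argue by contradiction. Suppose $L\ge L^*+2$. Consider the highest-ranked liquidated bank $B_{i^*}$. Shocked banks ranked above $i^*$ survive, and by Lemma~\ref{lm:exponentially_dominated} together with assortativity, the liquidity flowing to liquidated banks is concentrated on $B_{i^*}$: lower-ranked liquidated banks receive at most $s+s^2+\dots<1$ times $R_{i^*\leftleftarrows}$ when $s\le 1/2$. Since $B_{i^*}$ is liquidated, $R_{i^*\leftleftarrows}<\ell-\theta y$, so
\[
\sum_{i\in\mathscr{L}}R_{i\leftleftarrows}\ <\ 2(\ell-\theta y),
\]
which matches Lemma~\ref{lm:total_liquidity} evaluated at $s=1/2$. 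The wasted liquidity is thus less than two banks' shortfalls.

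Then I need a tightness claim: surviving shocked banks receive essentially exactly their need. Surplus beyond $\ell-\theta y$ at a surviving shocked bank is repaid onward, because its repayment is $\min\{\cdot,F_{i\rightrightarrows}\}$ and, for $\alpha$ large, $F_{i\rightrightarrows}$ is not binding. Hence every unit of free liquidity $(N-S)\theta y$ ends up either covering a surviving shocked bank's shortfall or sitting with liquidated banks. This gives
\[
(S-L)(\ell-\theta y)\ >\ (N-S)\theta y-2(\ell-\theta y).
\]
Combining with the definition of $L^*$ in equation~\eqref{e:L*}, which implies
\[
(S-L^*+1)(\ell-\theta y)\ >\ (N-S)\theta y ,
\]
I obtain $S-L>S-L^*-2$, so $L<L^*+2$, that is, $L\le L^*+1$.

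The main obstacle is justifying this tightness equality, namely that surviving shocked banks retain no excess liquidity. If they did, the net-receipt accounting would give only an inequality in the wrong direction. To handle it, I would mimic the argument of Lemma~\ref{l:high debt}: for $\alpha$ large, any non-liquidated bank with surplus is solvent and pays its liabilities $\alpha F_{i\rightrightarrows}$ in full. Its net receipt is then
\[
R_{i\leftleftarrows}-\alpha F_{i\rightrightarrows}\ \le\ 0,
\]
using zero net positions and $R_{i\leftleftarrows}\le F_{i\leftleftarrows}$, which contradicts its receiving more than $\ell-\theta y$. So any surviving shocked bank either defaults, and then its net receipt is exactly $\ell-\theta y$, or nets out nonpositively, which is impossible for a survivor that was shocked. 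This yields the equality needed above.
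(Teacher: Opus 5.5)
Your proposal follows the paper's argument. Lemma~\ref{l:high debt} fixes the not-shocked banks' net payments at $\theta y$. Zero net positions force every surviving shocked bank to default and absorb exactly $\ell-\theta y$; your last paragraph spells out what the paper takes from equation~\eqref{eq:optimal_repayment_cases}, and this holds for every $\alpha$, not only large $\alpha$. Market clearing then gives $(S-L)(\ell-\theta y)+\sum_{i\in\mathscr L}R_{i\leftleftarrows}=(N-S)\theta y$, and Lemma~\ref{lm:total_liquidity} with $s\le 1/2$ bounds the wasted liquidity by $2(\ell-\theta y)$. Your aside that Lemma~\ref{lm:total_liquidity} yields $\sum_{i\in\mathscr L}R_{i\leftleftarrows}\ge 0$ is garbled: the lemma gives an upper bound, which is how you correctly use it later.

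The final combination does not work as written. You take $(S-L^*+1)(\ell-\theta y)>(N-S)\theta y$ from the definition of $L^*$, but this is the minimality half of $L^*$. Paired with your bound $(S-L+2)(\ell-\theta y)>(N-S)\theta y$, it gives only separate upper bounds on $L$ and on $L^*$, so nothing follows about $L-L^*$. It is also not a general consequence of the definition: it fails when $L^*=0$ and $S\ell-N\theta y\le-(\ell-\theta y)$. You need the feasibility half instead, $L^*\ge\frac{S\ell-N\theta y}{\ell-\theta y}$, i.e.\ $(S-L^*)(\ell-\theta y)\le(N-S)\theta y$. Then $(S-L+2)(\ell-\theta y)>(N-S)\theta y\ge(S-L^*)(\ell-\theta y)$ gives $L<L^*+2$, and with that substitution the proof is complete.
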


 Finally, we point out that even in the knife-edge case in which the exponential network does not achieve constrained efficiency, no other fully connected network does either.

\begin{lemma}[Inefficiency of other networks] \label{lm:inefficiency other networks} Suppose $\frac{S \ell - N \theta y}{\ell - \theta y}$ is an integer, $S<N$, and $S \ell > N \theta y$ and that $\mathbf{F}$ is fully connected in that $F_{i \to j} > 0$ for all $i\neq j$. The equilibrium is not constrained efficient.

\end{lemma}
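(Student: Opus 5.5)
The plan is a liquidity-accounting argument: in the integer case the planner's budget has zero slack, while a fully connected network necessarily leaks a strictly positive amount of liquidity to at least one liquidated bank.

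\emph{Step 1 (zero slack).} Because $S\ell>N\theta y$ and $\frac{S\ell-N\theta y}{\ell-\theta y}$ is an integer, Lemma~\ref{l:constrained efficiency} gives $L^*=\frac{S\ell-N\theta y}{\ell-\theta y}\geq 1$. Rearranging, inequality~\eqref{e:residual liquidity} holds with equality at $L=L^*$:
\begin{equation*}
(S-L^*)(\ell-\theta y)=(N-S)\theta y .
\end{equation*}
No equilibrium can liquidate fewer than $L^*$ banks: the equilibrium net repayments are themselves feasible planner transfers, as Step~3 checks. So it suffices to rule out that exactly $L^*$ banks are liquidated.

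\emph{Step 2 (leakage).} Suppose toward a contradiction that exactly $L^*\geq1$ banks are liquidated, and take any liquidated bank B$_i$. Since $S<N$, some bank B$_j$ is not shocked. Its resources satisfy $\theta y+R_{j\leftleftarrows}\geq\theta y>0$, so by equation~\eqref{e:total repayment} it pays $R_{j\rightrightarrows}=\min\{\theta y+R_{j\leftleftarrows},F_{j\rightrightarrows}\}>0$, using $F_{j\rightrightarrows}>0$ under full connectedness. By the pro rata rule~\eqref{eq:pro_rata}, $R_{j\to i}=\hat F_{j\to i}R_{j\rightrightarrows}>0$, because $F_{j\to i}>0$. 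Hence $R_{i\leftleftarrows}>0$ for every liquidated bank.

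\emph{Step 3 (accounting).} Market clearing~\eqref{eq:payment_clearing} implies $\sum_i (R_{i\leftleftarrows}-R_{i\rightrightarrows})=0$. We bound each bank's net inflow using equation~\eqref{e:total repayment}:
\begin{itemize}
\item A surviving shocked bank has $R_{i\rightrightarrows}\leq \theta y-\ell+R_{i\leftleftarrows}$, so its net inflow is at least $\ell-\theta y$.
\item A not-shocked bank has $R_{i\rightrightarrows}\leq\theta y+R_{i\leftleftarrows}$, so its net inflow is at least $-\theta y$.
\item A liquidated bank repays $0$, so its net inflow is $R_{i\leftleftarrows}$.
\end{itemize}
These bounds are exactly the planner's constraints, which also justifies the claim in Step~1. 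Summing over all banks,
\begin{equation*}
0\;\geq\;(S-L^*)(\ell-\theta y)-(N-S)\theta y+\sum_{i\in\mathscr{L}}R_{i\leftleftarrows}\;=\;\sum_{i\in\mathscr{L}}R_{i\leftleftarrows}\;>\;0,
\end{equation*}
where the equality uses Step~1 and the strict inequality uses Step~2. This is a contradiction, so more than $L^*$ banks are liquidated and the equilibrium is not constrained efficient.

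The main point needing care is Step~2. The positive payment to a liquidated bank must come from a bank whose outflow is strictly positive regardless of equilibrium selection; a not-shocked bank supplies this, which is exactly where $S<N$ and full connectedness enter. Equilibrium multiplicity is not an issue here: the knife-edge of Proposition~\ref{p:existence} requires $|K|\theta y=S_K\ell$, which is excluded by $S\ell>N\theta y$ for the connected network. In any case, the argument applies to every payment equilibrium.
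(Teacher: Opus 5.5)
Your proof is correct and is essentially the paper's argument: zero slack in the planner's budget at $L=L^*$, combined with market clearing and the net-payment bounds from equation~\eqref{e:total repayment}, forces liquidated banks to receive nothing, and full connectedness with the pro rata rule rules that out. The only difference is cosmetic: you take the strictly positive payment to a liquidated bank from a not-shocked bank, whose repayment is obviously positive, whereas the paper argues that surviving shocked banks must receive positive inflows and that any bank paying them must, by pro rata, also pay the liquidated banks.
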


\noindent This result suggests that the exponential network is the ``best'' no matter the parameters.\footnote{As the proof of Lemma \ref{lm:inefficiency other networks} implies, a network that is not fully connected could do better for some realizations of $\bm{\sigma}$ but not for all.}

\textbf{Efficiency example.} Finally, we consider an example to illustrate (i) how a complete network leads to inefficient liquidation by allocating liquidity to banks that end up being liquidated in equilibrium and (ii) how an exponential network solves the problem.

The illustration requires that shocks are large enough that at least one bank is liquidated in the constrained-efficient outcome (otherwise a complete network can save all banks). Hence we consider three banks, two of which are shocked: There are $N=3$ banks with assets $y=2$, a fraction $\theta = 1/2$ of which is pledgeable. Exactly two banks suffer shocks, $S \equiv \sum \sigma_i = 2$, of size $\ell$. We assume that $\ell = 8/5$ so that in the constrained-efficient outcome exactly one bank is liquidated in each state: $ \ell < 3 \theta y <  2 \ell$ (i.e.\ $8/5 < 3 < 16/5$).
Observe that covering either shocked bank's  liquidity shortfall requires at least $3/5$ of the other's surplus: $\ell - \theta y  = 3 \theta y / 5$.

We start with the complete network as benchmark and show that both shocked banks are always liquidated. Then we illustrate how the exponential network saves one of them, achieving the constrained-efficient outcome.

\emph{Complete network benchmark.} Suppose each bank has total liabilities $F_{i \rightrightarrows} \equiv F$ to others ($F/2$ to each of the other two). For each state $\bm \sigma$, equation \eqref{eq:optimal_repayment_cases} gives the system the clearing vector $\mathbf{R}_{\rightrightarrows} = [R_{i \rightrightarrows}]_{i}$ must solve (the full matrix of equilibrium repayments is then given by the pro rata shares $R_{i \to j} = \hat F_{i \to j} R_{i \rightrightarrows}$ by equation \eqref{eq:pro_rata}).
When the shocked banks are B$_1$ and B$_2$, the system is:
\begin{equation}
\renewcommand\arraystretch{1.5}
   \left\{ \begin{array}{l}
        R_{1\rightrightarrows}    =
            \max\Big\{ \, 0 \, , \, \min\left\{ \,  - \frac 3 5  + \frac 1 2 R_{2 \rightrightarrows} +  \frac 1 2 R_{3 \rightrightarrows}
            \, , \,
            F
            \, \right\} \, \Big\},
            \\
        R_{2\rightrightarrows}
        = \max\Big\{ \, 0 \, , \, \min\left\{ \,   -  \frac 3 5  + \frac 1 2 R_{1 \rightrightarrows} +  \frac 1 2 R_{3 \rightrightarrows}
            \, , \,
            F
            \, \right\} \, \Big\},
            \\
        R_{3\rightrightarrows}
        =
        \max\Big\{ \, 0 \, , \, \min\left\{ \,  \ 1   + \frac 1 2 R_{1 \rightrightarrows} +  \frac 1 2 R_{2 \rightrightarrows}
            \, , \,
            F
            \, \right\} \, \Big\}.
    \end{array}
    \right.
\end{equation}
The first two lines of the system illustrate the problem with the complete network. The repayments from the bank with excess liquidity, B$_3$, are allocated equally between the two banks with a liquidity shortfall, B$_1$ and B$_2$---each  gets $\frac 1 2 R_{3 \rightrightarrows}$ (plus a symmetric transfer from the other shocked bank, $\frac 1 2 R_{2 \rightrightarrows}$ or $\frac 1 2 R_{1 \rightrightarrows}$).
But each bank needs more than that to survive. Allocating scarce resources equally means that no one has enough.

As the system is symmetric, the problem is analogous when the other pairs of banks are shocked.  Solving it gives $R_{i \rightrightarrows} = 0$ if $\sigma_i =1$ and $R_{i \rightrightarrows} = \min \{F, 1\}$ otherwise, implying that both shocked banks are always liquidated (equation \eqref{eq:optimal_repayment_cases}).

\emph{Exponential network.}
Now we turn to an exponential network with base $s = 1/2$:
\begin{equation} \label{e:F example}
\renewcommand\arraystretch{.75}
\textbf{F}=
\left[
\begin{array}{ccc}
    0 & 2 & 1 \\
    2 & 0 & \frac{1}{2} \\
    1 & \frac{1}{2} & 0 \\
\end{array}
\right].
\end{equation}
Note that the off-diagonal entries in each row and column are decreasing (assortativity per Definition \ref{d:assortativity}) and that each is  at most a fraction $1/2$ of the previous ($s$-dominance per Definition \ref{d:s-dominance}).
We can compute each bank's total liabilities ${F}_{i \rightrightarrows}$ (the row sums of $\mathbf{F}$) and the fraction of its payments it makes to each other bank ($\mathbf{F}$ normalized by $F_{i \rightrightarrows}$):
\begin{equation} \label{e:F and F hat}
\renewcommand\arraystretch{.75}
\textbf{F}_{\rightrightarrows}=
\left[
\begin{array}{c}
    3\\
    \frac{5}{2}\\
    \frac{3}{2}
\end{array}\right]
\ \ \& \ \
\mathbf{\hat F} =
\left[
\begin{array}{ccc}
            0 & \frac{2}{3} & \frac{1}{3} \\
            \frac{4}{5} & 0 & \frac{1}{5} \\
            \frac{2}{3} & \frac{1}{3} & 0 \\
    \end{array}
    \right] .
\end{equation}

It turns out that no matter what pair of banks is shocked, only one is liquidated. To see why, consider the case in which B$_1$ and B$_2$ are shocked.
In that case, the clearing vector solves (substituting from equation \eqref{e:F and F hat} into equation \eqref{eq:optimal_repayment_cases}):

\begin{equation}
\renewcommand\arraystretch{1.5}
   \left\{ \begin{array}{l}
        R_{1\rightrightarrows}    =
            \max\Big\{ \, 0 \, , \, \min\left\{ \,  - \frac 3 5  + \frac 4 5 R_{2 \rightrightarrows} +  \frac 2 3 R_{3 \rightrightarrows}
            \, , \,
             3
            \, \right\} \, \Big\},
            \\
        R_{2\rightrightarrows}
        = \max\Big\{ \, 0 \, , \, \min\left\{ \,   - \frac 3 5  + \frac 2 3 R_{1 \rightrightarrows} +  \frac 1 3 R_{3 \rightrightarrows}
            \, , \,
            \frac 5 2
            \, \right\}\,  \Big\},
            \\
        R_{3\rightrightarrows}
        =
        \max\Big\{ \, 0 \, , \, \min\left\{ \,   \ 1  + \frac 1 3 R_{1 \rightrightarrows} +  \frac 1 5 R_{2 \rightrightarrows}
            \, , \,
            \frac 32
            \, \right\} \, \Big\}.
    \end{array}
    \right.
\end{equation}
The first two lines of the system illustrate how the exponential network allocates liquidity efficiently. The repayments from B$_3$, the bank with excess liquidity, are allocated primarily to one of the two banks with the liquidity shortfall---B$_1$ gets $\frac 2 3$ of B$_3$'s total repayment, B$_2$ only $\frac 1 3$ of it.
And that allows B$_1$ to survive. B$_2$ is liquidated. But that is (constrained)  efficient as there are not enough resources to save them both anyway.

Solving the system gives the clearing vector
     $\textbf{R}_{\rightrightarrows}=
     \left(
        \frac{3}{35},
        0,
        \frac{36}{35} \right)$,
        which, having only one zero entry, affirms that only $\mathrm{B}_2$ is liquidated (equation \eqref{eq:optimal_repayment_cases}).

The other cases are analogous. When B$_1$ and B$_3$ are shocked the clearing vector is
$    \textbf{R}_{\rightrightarrows}= \left(
        \frac{3}{7},
        \frac{9}{7},
        0 \right),
$
implying only B$_3$ is liquidated, and when B$_2$ and B$_3$ are shocked it is      $\textbf{R}_{\rightrightarrows}=
    \left(    \frac{39}{35},
        \frac{1}{7},
        0
    \right)$, implying only B$_3$ is.

\emph{Ring network.} \label{pp:ring example}  {}Consider also the ring network, in which each B$_i$ owes $F$ to B$_{i+1}$ (mod 3) and nothing to the other bank. It is also constrained efficient in this example: When B$_1$ and B$_2$ are shocked, the clearing vector is $\mathbf{R}_{\rightrightarrows} = \left( \frac{2}{5}, 0, 1 \right)$ for $F \geq 1$---the not-shocked bank B$_3$ makes the maximum payment, all of it to B$_1$, which survives---and the other states are analogous. But efficiency here is an artifact of the three-bank example, in which the shocked banks are necessarily adjacent. In Appendix~\ref{a:ring}, we show that with four banks and a larger shock, the shocked banks can be separated by the not-shocked ones, in which case each not-shocked bank pays a different shocked bank and neither is saved.\footnote{{}Regularity and constrained efficiency generally push in opposite directions: One requires symmetry in banks' total interbank debts, the other asymmetry in which shocked banks are liquidated. The comparison of the complete network (regular) and exponential network (not) captures that. The ring with $N=3$ is a caveat, as it can be both regular and constrained efficient. But, as Appendix~\ref{a:ring} shows, this need not remain true once $N\geq4$.}

\section{Pairwise Stability}\label{s:stability}

We now explore what networks can arise through decentralized contracting. We start by defining a notion that captures that: Do banks have the incentive to deviate from their links?

\begin{definition}[Pairwise stability w.r.t.\ $G$] \label{d:pairwise}
Fix a distribution $G$ of shocks $\bm{\sigma}$. A network $\mathbf{F}$ is \emph{pairwise stable} if no pair B$_i$ and B$_j$ of banks with a direct link ($F_{i \to j} > 0$ or $ F_{j \to i}> 0$) can change their face values in a way that strictly increases both of their expected payoffs.
\end{definition}

\noindent {}In words, a network is pairwise stable if it does not admit bilateral deviations: No two linked banks can privately change their mutual debts and both come out strictly ahead.
\label{pp:dontnet}{}Netting out is always an admissible deviation, corresponding to reducing offsetting face values. As such, a network with gross positions being pairwise stable captures banks choosing not to net out.

The definition is weak in that deviations must be strict---ties are broken toward the status quo---and some types of conceivable deviations are not allowed---no new links can be formed.
(Deviations need not be symmetric, so, unlike in the baseline, we do not impose zero-net positions; Proposition~\ref{p:existence} still applies, as
its proof does not use that assumption.)
Nonetheless, our first result here is negative: The exponential network, despite being efficient, is not pairwise stable:

\begin{proposition}[Exponential not pairwise stable]\label{p:exponential_not_pairwise}
Suppose $\ell < 2 \theta y$, $1\leq L^*$, B$_{N-1}$ and B$_N$ are linked, and $G$ puts non-zero weight on exactly one of them being shocked (i.e.\  $\mathbb{P}[\sigma_{i}( 1 - \sigma_j )=1 ] > 0$ for $(i,j) \in \{(N-1,N), (N, N-1)\}$).

Let $\mathbf{F}$ be an exponential network with $s\leq s^*$ and $\alpha$ be sufficiently large.  $\alpha \mathbf{F}$ is not pairwise stable w.r.t.\ $G$ (where $L^*$ and $s^*$ are as in equations~(\ref{e:L*}) and~(\ref{e:s*})).
\end{proposition}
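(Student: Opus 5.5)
The plan is to exhibit a profitable bilateral deviation for the two lowest-ranked banks, B$_{N-1}$ and B$_N$. In an exponential network with small $s$, these two banks sit at the bottom of the pecking order. So when one of them is shocked, the liquidity it can raise through the network is at most of order $s$ relative to what top-ranked banks get. In the states where liquidity is scarce ($L^*\geq 1$), the lowest-ranked shocked banks are the ones liquidated (Proposition~\ref{p:constrained efficiciency} together with Lemmata~\ref{lm:exponentially_dominated} and~\ref{lm:total_liquidity}).

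The deviation I would use is for the pair to sharply increase their mutual face values, raising $F_{N-1\to N}$ and $F_{N\to N-1}$ by a large common amount $\Delta$. This makes the pair behave, within the network, like the two-bank insurance arrangement described after Proposition~\ref{p:netting}. Take a state in which exactly one of the two is shocked, say B$_{N-1}$ (an event with positive probability under $G$). With $\Delta$ large, the bilateral claim lets B$_{N-1}$ pledge essentially all of B$_N$'s free liquidity $\theta y$. Since $\ell<2\theta y$, this covers B$_{N-1}$'s shortfall $\ell-\theta y$, exactly as in footnote~\ref{pp:sc7}, so B$_{N-1}$ is not liquidated. In the original network, by contrast, I will argue that in at least one such state with $S$ large enough that $L^*\geq 1$, B$_{N-1}$ is liquidated. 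The reason is that, by Lemma~\ref{l:high debt}, the not-shocked banks' liquidity is fully mutualized and flows up the hierarchy to top-ranked banks. Because B$_{N-1}$'s payoff under liquidation is $0$, while surviving gives at least $(1-\theta)y>0$, B$_{N-1}$ strictly gains in that state. By symmetry, B$_N$ strictly gains in states where it alone (of the pair) is shocked and would otherwise be liquidated.

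To compare expected payoffs, I must also bound what each bank loses in the other states. When neither or both of the pair are shocked, the added offsetting debts $\Delta$ cancel or pass through. When the partner is the one shocked, the unshocked bank pays up to $\theta y$ net. This is a transfer: its own survival is not at risk, but its Date-2 payoff falls. So the bank's expected gain is $\mathbb{P}[\text{own survival gained}]\cdot(1-\theta)y$-type terms minus transfers paid to its partner. A cleaner route, which I would try first, is to choose the deviation so that it strictly increases the pair's \emph{joint} payoff, and then split the surplus with an asymmetric adjustment of the two face values. Deviations need not be symmetric, and zero-net is not imposed here, so a small net transfer $F_{N-1\to N}\neq F_{N\to N-1}$ can reallocate the joint gain so that both strictly benefit. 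Joint payoff rises because saving a pair member avoids destroying $(1-\theta)y$, and the member's pledgeable liquidity, otherwise drained to the network, is kept within the pair.

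The main obstacle is the second half of that claim: showing the pair does not lose through network repayments. Redirecting B$_N$'s liquidity to B$_{N-1}$ reduces flows to higher-ranked banks, which could in turn cut what those banks repay to the pair. I would handle this using that B$_{N-1}$ and B$_N$ receive only $O(s)$ shares of others' repayments (Lemma~\ref{lm:exponentially_dominated}), so second-round effects on the pair are small for $s\leq s^*$ and $\alpha$ large. The direct gain of saving a bank, worth $(1-\theta)y$ in survival value, dominates. I will also need to verify, using the full-mutualization structure from Lemma~\ref{l:high debt}, that the state where B$_{N-1}$ is liquidated before deviation actually occurs with positive probability under the stated hypotheses. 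If that requires more than the stated condition on $G$, I would fall back on using the hypothesis $\mathbb{P}[\sigma_i(1-\sigma_j)=1]>0$ together with $L^*\geq1$ to select such a state.
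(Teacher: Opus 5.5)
You have the right deviation (large, equal mutual debts between B$_{N-1}$ and B$_N$), and your argument that the shocked member survives in a single-shock state is correct. That member is also liquidated in the original equilibrium, because with $L^*\geq1$ it is the lowest-ranked shocked bank.

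\textbf{First gap: the unshocked partner.} You assert that the unshocked partner ``pays up to $\theta y$ net'' so that ``its Date-2 payoff falls,'' and you try to compensate it through an expected-payoff trade-off or a surplus split via asymmetric face values. The missing observation is that there is nothing to compensate. With $\alpha$ large and $L^*\geq1$, Lemma~\ref{l:high debt} says every not-shocked bank \emph{already} makes net payment $\theta y$ in the original network. Its payoff there is therefore exactly the floor $(1-\theta)y$, and after the deviation it still gets at least $(1-\theta)y$, so the partner is weakly better off state by state.

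Your fallback is not established. In these states payoffs are pinned at $(1-\theta)y$ or $0$, so a net face-value asymmetry moves payoffs only by changing who is liquidated or solvent. A higher joint payoff does not by itself give a deviation that strictly benefits both banks.

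Your ``main obstacle'' is also a red herring. Let $O_j$ be the partner's liabilities outside the pair. Once its total liabilities $F+O_j$ exceed $\theta y$, it repays at least $\theta y$ whatever it receives. The shocked member therefore gets at least $\frac{F}{F+O_j}\theta y>\ell-\theta y$ for $F$ large, with no need to bound feedback through higher-ranked banks.

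\textbf{Second gap: both banks shocked.} Saying the offsetting debts ``cancel or pass through'' is not an argument, and this is where the deviation could backfire. When $L^*=1$, B$_{N-1}$ is saved originally, but after the deviation its repayments go mostly to B$_N$ rather than to the higher-ranked banks that fund it.

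The paper rules this out by contradiction. Suppose B$_{N-1}$ were liquidated after the deviation.
\begin{itemize}
\item B$_N$'s only inflows would then come from outside the pair. By the exponential structure these are at most $s$ times B$_{N-1}$'s inflows, which are below $\ell-\theta y$, so B$_N$ is liquidated too.
\item Neither bank then pays anything, and the remaining equations coincide with those of $\alpha\mathbf{F}$.
\item The resulting payment vector is an equilibrium of the original network in which B$_{N-1}$ is liquidated. This contradicts the uniqueness of the liquidation set (Proposition~\ref{p:existence}).
\end{itemize}
If $L^*\geq2$, both are liquidated originally and cannot lose. If neither is shocked, both stay at or above their floor.

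With these two pieces, the proof is a state-by-state weak improvement for both banks, plus a strict gain for the shocked member in each of the two single-shock states, which have positive probability under $G$.
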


The proposition illustrates that constrained efficiency does not imply pairwise stability. The same priority ordering that maximizes liquidity insurance creates profitable bilateral deviations. In particular, B$_{N-1}$ and B$_N$, who, by construction, always receive liquidity last in the exponential network, can trap some liquidity between them by increasing the gross debt between them. That suggests that the networks that arise via decentralized contracting are likely to be inefficient.

Hence we turn to networks that resemble those observed in the real world, which are largely found to be core--periphery networks. We take two approaches to capturing the core--periphery networks, corresponding to different interpretations of the banks in our model, both defined in Definition~\ref{d:typology}.
The first is the star network, which is the simplest core--periphery network.
The second is the uniform complete network, in which all pairs of banks are connected by equal debts. This is not a core--periphery network in itself. It captures the core--periphery structure to the extent that the core is connected and the long-term assets $y$ and liquidity shocks $\ell$ in our model capture claims and liabilities on the periphery.

\begin{proposition}[Star pairwise stable] \label{p:star stable}
   Suppose $1<S<N$, $L^*<S$, $S\ell>(2N-S)\theta y$, and $ (N-S)   F \geq (\ell - \theta y)$. The star network is pairwise stable w.r.t.\ $G$ for all $G$.
\end{proposition}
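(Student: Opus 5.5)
The plan is a state-by-state computation of the payment equilibrium in the star, followed by an accounting argument for any bilateral deviation. In the star, the only linked pairs are the core, call it B$_c$, and a periphery bank B$_j$, so the only deviations to rule out change $F_{c\to j}$ and $F_{j\to c}$.

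\emph{Step 1: equilibrium in the star.} Using \eqref{eq:payment_equilibrium}, I would show that in every state with $1<S<N$ the following hold.
\begin{itemize}
\item[(a)] Every non-shocked periphery bank repays in full or makes its maximal net payment $\theta y$ to the core. The condition $(N-S)F\ge \ell-\theta y$ is what guarantees the core can collect enough through these links.
\item[(b)] The core splits its repayment pro rata, hence equally, among the shocked periphery banks.
\item[(c)] The condition $S\ell>(2N-S)\theta y$, equivalently $S\ell-N\theta y>(N-S)\theta y$, implies that even the whole free liquidity $(N-S)\theta y$ (plus the core's own $\theta y$ if it is not shocked) shared equally cannot cover each shocked periphery bank's shortfall $\ell-\theta y$. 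So every shocked periphery bank is liquidated and receives nothing in equilibrium.
\item[(d)] If the core itself is shocked, it survives, because it pledges its claims on the non-shocked periphery. This is where $L^*<S$ and $(N-S)F\ge\ell-\theta y$ enter.
\end{itemize}
This yields explicit payoffs for B$_c$ and B$_j$ in each state. B$_j$ gets $y$ minus its net payment $\theta y$ if not shocked and $0$ if shocked; the core gets its residual.

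\emph{Step 2: accounting for a deviation.} Fix a deviation $(F'_{c\to j},F'_{j\to c})$ and compare the new clearing vector with the old one state by state. Payments among the other periphery banks are unaffected except through the core's pro rata shares. The deviation matters only through the net transfer between B$_c$ and B$_j$ and through whether it changes anyone's liquidation status. I would argue that the only liquidation it can overturn is B$_j$'s own, in states where B$_j$ is shocked. To save B$_j$, the core must route at least $\ell-\theta y$ to it. By Step 1(c), the core's surplus net of what it needs in states where it is itself shocked is below that threshold. A deviation that tilts the core's payments toward B$_j$ enough to save it therefore either makes the core unable to meet its own shock or strips the core of its entire surplus.

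\emph{Step 3: zero-sum conclusion.} With liquidations of other banks and of the core unchanged (or made worse for the core), the pair's joint payoff changes only by redistribution between them. Any strict gain for B$_j$ is then matched by a weak loss for B$_c$ in every state, and hence in expectation for every $G$. So no deviation makes both strictly better off, which proves pairwise stability for all $G$.

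The main obstacle is Step 2: showing that the deviation cannot create joint surplus by saving B$_j$ when it is shocked. Saving B$_j$ raises the pair's total by $(1-\theta)y$, so a pure transfer argument does not suffice there. The first thing I would try is to use $S\ell>(2N-S)\theta y$ directly, to show that the liquidity reaching the core from the other $N-S$ non-shocked banks is capped at $(N-S)\theta y$ regardless of the deviation. Combined with the pro rata split forced by the unchanged links to the other shocked banks, this should show that the core can never pass $\ell-\theta y$ to B$_j$ without itself being liquidated in the states where both are shocked. I would then check that this loss to the core outweighs B$_j$'s gain, using that $G$ is arbitrary.
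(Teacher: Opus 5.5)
The gap is in Steps 2--3. A deviation can create joint surplus for the pair, so neither your zero-sum conclusion nor a cross-state ``loss outweighs gain'' comparison can work.

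Concretely, raise $F_{c\to j}$ far above $F$ and keep $F_{j\to c}=F$. Take a state in which B$_j$ is shocked and the core is not. Pro rata then sends almost all of the core's repayment to B$_j$. The core still collects at least $\theta y+(N-S-1)\min\{F,\theta y\}\geq\ell-\theta y$, by $(N-S)F\geq\ell-\theta y$ and $L^*<S$ (i.e.\ $\ell-\theta y\leq(N-S)\theta y$). So, outside knife-edge cases, B$_j$ is saved.

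Your claim that the core's surplus is ``below that threshold'' conflates the equal split of Step 1(c) with a concentrated transfer. $L^*<S$ says exactly that concentrated liquidity suffices for one bank. Moreover, stripping the core of its surplus costs it nothing: it defaults before and after the deviation and earns $(1-\theta)y$ either way. Hence the pair's joint payoff strictly rises.

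The core is not liquidated when both are shocked either, since it still receives at least $(N-S)\min\{F,\theta y\}\geq\ell-\theta y$ from the unshocked periphery. Finally, because $G$ is arbitrary, it can put all its mass on states in which B$_j$ alone of the pair is shocked. So no expected-value tradeoff across states is available.

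The missing idea is that the argument should be one-sided: the core can never strictly gain, in any state, under any deviation. Your Step 1 should record that, under these hypotheses, the core is never liquidated but always defaults. For the unshocked core, use $F\geq(\ell-\theta y)/(N-S)>\theta y/S$. Its payoff is therefore pinned at $(1-\theta)y$ in every state.

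Then, for any $(F_{c\to j},F_{j\to c})$, suppose the core were solvent, hence paying $F_{c\to j}$ to B$_j$ and $F$ to every other periphery bank. In that case:
\begin{itemize}
\item each other shocked periphery bank could pay back at most $(F-(\ell-\theta y))^+$;
\item each unshocked periphery bank could pay back at most $F$;
\item B$_j$ could pay back at most $(F_{c\to j}-(\ell-\theta y))^+$ if shocked, or $\theta y+F_{c\to j}$ if not.
\end{itemize}
In each of the four cases (core shocked or not, B$_j$ shocked or not), the core's resources fall short of its liabilities $F_{c\to j}+(N-2)F$. This uses $(\ell-\theta y)/(N-S)>2\theta y/S\geq\theta y/(S-1)$, which follows from $S\ell>(2N-S)\theta y$ and $S>1$, together with $F\geq(\ell-\theta y)/(N-S)$.

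So the core stays in default or is liquidated, and gets at most $(1-\theta)y$ state by state, hence under every $G$. Whether B$_j$ gains is irrelevant. This is where $S\ell>(2N-S)\theta y$ does its real work; your Step 1(c) needs only $S\ell>N\theta y$.

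A minor point: shocked periphery banks generally do receive a positive pro rata share from the core. They are liquidated because that share falls short of $\ell-\theta y$, not because it is zero.
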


\noindent This suggests that our model can explain observed interbank network structures. The result is strong in that it holds for all shock distributions $G$. The proof exploits the high-debt region in which the core bank always defaults.
 Hence we prove an auxiliary result in which it does not, suggesting the conclusion is robust:

\addtocounter{proposition}{-1}
\renewcommand{\theproposition}{\arabic{proposition}$'$}

\begin{proposition}[Star still pairwise stable]\label{p:star still stable}
   Suppose $\mathbf{F}$ is a star network of $N\geq3$ banks, $S = N -1$, and
    \begin{equation}
\max\left\{
\frac{\theta y}{N-1},
\frac{(2\theta-1)y}{N-2}
\right\}
<
F=\ell-\theta y
<
\frac{\theta y}{N-2}.
    \end{equation}
    The star network is pairwise stable w.r.t.\ the uniform distribution of shocks (every $\bm \sigma$ s.t.\  $\vert \bm \sigma \vert = N-1$ is equally likely).
\end{proposition}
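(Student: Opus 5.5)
The plan is to compute the payment equilibrium of the star in every state and then check every bilateral deviation by the core bank (say B$_1$) and a periphery bank B$_j$. I will use Proposition~\ref{p:existence} to treat the equilibrium as unique; $F>\theta y/(N-1)$ keeps us off the knife edge $N\theta y=(N-1)\ell$.

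\emph{Status quo.} Since $S=N-1$, there are two kinds of state.
\begin{itemize}
\item \emph{The core is the healthy bank} (probability $1/N$). Each periphery bank needs exactly $\ell-\theta y=F$ from the core to survive. The core can pay at most $\theta y<(N-1)F$, because $F>\theta y/(N-1)$. Splitting $\theta y$ pro rata gives each periphery bank less than $F$, so they are all liquidated and repay zero. This is the consistent fixed point: the "equal split wastes liquidity" logic of the complete-network example.
\item \emph{Periphery bank B$_k$ is the healthy bank} (probability $(N-1)/N$). B$_k$ pays $F\le\theta y$ in full to the core. That leaves the core with exactly $\theta y-\ell+F=0$, so it survives and passes nothing on. The other $N-2$ periphery banks are liquidated.
\end{itemize}
From these I will record expected payoffs. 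A periphery bank gets $y-F$ (its $y-\ell$ plus $F$ from the core) when it is the healthy one, and $0$ otherwise. The core gets $(1-\theta)y$ in the states where it survives, and $0$ when it is healthy and the periphery banks are liquidated, since it then pays out $\theta y$ and keeps only non-pledgeable value. Care is needed at the boundaries, e.g.\ survival with zero residual.

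\emph{Deviations.} Fix B$_1$ and B$_j$ and let them choose new face values $a=F_{1\to j}$ and $b=F_{j\to 1}$. All other links stay at $F$, and the zero-net restriction is dropped. I will split the states by which bank is unshocked: (a) the core, (b) B$_j$, (c) another periphery bank B$_k$. In each I will solve the clearing equations (equation~\eqref{eq:payment_equilibrium}) as piecewise-linear functions of $(a,b)$.
\begin{itemize}
\item In (a), B$_j$ can be saved only by raising $a$ enough that its share $a\theta y/(a+(N-2)F)$ of the core's $\theta y$ reaches $F$. This makes the core pay it out, and the core is already at zero there.
\item In (b), lowering $b$ below $F$ liquidates the core. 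Raising $b$ transfers more of B$_j$'s slack to the core, and the core then passes part of it on to liquidated banks, which is waste.
\item In (c), both banks are shocked and B$_k$'s payment $F$ lands at the core. Changing $a$ can redirect part of it to B$_j$, but total resources $\theta y$ plus B$_k$'s payment can save at most one bank. The upper bound $F<\theta y/(N-2)$ and the lower bound $F>(2\theta-1)y/(N-2)$ govern whether redirecting it to save B$_j$ instead of the core is feasible, and how much the core loses against what B$_j$ gains.
\end{itemize}

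\emph{Key step.} I would show that the pair's joint expected payoff, under the uniform distribution, cannot rise by more than what is zero-sum between them. That rules out a strict Pareto improvement. Concretely, I will show that any $(a,b)$ raising B$_j$'s payoff does so in state (a) or (c) by saving B$_j$ instead of the core, or by shifting the core's residual to B$_j$. The core loses at least $(1-\theta)y$ in that state, and this is not compensated in state (b), where the core survives only with zero residual anyway. Conversely, deviations favoring the core require B$_j$ to pay more in (b) than it gains elsewhere. The inequality $F>(2\theta-1)y/(N-2)$ should be exactly what makes one bank's gain smaller than the other's loss, after weighting state (c) by $(N-2)/N$ against states (a) and (b) at $1/N$ each.

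I expect the main obstacle to be the case analysis of the clearing vector under arbitrary $(a,b)$ in state (c). There the core and B$_j$ default on each other in a cycle, and the fixed point changes regime at several thresholds in $(a,b)$. I would first try to reduce to the two-bank subsystem, treating B$_k$'s payment $F$ into the core as exogenous, and enumerate its four regimes.
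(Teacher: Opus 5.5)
There is a genuine gap, and it starts with the status quo. When the core is the unique healthy bank it is not liquidated: it defaults, pays out $\theta y$, and keeps $(1-\theta)y$, not $0$. So the core earns $(1-\theta)y$ in every state, and it can strictly gain from a deviation only by becoming solvent.

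When the core is shocked this is impossible. The only net external liquidity is one periphery bank's $\theta y<(N-1)F$, while the core must cover its shortfall $F$ plus $N-2$ unchanged liabilities of $F$. So it can gain only in your state (a), and only by \emph{cutting} $a$. Its residual pledgeable assets there are at most $\theta y+(a-F)^+-a-(N-2)F$, which is positive only if $a<\theta y-(N-2)F<F$. Your sketch never identifies this channel.

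As a result, your key step is false. Take $a$ large and $b=F$: this saves B$_j$ in state (a), while the core, which defaults there anyway, still gets $(1-\theta)y$, and nothing changes in (b) or (c). Take instead $a=0$, $b=F$: this makes the core solvent in state (a) while leaving B$_j$ exactly indifferent. Both deviations strictly raise the pair's joint expected payoff, so no bound on joint surplus can deliver the result.

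What is needed is a one-sided argument showing that the two gains are incompatible.
\begin{itemize}
\item If the core strictly gains, then $a<F$. So B$_j$ receives less than $F$ and is liquidated whenever it is shocked.
\item The core's gain in state (a) is at most $\theta y-(N-2)F$. If $b<F$, the core is liquidated in state (b) and loses $(1-\theta)y$. Both states have probability $1/N$, and the hypothesis $F>(2\theta-1)y/(N-2)$ is exactly $(1-\theta)y>\theta y-(N-2)F$. This forces $b\geq F$.
\item With $b\geq F$, the core meets its own shortfall $F$ before paying anyone, and the other shocked periphery banks repay nothing. Hence $R_{1\to j}\leq R_{j\to 1}-F$, and B$_j$ gets at most $y-F$ in state (b). So B$_j$ is weakly worse off in every state.
\end{itemize}

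Your reading of state (c) is also off. Liquidity shocks are senior, so the core cannot redirect B$_k$'s payment to save B$_j$ instead of itself. For any $(a,b)$, the core survives there with exactly zero residual, and B$_j$ can receive only what it itself repays to the core, so it stays liquidated. Nothing changes in (c), and the weight $(N-2)/N$ plays no role. The parameter condition compares the core's own gain and loss across states (a) and (b), not one bank's gain against the other's loss.
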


\renewcommand{\theproposition}{\arabic{proposition}}

\noindent The proof of this version relies on banks being on the brink of liquidation: Increasing pairwise debts would harm them, because they would be liquidated when they are shocked. That sounds more knife-edge than we think it is. As our model is bang-bang---you are liquidated or not---that is the analog of a first-order condition---you take on offsetting debts until the risk of liquidation from the liability outweighs the insurance benefit of the claim.

The complete network is likewise pairwise stable:

\begin{proposition}[Uniform complete pairwise stable]\label{p:complete stable large shock}
Suppose \(N\geq3\) and \(\ell >N\theta y\). Then there exists \(\bar F\) such that for every \(F\geq\bar F\), the uniform complete network with common face value \(F\) is pairwise stable w.r.t. $G$ for all \(G\). \end{proposition}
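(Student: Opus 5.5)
The plan is to show that no bilateral change of face values can raise either deviator's payoff in any state, so no deviation is strictly profitable for both, whatever $G$ is. I would work state by state.

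\textbf{Step 1: every shocked bank is liquidated in any network.} Since $\ell>N\theta y\geq (N-S+1)\theta y$, we have $S\ell-N\theta y>(S-1)(\ell-\theta y)$, so $L^*=S$ in \eqref{e:L*} for every state with $S\geq1$. Any payment equilibrium of any network, including a deviated one, induces transfers $t_i=R_{i\leftleftarrows}-R_{i\rightrightarrows}$ that are feasible for the planner of Definition~\ref{d:constrained efficiency}. Hence no equilibrium saves fewer than $L^*=S$ banks. So every shocked bank is liquidated, gets $0$, and repays $0$. The not-shocked banks are never liquidated, since they face no shock. This holds both before and after any deviation.

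\textbf{Step 2: payoffs in the uniform complete network.} In a state with $S=0$, no bank is liquidated, and default is a pure transfer in a zero-net network. With $F$ large, the symmetric fixed point has every bank repaying in full, so each gets $y$. I would double-check this with \eqref{eq:payment_equilibrium}, and note that it holds after a deviation too, because liabilities are still met when no bank is shocked. In a state with $S\geq1$, Lemma~\ref{l:high debt} applies for $F$ large: alternative (ii) fails by Step 1, so every not-shocked bank makes the maximal net payment $\theta y$. Its payoff is therefore $(1-\theta)y$, the minimum any non-liquidated bank can get. In states where $\mathrm{B}_i$ is shocked, its payoff is $0$ under any network. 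So the status-quo payoff of a deviator is $0$ if it is shocked, $(1-\theta)y$ if it is not and some bank is shocked, and $y$ if no bank is shocked.

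\textbf{Step 3: a deviation by $\mathrm{B}_i$ and $\mathrm{B}_j$ cannot beat these payoffs state by state.} Shocked states give $0$ regardless, and the $S=0$ states give $y$ in both networks. The remaining case is a state with $S\geq1$ in which, say, $\mathrm{B}_i$ is not shocked. Here I must show $\mathrm{B}_i$ still defaults after the deviation, so its payoff stays at $(1-\theta)y$. This is the main obstacle, because the deviators may choose $F'_{i\to j},F'_{j\to i}$ arbitrarily, possibly huge, which dilutes $\mathrm{B}_i$'s pro rata share to the outside. I would proceed in two steps.
\begin{itemize}
\item First, $\mathrm{B}_i$ still owes the unchanged face value $F$ to every shocked bank, and each shocked bank returns nothing (Step 1).
\item Second, sum the budget constraints of the set of not-shocked banks. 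Liquidity entering this set from outside is zero, and it owes $F\cdot S\cdot(\text{number of not-shocked banks})$ to shocked banks, with only $(N-S)\theta y$ of pledgeable resources. So when $F\geq\bar F$ with, e.g., $\bar F>N\theta y$, some not-shocked bank defaults.
\end{itemize}
The harder part is to rule out a solvent deviator. I would argue by contradiction. If $\mathrm{B}_i$ repays $F'_{i\rightrightarrows}$ in full, it pays $F$ to each shocked bank, so it must receive net at least $SF-\theta y$ from other not-shocked banks. All of those banks except possibly $\mathrm{B}_j$ have unchanged uniform debts. Their net outflow is capped by $\theta y$ each, and only the fraction sent to $\mathrm{B}_i$ counts. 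This bounds $\mathrm{B}_i$'s net inflow by $(N-S)\theta y<SF-\theta y$ for $F$ large, which gives the contradiction. The bound on $\bar F$ must be independent of the deviation, and the aggregate cap $(N-S)\theta y$ on net inflows provides exactly that.

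\textbf{Conclusion.} Since each deviator's payoff is weakly lower in every state, no deviation strictly raises either bank's expected payoff under any $G$. Hence the uniform complete network is pairwise stable w.r.t.\ $G$ for every $F\geq\bar F$.
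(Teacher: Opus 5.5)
Your Step~1 is sound. Your aggregate argument also handles the states with $S\geq1$ in which neither deviator is shocked, provided you sum net payments over the whole not-shocked set rather than bounding bilateral flows. The flow into the shocked set is then at most $(N-S)\theta y$, while a solvent $\mathrm{B}_i$ alone would send it $SF$. But two cases break.

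\emph{The state $S=0$.} After the deviation the pair is no longer zero-net, so it is false that both deviators still get $y$. With $F'_{i\to j}=F'_{j\to i}+\epsilon$ and $0<\epsilon\leq\theta y$, everyone still repays in full and $\mathrm{B}_j$ gets $y+\epsilon$. Your concluding claim that \emph{each} deviator is weakly worse off in every state is therefore false. You must fix one deviator that loses weakly in all states. The paper labels the pair so that $F'_{i\to j}\geq F'_{j\to i}$ and shows this bank is weakly worse off everywhere. At $S=0$ its net payment, if solvent, is at least $F'_{i\to j}-F'_{j\to i}\geq0$; if it defaults it gets $(1-\theta)y<y$.

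\emph{The state in which the partner $\mathrm{B}_j$ is shocked and $\mathrm{B}_i$ is not.} This is the more serious gap. Your first bullet fails here: $\mathrm{B}_i$'s debt to the shocked $\mathrm{B}_j$ is the deviation variable and can be set to zero. If $\mathrm{B}_j$ is the only shocked bank, $\mathrm{B}_i$ then owes nothing to any shocked bank, and your required net inflow $SF-\theta y$ becomes vacuous. No cap on \emph{net} flows can then give a contradiction, because $\mathrm{B}_i$'s liabilities $(N-2)F$ to the other not-shocked banks are offset by equal gross claims on them. Its solvency turns on how much gross repayment comes back.

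This is exactly where the threshold bites. For $\theta y<F<N\theta y/(N-2)$, setting $F'_{i\to j}=0$ leaves $\mathrm{B}_i$ solvent in that state with net payment $(N-2)(F-\theta y)/2<\theta y$, a strict gain.

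The paper closes this case with a gross bound. Let $\bar R$ be the largest total repayment among the $m=N-S-1$ unchanged not-shocked banks. Then $\bar R\leq\theta y+F+\frac{m-1}{N-1}\bar R$, so each pays $\mathrm{B}_i$ at most $(\theta y+F)/(S+1)$. Hence a solvent $\mathrm{B}_i$'s net payment is at least $\theta y$ once $F\geq N\theta y/(N-2)$.

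Within your own approach, the missing observation is that each unchanged not-shocked bank pays $\mathrm{B}_i$ exactly what it pays $\mathrm{B}_j$. So $R_{i\leftleftarrows}$ is at most the total flow into the shocked set, $(N-S)\theta y$. A solvent $\mathrm{B}_i$'s net payment is then at least $(N-2)F-(N-S)\theta y\geq\theta y$ whenever $(N-2)F\geq N\theta y$.
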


\noindent The proof of this result, which requires all banks be highly indebted ($F \geq \bar F)$, exploits that shocked banks are liquidated and not-shocked banks default, offering a possible explanation for why banks are indebted and crises are systemic: It deters deviations in networks.

Pairwise stability says that, per Definition~\ref{d:pairwise},  no pair of banks can do anything to improve their payoffs. The outcomes are nonetheless inefficient for banks collectively. In both cases, the networks are ``too symmetric,'' in that they often allocate liquidity equally to banks instead of prioritizing them, violating the ``no wasting liquidity'' principle (see Section~\ref{s:exp}).
In the star network, only the core bank can be saved. Other shocked banks are always liquidated:
\begin{proposition}[Star inefficient]\label{p:star inefficient} Suppose $N\theta y<S\ell$. Let $\mathbf{F}$ be a star network.    All shocked banks are liquidated if either (i) the core bank is not shocked or (ii) it is shocked and $(N-S) F < (\ell-\theta y)$. It is not constrained efficient whenever $L^*< S$.
\end{proposition}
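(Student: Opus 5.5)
We take the core to be B$_1$, so $F_{1\to j}=F_{j\to 1}=F$ for every periphery bank $j\neq 1$, and peripheries owe nothing to each other. The plan has four steps: an aggregate resource identity, a symmetry observation, the two cases, and then the inefficiency claim.

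\emph{Resource identity.} Summing the clearing condition~\eqref{eq:payment_clearing} over banks gives
\[
\sum_i \big(R_{i\leftleftarrows}-R_{i\rightrightarrows}\big)=0 .
\]
By~\eqref{e:total repayment}, every bank that is not liquidated satisfies $R_{i\rightrightarrows}\le \theta y-\ell\sigma_i+R_{i\leftleftarrows}$. Every liquidated bank pays $0$ and receives at least $0$. Suppose no bank is liquidated. Summing the survivors' inequalities then gives $0\ge N\theta y-S\ell$, which contradicts the hypothesis $N\theta y<S\ell$. Hence at least one bank is liquidated in every state.

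\emph{Symmetry.} In the star, each periphery bank's only creditor is the core, and its only debtor is the core. By pro rata~\eqref{eq:pro_rata}, every periphery bank receives the same amount, $R_{j\leftleftarrows}=R_{1\rightrightarrows}/(N-1)$. Every shocked periphery bank therefore faces the same survival condition $\theta y+R_{1\rightrightarrows}/(N-1)\ge \ell$. So in any equilibrium, either all shocked periphery banks survive or all are liquidated. Proposition~\ref{p:existence} gives generic uniqueness; in the knife-edge case, the argument applies to every equilibrium.

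\emph{Case (i): the core is not shocked.} The core is then never liquidated, because $\theta y>0$. By the symmetry step, if any shocked bank survived, then every bank would survive. The resource identity rules this out, so all shocked banks are liquidated.

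\emph{Case (ii): the core is shocked and $(N-S)F<\ell-\theta y$.} We consider the possible outcomes in turn.
\begin{itemize}
\item Suppose the shocked periphery banks are liquidated. They then repay $0$, so the core receives at most the healthy peripheries' repayments, $R_{1\leftleftarrows}\le (N-S)F<\ell-\theta y$. The core is therefore liquidated as well.
\item Suppose instead the shocked periphery banks survive. They must receive a positive payment, because $\ell>\theta y$. That payment can only come from the core, so the core cannot be liquidated (a liquidated core pays $0$). Then no bank at all is liquidated, which contradicts the resource identity.
\end{itemize}
In both sub-cases every shocked bank is liquidated.

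\emph{Inefficiency.} If $S=N$, then $L^*=\lceil (N\ell-N\theta y)/(\ell-\theta y)\rceil=N=S$. So the hypothesis $L^*<S$ forces $S<N$. There is then a state with exactly $S$ shocks in which the core is not shocked. By case (i), the star liquidates all $S$ shocked banks in that state. Lemma~\ref{l:constrained efficiency} says the planner liquidates only $L^*<S$ banks in that state. The equilibrium is therefore strictly less efficient there, and the star is not constrained efficient.

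The main obstacle is case (ii). Its two sub-cases couple the core's survival to the peripheries' survival, and the key step is ruling out that shocked peripheries survive while the core is liquidated. I would handle this exactly as above, using that a liquidated core pays nothing. A secondary care point is the knife-edge multiplicity noted after Proposition~\ref{p:existence}. It is harmless here because every step holds for any equilibrium.
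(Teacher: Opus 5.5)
Your proof is correct and follows the same route as the paper: insufficient aggregate liquidity plus symmetry among peripheries for case (i), the bound $(N-S)F<\ell-\theta y$ on the core's inflow for case (ii), and a comparison with $L^*$ for inefficiency. Your version is more careful than the paper's sketch, since it explicitly rules out surviving shocked peripheries feeding the core. One sign slip: summing the survivors' inequalities gives $0\le N\theta y-S\ell$, not $0\ge N\theta y-S\ell$, and only the former contradicts $N\theta y<S\ell$.
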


The uniform complete network can be even worse. There, all shocked banks are liquidated whenever shocks are large:
\begin{proposition}[Uniform complete inefficient] \label{p:complete inefficient}  Let $\mathbf{F}$ be the uniform complete network in Proposition~\ref{p:complete stable large shock}. Whenever
    \begin{equation} \label{e:complete sigma}
    S \ell > N \theta y ,
    \end{equation}
    all shocked banks are liquidated.
\end{proposition}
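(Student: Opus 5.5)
The plan is to exploit the symmetry of the uniform complete network: it forces all shocked banks to be treated identically, so they either all survive or are all liquidated. A liquidity-conservation argument then rules out the first alternative whenever $S\ell>N\theta y$. Neither the bound $F\geq\bar F$ nor the assumption $\ell>N\theta y$ from Proposition~\ref{p:complete stable large shock} should be needed beyond fixing the network; the argument works for any uniform complete network.

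\emph{Step 1: equal treatment of shocked banks.} In the uniform complete network $\hat F_{i\to j}=1/(N-1)$ for all $i\neq j$. Writing $T:=\sum_k R_{k\rightrightarrows}$, each bank receives
\[
R_{i\leftleftarrows}=\frac{T-R_{i\rightrightarrows}}{N-1}.
\]
Take two shocked banks B$_i$ and B$_j$ and suppose $R_{i\rightrightarrows}>R_{j\rightrightarrows}$. Then $R_{i\leftleftarrows}<R_{j\leftleftarrows}$. The map $x\mapsto\max\{0,\min\{\theta y-\ell+x,\,F\}\}$ in equation~\eqref{e:total repayment} is weakly increasing and is the same for both banks, since both are shocked and both owe $F_{i\rightrightarrows}=F_{j\rightrightarrows}$. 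Hence $R_{i\rightrightarrows}\leq R_{j\rightrightarrows}$, a contradiction. So all shocked banks make the same repayment and therefore receive the same amount $R_{i\leftleftarrows}$. By the liquidation condition~\eqref{eq:liquidation}, either all shocked banks are liquidated or none is. This direct argument avoids invoking the uniqueness in Proposition~\ref{p:existence}, so it also covers any knife-edge multiplicity.

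\emph{Step 2: not all shocked banks can survive.} Suppose no shocked bank is liquidated. Not-shocked banks are never liquidated, since $\theta y+R_{i\leftleftarrows}\geq\theta y>0$, so no bank is liquidated. For every continuing bank, equation~\eqref{e:total repayment} gives $R_{i\rightrightarrows}\leq\theta y-\ell\sigma_i+R_{i\leftleftarrows}$. Summing over all $i$ and using market clearing~\eqref{eq:payment_clearing}, which gives $\sum_i R_{i\rightrightarrows}=\sum_i R_{i\leftleftarrows}$, yields
\[
0\leq N\theta y-S\ell .
\]
This contradicts condition~\eqref{e:complete sigma}. This is the same budget constraint~\eqref{e:residual liquidity} as in Lemma~\ref{l:constrained efficiency} with $L=0$. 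Combining the two steps, every shocked bank is liquidated.

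The main obstacle is Step~1, namely making the equal-treatment claim rigorous for every equilibrium rather than only the symmetric one. The monotonicity argument above is what I would rely on. The only care needed is that every shocked bank faces exactly the same best-response map, which holds because the network is regular and the shock size $\ell$ is common. Step~2 is routine accounting.
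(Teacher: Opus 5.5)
Your proof is correct and has the same structure as the paper's one-line argument: aggregate liquidity cannot save every shocked bank (your Step~2 is exactly the budget constraint~\eqref{e:residual liquidity} with $L=0$), and symmetry forces all shocked banks to share one fate. The only difference is that the paper gets equal treatment from the network's symmetry together with uniqueness of the equilibrium (Proposition~\ref{p:existence}), whereas your monotonicity argument proves it directly for every equilibrium. That makes your proof self-contained, but the extra generality is not needed here: in a complete network the knife-edge condition for multiplicity is $S\ell=N\theta y$, which your hypothesis rules out.
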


\noindent This says that whenever the constrained efficient number of liquidations $L^*$ is not zero, all shocked banks are liquidated. (That can  be seen from substituting the condition~\eqref{e:complete sigma} on the number of shocked banks into the definition of $L^*$ in equation~\eqref{e:L*}.) Even though, per Section~\ref{s:LT}, dilutable debt necessarily helps, the result still affirms the  idea that observed financial networks are ``robust yet fragile,'' do well for small shocks but struggle to sustain large ones.

\section{Heterogeneous Banks} \label{s:hetero}

So far, we assumed that all banks had the same assets in place and the same size liquidity shocks. Now we allow them to be heterogeneous, with $y_i$ denoting B$_i$'s assets and $\ell_i>\theta y_i$ the size of its potential liquidity shock.
We denote the deadweight loss of liquidating B$_i$ by $\Delta_i \geq 0$; in the baseline it is the non-pledgeable value destroyed, $\Delta_i = (1-\theta)y_i$.

We start with a generalization of the planner's problem to this environment:

\begin{definition}[Generalized planner's problem (PP)]\label{d:pp} The \emph{(generalized) planner's problem} is to find a vector of transfers $\textbf{\emph t}$ for each $\bm \sigma$, with $t_i$ paid to each B$_i$, to minimize the total deadweight loss of liquidation,
    \begin{equation} \label{e:t DWL}
    \mathrm{ minimize }  \ \  \sum_{i=1}^N \mathbbm{1}_{\{\theta y_i+t_i < \ell_i \sigma_i\}}
    \Delta_i ,
\end{equation}
subject to (i) each bank's liquidity constraint
\begin{equation}\label{eq:pp:liquidity_constraint}
    t_i\geq \min\{\ell_i\sigma_i-\theta y_i,0\}
\end{equation}
and (ii) total liquidity being conserved
\begin{equation}\label{eq:pp:liquidity_conservation}
    \sum_{i=1}^N t_i\leq 0.
\end{equation}
\end{definition}

\noindent Observe that if $\Delta_i$ is the same for all banks, then the objective is just to minimize the number of liquidations. If $\ell_i - \theta y_i$ is also the same, the constraints are too and the problem coincides with the planner's problem in Definition~\ref{d:constrained efficiency}.

We next show that the problem of choosing transfers among banks is equivalent to that of choosing the set of banks to liquidate, with  effectively the same objective and subject to the same constraint on aggregate liquidity:

\begin{definition}[Knapsack problem (KP)] \label{d:kp}
Find a vector of binary variables $\textbf{\emph x} \in \{0, 1\}^N$ for each $\bm \sigma$, with $x_i = 0$ if B$_i$ is liquidated, to minimize the deadweight loss,
    \begin{equation} \label{e:x DWL}
      \mathrm{ minimize } \ \  \sum_{i = 1}^N (1 - x_i) \Delta_i,
    \end{equation}
    subject to liquidity being conserved
    \begin{equation}\label{eq:aggregate_pledge}
    \sum_{i=1}^N x_i\sigma_i(\ell_i-\theta y_i)\leq \sum_{i=1}^N(1-\sigma_i)\theta y_i.
\end{equation}

\end{definition}

\noindent In computer science, this problem is called the ``knapsack problem,'' as it describes a problem of finding the most valuable set of objects to put in a knapsack subject to a constraint on their total weight, just as ours describes finding the most valuable set of banks to save subject to a constraint on the total liquidity required, an equivalence formalized in the next result:

\begin{proposition}[The planner's problem is the knapsack problem] \label{p:PP=KP}
PP in Definition \ref{d:pp} is equivalent to KP in Definition \ref{d:kp} in that
    (i)    if $\mathbf{\hat t}$ solves PP then $\mathbf{\hat x}$ with
    $
        \hat x_i := \mathbbm{1}_{\{\theta y_i+ \hat t_i \geq \ell_i\sigma_i\}} $ solves KP and (ii) if $\mathbf{\check x}$ solves KP
        then $\mathbf{\check t}$ with $\check t_i :=\sigma_i \check x_i(\ell_i-\theta y_i)-(1-\sigma_i)\theta y_i$ solves PP.
\end{proposition}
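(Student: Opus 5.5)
The plan is to prove both directions by comparing objective values across the two maps and using the fact that the two problems share an optimal value. The central step is a two-way feasibility correspondence. It says that every PP-feasible $\mathbf t$ induces a KP-feasible $\mathbf x$ with the same objective, and every KP-feasible $\mathbf x$ induces a PP-feasible $\mathbf t$ with weakly smaller objective. Once this is established, the optimal values coincide, and (i) and (ii) follow at once.

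\emph{Step 1 (PP to KP).} Take any PP-feasible $\mathbf t$ and set $x_i := \mathbbm{1}_{\{\theta y_i+t_i\geq \ell_i\sigma_i\}}$. The objectives \eqref{e:t DWL} and \eqref{e:x DWL} then agree term by term, by construction. For KP-feasibility, I split $\sum_i t_i\leq 0$ into three groups of banks.
\begin{itemize}
\item Not-shocked banks satisfy $t_i\geq -\theta y_i$ by \eqref{eq:pp:liquidity_constraint}.
\item Shocked banks with $x_i=1$ satisfy $t_i\geq \ell_i-\theta y_i$ by the definition of $x_i$.
\item Shocked banks with $x_i=0$ satisfy $t_i\geq 0$, again by \eqref{eq:pp:liquidity_constraint}, since $\ell_i>\theta y_i$.
\end{itemize}
Summing the three groups gives $0\geq \sum_i t_i \geq \sum_i x_i\sigma_i(\ell_i-\theta y_i)-\sum_i(1-\sigma_i)\theta y_i$, which is \eqref{eq:aggregate_pledge}.

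\emph{Step 2 (KP to PP).} Take any KP-feasible $\mathbf x$ and define $t_i:=\sigma_i x_i(\ell_i-\theta y_i)-(1-\sigma_i)\theta y_i$.
\begin{itemize}
\item Constraint \eqref{eq:pp:liquidity_constraint} holds with equality for not-shocked banks, and holds as $t_i\geq 0$ for shocked ones.
\item Constraint \eqref{eq:pp:liquidity_conservation} is exactly \eqref{eq:aggregate_pledge} rearranged.
\item For the objective: a shocked bank with $x_i=1$ gets $\theta y_i+t_i=\ell_i$, so it is not liquidated. A not-shocked bank gets $\theta y_i+t_i=0=\ell_i\sigma_i$, so it is not liquidated either.
\end{itemize}
Hence the PP objective at $\mathbf t$ is at most $\sum_i(1-x_i)\Delta_i$. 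The inequality may be strict only because a not-shocked bank with $x_i=0$ is counted as lost in KP although it survives in PP.

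\emph{Step 3 (conclude).} Steps 1 and 2 together give $\mathrm{val}(PP)=\mathrm{val}(KP)$. For (i), if $\hat{\mathbf t}$ solves PP, then Step 1 gives a KP-feasible $\hat{\mathbf x}$ whose objective equals $\mathrm{val}(PP)=\mathrm{val}(KP)$, so $\hat{\mathbf x}$ is optimal. For (ii), if $\check{\mathbf x}$ solves KP, then Step 2 gives a PP-feasible $\check{\mathbf t}$ whose objective is at most $\mathrm{val}(KP)=\mathrm{val}(PP)$, so $\check{\mathbf t}$ is optimal.

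The main obstacle I expect is the mismatch flagged in Step 2. KP nominally allows $x_i=0$ for a not-shocked bank, and this would be charged $\Delta_i$ even though no liquidation occurs. The argument above sidesteps it through the weak inequality, so that no claim of equality of objectives is needed in Step 2. I would still note that any KP optimum with $\Delta_i>0$ sets $x_i=1$ for not-shocked banks, since for them the constraint coefficient is zero. I would also check the boundary case $\theta y_i+t_i=\ell_i\sigma_i$: it counts as survival under both indicator conventions, so the definitions are consistent.
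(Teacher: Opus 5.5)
Your proof is correct and uses the same two maps and the same objective comparisons as the paper: equality under the indicator map, and a weak inequality under the explicit-transfer map. The only differences are organizational and in care of detail: the paper runs each direction as a separate contradiction argument rather than first establishing $\mathrm{val}(PP)=\mathrm{val}(KP)$, and your three-group feasibility check in Step 1 is more complete than the paper's one-line version, which cites only $\hat t_i\geq-\theta y_i$ but also needs $\hat t_i\geq 0$ for shocked banks and the definition of $\hat x_i$.
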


\noindent The knapsack problem is hard to solve (it is NP-hard, in the language of computer science). Hence algorithms have been developed that deliver approximate solutions quickly. One is the ``greedy algorithm,'' defined in our context as follows:

\begin{definition}[Greedy algorithm] \label{d:greedy}
Suppose banks are ranked by a permutation $\pi^{-1}$ on $\{1, ... , N\}$, so $\pi(r)$ is the index of the $r$-th highest ranked bank. For each B$_{\pi(r)}$, set $x_{\pi(r)} = 1$ if either $\sigma_{\pi(r)} = 0$ or if
\begin{equation} \label{e:greedy}
    \sum_{r'= 1 }^r \sigma_{\pi(r')}   \big(\ell_{\pi(r')} - \theta y_{\pi(r')} \big) \leq
    \sum_{r'= 1}^N \big(1 - \sigma_{\pi(r')} \big) \theta y_{\pi(r')}.
\end{equation}
\end{definition}

\noindent {}In words, for any ranking of banks, the greedy algorithm goes through them sequentially, saving the highest-ranked shocked banks until liquidity runs out (and saving all not-shocked banks as well).\footnote{{}The knapsack problem and the greedy algorithm have recently appeared elsewhere in the financial networks literature: \cite{Jackson-Pernoud-2024} study a model in which, if banks are connected in a star network, the optimal ex post bailout policy solves a knapsack problem, and they discuss a greedy algorithm as an approximate way to solve it. In our model, the planner's problem is always a knapsack problem---it is independent of the network---and the exponential network turns out to be an exact implementation of the greedy algorithm in equilibrium.\label{fn:AE-8}}

We now show that an exponential network can implement the greedy algorithm:

\begin{proposition}[Exponential networks implement the greedy algorithm]\label{p:exponential_implement_greedy}
    Let banks be ordered by the ranking $\pi$ in Definition~\ref{d:greedy}: $i = \pi(i)$ and suppose inequality (\ref{e:greedy}) is strict for all $\bm\sigma$ and $r$.
    There exists a threshold $s^*$ such that the exponential network $\alpha \mathbf{F}$ with base $s<s^*$ implements the outcome obtained by the greedy algorithm when $\alpha$ is sufficiently large.
\end{proposition}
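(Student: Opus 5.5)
The plan is to port the homogeneous-bank argument (Lemma~\ref{l:high debt}, Lemmata~\ref{lm:exponentially_dominated}--\ref{lm:total_liquidity}, and Proposition~\ref{p:constrained efficiciency}) to heterogeneous banks. I will then show that the equilibrium's set of saved banks coincides with the greedy set $G(\bm\sigma)$ of Definition~\ref{d:greedy}. Since there are finitely many states and ranks, strictness of (\ref{e:greedy}) yields a uniform slack
\[
\varepsilon := \min_{\bm\sigma,r}\Big[\sum_{r'}(1-\sigma_{r'})\theta y_{r'}-\sum_{r'\le r}\sigma_{r'}(\ell_{r'}-\theta y_{r'})\Big] > 0.
\]
I will choose $s^*$ so that the total wasted liquidity, bounded geometrically, stays below this slack and below the shortfall of the first unsaved bank. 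Concretely, I will take $s^*$ to satisfy $\frac{s^*}{1-s^*}\max_i(\ell_i-\theta y_i)<\varepsilon$.

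First, I redo Lemma~\ref{l:high debt} with $y_i,\ell_i$. For $\alpha$ large, either no bank is liquidated, or every not-shocked bank makes its maximal net payment $\theta y_i$. The proof of that lemma relies only on connectedness and the fixed-point equation~\eqref{eq:payment_equilibrium}, so I expect it to go through verbatim with indexed parameters. Second, I redo Lemma~\ref{lm:exponentially_dominated}: if B$_{i^*}$ is liquidated, then $R_{j\leftleftarrows}\le s^{j-i^*}R_{i^*\leftleftarrows}$ for $j>i^*$. This is a statement about the flow structure, since each payer's payment to $j$ is at most $s^{j-i^*}$ times its payment to $i^*$ by assortativity and $s$-dominance, so it is unchanged. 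Third, I adapt Lemma~\ref{lm:total_liquidity}. Let $i^*$ be the highest-ranked liquidated bank; it is liquidated, so $R_{i^*\leftleftarrows}<\ell_{i^*}-\theta y_{i^*}$. Summing the geometric series then gives
\[
\sum_{i\in\mathscr L}R_{i\leftleftarrows}<\frac{\ell_{i^*}-\theta y_{i^*}}{1-s},
\]
and the waste beyond $i^*$ itself is below $\frac{s}{1-s}\max_i(\ell_i-\theta y_i)<\varepsilon$.

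Next comes the identification of the saved set, in two directions. \emph{No one above the greedy cutoff is liquidated.} Suppose some greedy-saved shocked bank is liquidated, and take the highest-ranked liquidated bank $i^*$. All shocked banks ranked above $i^*$ survive, so each needs exactly its shortfall; defaulting banks pass on everything, so survivors absorb no more than needed net. By Lemma~\ref{l:high debt}, the healthy banks contribute the full $\sum(1-\sigma_i)\theta y_i$. Liquidity conservation then says the greedy requirement through rank $i^*$, minus the $i^*$ shortfall, plus the waste, must add up to the total supply. But greedy feasibility through $i^*$ with slack $\varepsilon$, combined with waste $<\varepsilon$ and $R_{i^*\leftleftarrows}<\ell_{i^*}-\theta y_{i^*}$, leaves unspent liquidity. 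That contradicts full extraction. \emph{No one below the cutoff is saved.} If a shocked bank ranked below the first greedy-rejected bank $r_0$ survived while $r_0$ is liquidated, I would use the dominance of Lemma~\ref{lm:exponentially_dominated} from $r_0$: the lower bank receives at most $s$ times what $r_0$ receives, which is less than its own shortfall for $s$ small. This requires also imposing $s^*\le\min_{i,j}(\ell_j-\theta y_j)/(\ell_i-\theta y_i)$. If $r_0$ is not liquidated, aggregate feasibility already fails by strictness of (\ref{e:greedy}), since the saved shocked banks must have their shortfalls covered in total. Greedy also skips banks later if they fit after a rejection; this must be handled consistently, so I will define the greedy set exactly as in Definition~\ref{d:greedy} (a prefix criterion), which makes all banks after the first rejection unsaved.

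The main obstacle is the accounting in the first direction. Surviving banks that default are not "exactly funded": they pass on surplus, which may circulate back to higher- or lower-ranked banks. The identity "total supply = shortfalls of survivors + receipts of liquidated banks" must be derived carefully from the clearing equations. I would sum~\eqref{e:total repayment} over all banks, using $\sum_i R_{i\rightrightarrows}=\sum_i R_{i\leftleftarrows}$. Survivors' net outflows are then $\theta y_i-\ell_i\sigma_i$ when they default, or at least that when they do not. Liquidated banks have net outflow $-R_{i\leftleftarrows}$. Non-defaulting survivors retain slack, which Lemma~\ref{l:high debt} rules out when some bank is liquidated. Once this identity is in place, the remaining steps are the geometric bounds above.
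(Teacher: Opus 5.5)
Your plan is the paper's proof in different packaging. Your dominance argument in the second direction is its Step~2: shocked liquidated banks form a suffix of the ranking, via Lemma~\ref{lm:exponentially_dominated} and a ratio bound on $s$. Your first direction plus the aggregate-feasibility remark is its Step~3, which combines Lemma~\ref{l:high debt}, market clearing, the geometric bound on what liquidated banks receive, and the slack from strictness of \eqref{e:greedy}. Two steps fail as literally written, but both can be repaired with tools you already have.

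\textbf{The slack $\varepsilon$.} As you define it, $\varepsilon$ is not positive. Under the hypothesis that \eqref{e:greedy} never binds, the bracket is negative at every $(\bm\sigma,r)$ at or after a greedy rejection, because the cumulative requirement is nondecreasing in $r$. So the minimum over all $(\bm\sigma,r)$ is negative whenever some state has a greedy rejection, which is the only case in which the result has content. You only use slack on the feasible side (``greedy feasibility through $i^*$ with slack $\varepsilon$''). So take the minimum only over those $(\bm\sigma,r)$ at which \eqref{e:greedy} holds. Finiteness and strictness make this positive, and your condition $\frac{s}{1-s}\max_i(\ell_i-\theta y_i)<\varepsilon$ then does its job.

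\textbf{Survivors below $i^*$.} In your first direction, the accounting ``supply $=$ shortfalls of shocked banks above $i^*$ $+\sum_{i\in\mathscr L}R_{i\leftleftarrows}$'' presumes there are no shocked survivors ranked below $i^*$. Such survivors would also absorb their shortfalls. They are not counted in your waste, and without a bound on them no contradiction follows. The paper sidesteps this by establishing the suffix property first. Your own dominance argument already does so for any liquidated bank $k$, not only $r_0$: for shocked $j>k$, $R_{j\leftleftarrows}\le s^{j-k}R_{k\leftleftarrows}<s^{j-k}(\ell_k-\theta y_k)\le \ell_j-\theta y_j$ once $s\le\min_{i,j}(\ell_j-\theta y_j)/(\ell_i-\theta y_i)$. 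So move that step first. Alternatively, Lemma~\ref{lm:exponentially_dominated} bounds $R_{j\leftleftarrows}$ for every $j>i^*$, survivors included. A shocked survivor's net inflow $\ell_j-\theta y_j$ is at most $R_{j\leftleftarrows}$, so the same geometric sum absorbs them.

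\textbf{Sign slip.} A non-defaulting bank's net outflow is at most, not at least, $\theta y_i-\ell_i\sigma_i$. What you need is that no survivor retains such slack. That holds because shocked banks always default under zero net positions ($R_{i\leftleftarrows}\le\alpha F_{i\leftleftarrows}=\alpha F_{i\rightrightarrows}$), and Lemma~\ref{l:high debt} covers the not-shocked ones.
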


\noindent This result embodies the idea that the exponential network, like the greedy algorithm, prioritizes what  banks should be saved. The result does not depend on the priority ranking given by $\pi$.

But the efficiency of the outcome does depend on the priority ranking. The next definition helps define a useful one:

\begin{definition}[Profitability index] The \emph{profitability index} of a bank B$_i$ is the ratio of the benefit to the cost of avoiding liquidation when it is shocked:
    \begin{equation}
        \mathrm{PI}_i := \frac{\Delta_i}{\ell_i - \theta y_i}.
    \end{equation}

\end{definition}

\noindent The profitability index here is akin to the eponymous ratio in capital budgeting, namely the ratio of the payoff to costs. The greedy algorithm corresponds to the rule prescribed for investment under capital constraints and mutually exclusive projects: Undertake those with the highest profitability indices.

When banks are ranked by their profitability indices, the greedy algorithm, and hence an exponential network, is optimal in some circumstances and nearly optimal in many others:

\begin{corollary}[Optimality of exponential network with common costs] \label{c:optimality of greedy} Suppose all shocked banks have the same liquidity shortfall: $\ell_i - \theta y_i = \ell_j - \theta y_j$ for all $i$ and $j$ and inequality~\ref{e:greedy} never binds for any $\bm\sigma$ and $r$.  Let banks be ordered by their profitability indices, $\text{PI}_{i} \geq \text{PI}_{j}$ for $i \leq j$.
An exponential network solves the planner's problem in Definition~\ref{d:pp}.

\end{corollary}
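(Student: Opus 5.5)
The plan is to combine Proposition~\ref{p:exponential_implement_greedy} with the equivalence of the planner's problem and the knapsack problem in Proposition~\ref{p:PP=KP}. Since the banks are indexed by their profitability indices, Proposition~\ref{p:exponential_implement_greedy} with $\pi$ the identity says that, for $s$ small and $\alpha$ large, the exponential network $\alpha\mathbf{F}$ implements the greedy outcome in every state. Its hypothesis that inequality~\eqref{e:greedy} is strict is supplied by the assumption that the inequality never binds. By Proposition~\ref{p:PP=KP}(ii), the liquidation profile $\mathbf{x}$ of a KP solution induces a PP solution. So it suffices to show that, state by state, the greedy choice $\mathbf{x}^G$ solves KP when the shortfalls $c:=\ell_i-\theta y_i$ are common. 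Concretely, I would check that the equilibrium of $\alpha\mathbf{F}$ has exactly the liquidations of $\mathbf{x}^G$, and hence the minimal deadweight loss.

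Next I would verify that greedy is feasible and optimal for KP with equal weights. Fix $\bm\sigma$ and let $B:=\sum_i(1-\sigma_i)\theta y_i$ be the budget. With common weight $c$, constraint~\eqref{eq:aggregate_pledge} reads $c\,|\{i:\sigma_i=1,x_i=1\}|\le B$. So any feasible $\mathbf{x}$ saves at most $K:=\lfloor B/c\rfloor$ shocked banks (all not-shocked banks can be saved at no cost). Greedy scans in index order and, for shocked banks, keeps saving while the cumulative cost $c\times(\text{number saved so far})\le B$. Hence it saves exactly the first $\min\{K,S\}$ shocked banks in the ranking. One subtlety is that Definition~\ref{d:greedy} as written tests the cumulative sum over all shocked banks up to rank $r$, not only saved ones. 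With a common cost these agree, because once the inequality fails at some $r$ it fails for all later $r$: the left side is nondecreasing.

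Then I would show optimality by an exchange argument. Take any feasible $\mathbf{x}$. Its saved shocked set $T$ has $|T|\le K$, and $|T|\le S$ trivially. The greedy set $T^G$ consists of the $\min\{K,S\}$ shocked banks with the smallest indices, i.e.\ the largest $\mathrm{PI}_i$. Because $c$ is common, ordering by $\mathrm{PI}_i$ is the same as ordering by $\Delta_i$. The objective equals the sum of $\Delta_i$ over liquidated banks, i.e.\ $\sum_{i:\sigma_i=1}\Delta_i-\sum_{i\in T}\Delta_i$ (not-shocked banks are never liquidated at an optimum, and setting $x_i=1$ for them costs nothing). Since $\Delta_i\ge0$, the sum $\sum_{i\in T}\Delta_i$ is maximized over sets of size at most $\min\{K,S\}$ by taking the top $\min\{K,S\}$ values of $\Delta_i$. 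That is exactly $T^G$, so greedy solves KP.

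Finally, I would combine the steps. The equilibrium of the exponential network liquidates exactly the complement of $T^G$ among the shocked banks. Its deadweight loss therefore equals the KP optimum, which by Proposition~\ref{p:PP=KP} equals the PP optimum. The main obstacle is not the exchange argument but matching the hypotheses. First, I need to confirm that the threshold $s^*$ and the requirement on $\alpha$ in Proposition~\ref{p:exponential_implement_greedy} can be chosen uniformly over the finitely many states $\bm\sigma$; taking the minimum over states should suffice. Second, I need to check that ties in $\mathrm{PI}$ cause no problem, which holds because tied banks have equal $\Delta_i$.
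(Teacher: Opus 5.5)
Your proposal is correct and takes the same route as the paper, whose proof simply combines Proposition~\ref{p:PP=KP} with Proposition~\ref{p:exponential_implement_greedy}, plus the fact that with a common shortfall, saving the shocked banks with the largest $\Delta_i$ solves the knapsack problem. Your exchange argument, the remark on ties, and the choice of $s$ and $\alpha$ uniformly over the finitely many states make explicit what the paper leaves implicit.
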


\noindent Given Proposition \ref{p:exponential_implement_greedy}, the result says that if the cost of saving every bank is the same, then saving those with the highest benefit delivers the optimum.

The algorithm need not be optimal, but it is nearly optimal if the liquidated banks are small:

\begin{corollary}[Approximate optimality of exponential network with small banks] \label{c:greedy small}
As in Lemma \ref{lm:exponentially_dominated}, let $i^*$ be the index of the first liquidated bank. An exponential network can deliver a deadweight loss within $\Delta_{i^*}$ of the solution to the planner's problem in Definition \ref{d:pp}.

\end{corollary}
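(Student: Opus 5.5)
The plan is to compare the exponential-network outcome, which by Proposition~\ref{p:exponential_implement_greedy} coincides with the greedy outcome (with banks ranked by profitability index), against an optimal knapsack solution from Proposition~\ref{p:PP=KP}. I will use the standard fractional-relaxation bound. Fix a state $\bm\sigma$ and let $i^*$ be the first shocked bank that the greedy rule fails to save, i.e.\ the first $r$ at which inequality~\eqref{e:greedy} fails. If no such bank exists, every bank is saved and the deadweight loss is zero, which is trivially optimal. Denote the available liquidity by $W:=\sum_i(1-\sigma_i)\theta y_i$ and each shocked bank's cost by $c_i:=\ell_i-\theta y_i$.

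The first step is to bound the optimal knapsack value using the linear relaxation, in which $x_i\in[0,1]$. Because banks are ordered by $\mathrm{PI}_i=\Delta_i/c_i$, the relaxation is solved by filling greedily in index order: all shocked banks before $i^*$ are saved fully, $i^*$ is saved fractionally with share $x_{i^*}\in[0,1)$, and the rest are not saved. The usual exchange argument shows this is optimal for the relaxation. So the saved value in the relaxation, $\sum_{j<i^*}\sigma_j\Delta_j + x_{i^*}\Delta_{i^*}$, is an upper bound on the value saved by any integer solution. Equivalently, the planner's minimal deadweight loss is at least the greedy loss minus $x_{i^*}\Delta_{i^*}$, and hence at least the greedy loss minus $\Delta_{i^*}$.

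The second step is to check what the greedy algorithm actually saves. It keeps every shocked bank with index below $i^*$, which is exactly the integral part of the relaxation's solution. The greedy rule may also save some banks ranked below $i^*$ if they fit in the leftover liquidity, but that only lowers its loss further. Therefore
\[
\text{greedy loss}\le \text{LP loss}+\Delta_{i^*}\le \text{optimal loss}+\Delta_{i^*}.
\]
Applying Proposition~\ref{p:exponential_implement_greedy} then transfers this bound to the equilibrium of the exponential network $\alpha\mathbf{F}$, provided $s$ is small enough and $\alpha$ is large enough.

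The main obstacle is matching the greedy rule of Definition~\ref{d:greedy} to the one the network implements. Definition~\ref{d:greedy} compares cumulative shocked costs against total liquidity, so once some bank fails, every later shocked bank also fails, since the cumulative sum only grows. I therefore need the relaxation's fractional bank to be exactly the first failing bank $i^*$, and that identification is immediate from the cumulative form. I also need the strictness hypothesis of Proposition~\ref{p:exponential_implement_greedy} to hold, which I will take as maintained, as "can deliver" in the statement suggests.
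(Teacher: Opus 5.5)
Your proposal is correct and follows essentially the paper's route. The paper proves the corollary by citing the Dantzig bound (Martello and Toth's Theorem~2.1), which is exactly the linear-relaxation bound you derive, and combines it with Propositions~\ref{p:PP=KP} and~\ref{p:exponential_implement_greedy}. Your version gives the slightly sharper slack $x_{i^*}\Delta_{i^*}$, and your hedge that the greedy rule might save banks ranked below $i^*$ is moot: as you note, the cumulative form of Definition~\ref{d:greedy} stops at the first failure.
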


\noindent {This result, which follows from an application of linear programming to the knapsack problem in \cite{dantzig1957discrete},  implies that the relative inefficiency of the greedy algorithm is small if banks are small, in particular if the deadweight loss of liquidating a single bank, namely B$_{i^*}$, is small.
The result also has an analogy in capital budgeting heuristics: If you use your entire budget, then doing those investments with highest profitability is optimal.
In general, the risk is that, due to indivisibility, doing those investments might leave some of your budget unused.
That risk is small if investments are small.}
Although the greedy algorithm could still be far from optimal when $\Delta_{i^*}$ is large, limiting results elsewhere in that literature suggest that the greedy algorithm is close to optimal on average (see \cite{calvin2003average}).

Overall, the results here add support to our finding that exponential networks implement a robust policy. The (approximate) efficiency of the priority rule is not specific to the case in which all banks have the same size of assets in place or liquidity shocks.

\section{Further Extensions}  \label{s:extensions}

Under our baseline assumptions (i) liquidation is always inefficient, (ii) default absent liquidation is costless, and (iii) banks' pledgeability $\theta$ is given. Here we relax these assumptions (albeit only for specific networks). Relaxing the first two, we show how to choose debt levels to balance the benefit of high debt in providing liquidity to avoid inefficient liquidation per the baseline with the cost of inducing excessive continuation/default included here. Relaxing the third, we find a further benefit of high debt: It leads banks to choose their pledgeability efficiently.

\subsection{Risky Assets and Too Few Liquidations} \label{s:risky assets}

So far, we assumed that $y$ was sufficiently large that liquidation was always inefficient. Now we assume that $y$ can have any value (but is the same for all banks). Thus it is efficient for all shocked banks to be liquidated if $y$ is low but not if it is high.
Here we denote the threshold below which liquidation is efficient by $y^*$ and we show how to choose debt levels in a complete network to implement the efficient liquidation policy.\footnote{\label{fn:efficiency}If paying $\ell$ represents a physical cost, then $(1-\theta )y^* = \ell$, i.e.\ the cost of liquidation equals the cost of continuation.
If it includes a transfer to unmodeled creditors (like \citetalias{AOT}'s ``outside obligation'' $v$), then $y^* < \ell$. Working with a general $y^*$ allows us to stay agnostic on the interpretation of $\ell$.}

We consider a complete network with debt levels $F$, focusing on the case in which $S\ell <N\theta y$ for all $y>y^*$ (so, in principle, no bank need be liquidated: $L^* = 0$ in Lemma \ref{l:constrained efficiency}).  As the network is symmetric,  each shocked/not-shocked bank makes and receives the same payments; we index all shocked banks' payments by $s$ and not-shocked banks' by $n$. From equation \eqref{eq:payment_equilibrium},  the payment to each type is a pro rata share of the repayment made by all other banks of that type and by all banks of the other type. Hence the equilibrium equations for any $y$ are:
\begin{equation}
    \begin{cases}
    R_{s\rightrightarrows}=
        \left[
        \min\left\{ \, F \, , \, \theta y - \ell +  \dfrac{1}{N-1}
        \Big( (S-1)
            R_{s\rightrightarrows}+ (N-S)R_{n\rightrightarrows}
            \Big)
        \right\}
        \right]^+ ,
        \\[2em]
    R_{n\rightrightarrows}    =
    \left[ \min \left\{
    \, F \, , \, \theta y  + \dfrac{1}{N-1} \Big( S R_{s\rightrightarrows}+ (N-S-1) R_{n\rightrightarrows}  \Big) \right\} \right]^+.
    \end{cases}
\end{equation}
Solving gives the equilibrium repayments:
        \begin{equation}
            \begin{cases}
                R_{s\rightrightarrows}&= \left[ F- \frac{N-1}{N-S} ( \ell - \theta y) \,\right]^+, \\[1em]
                R_{n\rightrightarrows}&= F     ,
            \end{cases}
        \end{equation}
    with shocked banks being liquidated whenever $F <\frac{N-1}{N-S} ( \ell - \theta y).$ Thus the  efficient outcome is implemented---banks are liquidated if $y < y^*$ but not if $y \geq y^*$---by setting
             $F = \frac{N-1}{N-S} ( \ell - \theta y^*) .$

    Intuitively, no matter the value of $y$, high debts allow banks to raise liquidity. In the baseline, that is only a good thing. Here, it can be bad.  But an appropriately chosen debt level implements the efficient outcome no matter the value of $y$.

    \subsection{Costly Default} \label{s:default costs}

    So far, we assumed that, while liquidation entailed a deadweight loss, default was just a transfer.
    Thus we found that higher debts always (weakly) increased efficiency (in the sense of Proposition \ref{p:netting}).
    An alternative notion of efficiency could minimize defaults as well as liquidations.
    Here we show that an exponential network can achieve this goal if the debt levels are not too high, albeit only in an example.

    Here we return to the three-bank exponential network in the example in Section \ref{s:exp} and replace the network $\mathbf{F}$ in equation \eqref{e:F example} with $\alpha \mathbf{F}$, so increasing $\alpha$ increases indebtedness.
    Recall that the parameters are such that with two shocked banks the constrained efficient number of liquidations is one.
    As shocked banks always default (see equation \eqref{eq:optimal_repayment_cases}), the constrained efficient number of defaults is two.
    In the example in Section \ref{s:exp}, we achieve the efficient number of liquidations, but not of defaults (all three banks default). Here we do both.

 We achieve both goals by reducing the debt.
    To see that, set $\alpha = 5/8$ and observe that the clearing vector equilibrium payment is $
    \textbf{R}_{\rightrightarrows}= (
        \frac{1}{40},
        0,
        \frac{15}{16})$:  B$_1$ is not liquidated and $\mathrm{B}_3$  does not default.
        (But reducing the debt too much undermines the flexibility to avoid liquidation. If debt is too low, say $\alpha = 1/2$, then the clearing vector is $\textbf{R}_{\rightrightarrows}=
        (0,
        0,
        \frac{3}{4})$: Both B$_1$ and B$_2$ are liquidated.)

    Overall, debt should be high enough to allow banks to raise liquidity, but, as in Section~\ref{s:risky assets}, it should not be too high, this time because that would lead to unnecessary costly defaults.

\subsection{Endogenous Pledgeability} \label{s:endogenous theta}

So far, we assumed that pledgeability $\theta$ was exogenous. Now we let each bank choose its own $\theta_i$ at a private cost $c(\theta_i)$, after the network is in place but before shocks are realized.\footnote{Per
footnote~\ref{fn:cash}, the choice can be read as how much cash to hold, with $c$ capturing the forgone return on the assets it replaces.} The concern is that coinsurance crowds out self-insurance: A bank might choose to be illiquid and free ride on its counterparty. Here we show that it need not, albeit only in a two-bank example.

We consider two banks with reciprocal debts $F_{1\to2}=F_{2\to1}=F\geq\ell$, each shocked independently with probability $p\in(0,1)$. We denote the cost of increasing pledgeability by $c$, with $c(0) = 0$, $c', c'' > 0$. Write $\theta^a:=\ell/y$ for the pledgeability a bank needs to meet a shock unaided; it is what a bank would choose on its own under the assumption that $c(\theta^a)\leq p(y-\ell)$, which we maintain
throughout.

The network changes what each bank needs. When only one bank is shocked, it can pledge its counterparty's assets as well as its own, so it survives whenever $\theta_1+\theta_2\geq\theta^a$: The pair need hold only what one bank would have held alone. When both are shocked, nothing flows between them, and each survives only on its own $\theta_i$. The candidates are therefore $(0,0)$, shared provision $(\theta^a/2,\theta^a/2)$, concentrated provision $(0,\theta^a)$, and full self insurance $(\theta^a,\theta^a)$; which is optimal turns on how convex $c$ is. But whichever it is, banks choose it:

\begin{proposition}[Endogenous pledgeability]\label{p:endogenous theta}
If $c(\theta^a)\leq p(y-\ell)$, then every weakly socially optimal pledgeability profile (i.e.\ every maximizer of expected surplus net of pledgeability costs) is a Nash equilibrium.
\end{proposition}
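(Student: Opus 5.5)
The plan is a direct verification: compute each bank's expected payoff $U_i(\theta_i,\theta_j)$ in closed form, characterize the social optimum, and check that no unilateral deviation from an optimal profile pays. With two banks the clearing system \eqref{eq:payment_equilibrium} can be solved by hand in each of the four states. When neither bank is shocked, both repay $F$ and payoffs are $y$. When both are shocked, each bank's pledgeable resources $\theta_i y-\ell$ plus anything received must be positive for it to pay anything, so nothing flows between them (up to the knife-edge of Proposition~\ref{p:existence}), and B$_i$ survives iff $\theta_i\ge\theta^a$. When only B$_i$ is shocked, I will use $F\ge\ell$ to show that B$_j$ repays $F$ in full while B$_i$ defaults and passes on $\theta_i y-\ell+F$. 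Hence B$_i$ survives iff $\theta_i+\theta_j\ge\theta^a$. In that case B$_i$ keeps its private $(1-\theta_i)y$, and B$_j$ gets $y+\theta_i y-\ell$, so the unshocked bank bears the shortfall. If instead $\theta_i+\theta_j<\theta^a$, B$_i$ is liquidated, and I will compute B$_j$'s payoff from equation \eqref{eq:optimal_repayment_cases}; with $F\ge\ell$ it should reduce to $(1-\theta_j)y$.

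Next I will write the expected surplus. Pledgeability does not change output, only whether liquidation occurs, so surplus is constant, minus $(1-\theta_i)y$ per liquidated shocked bank (equivalently, minus $y-\ell$ of foregone continuation value relative to survival), minus $c(\theta_1)+c(\theta_2)$. Because $c$ is strictly increasing, any optimum has no slack. Thus the only candidate profiles are those in which each $\theta_i\in\{0\}\cup[\,\cdot\,]$ is pinned at a threshold: $\theta_1+\theta_2\in\{0,\theta^a\}$ with both below $\theta^a$, one bank at $\theta^a$ and the other at $0$, or both at $\theta^a$. This matches the four candidates listed before the statement, with the interior split determined by convexity of $c$ (equal split when both are below $\theta^a$).

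The equilibrium check then proceeds by cases on the optimal profile. The key inequality, which the intuition in the text points to, is that any downward deviation by B$_i$ that breaks a survival threshold costs B$_i$ its own survival first. If $\theta_1+\theta_2=\theta^a$, cutting $\theta_i$ makes B$_i$ liquidated in the state where it alone is shocked, costing it $p(1-p)$ times the difference between its survival and liquidation payoffs. Likewise, if $\theta_i=\theta^a$ is optimal, cutting it loses survival in the both-shocked state. I will bound these losses below by the cost saving using $c(\theta^a)\le p(y-\ell)$ together with optimality of the profile, since the social first-order tradeoff gives a comparable inequality. Upward deviations are unprofitable because raising $\theta_i$ without crossing a threshold only raises cost, and raising $\theta_i$ to $\theta^a$ to self-insure is dominated by social optimality plus the fact that B$_i$ internalizes exactly its own survival gain. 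The $(0,0)$ case follows because that profile being optimal means no individual threshold gain covers the cost.

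The main obstacle I expect is the externality terms. When $\theta_i$ rises, B$_i$ pays more to B$_j$ in the state where B$_i$ alone is shocked (B$_j$ gets $y+\theta_i y-\ell$ while B$_i$ keeps only $(1-\theta_i)y$). So private and social marginal values of $\theta_i$ differ, and the check that optimal profiles survive deviations must carefully net these transfers, which depend on who is shocked. I will first verify that, at threshold profiles, the private loss from a deviation weakly exceeds $c(\theta_i)-c(\theta_i')$. If the transfer term spoils this in the shared-provision case, I will fall back on using convexity of $c$ to compare with the concentrated profile.
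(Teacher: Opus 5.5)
\textbf{The step that fails: netting the transfers.} Your payoff computations are right, including the key fact that a shocked bank surviving on its counterparty's liquidity keeps $(1-\theta_i)y$ while the unshocked bank is left with $y+\theta_i y-\ell$. The problem is the decisive step: bounding each deviator's private loss below by its cost saving ``using $c(\theta^a)\le p(y-\ell)$ together with optimality of the profile.'' You never carry it out, and with your own payoffs it fails at both profiles in which some bank holds $\theta^a$.

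Take $(\theta^a,\theta^a)$ and let B$_1$ cut to $0$.
\begin{itemize}
\item It saves $c(\theta^a)$.
\item When both banks are shocked it is now liquidated, so it loses $y-\ell$.
\item When it alone is shocked, it used to meet the shock itself and get $y-\ell$. Now B$_2$'s $\theta^a y=\ell$ covers the shock, so B$_1$ defaults and keeps all of $y$.
\item When only B$_2$ is shocked, B$_1$ gets $y$ either way.
\end{itemize}
The net private gain is therefore $c(\theta^a)+p(1-p)\ell-p^2(y-\ell)$. Social optimality of $(\theta^a,\theta^a)$ only gives $c(\theta^a)\le p^2(y-\ell)$, because the $p(1-p)\ell$ is a transfer from B$_2$ and never enters the planner's comparison.

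Concretely, take $y=10$, $\ell=2$, $p=1/10$, $c(\theta^a)=1/100$. Then $(\theta^a,\theta^a)$ is the unique social optimum: its incremental surplus is $1.58$, against $1.51$ for $(0,\theta^a)$ and less than $1.44$ for the equal split. Yet the deviation gains $11/100$.

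The same thing happens at $(0,\theta^a)$. If the provider cuts to $0$, it stops bailing out its partner, so its net gain is $c(\theta^a)+p(1-p)\ell-p(y-\ell)$, which the maintained condition does not sign. Only at $(\theta^a/2,\theta^a/2)$ do the transfers cancel: cutting to $0$ costs $y-\ell/2$ when the deviator alone is shocked and gains $\ell/2$ when its partner is. The private loss is then exactly half the social comparison with $(0,0)$. Your convexity fallback is aimed at that shared case, which is not where the problem lies.

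\textbf{How the paper's proof differs.} The paper never faces this obstacle because it values a bank's survival privately at $y-\ell$ and its liquidation at $0$. It reports the deviator's losses as $p^2(y-\ell)$, $p(y-\ell)$ and $p(1-p)(y-\ell)$. In effect, it treats each bank as bearing its own shock and ignores how a deviation shifts the shock between the banks in single-shock states. Under that accounting your case analysis reduces to the paper's. Under the model's actual payoffs, which you rightly insist on, the argument can only be closed with extra conditions, for example
\begin{itemize}
\item $c(\theta^a)+p(1-p)\ell\le p^2(y-\ell)$ for full self-insurance $(\theta^a,\theta^a)$, and
\item $c(\theta^a)+p(1-p)\ell\le p(y-\ell)$ for the concentrated profile $(0,\theta^a)$.
\end{itemize}

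\textbf{Smaller points.}
\begin{itemize}
\item The paper's per-liquidation social loss is $y-\ell$. It is not ``equivalently'' $(1-\theta_i)y$; the two coincide only at $\theta_i=\theta^a$.
\item When both banks are shocked, a bank with $\theta_i<\theta^a$ still survives if $\theta_1+\theta_2\ge 2\theta^a$. So the class containing $(0,2\theta^a)$ must be ruled out by convexity, as the paper does.
\item A bank that alone is shocked keeps $(1-\theta_i)y$ only if $\theta_i\le\theta^a$. Above that it repays in full and gets $y-\ell$.
\end{itemize}
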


\noindent The reason is that a bank's pledgeability reaches its counterparty only in the states in which the bank is not shocked itself, so a bank that cuts back loses its own protection first. At the shared profile, cutting to zero leaves the pair short of $\theta^a$, so the deviator is liquidated when it alone is shocked; the inequality making shared provision socially better than $(0,0)$ is exactly the one making that deviation privately unprofitable. Because each bank both provides and receives insurance, the externality appears on both sides of the planner's comparison and cancels.

As in Section~\ref{s:LT}, this relies on debts being high: It is the claim on a counterparty that makes its liquidity available. (The one fragile profile is the concentrated one, in which a single bank bears the cost while both enjoy the protection. It survives only because our maintained condition makes that bank willing to hold $\theta^a$ in autarky anyway---concentration is privately
sustainable exactly when it is not really a sacrifice.)

\section{Conclusion}  \label{s:conclusion}

  We revisit systemic risk in financial networks, allowing existing interbank debts to be diluted with new debt, as they often can be in practice. In the model, a network of dilutable debts provides insurance among banks, and the right one implements the constrained-efficient transfers of liquidity. Even absent state-contingent contracts, dilution makes repayments state contingent: A shocked bank borrows against its interbank claims, and its counterparties bear the cost.

Dilutability can thus reverse the usual reading of what makes a system fragile. High gross debts and close ties spread distress when debts cannot be diluted, but liquidity when they can. Dilution can then substitute for policies that impose losses on creditors administratively (e.g., \cite*{Bernard-Capponi-Stiglitz-2022}): It implements a ``backdoor bail-in,'' in which debtors avoid insolvency by shifting the cost of distress onto diluted creditors.

That does not mean there is no case for regulation. True, given the right exposures, the bail-in requires no intervention. But those exposures need not emerge from bilateral contracting: Banks do not internalize the insurance they provide to the rest of the system, so they have an incentive to deviate from the socially efficient exposures. Policy aimed at the exposures banks build up before a crisis may thus substitute for bail-outs during one.

     Nor should regulation flatten exposures or restrict dilution: It is asymmetry and dilutability that do the insuring. Policies that compress gross positions, cap the largest ones, or curtail the superpriority of new claims may trade real insurance for the appearance of safety.

\newpage
\appendix

\section{Proofs}

\subsection{Proof of Lemma \ref{l:isomorphism} (\nameref{l:isomorphism})}

  In \citetalias{AOT}, the clearing vector is completely described by (Lemma B2, equation (B3)):
    \begin{equation}
        \textcolor{blue!70}{\bm{x}}=
        \Big[\min \big\{\,\textcolor{red!70}{y \bm{1}} \, , \, \textbf{Q} \textcolor{blue!70}{\bm{x}} +\bm{e}+\zeta A\bm{1}\, \big\}\Big]^+.
    \end{equation}
In the case in which long-term assets are fully destroyed in default, the $\zeta=0$ case, the equation can be re-written as
\begin{equation}
    \textcolor{blue!70}{\bm{x}}=
    \left[\min \left\{\, \textcolor{red!70}{y \bm{1}} \, , \, \textbf{Q} \textcolor{blue!70}{\bm{x}} +\textcolor{cyan!70}{\frac{a-v}{a-v+A}} \textcolor{violet!70}{(a-v+A)\bm{1}} -\textcolor{teal!70}{\epsilon\bm{\sigma}}
    \, \right\}
    \right]^+,
\end{equation}
having used that, by definition, $\bm{e}=\bm{a}-\bm{v} - \epsilon {\bm \sigma}$ (with no cash, AOT's $c_i \equiv 0$) and denoting the profile of shock indicator, a notation \citetalias{AOT} do not use, by $\bm \sigma$, as in our baseline.
This is equivalent to the equilibrium in our benchmark model, which per Definition \ref{d:equilibrium} is given by
\begin{equation}
    \textcolor{blue!70}{\textbf{R}_{\rightrightarrows}}=\Big[
    \min\big\{\, \textcolor{red!70}{\textbf{F}_{\rightrightarrows}} \, , \, \FT \textcolor{blue!70}{\textbf{R}_{\rightrightarrows}} +\textcolor{cyan!70}{\theta} \textcolor{violet!70}{y \bm{1}} -\textcolor{teal!70}{\ell \bm{\sigma}} \big\}\Big]^+,
\end{equation}
where the color coding represents the mapping between the notation in the two papers, as described in Table \ref{t:AOT correspondence}.

In both models, the banks default whenever they cannot repay the face value of their debts; hence the sets of defaulting banks coincide. Likewise, in both, all defaulting banks are liquidated; hence efficiency coincides too (see Definition \ref{d:efficiency}).
\qed

\begin{table}

    \centering

        \caption{Notations in  \citetalias{AOT} and here.\label{t:AOT correspondence}}

    \begin{tabular}{l|rr}
        &AOT & This paper  \\
        \hline
        Face value of debt    & $y_{ji}$     & $F_{i \to j}$
        \\
        Payment received &$[\textbf{Q}\bm{x}]_i$ & $[\FT
        \textbf{R}_{\rightrightarrows}]_i$
        \\
        Negative shock &$ a-v-e_i$ & $\ell \sigma_i$\\
        Total assets &$a-v+A$ &  $y$
        \\
        \quad {Pledgeable assets} &$a-v$ &  $\theta y$
        \\
        \quad {Non-pledgeable assets} &$A$ & $(1-\theta)y$\\
        \end{tabular}

\end{table}

\subsection{Proof of Lemma \ref{l:ST netting} (\nameref{l:ST netting})}

This proof generalizes \citetalias{AOT}'s proof of their Proposition 3. The idea is to show that in the equilibrium of $\alpha \mathbf{F}$ for $\alpha > 1$, each bank's shortfall $\mathbf{F}_\rightrightarrows - \mathbf{R}_\rightrightarrows$ is greater than it is in the equilibrium of $\mathbf{F}$ and therefore so is the number of defaults.

\begin{lemma}\label{lm:shortfall} Define the mapping
    \begin{equation}
    \Psi^\alpha : \mathbf{D} \mapsto  \Big[
        \min\{ \alpha \mathbf{F}_{\rightrightarrows} \, , \,  \FT \mathbf{D}-\theta y \bm{1}+\ell\bm{\sigma}\} \, \Big]^+.
    \end{equation}
If $\mathbf{R}^\alpha_{\rightrightarrows}$ is a clearing vector of $
\alpha \mathbf{F}$, then the ``shortfall'' $\mathbf{D}^\alpha := \alpha \mathbf{F}_\rightrightarrows - \mathbf{R}^\alpha_{\rightrightarrows}$ is a fixed point of $\Psi^\alpha.$
\end{lemma}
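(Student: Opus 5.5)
The claim is a direct change of variables, so I would substitute $\mathbf{R}^\alpha_{\rightrightarrows} = \alpha\mathbf{F}_{\rightrightarrows} - \mathbf{D}^\alpha$ into the fixed-point equation \eqref{eq:payment_equilibrium} for the network $\alpha\mathbf{F}$ and simplify. Two preliminary observations are needed. First, the normalized matrix $\hat{\mathbf{F}}$ is invariant to scaling, since $\alpha F_{i\to j}/(\alpha F_{i\rightrightarrows}) = \hat F_{i\to j}$. Hence the clearing condition for $\alpha\mathbf{F}$ is
\[
\mathbf{R}^\alpha_{\rightrightarrows}=\Big[\min\big\{\alpha\mathbf{F}_{\rightrightarrows},\,\theta y\mathbf{1}-\ell\bm\sigma+\FT\mathbf{R}^\alpha_{\rightrightarrows}\big\}\Big]^+ .
\]
Second, the zero-net assumption $F_{i\leftleftarrows}=F_{i\rightrightarrows}$ gives $\FT\mathbf{F}_{\rightrightarrows}=\mathbf{F}_{\rightrightarrows}$. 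Indeed, $[\FT\mathbf{F}_{\rightrightarrows}]_i=\sum_j \hat F_{j\to i}F_{j\rightrightarrows}=\sum_j F_{j\to i}=F_{i\leftleftarrows}$. This identity also holds for banks with $F_{j\rightrightarrows}=0$, because those banks contribute nothing on either side.

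Next, substitute and simplify. The second argument of the min becomes
\[
\theta y\mathbf{1}-\ell\bm\sigma+\alpha\mathbf{F}_{\rightrightarrows}-\FT\mathbf{D}^\alpha .
\]
Subtracting both sides from $\alpha\mathbf{F}_{\rightrightarrows}$ componentwise uses the scalar identity $a-[\min\{a,b\}]^+=[\min\{a,\,a-b\}]^+$, valid for $a\ge0$. To check it, note that if $b\ge a$ both sides equal $0$; if $0\le b<a$ both sides equal $a-b$; if $b<0$ both sides equal $a$. Apply this with $a=\alpha F_{i\rightrightarrows}\ge0$ and $b$ equal to the second argument above. Then $a-b=[\FT\mathbf{D}^\alpha]_i-\theta y+\ell\sigma_i$, which gives
\[
\mathbf{D}^\alpha=\Big[\min\big\{\alpha\mathbf{F}_{\rightrightarrows},\,\FT\mathbf{D}^\alpha-\theta y\mathbf{1}+\ell\bm\sigma\big\}\Big]^+=\Psi^\alpha(\mathbf{D}^\alpha).
\]
This shows that $\mathbf{D}^\alpha$ is a fixed point of $\Psi^\alpha$.

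The only step needing care is the cancellation of $\alpha\mathbf{F}_{\rightrightarrows}$ inside the min. This is exactly where the zero-net assumption and the scale invariance of $\hat{\mathbf{F}}$ enter, so I would state both explicitly. I would also confirm that the three cases of the scalar identity match the three branches of \eqref{eq:optimal_repayment_cases}: full repayment, partial default, and zero repayment.
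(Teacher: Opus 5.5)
Your proposal is correct and takes essentially the same route as the paper. Both substitute $\mathbf{R}^\alpha_{\rightrightarrows}=\alpha\mathbf{F}_{\rightrightarrows}-\mathbf{D}^\alpha$, rearrange the min/max, and cancel $\alpha\mathbf{F}_{\rightrightarrows}$ via the zero-net identity $\FT\mathbf{F}_{\rightrightarrows}=\mathbf{F}_{\rightrightarrows}$; you package the min/max step as one scalar identity checked by cases, where the paper splits it into ``combining'' and ``interchange min/max'' steps, and your remarks on the scale invariance of $\hat{\mathbf{F}}$ and on banks with $F_{j\rightrightarrows}=0$ supply details the paper leaves implicit.
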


\begin{proof}
We compute, using $\mathbf{Q} \equiv \FT$:
\begin{align}
    \alpha \textbf{F}_{\rightrightarrows}-\textbf{R}^\alpha_{\rightrightarrows}
    &=
    \alpha \textbf{F}_{\rightrightarrows}-\max \big \{\bm{0},\min\{\alpha \textbf{F}_{\rightrightarrows},\textbf{Q}\textbf{R}^\alpha_{\rightrightarrows}+\theta y \bm{1}-\ell\bm{\sigma}\} \big\}\tag{Payment Eqm.}\\
    &=\min\big\{\alpha \textbf{F}_{\rightrightarrows}\, , \, \max\{\mathbf{0}, \alpha \textbf{F}_{\rightrightarrows}-\textbf{Q}\textbf{R}^\alpha_{\rightrightarrows}-\theta y \bm{1}+\ell\bm{\sigma}\}\} \tag{Combining}\\
    &=\Big[\min\big\{\alpha \textbf{F}_{\rightrightarrows} \, , \, \alpha \textbf{F}_{\rightrightarrows}-\textbf{Q}\textbf{R}^\alpha_{\rightrightarrows}-\theta y \bm{1}+\ell\bm{\sigma} \big\}\Big]^+\tag{Interchange min/max}\\
    &=\Big[ \, \min \big\{\alpha \textbf{F}_{\rightrightarrows} \, , \,  \textbf{Q}(\alpha \textbf{F}_{\rightrightarrows}-\textbf{R}^\alpha_{\rightrightarrows})-\theta y \bm{1}+\ell\bm{\sigma} \big\}\Big]^+\tag{zero-net debt}\end{align}
Substituting from the definitions of $\mathbf{D}^{\alpha}$ and $\Psi^{\alpha}$ gives the result.
\end{proof}

Now we show that, for $\alpha>1$, $\Psi^\alpha$ has a fixed point that is no smaller than $\mathbf{D}^1$.

\begin{lemma}Let $\mathbf{D}^1$ be a fixed point of $\Psi^1$ and define
    \begin{equation}
    \mathscr{H}^\alpha := \prod_{i=1}^N \big[ D_{i}^1,\alpha F_{i\rightrightarrows}]
    \end{equation}
For $\alpha > 1$, $\Psi^\alpha$ maps  $\mathscr{H}^\alpha$ into itself, i.e.\ $\Psi^\alpha \big( \mathscr{H}^\alpha \big)
        \subset
        \mathscr{H}^\alpha.$

\end{lemma}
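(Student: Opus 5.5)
The plan is to check the two coordinatewise bounds separately: for every $\mathbf{D}\in\mathscr{H}^\alpha$ we need $\Psi^\alpha(\mathbf{D})\leq \alpha\mathbf{F}_{\rightrightarrows}$ and $\Psi^\alpha(\mathbf{D})\geq \mathbf{D}^1$. Two preliminary observations will be used throughout.

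\begin{itemize}
\item The normalized matrix $\hat{\mathbf{F}}$ is invariant to scaling $\mathbf{F}\mapsto\alpha\mathbf{F}$ for $\alpha>0$. Hence $\Psi^\alpha$ and $\Psi^1$ use the same matrix $\mathbf{Q}=\FT$, and they differ only in the cap $\alpha\mathbf{F}_{\rightrightarrows}$ versus $\mathbf{F}_{\rightrightarrows}$.
\item $\mathscr{H}^\alpha$ is well defined and non-empty, i.e.\ $D^1_i\leq \alpha F_{i\rightrightarrows}$ for each $i$. Since $\mathbf{D}^1=\Psi^1(\mathbf{D}^1)$ is the positive part of a minimum with $\mathbf{F}_{\rightrightarrows}\geq \mathbf{0}$, we get $\mathbf{0}\leq\mathbf{D}^1\leq\mathbf{F}_{\rightrightarrows}\leq\alpha\mathbf{F}_{\rightrightarrows}$, using $\alpha>1$ and $\mathbf{F}_{\rightrightarrows}\geq\mathbf{0}$.
\end{itemize}

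The upper bound is immediate. For any $\mathbf{D}$, each coordinate of $\Psi^\alpha(\mathbf{D})$ equals $\max\{0,\min\{\alpha F_{i\rightrightarrows},\cdot\}\}$, and this is at most $\alpha F_{i\rightrightarrows}$ because $\alpha F_{i\rightrightarrows}\geq 0$.

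The lower bound is the substantive step, and I would obtain it from monotonicity in two parts.

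First, $\Psi^\alpha$ is monotone increasing in $\mathbf{D}$. The matrix $\mathbf{Q}$ has nonnegative entries, so $\mathbf{D}\mapsto \mathbf{Q}\mathbf{D}-\theta y\mathbf{1}+\ell\bm{\sigma}$ is order preserving, and $\min\{\alpha\mathbf{F}_{\rightrightarrows},\cdot\}$ and $[\cdot]^+$ are nondecreasing. Hence for $\mathbf{D}\geq\mathbf{D}^1$ we have $\Psi^\alpha(\mathbf{D})\geq\Psi^\alpha(\mathbf{D}^1)$.

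Second, the map is monotone in the cap. Since $\alpha\mathbf{F}_{\rightrightarrows}\geq\mathbf{F}_{\rightrightarrows}$, for the common argument $\mathbf{x}=\mathbf{Q}\mathbf{D}^1-\theta y\mathbf{1}+\ell\bm{\sigma}$ we have $\min\{\alpha\mathbf{F}_{\rightrightarrows},\mathbf{x}\}\geq\min\{\mathbf{F}_{\rightrightarrows},\mathbf{x}\}$. Taking positive parts gives $\Psi^\alpha(\mathbf{D}^1)\geq\Psi^1(\mathbf{D}^1)=\mathbf{D}^1$, where the last equality is the fixed-point property.

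Chaining the two parts gives $\Psi^\alpha(\mathbf{D})\geq\mathbf{D}^1$ on $\mathscr{H}^\alpha$. Together with the upper bound, this yields $\Psi^\alpha(\mathscr{H}^\alpha)\subset\mathscr{H}^\alpha$.

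There is no real obstacle here. The only point needing care is that the same $\mathbf{Q}$ appears in both maps, which follows from the scale invariance of $\hat{\mathbf{F}}$. I would also note for the next step that $\mathscr{H}^\alpha$ is a complete lattice. Tarski's theorem then gives a fixed point $\mathbf{D}^\alpha\geq\mathbf{D}^1$, which is the comparison the surrounding argument needs.
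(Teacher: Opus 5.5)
Your proof is correct and matches the paper's. The upper bound comes from the cap in the minimum, and the lower bound is the same chain $\Psi^\alpha(\mathbf{D})\geq\Psi^\alpha(\mathbf{D}^1)\geq\Psi^1(\mathbf{D}^1)=\mathbf{D}^1$, using monotonicity in $\mathbf{D}$, then monotonicity in the cap, then the fixed-point property. Your extra checks (scale invariance of $\hat{\mathbf{F}}$ and non-emptiness of $\mathscr{H}^\alpha$) are left implicit in the paper, which afterwards uses Brouwer rather than Tarski to get the fixed point in $\mathscr{H}^\alpha$; either works.
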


\begin{proof}

The upper bound, i.e.\ that $\Psi^\alpha(\mathbf{D}^\alpha) \leq \alpha \mathbf{F}_{\rightrightarrows}$, follows immediately from the definition of $\Psi^\alpha$ as a minimum.

So we need only to  show the lower bound, i.e.\ that $\Psi^\alpha(\mathbf{D}^\alpha) \geq \mathbf{D}^1$ for all $\mathbf{D}^\alpha \in \mathscr{H}^\alpha$. We have that for $\mathbf{D}^\alpha \in \mathscr{H}^\alpha$, $\mathbf{D}^1 \leq \mathbf{D}^\alpha$ by definition of the domain. Thus we can compute:
    \begin{align}
 \Psi^\alpha (\mathbf{D}^\alpha)
        &=\Big[ \min\big\{\alpha \textbf{F}_{\rightrightarrows} \, , \, \textbf{Q}\textbf{D}^\alpha-\theta y \bm{1}+\ell\bm{\sigma}\big\} \Big]^+\\
                        &\geq
                        \Big[ \min\big\{\alpha \textbf{F}_{\rightrightarrows} \, , \, \textbf{Q}\textbf{D}^1-\theta y \bm{1}+\ell\bm{\sigma}\big\} \Big]^+\\
                        &\geq \Big[ \min\{ \textbf{F}_{\rightrightarrows},\textbf{Q}\textbf{D}^1-\theta y \bm{1}+\ell\bm{\sigma}\} \Big]^+\\
                        &\equiv \Psi(\mathbf{D}^1) \equiv\textbf{D}^1,
    \end{align}
    since $\mathbf{D}^1$ is a fixed point of $\Psi^1$ by definition.
\end{proof}

Combining the two lemmata above and applying Brouwer's theorem yields a fixed point $\mathbf{D}^\alpha\in\mathscr{H}^\alpha$. Reversing the algebra in the proof of Lemma~\ref{lm:shortfall} shows that this fixed point corresponds to a clearing vector of $\alpha\mathbf{F}$. Generically, it is therefore the equilibrium shortfall vector of $\alpha\mathbf{F}$. Hence $\mathbf{D}^\alpha\geq\mathbf{D}^1$, so every bank liquidated under $\mathbf{F}$ is also liquidated under $\alpha\mathbf{F}$. \qed

\subsection{Proof of Lemma \ref{l:ST delta} (\nameref{l:ST delta})}

This proof mirrors \citetalias{AOT}'s proof of their Proposition 6; we translate it to our notation, adapt it to our notion of efficiency, and add some details.

For statement (i), w.l.o.g., we write $F := F_{i \rightrightarrows}$ for the ring.

\textbf{Proof of statement (i).} Take $\ell>N\theta y$, which is sufficient for $\ell$ to be sufficiently large. We show, by verification, that in equilibrium all banks are liquidated and, therefore, no network is less efficient than the ring. Assuming all banks are liquidated and letting the single shocked bank be B$_1$, w.l.o.g., the equilibrium is the solution to
        \begin{equation}
    \left\{
        \begin{array}{rll}
        R_{1 \to 2} &=  \big[ \theta y - \ell  + R_{N \to 1} \big]^+
        \\
        R_{i \to i + 1} &= \theta y + R_{i- 1 \to i}
        \quad\quad\quad  i \in \{2, ... , N \} ,
        \end{array}
        \right.
    \end{equation}
    with the convention that $N+1 := 1$ for indices.
Solving, we have the equilibrium:
$R_{1\rightrightarrows}=0$ and, for $i > 1$, $R_{i\to i+1}=(i-1)\theta y\leq(N-1)\theta y,$ which is less than $F$ by hypothesis.
To see that there is no other equilibrium, observe that every equilibrium satisfies $R_{i\to i+1}\leq\theta y+R_{i-1\to i}$ for $i\geq2$, and hence $R_{N\to1}\leq R_{1\to2}+(N-1)\theta y$. The shocked bank's repayment equation therefore gives
\begin{equation}
R_{1\to2}\leq\big[R_{1\to2}+N\theta y-\ell\big]^+.
\end{equation}
Since $\ell>N\theta y$, this requires $R_{1\to2}=0$. The remaining repayment equations then uniquely give $R_{i\to i+1}=(i-1)\theta y$ for $i>1$. I.e.,\ all banks are liquidated.

\textbf{Proof of statement (ii).} Here we show that if  $\mathbf{F}$ is $\delta$-connected and $\delta$ is small, then not all banks are liquidated and, therefore, the network is more efficient than the ring (in which they are).
Let $\mathscr B$ be one side of a cut witnessing that $\mathbf F$ is $\delta$-connected, chosen so that the single shocked bank is not in $\mathscr B$; if it lies in the original set, take the complement. To do so, we show two lemmata.

\begin{lemma} Let $\mathscr B$ be so chosen. If $N\delta\max_k F_{k\rightrightarrows}<\theta y,$ then
        \begin{equation}\label{eq:total_payment_B_to_Bc}
        \sum_{i\in\mathscr{B}}\sum_{j\in\mathscr{B}^c}R_{i\to j}<\theta y|\mathscr{B}|.
    \end{equation}
\end{lemma}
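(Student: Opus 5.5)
The plan is to bound each repayment $R_{i\to j}$ with $i\in\mathscr B$, $j\in\mathscr B^c$ directly by its face value, using only the pro rata rule and feasibility. By equation~\eqref{eq:pro_rata}, $R_{i\to j}=\hat F_{i\to j}R_{i\rightrightarrows}$. Sequential rationality, equation~\eqref{e:total repayment}, gives $R_{i\rightrightarrows}\leq F_{i\rightrightarrows}\leq\max_k F_{k\rightrightarrows}$. Since $\mathscr B$ witnesses $\delta$-connectedness, $\hat F_{i\to j}\leq\delta$ for every such pair. Hence
\begin{equation*}
R_{i\to j}\leq \delta \max_k F_{k\rightrightarrows}\quad\text{for all } i\in\mathscr B,\ j\in\mathscr B^c .
\end{equation*}

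Summing over the $|\mathscr B||\mathscr B^c|$ pairs gives
\begin{equation*}
\sum_{i\in\mathscr{B}}\sum_{j\in\mathscr{B}^c}R_{i\to j}\leq |\mathscr B|\,|\mathscr B^c|\,\delta\max_k F_{k\rightrightarrows}.
\end{equation*}
Since $|\mathscr B^c|\leq N$, the right-hand side is at most $|\mathscr B|\,N\delta\max_k F_{k\rightrightarrows}$. The hypothesis $N\delta\max_kF_{k\rightrightarrows}<\theta y$ then gives a strict bound by $\theta y|\mathscr B|$, which is inequality~\eqref{eq:total_payment_B_to_Bc}. One detail needs checking: $|\mathscr B|\geq1$ because $\mathscr B$ is non-empty, so the strict inequality survives multiplication by $|\mathscr B|$. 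If every $F_{i\to j}$ across the cut is zero, the left side is $0<\theta y|\mathscr B|$ trivially.

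Nothing here depends on the equilibrium beyond the cap $R_{i\rightrightarrows}\leq F_{i\rightrightarrows}$. In particular, it does not matter that $\mathscr B$ was chosen to exclude the shocked bank; that choice is used only later, when the lemma is combined with the inflow bound to show some bank in $\mathscr B$ survives. The only care required is that the strictness and the factor $N$ (rather than $|\mathscr B^c|$) are handled correctly, so I expect no real obstacle.
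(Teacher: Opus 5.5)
Your proof is correct and matches the paper's argument: bound each cross-cut repayment by $\delta F_{i\rightrightarrows}\leq\delta\max_k F_{k\rightrightarrows}$, sum over the $|\mathscr B||\mathscr B^c|$ pairs, and use $|\mathscr B^c|\leq N$ together with the hypothesis (the paper reaches the per-pair bound via $R_{i\to j}\leq F_{i\to j}$, you via the pro rata rule, which gives the same bound). Your handling of the inequalities is slightly cleaner than the paper's: you keep the weak step $|\mathscr B^c|\leq N$ and take strictness only from the hypothesis, whereas the paper's intermediate strict step $\delta|\mathscr B^c|\max_kF_{k\rightrightarrows}<N\delta\max_kF_{k\rightrightarrows}$ tacitly needs $\delta\max_kF_{k\rightrightarrows}>0$.
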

The result says that the total payment from banks in $\mathscr{B}$ to those in $\mathscr{B}^c$ is small when $\delta$ small.

\begin{proof} By the definition of delta-connectedness,
    $
            F_{i\to j}=\hat F_{i\to j}F_{i\rightrightarrows}\leq \delta F_{i\rightrightarrows},$ for all  $ (i,j)\in \mathscr{B}\times \mathscr{B}^c
$. Thus
$
            R_{i\to j}\leq F_{i\to j}\leq \delta F_{i\rightrightarrows},$ for all  $ i \in \mathscr{B}$ and  $ j\in \mathscr{B}^c.
$
Now summing $i$ over $\mathscr{B}$, summing $j$ over $\mathscr{B}^c$,     and using $\delta|\mathscr{B}^c|\max_k F_{k\rightrightarrows}<N\delta\max_k F_{k\rightrightarrows}<\theta y$, gives
        \begin{equation}
            \sum_{i\in\mathscr{B}}\sum_{j\in\mathscr{B}^c}R_{i\to j}\leq \delta|\mathscr{B}^c|\sum_{i\in\mathscr B}F_{i\rightrightarrows}\leq \delta\max_k F_{k\rightrightarrows}|\mathscr{B}||\mathscr{B}^c|<\theta y|\mathscr{B}|.
        \end{equation}

\end{proof}

\begin{lemma} Maintain the assumptions that the single shocked bank is not in $\mathscr{B}$ and that $\mathbf{F}$ is $\delta$-connected with $N\delta\max_k F_{k\rightrightarrows}<\theta y$. If inequality \eqref{eq:total_payment_B_to_Bc} holds, then
not all banks in $\mathscr{B}$ are liquidated.
\end{lemma}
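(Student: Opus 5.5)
We argue by contradiction: suppose every bank in $\mathscr{B}$ is liquidated, and derive a contradiction with inequality~\eqref{eq:total_payment_B_to_Bc}. The tool is a resource-accounting identity for the set $\mathscr{B}$. In the benchmark a bank is liquidated exactly when it defaults, i.e.\ when $\theta y+R_{i\leftleftarrows}<\ell\sigma_i+F_{i\rightrightarrows}$. Since the shocked bank lies outside $\mathscr{B}$, every $i\in\mathscr{B}$ has $\sigma_i=0$. By equation~\eqref{eq:optimal_repayment_cases}, each such defaulting bank therefore repays $R_{i\rightrightarrows}=\theta y+R_{i\leftleftarrows}$. This is strictly positive, so the zero branch is never active.

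Next we sum these equalities over $i\in\mathscr{B}$ and split each bank's receipts and repayments into flows inside and across the cut. The internal flows $\sum_{i,j\in\mathscr{B}}R_{j\to i}$ appear on both sides and cancel. What remains is
\begin{equation*}
\sum_{i\in\mathscr{B}}\sum_{j\in\mathscr{B}^c}R_{i\to j}
=\theta y|\mathscr{B}|+\sum_{i\in\mathscr{B}}\sum_{j\in\mathscr{B}^c}R_{j\to i}
\geq \theta y|\mathscr{B}|,
\end{equation*}
where the inequality uses that repayments are nonnegative. This contradicts inequality~\eqref{eq:total_payment_B_to_Bc}, so at least one bank in $\mathscr{B}$ survives.

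The step that needs care is applying the interior branch of equation~\eqref{eq:optimal_repayment_cases} to \emph{every} bank in $\mathscr{B}$, since that branch holds only when the bank's resources lie in $(0,F_{i\rightrightarrows}]$. Liquidation gives the strict inequality $\theta y+R_{i\leftleftarrows}<F_{i\rightrightarrows}$, which supplies the upper end. Unshockedness gives $\theta y+R_{i\leftleftarrows}>0$, which supplies the lower end. Both conditions use only the hypothesis that the shocked bank is outside $\mathscr{B}$.

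Finally, we note where the proof of statement~(ii) stands. Combining this lemma with the preceding one shows that not all banks are liquidated in the $\delta$-connected network. In the ring every bank is liquidated, by statement~(i), so the $\delta$-connected network is strictly more efficient than the ring, which completes statement~(ii).
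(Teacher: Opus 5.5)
Your proof is correct and is essentially the paper's argument. You assume every bank in $\mathscr{B}$ is liquidated, use that each such unshocked, defaulting bank repays $\theta y+R_{i\leftleftarrows}$, sum over $\mathscr{B}$ and cancel the internal flows, which gives an outflow across the cut of at least $\theta y|\mathscr{B}|$ and contradicts inequality~\eqref{eq:total_payment_B_to_Bc}. Your explicit check that the interior branch of equation~\eqref{eq:optimal_repayment_cases} applies, using $\sigma_i=0$ and the strict liquidation inequality, fills in a step the paper asserts without comment.
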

\begin{proof}
Suppose, in anticipation of a contradiction, that all banks in $\mathscr{B}$ are liquidated. Thus for each  {B}$_j$  in $\mathscr{B}$,    $R_{j\rightrightarrows}=R_{j\leftleftarrows}+\theta y$. Summing over banks in $\mathscr{B}$, we have that   $\sum_{j\in \mathscr{B}} R_{j\rightrightarrows}= \sum_{j\in \mathscr{B}} R_{j\leftleftarrows}+\theta y|\mathscr{B}|$.
Now just expand $R_{j\rightrightarrows}$ and $R_{j\leftleftarrows}$ into their component payments,
    \begin{equation}
        \sum_{j\in \mathscr{B}} \left(\sum _{i\in \mathscr{B}}R_{j\to i}+\sum _{i\in \mathscr{B}^c}R_{j\to i}\right)= \sum_{j\in \mathscr{B}} \left(\sum _{i\in \mathscr{B}}R_{ i\to j}+\sum _{i\in \mathscr{B}^c}R_{ i\to j}\right)+\theta y|\mathscr{B}|,
    \end{equation}
    and cancel $\sum_{j\in \mathscr{B}} \sum _{i\in \mathscr{B}}R_{j\to i}$  to get that
    \begin{equation}
         \sum_{j\in \mathscr{B}} \sum _{i\in \mathscr{B}^c}R_{j\to i}= \sum_{j\in \mathscr{B}} \sum _{i\in \mathscr{B}^c}R_{ i\to j}+\theta y|\mathscr{B}|\geq \theta y|\mathscr{B}|
    \end{equation}
    This contradicts equation \eqref{eq:total_payment_B_to_Bc}, which says the total payment from $\mathscr{B}$ to $\mathscr{B}^c$ is small.
\end{proof}

\subsection{Proof of Lemma \ref{l:ST radius} (\nameref{l:ST radius})}

The result is the same as \citetalias{AOT}'s. Hence we omit the proof. \qed

\subsection{Proof of Lemma \ref{l:ST bottleneck} (\nameref{l:ST bottleneck})}

Statement (i) follows by combining \citetalias{AOT}'s Proposition 8 with their bound relating the bottleneck parameter to harmonic distance. For statement (ii), that bound is not directly applicable because its lower bound concerns the maximum harmonic distance over all ordered pairs, whereas the identity of the shocked bank is fixed. We therefore use a direct cut-payment argument, analogous to \citetalias{AOT}'s proof of their Proposition 6(ii).

We begin with a lemma that connects the bottleneck parameter $\beta$ to the harmonic distance $d$:

        \begin{lemma} \label{l:beta and d}
            Let $\mathbf{F}$ be a symmetric regular financial network of $N$ banks. The bottleneck parameter $\beta$ satisfies
        \begin{equation}
            \frac{1}{2N\beta }\leq \ \max_{i, k \,  :\, i \neq k} d_{k\to i}\leq \frac{16}{N\beta^2} .
        \end{equation}
    \end{lemma}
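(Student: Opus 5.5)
We will prove Lemma~\ref{l:beta and d} by passing to the random walk on the network and importing standard Markov-chain bounds, following the route \citetalias{AOT} take.

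\textbf{Step 1: the random walk.} Since $\mathbf{F}$ is regular and symmetric, $\hat{\mathbf{F}}$ is a symmetric stochastic matrix. It is therefore the transition matrix of a reversible Markov chain whose stationary distribution is uniform, $\pi_i = 1/N$. Conditioning on the first step in Definition~\ref{d:HD}, the harmonic distance $d_{k\to i}$ is exactly the expected hitting time of state $i$ by the chain started at $k$. We will verify this by checking that the hitting-time recursion $h_k = 1+\sum_{m}\hat F_{k\to m}h_m$ with $h_i=0$ matches the defining equation after transposition, using symmetry of $\hat{\mathbf F}$. Next, for a cut $\mathscr B$, the conductance is
\[
\Phi(\mathscr B)=\frac{\sum_{i\in\mathscr B,j\in\mathscr B^c}\pi_i\hat F_{i\to j}}{\pi(\mathscr B)\pi(\mathscr B^c)} = N\sum_{i\in\mathscr B}\sum_{j\in\mathscr B^c}\frac{\hat F_{i\to j}}{|\mathscr B||\mathscr B^c|}.
\]
Hence $\beta$ equals $\Phi^*/N$, where $\Phi^*$ is the minimal (two-sided normalized) conductance.

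\textbf{Step 2: lower bound.} Take the minimizing cut $\mathscr B$ and suppose, without loss of generality, $|\mathscr B|\le N/2$. Started from the stationary distribution restricted to $\mathscr B$, the chain exits at rate at most $\sum_{\mathscr B\times\mathscr B^c}\hat F/|\mathscr B| = \beta|\mathscr B^c|\le N\beta$ per step. The expected exit time is therefore at least $1/(N\beta)$, and some start $k\in\mathscr B$ has mean hitting time of $\mathscr B^c$ at least of that order. Hitting any particular $i\in\mathscr B^c$ takes at least as long. Tracking constants carefully (the factor $2$ comes from the $|\mathscr B^c|\ge N/2$ normalization), we get $\max d\ge 1/(2N\beta)$.

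\textbf{Step 3: upper bound, the main obstacle.} We use Cheeger's inequality: the spectral gap satisfies $1-\lambda_2\ge \Phi^2/8$ for the one-sided conductance $\Phi\ge \Phi^*/2$ (with a possible lazification to handle negative eigenvalues). Combined with the spectral representation of hitting times for reversible chains, $E_k T_i \le \frac{2}{\pi_i(1-\lambda_2)}$ via $E_\pi T_i=Z_{ii}/\pi_i$ and commute-time bounds, this gives
\[
d_{k\to i}\lesssim \frac{N}{(N\beta)^2}\cdot\text{const}=\frac{\text{const}}{N\beta^2}.
\]
The difficulty is getting the constant $16$ exactly. This needs care with laziness and the conversion between $\Phi^*$ and the one-sided conductance, so we would follow AOT's appendix computation, citing their bound directly if the constants don't close.
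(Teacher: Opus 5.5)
Your plan matches the paper's proof. The paper gives no independent argument and simply cites AOT's Lemma~1, which, as the discussion before Lemma~\ref{l:ST bottleneck} notes, reads $d$ as mean hitting times of the walk on $\hat{\mathbf F}$ and $\beta$ as a conductance; your fallback of citing AOT is therefore exactly the paper's proof.

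Your Step~3 constants do close without the citation if you change two things:
\begin{itemize}
\item Drop the lazification.
\item Use the standard Cheeger bound $1-\lambda_2\geq h^2/2$, which holds for any reversible chain, with the one-sided ratio $h=\min_{|\mathscr B|\leq N/2}\sum_{i\in\mathscr B,\,j\in\mathscr B^c}\hat F_{i\to j}/|\mathscr B|\geq N\beta/2$.
\end{itemize}
Combined with the commute-time bound $\mathbb{E}_kT_i\leq 2N/(1-\lambda_2)$, this gives exactly $16/(N\beta^2)$.

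In Step~2, your per-step exit-rate heuristic is not rigorous as written. It becomes rigorous via the union bound $\mathbb{P}(\tau_{\mathscr B^c}\leq t)\leq t\,\beta|\mathscr B^c|$ from the uniform start on $\mathscr B$, using stationarity. That bound is where the factor~$2$ comes from, not $|\mathscr B^c|\geq N/2$.

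Finally, per Definition~\ref{d:HD}, $d_{k\to i}$ is the hitting time of $k$ started from $i$, not of $i$ from $k$. This is immaterial here because the lemma takes the maximum over ordered pairs.
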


    \begin{proof} This is \citetalias{AOT}'s Lemma 1 (p.\ 580). Hence we omit the proof.
    \end{proof}

    With this we proceed to the two statements of the lemma.

\textbf{Proof of statement (i).}  For $\beta > \beta^{ST} \equiv 4 \sqrt{ \frac{\theta y }{N F} } $, we have, from Lemma \ref{l:beta and d}, that
    \begin{equation}
        d_{j \to i} \leq \frac{16}{N \beta^2} < \frac{ 16}{N (\beta^{ST})^2}
            = \frac{F}{\theta y} \equiv d^{ST} \  \text{ for all } \  i\neq j,
    \end{equation}
    per the definition of $d^{ST}$ in Lemma \ref{l:ST radius}. That lemma implies that when   B$_j$ is shocked, then each B$_i$ defaults.

\textbf{Proof of statement (ii).} Let $\mathscr B$ be a nonempty proper subset attaining the minimum in Definition~\ref{d:bottleneck}. Since the network is symmetric, replacing $\mathscr B$ with $\mathscr B^c$ leaves the cut value unchanged, so choose the minimizing side such that the shocked bank B$_j$ is not in $\mathscr B$.

Suppose, in anticipation of a contradiction, that all banks are liquidated. Every bank in $\mathscr B$ is not shocked and, since liquidation and default coincide in the benchmark, equation~\eqref{e:total repayment} implies
\begin{equation}
    R_{i\rightrightarrows}-R_{i\leftleftarrows}=\theta y
    \quad\text{for every }i\in\mathscr B.
\end{equation}
Summing over $\mathscr B$, cancelling payments internal to $\mathscr B$, using $R_{i\to k}\leq F_{i\to k}$, and then using regularity and the definition of $\beta$ gives
\begin{align}
    \theta y|\mathscr B|
    &=\sum_{i\in\mathscr B}\sum_{k\in\mathscr B^c}
    \big(R_{i\to k}-R_{k\to i}\big)\\
    &\leq\sum_{i\in\mathscr B}\sum_{k\in\mathscr B^c}F_{i\to k}\\
    &=F\beta|\mathscr B||\mathscr B^c|\\
    &\leq FN\beta|\mathscr B|.
\end{align}
But $\beta<\beta_{ST}\leq\theta y/(2NF)$ makes the last expression strictly less than $\theta y|\mathscr B|/2$, a contradiction. Thus at least one bank in $\mathscr B$ is not liquidated. \qed

\subsection{Proof of Proposition \ref{p:existence} (\nameref{p:existence})\label{pr:existence}}

\emph{Existence.} A payment equilibrium is a fixed point of the mapping
\begin{equation}\label{e:Phi map}
    \Phi(\mathbf{R})
    :=
    \Big[
    \min\Big\{
    \mathbf{F}_{\rightrightarrows}\,,\;
    \theta y\bm{1}-\ell\bm\sigma+\FT\mathbf{R}
    \Big\}
    \Big]^+
\end{equation}
(equation \eqref{eq:payment_equilibrium}). $\Phi$ maps the complete lattice
$\prod_{i=1}^N[0,F_{i\rightrightarrows}]$ into itself and is monotone,
since $\FT$ is entrywise non-negative and $x\mapsto[\min\{F,x\}]^+$ is
nondecreasing. By Tarski's fixed point theorem, a payment equilibrium
exists; moreover, there is a least payment equilibrium
$\mathbf{R}_{\rightrightarrows}$.

\emph{Uniqueness.} Let $\mathbf{R}'_{\rightrightarrows}$ be any payment
equilibrium, so
$\mathbf{R}'_{\rightrightarrows}\geq\mathbf{R}_{\rightrightarrows}$; write
$\Delta R_i:=R'_{i\rightrightarrows}-R_{i\rightrightarrows}\geq0$ and let
$\mathscr{C}:=\{i:\Delta R_i>0\}$. (Banks with $F_{i\rightrightarrows}=0$ repay
zero in every equilibrium, so every bank in $\mathscr{C}$ has
$F_{i\rightrightarrows}>0$.) Since, per equation \eqref{e:total repayment}, each
bank's repayment increases at most one for one with the repayments it
receives,
\begin{equation}\label{e:one for one}
    \Delta R_i
    \;\leq\;
    R'_{i\leftleftarrows}-R_{i\leftleftarrows}
    \;=\;
    \sum_{k\neq i}\hat{F}_{k\to i}\,\Delta R_k
    \qquad
    \text{for every } i .
\end{equation}
Suppose $\mathscr{C}\neq\emptyset$. Summing inequality~(\ref{e:one for one}) over $i\in\mathscr{C}$, using $\Delta R_k=0$ for $k\notin\mathscr{C}$ and $\sum_{i\neq k}\hat{F}_{k\to i}=1$, gives
\begin{equation}\label{e:sandwich}
    \sum_{i\in\mathscr{C}}\Delta R_i
    \;\leq\;
    \sum_{k\in\mathscr{C}}\Delta R_k
    \sum_{i\in\mathscr{C}\setminus\{k\}}\hat{F}_{k\to i}
    \;\leq\;
    \sum_{k\in\mathscr{C}}\Delta R_k .
\end{equation}
The two ends coincide, so both inequalities bind. The second binds only if
$\sum_{i\in\mathscr{C}}\hat{F}_{k\to i}=1$ for every $k\in\mathscr{C}$,
i.e.\ only if banks in $\mathscr{C}$ have no liabilities outside $\mathscr{C}$. The first binds only if inequality~(\ref{e:one for one}) binds for every $i\in\mathscr{C}$, which, given the piecewise form of equation  \eqref{eq:optimal_repayment_cases}---slope zero below zero and above $F_{i\rightrightarrows}$, slope one in between--- requires every $i\in\mathscr{C}$ to be in the middle region in both equilibria:
\begin{equation}\label{e:interior}
    R_{i\rightrightarrows}
    =\theta y-\ell\sigma_i+R_{i\leftleftarrows}
    \geq0
    \qquad\text{and}\qquad
    R'_{i\rightrightarrows}
    =\theta y-\ell\sigma_i+R'_{i\leftleftarrows}
    \leq F_{i\rightrightarrows} .
\end{equation}
Summing the first equation in condition~(\ref{e:interior}) over
$i\in\mathscr{C}$ and cancelling the payments within $\mathscr{C}$ (which
appear on both sides, as banks in $\mathscr{C}$ pay only within
$\mathscr{C}$) leaves
\begin{equation}\label{e:zero slack}
    S_{\mathscr{C}}\,\ell
    \;=\;
    \lvert\mathscr{C}\rvert\,\theta y
    \;+\;
    \sum_{k\notin\mathscr{C}}\;\sum_{i\in\mathscr{C}}
    \hat{F}_{k\to i}\,R_{k\rightrightarrows} ,
\end{equation}
where $S_{\mathscr{C}}$ is the number of shocked banks in $\mathscr{C}$:
The shocks in $\mathscr{C}$ exactly exhaust its pledgeable assets and inflows.
For any fixed repayment regime outside $\mathscr C$, the last term on its right-hand side is pinned down by banks outside $\mathscr C$:
Because banks in $\mathscr C$ have no liabilities outside $\mathscr C$, their resources and repayments do not enter the clearing problem on $\mathscr C^c$. Within each such regime, repayments outside $\mathscr C$ are piecewise affine and weakly increasing in $\theta y$. Hence the right-hand side of equation~(\ref{e:zero slack}) strictly increases in $\theta y$, because it contains the term $\lvert\mathscr C\rvert\theta y$, whereas the left-hand side is fixed for fixed $\ell$.
There are only finitely many choices of $\mathscr C$ and repayment regimes, so it holds only on a lower-dimensional subset of parameters, and the payment equilibrium is generically unique.

\emph{Payoff equivalence at the knife edge.} Banks outside $\mathscr{C}$ are unaffected by the multiplicity. Since no repayment differences leave $\mathscr{C}$, they receive identical payments from banks in $\mathscr{C}$; and since repayments among banks outside $\mathscr{C}$ are identical by construction, their total inflows are the same in both equilibria. Hence their liquidation decisions and payoffs coincide. For every $i\in\mathscr C$, condition~\eqref{e:interior} and $\Delta R_i>0$ imply $0\leq R_{i\rightrightarrows}<R'_{i\rightrightarrows}\leq
F_{i\rightrightarrows}$. Hence B$_i$ is not liquidated in either
equilibrium and obtains $(1-\theta)y$ in both. Thus all payment
equilibria induce the same liquidation decisions and payoffs.
\hfill $\qed$

\subsection{Proof of Proposition \ref{p:netting} (\nameref{p:netting})}

The proof is similar to that of Lemma \ref{l:ST netting}, but simpler because we work with repayments $\mathbf{R}_{\rightrightarrows}$ directly instead of ``shortfalls'' $\mathbf{F}_{\rightrightarrows} - \mathbf{R}_{\rightrightarrows}.$

Define the mapping
    \begin{equation}
        \Phi^\alpha ( \mathbf{R} ) :=   \Big[
                                    \min    \big\{ \,  \alpha \mathbf{F}_{\rightrightarrows} \, , \,
                                    \mathbf{Q} \mathbf{R} + \theta y \bm{1}-\ell \bm{\sigma}
                                            \big\}
                                \Big]^+.
    \end{equation}
     Keeping in mind that $\mathbf{Q} \equiv \FT$, the fixed point of $\Phi^1$ is a clearing vector of $\mathbf{F}$.  \qed

     \begin{lemma} \label{l:Phi} Let  $\mathbf{R}^1_{\rightrightarrows}$ be a fixed point of $\Phi^1$ and define
    \begin{equation}
        \mathscr{I}^\alpha  := \prod_{i = 1}^N \big[R^1_{i \rightrightarrows} , \alpha F_{i \rightrightarrows} \big].
    \end{equation}
    For $\alpha > 1$,  $\Phi^\alpha$ maps     $\mathscr{I}^\alpha$ into itself, i.e.\ that $\Phi^\alpha( \mathscr{I}^\alpha) \subset \mathscr{I}^\alpha$.
    \end{lemma}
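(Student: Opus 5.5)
I will mirror the structure of the companion lemma in the proof of Lemma~\ref{l:ST netting}, but work with repayments rather than shortfalls. The claim has two parts: an upper bound and a lower bound on $\Phi^\alpha(\mathbf{R})$ for $\mathbf{R}\in\mathscr{I}^\alpha$.

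\emph{Upper bound and nonnegativity.} By construction $\Phi^\alpha(\mathbf{R})$ is the positive part of a minimum one of whose arguments is $\alpha\mathbf{F}_{\rightrightarrows}$. Hence $\Phi^\alpha(\mathbf{R})\leq\alpha\mathbf{F}_{\rightrightarrows}$ componentwise. This uses $\alpha\mathbf{F}_{\rightrightarrows}\geq\mathbf{0}$, so the positive part cannot exceed it. No property of $\mathbf{R}$ is needed here.

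\emph{Lower bound.} Take $\mathbf{R}\in\mathscr{I}^\alpha$, so $\mathbf{R}\geq\mathbf{R}^1_{\rightrightarrows}$. I use two monotonicity facts.
\begin{itemize}
\item[(a)] $\mathbf{Q}=\FT$ is entrywise nonnegative, and $\hat{\mathbf F}$ is invariant to scaling $\mathbf{F}$ by $\alpha$, so $\mathbf{Q}$ is the same matrix for $\alpha\mathbf{F}$ and $\mathbf{F}$. Hence $\mathbf{Q}\mathbf{R}\geq\mathbf{Q}\mathbf{R}^1_{\rightrightarrows}$.
\item[(b)] The map $(a,b)\mapsto[\min\{a,b\}]^+$ is nondecreasing in both arguments, and $\alpha\mathbf{F}_{\rightrightarrows}\geq\mathbf{F}_{\rightrightarrows}$ for $\alpha>1$.
\end{itemize}
Chaining these gives
\begin{equation*}
\Phi^\alpha(\mathbf{R})\geq\Big[\min\big\{\alpha\mathbf{F}_{\rightrightarrows},\mathbf{Q}\mathbf{R}^1_{\rightrightarrows}+\theta y\bm{1}-\ell\bm\sigma\big\}\Big]^+\geq\Big[\min\big\{\mathbf{F}_{\rightrightarrows},\mathbf{Q}\mathbf{R}^1_{\rightrightarrows}+\theta y\bm{1}-\ell\bm\sigma\big\}\Big]^+=\Phi^1(\mathbf{R}^1_{\rightrightarrows})=\mathbf{R}^1_{\rightrightarrows}.
\end{equation*}
The last equality holds because $\mathbf{R}^1_{\rightrightarrows}$ is a fixed point of $\Phi^1$.

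\emph{Well-definedness of the box.} $\mathscr{I}^\alpha$ is nonempty because $R^1_{i\rightrightarrows}\leq F_{i\rightrightarrows}\leq\alpha F_{i\rightrightarrows}$. This follows from the fixed-point equation, and also holds when $F_{i\rightrightarrows}=0$, where both sides are zero.

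The only step needing care is (a): I must confirm that scaling the network leaves the pro rata shares unchanged, so that $\Phi^\alpha$ and $\Phi^1$ share the same linear part. Beyond that the argument is routine. Unlike Lemma~\ref{l:ST netting}, it does not need the zero-net-position identity, because repayments enter $\Phi^\alpha$ directly rather than through shortfalls.
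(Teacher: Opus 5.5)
Your proof is correct and matches the paper's argument essentially line for line. The upper bound comes from taking the minimum with $\alpha\mathbf{F}_{\rightrightarrows}$, and the lower bound chains monotonicity in $\mathbf{R}$ (nonnegativity of $\mathbf{Q}$) and in the cap ($\alpha>1$) down to $\Phi^1(\mathbf{R}^1_{\rightrightarrows})=\mathbf{R}^1_{\rightrightarrows}$; your extra checks (scale-invariance of $\hat{\mathbf F}$, nonemptiness of the box, $\alpha\mathbf{F}_{\rightrightarrows}\geq\mathbf{0}$) are details the paper leaves implicit.
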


    \begin{proof} The upper bound, i.e.\ that $\Phi^\alpha( \mathbf{R}^\alpha_{\rightrightarrows}) \leq \alpha \mathbf{F}_{\rightrightarrows}$, follows immediately from the definition of $\Phi^\alpha$ as a minimum.

    So we need only to show the lower bound, i.e.\ that $\Phi^\alpha(\mathbf{R}^\alpha_{\rightrightarrows}) \geq \mathbf{R}^1_{\rightrightarrows}$.
    For $\mathbf{R}^\alpha_{\rightrightarrows} \in \mathscr{I}^\alpha$,  we can compute:
    \begin{align}
    \Phi^\alpha (\mathbf{R}^\alpha_{\rightrightarrows})
        &=\Big[ \min\big\{\alpha \textbf{F}_{\rightrightarrows} \, , \, \textbf{Q}\textbf{R}^\alpha_{\rightrightarrows} +\theta y \bm{1}-\ell\bm{\sigma}\big\} \Big]^+\\
        &\geq \Big[ \min\big\{\alpha \textbf{F}_{\rightrightarrows} \, , \, \textbf{Q}\textbf{R}^1_{\rightrightarrows} +\theta y \bm{1}-\ell\bm{\sigma}\big\} \Big]^+
        \\
        &\geq \Big[ \min\big\{ \textbf{F}_{\rightrightarrows} \, , \, \textbf{Q}\textbf{R}^1_{\rightrightarrows} +\theta y \bm{1}-\ell\bm{\sigma}\big\} \Big]^+
        \\
        &\equiv \Phi^1(\mathbf{R}^1_{\rightrightarrows} ) \equiv \mathbf{R}^1_{\rightrightarrows},
    \end{align}
    since $\mathbf{R}^1_{\rightrightarrows}$ is a fixed point of $\Phi$ by definition.

    \end{proof}
        Given the lemma, we can apply Brouwer's theorem, to conclude that for $\alpha > 1$, an equilibrium of $\alpha \mathbf{F}$ is a fixed point of a mapping on $\mathscr{I}^\alpha$. Therefore the generically unique\footnote{Generic global uniqueness follows from Proposition \ref{p:existence}. When the equilibrium is not unique, let $\mathbf{R}_{\rightrightarrows}^1$ be the least fixed-point as per Proposition~\ref{p:existence}. The outcome-equivalence part of that proposition implies that all equilibria of each LT network induce the same liquidation set. Therefore comparison of these two fixed points suffices for the efficiency ranking.}
 clearing vector, being in $\mathscr{I}^\alpha$, exceeds $\mathbf{R}^1_{\rightrightarrows}$: All repayments are higher when $\alpha$ is higher, so there are fewer liquidations. \qed

\subsection{Proof of Proposition \ref{p:delta} (\nameref{p:delta})}

Note that this proof makes use of the minimum number of liquidated banks, $L^*$, derived in Lemma \ref{l:constrained efficiency}, even though that result comes later in the text.

Throughout we assume, w.l.o.g., that the B$_1$ is the not-shocked bank.

\textbf{Proof of statement (i).} Equation \eqref{eq:optimal_repayment_cases} and the fact that banks have zero net positions (so no shocked bank can repay in full) imply that any shocked bank ($i \geq 2$) repays
\begin{align}
    R_{i \rightrightarrows} \equiv R_{i \to i+1}
    &= \big[\theta y - \ell + R_{i \leftleftarrows} \big]^+
    \\
    &= \big[\theta y - \ell + R_{i-1 \to i} \big]^+,
\end{align}
having used the definition of the ring network (Definition \ref{d:typology}).

The expression for $R_{i\rightrightarrows}$ implies that if B$_{i-1}$ is liquidated, then B$_i$ is too. Thus the number of banks that are not liquidated is the maximum index $i$ for which $ R_{i - 1 \to i} \geq \ell - \theta y$.

We can now expand the condition recursively for any B$_i$ that is not liquidated:
    \begin{align}
    \ell - \theta y \leq R_{i - 1 \to i}
    &=   \theta y - \ell + R_{i - 2 \to i - 1 }
    \\
    &= k (\theta y - \ell) + R_{i - (k+1) \to i - k} \ \ \ \text{(for $k \in \{1,..., i - 2\})$}
    \\
    &= (i-2)(\theta y - \ell) + R_{1 \to 2}.
    \end{align}
    So the number of banks that are not liquidated is
    \begin{equation}
    \max \left\{ \,  i \leq N  \, : \,  i - 1 \leq \frac{R_{1 \to 2}}{\ell - \theta y} \, \right\}
    = \min
        \left\{
        \, N \, , \,
        \left\lfloor 1 + \frac{R_{1\to 2}}{\ell - \theta y} \right\rfloor \, \right\}.
    \end{equation}
    The number of liquidated banks is $N$ minus the above:
    \begin{equation}
    L = \max \left\{ \, 0 \, , \, \left\lceil N - 1 - \frac{R_{1 \to 2}}{\ell - \theta y} \right\rceil \, \right\}
    .
    \end{equation}
    Suppose first that $(N-1)(\ell-\theta y)>\theta y$. At least one shocked bank must be liquidated: Otherwise, iterating their repayment equations around the ring and using $R_{1\to2}\leq\theta y+R_{N\to1}$ would imply $(N-1)(\ell-\theta y)\leq\theta y$, a contradiction. Once one shocked bank is liquidated, every shocked bank after it is too, so B$_N$ is liquidated and $R_{N\to1}=0$. Hence $R_{1\to2}=\theta y$, since $F>\theta y$. Substituting into the preceding expression for $L$ and comparing with equation~(\ref{e:L*}) at $S=N-1$ gives
    \begin{equation}
        L=\left\lceil N-1-\frac{\theta y}{\ell-\theta y}\right\rceil=L^*.
    \end{equation}
    Suppose instead that $(N-1)(\ell-\theta y)\leq\theta y$. Then the repayment vector
    \begin{equation}
        R_{1\to2}=F
        \qquad\text{and}\qquad
        R_{i\to i+1}=F-(i-1)(\ell-\theta y)
        \quad\text{for }i\in\{2,\ldots,N\}
    \end{equation}
    is an equilibrium in which no bank is liquidated: The repayment equations hold for every shocked bank and $\theta y+R_{N\to1}\geq F$ makes B$_1$ repay in full. The outcome-equivalence part of Proposition~\ref{p:existence} implies the same liquidation outcome in every payment equilibrium, so $L=0=L^*$. In either case, the ring attains the minimum number of liquidations and is therefore most efficient among all networks.

    \textbf{Statement (ii).} A trivial example suffices: The ``linkless'' network---$F_{i \to j} = 0$ for all $i$ and $j$---is $\delta$-connected for any $\delta$; in it, all $N-1$ shocked banks are liquidated. Since $\ell\leq2\theta y$ implies $L^*\leq N-2$, that is less efficient than the ring network with $F > \theta y$, per statement (i). \qed

\subsection{Proof of Proposition \ref{p:radius} (\nameref{p:radius})}

We prove the two statements in turn. The arguments build on \citetalias{AOT}'s proof of their Proposition 8 (pp.\ 602--603). Ours are a bit more complicated because we cannot work with the clearing vector $\mathbf{R}_{\rightrightarrows}$, but have to work with the shortfall $\mathbf{D} \equiv \mathbf{F}_{\rightrightarrows} - \mathbf{R}_{\rightrightarrows}$ instead.

\textbf{Statement (i).} The proof comprises (somewhat involved) calculations using the shortfall $\mathbf{D}$, which ultimately allow us to bound the harmonic distance of the not-shocked banks to any liquidated bank.

Throughout we denote the set of defaulting banks by $\mathscr{D}$, that of liquidated banks by $\mathscr{L}$, and, hence, that of those that default but are not liquidated by $\mathscr{D}\setminus \mathscr{L}$. $\mathbf{1}$ denotes the  vector of all ones and $\mathbf{I}$ the identity matrix, each of appropriate dimension determined by the context. Given the network is regular, we can write ${F}_{i \rightrightarrows} \equiv F.$

We start with two lemmata. Each takes as its starting point the equilibrium equation for the shortfall
\begin{equation} \label{e:shortfall}
\mathbf{D} =  \Big[
        \min\{  \mathbf{F}_{\rightrightarrows} \, , \,  \mathbf{Q}\mathbf{D}-\theta y \bm{1}+\ell\bm{\sigma}\} \, \Big]^+,
\end{equation}
which follows from setting $\alpha=1$ in Lemma~\ref{lm:shortfall}.
The first lemma develops the equation for banks that default but are not liquidated; the second for banks that are liquidated.

\begin{lemma} \label{l:radius 1}
 $\mathbf{D}_{\mathscr{D}\backslash \mathscr{L}}
            =\big(\mathbf{I} -\mathbf{Q}_{\mathscr{D}\backslash \mathscr{L},\mathscr{D}\backslash \mathscr{L}}\big)^{-1}  \big(\mathbf{Q}_{ \mathscr{D}\backslash \mathscr{L},\mathscr{L}}\bm{1} F - ( \theta y - \ell)  \bm{1}\big)
            $.
\end{lemma}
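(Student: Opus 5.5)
The plan is to read off the shortfall equation~\eqref{e:shortfall} bank by bank, using which region of the piecewise map each bank is in, and then to solve the resulting linear system on $\mathscr{D}\setminus\mathscr{L}$.

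\textbf{Step 1: classify the banks.} Under the hypotheses of Proposition~\ref{p:radius}, every bank other than B$_j$ is shocked, and B$_j$ does not default. So every bank in $\mathscr{D}$ has $\sigma_i=1$.
\begin{itemize}
\item A non-defaulting bank repays in full, so $D_k=0$.
\item A liquidated bank repays zero, so $D_k=F$.
\item A bank $i\in\mathscr{D}\setminus\mathscr{L}$ is in the middle branch of equation~\eqref{eq:optimal_repayment_cases}, so $R_{i\rightrightarrows}=\theta y-\ell+R_{i\leftleftarrows}$.
\end{itemize}

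\textbf{Step 2: rewrite the middle branch in shortfalls.} For $i\in\mathscr{D}\setminus\mathscr{L}$, subtract the middle-branch identity from $F$ to get $D_i=\ell-\theta y+F-R_{i\leftleftarrows}$. Now $\mathbf{Q}=\FT$ has row sums $\sum_k\hat F_{k\to i}=F_{i\leftleftarrows}/F$. By regularity and zero net positions, $F_{i\leftleftarrows}=F_{i\rightrightarrows}=F$, so $\mathbf{Q}$ is row-stochastic. Hence $F-R_{i\leftleftarrows}=\sum_k Q_{ik}(F-R_{k\rightrightarrows})=\sum_k Q_{ik}D_k$, which is exactly the middle branch of equation~\eqref{e:shortfall}.

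\textbf{Step 3: substitute the known shortfalls.} Split the sum over $k$ into three groups: $k\in\mathscr{D}\setminus\mathscr{L}$ (unknown $D_k$), $k\in\mathscr{L}$ ($D_k=F$), and non-defaulters ($D_k=0$). This gives, in vector form,
\begin{equation*}
\mathbf{D}_{\mathscr{D}\backslash\mathscr{L}}=\mathbf{Q}_{\mathscr{D}\backslash\mathscr{L},\mathscr{D}\backslash\mathscr{L}}\mathbf{D}_{\mathscr{D}\backslash\mathscr{L}}+\mathbf{Q}_{\mathscr{D}\backslash\mathscr{L},\mathscr{L}}\bm{1}F-(\theta y-\ell)\bm{1}.
\end{equation*}
Rearranging and inverting gives the stated formula, provided $\mathbf{I}-\mathbf{Q}_{\mathscr{D}\backslash\mathscr{L},\mathscr{D}\backslash\mathscr{L}}$ is invertible.

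\textbf{Step 4: invertibility, the main obstacle.} The matrix $\mathbf{Q}$ is row-stochastic, and the network is connected, so $\mathbf{Q}$ is irreducible; this follows from connectedness together with zero net positions, which rule out one-way components. The index set $\mathscr{D}\setminus\mathscr{L}$ is a proper subset, because it excludes B$_j$. The principal submatrix of an irreducible stochastic matrix on a proper index set is substochastic, and from every index there is a path leaving the set. Its spectral radius is therefore strictly below one, so the Neumann series $\sum_n \mathbf{Q}_{\mathscr{D}\backslash\mathscr{L},\mathscr{D}\backslash\mathscr{L}}^n$ converges to the required inverse. I would check the irreducibility step most carefully. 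If $\mathscr{D}\setminus\mathscr{L}$ is empty, the statement is vacuous.
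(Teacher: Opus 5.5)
Your proposal is correct and follows the paper's own argument. For banks in $\mathscr{D}\backslash\mathscr{L}$ you take the interior branch of the shortfall equation~\eqref{e:shortfall}, split the sum into liquidated banks ($D_k=F$), defaulting-but-surviving banks, and non-defaulting banks ($D_k=0$), and solve the resulting linear system, exactly as the paper does. There are two differences. You re-derive that shortfall equation from row-stochasticity of $\mathbf{Q}$ instead of citing Lemma~\ref{lm:shortfall}. You also justify invertibility of $\mathbf{I}-\mathbf{Q}_{\mathscr{D}\backslash\mathscr{L},\mathscr{D}\backslash\mathscr{L}}$ through irreducibility and a spectral radius strictly below one, a step the paper's proof takes for granted.
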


\begin{proof}
        For banks that default, the shortfall is not zero, and for those that are not liquidated, it is less than $F$. Hence for  $\mathrm{B}_i$ in $\mathscr{D}\backslash \mathscr{L}$, the second term under the ``min'' in equation \eqref{e:shortfall} is the relevant one; writing that elementwise gives the following:
\begin{align}
    D_{ i}
            &=  \sum_{k=1}^N Q_{ik}D_{k}-\theta y+\ell
            \\
            &=  \sum_{k\in \mathscr{L}} Q_{ik}D_{k}+
                    \sum_{k\in \mathscr{D}\backslash \mathscr{L}} Q_{ik}D_{k}
                    +\sum_{k\notin \mathscr{D}} Q_{ik}D_{k}-\theta y+\ell
                    \\
                &=  \sum_{k\in \mathscr{L}} Q_{ik}D_{k}+
                    \sum_{k\in \mathscr{D}\backslash \mathscr{L}} Q_{ik}D_{k}-\theta y+\ell ,
\end{align}
having used that shortfall is zero for banks that do not default ($D_k = 0$ for $k \not \in \mathscr{D}).$
Re-writing the above in block matrix form, using that fact that liquidated banks repay nothing ($D_k = F$ for $k \in \mathscr{L}$), and rearranging gives:
\begin{align}
    \mathbf{D}_{\mathscr{D}\backslash \mathscr{L}}
            &=  \mathbf{Q}_{\mathscr{D}\backslash \mathscr{L},\mathscr{D}\backslash \mathscr{L}}\mathbf{D}_{\mathscr{D}\backslash \mathscr{L}}+
                \mathbf{Q}_{ \mathscr{D}\backslash \mathscr{L},\mathscr{L}}  \mathbf{D}_{\mathscr{L}}-( \theta y  - \ell) \bm{1}\\
            &=  \mathbf{Q}_{\mathscr{D}\backslash \mathscr{L},\mathscr{D}\backslash \mathscr{L}}\mathbf{D}_{\mathscr{D}\backslash \mathscr{L}}+
                \mathbf{Q}_{ \mathscr{D}\backslash \mathscr{L},\mathscr{L}}  F\bm{1}- (\theta y - \ell ) \bm{1} \\
            &=(\mathbf{I}-\mathbf{Q}_{\mathscr{D}\backslash \mathscr{L},\mathscr{D}\backslash \mathscr{L}})^{-1}  (\mathbf{Q}_{ \mathscr{D}\backslash \mathscr{L},\mathscr{L}}F\bm{1}- (\theta y - \ell) \bm{1}),
\end{align}
where the last expression comes from solving for   $\mathbf{D}_{\mathscr{D}\backslash \mathscr{L}}$ and rearranging. \end{proof}

\begin{lemma} \label{l:radius 2} $   \big( \mathbf{I} +  \mathbf{Q}_{\mathscr{L},\mathscr{D}\backslash \mathscr{L}} (\mathbf{I}-\mathbf{Q}_{\mathscr{D}\backslash \mathscr{L},\mathscr{D}\backslash \mathscr{L}})^{-1} \big)  \big(\ell -\theta y  \big)\bm{1}
    >
   \mathbf{\tilde Q} F\bm{1}$,
    where
    \begin{equation} \label{e:Q tilde}
   \mathbf{\tilde Q} :=   \big( \mathbf{I} -\mathbf{Q}_{\mathscr{L},\mathscr{D}\backslash \mathscr{L}} (\mathbf{I}-\mathbf{Q}_{\mathscr{D}\backslash \mathscr{L},\mathscr{D}\backslash \mathscr{L}})^{-1}  \mathbf{Q}_{ \mathscr{D}\backslash \mathscr{L},\mathscr{L}}-\mathbf{Q}_{\mathscr{L},\mathscr{L}}\big ) .
   \end{equation}
\end{lemma}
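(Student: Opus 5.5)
Start from the shortfall equation \eqref{e:shortfall} restricted to the liquidated banks $\mathscr{L}$. For a liquidated bank B$_i$, liquidation means $\theta y-\ell\sigma_i+R_{i\leftleftarrows}<0$. Every liquidated bank is shocked, since a not-shocked bank has $\theta y+R_{i\leftleftarrows}>0$. So $\ell-\theta y>R_{i\leftleftarrows}$ holds strictly for each $i\in\mathscr{L}$.

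Next rewrite $R_{i\leftleftarrows}$ in terms of shortfalls. By zero net positions and regularity, $R_{i\leftleftarrows}=\sum_k \hat F_{k\to i}(F-D_k)=F-\sum_k Q_{ik}D_k$, using $\sum_k Q_{ik}F=F_{i\leftleftarrows}=F$. The liquidation inequality therefore becomes, elementwise on $\mathscr{L}$,
\begin{equation}
(\ell-\theta y)\bm{1} > F\bm{1}-\mathbf{Q}_{\mathscr{L},\cdot}\mathbf{D}.
\end{equation}
Split $\mathbf{Q}_{\mathscr{L},\cdot}\mathbf{D}$ into blocks over $\mathscr{L}$, $\mathscr{D}\setminus\mathscr{L}$ and the non-defaulters, and use three facts: $D_k=0$ off $\mathscr{D}$, $D_k=F$ on $\mathscr{L}$, and $\mathbf{D}_{\mathscr{D}\setminus\mathscr{L}}$ is given by Lemma~\ref{l:radius 1}. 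This gives
\begin{equation}
(\ell-\theta y)\bm{1} > F\bm{1}-\mathbf{Q}_{\mathscr{L},\mathscr{L}}F\bm{1}-\mathbf{Q}_{\mathscr{L},\mathscr{D}\setminus\mathscr{L}}(\mathbf{I}-\mathbf{Q}_{\mathscr{D}\setminus\mathscr{L},\mathscr{D}\setminus\mathscr{L}})^{-1}\big(\mathbf{Q}_{\mathscr{D}\setminus\mathscr{L},\mathscr{L}}F\bm{1}+(\ell-\theta y)\bm{1}\big).
\end{equation}
Move the $(\ell-\theta y)$ term to the left and collect the $F$ terms into $\mathbf{\tilde Q}F\bm{1}$ as defined in \eqref{e:Q tilde}. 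That is exactly the claimed inequality.

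The main care point is justifying that $(\mathbf{I}-\mathbf{Q}_{\mathscr{D}\setminus\mathscr{L},\mathscr{D}\setminus\mathscr{L}})^{-1}$ exists and is nonnegative, already used in Lemma~\ref{l:radius 1}. I would argue this from connectedness and the presence of the non-defaulting bank B$_j$. B$_j$ pays in full, so it lies outside $\mathscr{D}$ and, by connectedness, some row sum of the substochastic block is strictly below one along every path. Hence the spectral radius is below $1$ and the Neumann series converges entrywise nonnegatively. The sign conventions also need checking: the statement has $+\mathbf{Q}(\mathbf{I}-\mathbf{Q})^{-1}$ multiplying $(\ell-\theta y)$ on the left, which matches moving the $-(\theta y-\ell)=+(\ell-\theta y)$ term of $\mathbf{D}_{\mathscr{D}\setminus\mathscr{L}}$ across. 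Strictness is inherited from the strict liquidation inequality \eqref{eq:liquidation}.
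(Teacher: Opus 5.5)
Your proposal is correct and takes essentially the same route as the paper. Via your identity $R_{i\leftleftarrows}=F-\sum_k Q_{ik}D_k$, the strict liquidation inequality for banks in $\mathscr{L}$ is exactly the paper's starting condition $F<\sum_k Q_{ik}D_k-\theta y+\ell$; from there you both split into blocks using $D_k=0$ off $\mathscr{D}$ and $D_k=F$ on $\mathscr{L}$, substitute Lemma~\ref{l:radius 1}, and rearrange. The one addition is your strong-connectedness argument for invertibility and nonnegativity of $(\mathbf{I}-\mathbf{Q}_{\mathscr{D}\setminus\mathscr{L},\mathscr{D}\setminus\mathscr{L}})^{-1}$, which the paper leaves implicit at this step.
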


\begin{proof}
Banks that are liquidated repay zero  (equation \eqref{eq:optimal_repayment_cases}), so
\begin{align}
    F &< \sum_{k=1}^N    Q_{ik}D_{k}    -\theta y   +\ell
    \\
    &=  \sum_{k\in \mathscr{L}} Q_{ik}D_{k}+
                    \sum_{k\in \mathscr{D}\backslash \mathscr{L}} Q_{ik}D_{k}
                    +
                    \sum_{k\notin \mathscr{D}} Q_{ik}D_{k}-\theta y+\ell
                \\
        &= \sum_{k\in \mathscr{L}} Q_{ik}D_{k}+
                    \sum_{k\in \mathscr{D}\backslash \mathscr{L}} Q_{ik}D_{k}-\theta y+\ell   \\
            &=\sum_{k\in \mathscr{L}} Q_{ik}F+
                    \sum_{k\in \mathscr{D}\backslash \mathscr{L}} Q_{ik}D_{k}-\theta y+\ell ,
\end{align}
having used that shortfall is zero for banks that do not default ($D_k = 0$ for $k \not \in \mathscr{D})$ and $F$ for those that are liquidated ($D_k = F$ for $k \in \mathscr{L}$).
The above can be re-written in block-matrix notation,
 $
    F\bm{1}< \mathbf{Q}_{\mathscr{L},\mathscr{D}\backslash \mathscr{L}} \mathbf{D}_{\mathscr{D}\backslash \mathscr{L}}
    +\mathbf{Q}_{\mathscr{L}, \mathscr{L}} F\bm{1}- (\theta y -\ell) \bm{1}$, so the expression for $\mathbf{D}_{\mathscr{D}\backslash \mathscr{L}}$ from Lemma \ref{l:radius 1} can be substituted in to get:
\begin{equation}
    \mathbf{Q}_{\mathscr{L}, \mathscr{D}\backslash \mathscr{L} } (\mathbf{I}-\mathbf{Q}_{\mathscr{D}\backslash \mathscr{L}, \mathscr{D}\backslash \mathscr{L}})^{-1}  (\mathbf{Q}_{ \mathscr{D}\backslash \mathscr{L},\mathscr{L}}F\bm{1}- (\theta y  - \ell) \bm{1})
    +\mathbf{Q}_{\mathscr{L}, \mathscr{L}} F\bm{1}- (\theta y -\ell)  \bm{1}>F\bm{1} .
\end{equation}
Rearranging the above gives the expression in the lemma.
\end{proof}

Now we compute a bound on the harmonic distance $d$.
First we use the definition of $d$ (Definition \ref{d:HD}) to write in block matrix form:
\begin{equation} \label{e:block HD}
\renewcommand\arraystretch{1.5}\left\{
\begin{array}{l}
    \textbf{d}_{j \to \mathscr{L}}    =\textbf{1}+\textbf{Q}_{\mathscr{L},\mathscr{L}}\textbf{d}_{j \to \mathscr{L}}+
    \textbf{Q}_{\mathscr{L},\mathscr{D}\backslash\mathscr{L}}\textbf{d}_{j \to \mathscr{D}\backslash\mathscr{L}},
    \\
    \textbf{d}_{j \to \mathscr{D}\backslash\mathscr{L}}    =\textbf{1}+\textbf{Q}_{\mathscr{D}\backslash\mathscr{L},\mathscr{L}}\textbf{d}_{j \to \mathscr{L}}+
    \textbf{Q}_{\mathscr{D}\backslash\mathscr{L},\mathscr{D}\backslash\mathscr{L}}\textbf{d}_{j \to \mathscr{D}\backslash\mathscr{L}},
    \end{array}
\right.
\end{equation}
where $\textbf{d}_{j \to \mathscr{L}}$ and  $\textbf{d}_{j \to \mathscr{D}\backslash\mathscr{L}}$ are vectors that capture the harmonic distances from B$_j$ to each of (i) the liquidated and (ii) the defaulting but not liquidated banks, respectively (cf.\ equations (B19) and (B20) in \citetalias{AOT}). (NB: As, by hypothesis, B$_j$ is the only not-shocked bank, there are no additional terms to not defaulting banks.)

Solving for the system in equation \eqref{e:block HD}---solving for  $\textbf{d}_{j \to \mathscr{D}\backslash\mathscr{L}}$ in the second equation and substituting it into the first---gives
\begin{equation}
     \textbf{d}_{j \to \mathscr{L}}    =\textbf{1}+\textbf{Q}_{\mathscr{L},\mathscr{L}}\textbf{d}_{j \to \mathscr{L}}+
    \textbf{Q}_{\mathscr{L},\mathscr{D}\backslash\mathscr{L}} (\textbf{I}-\textbf{Q}_{\mathscr{D}\backslash\mathscr{L},\mathscr{D}\backslash\mathscr{L}})^{-1}(\textbf{1}+\textbf{Q}_{\mathscr{D}\backslash\mathscr{L},\mathscr{L}}\textbf{d}_{j \to \mathscr{L}})
\end{equation}
or, given the definition of $ \mathbf{\tilde Q}$ in equation \eqref{e:Q tilde},
 $
     \mathbf{\tilde{Q}}\textbf{d}_{j \to \mathscr{L}} =(\textbf{I}+\textbf{Q}_{\mathscr{L},\mathscr{D}\backslash\mathscr{L}} (\textbf{I}-\textbf{Q}_{\mathscr{D}\backslash\mathscr{L},\mathscr{D}\backslash\mathscr{L}})^{-1})\textbf{1}.  $
From  here, we can use Lemma \ref{l:radius 2} to write
\begin{equation}
    \mathbf{\tilde{Q}}\textbf{d}_{j \to \mathscr{L}}
       > \mathbf{\tilde{Q}}\frac{F}{\ell-\theta y}\textbf{1}.
\end{equation}
 As  $\mathbf{\tilde Q}$ is non-singular M-matrix, its inverse exists and is elementwise non-negative,\footnote{The result follows Theorem 2 of \cite{Plemmons-1977} and exercise 5.8 of \citeauthor{Berman-Plemmons-1979} (\citeyear{Berman-Plemmons-1979}, p.\ 159) given that  $\mathbf{\tilde Q}$ is the Schur complement of the non-singular $M$-matrix
    $$
    \begin{bmatrix}
    \textbf{I} - \mathbf{Q}_{\mathscr{D}\backslash \mathscr{L},\mathscr{D}\backslash \mathscr{L}} & -\mathbf{Q}_{\mathscr{D}\backslash \mathscr{L}, \mathscr{L}} \\
    -\mathbf{Q}_{ \mathscr{L},\mathscr{D}\backslash \mathscr{L}} & \textbf{I} -\mathbf{Q}_{\mathscr{L}, \mathscr{L}}
    \end{bmatrix}.
    $$
    }
 this says that if B$_i$ is liquidated, then
    \begin{equation}
   {d}_{j \to i}  \geq \frac{F}{\ell - \theta y}
   \equiv d^{LT},
    \end{equation}
    per the definition of $d^{LT}$ in the proposition.  That is the desired result.

\textbf{Statement (ii).}
    As, by hypothesis, no bank is liquidated, we have that for any B$_i$ $D_{i}<F_{i\rightrightarrows}$, from the definition of the shortfall $\mathbf{D}$.
        Thus, from the equilibrium equation for the shortfall (Lemma \ref{lm:shortfall} with $\alpha = 1$) and the observation that no shocked bank repays in full, $D_i > 0$ (equation \eqref{eq:optimal_repayment_cases} given the assumption that banks have zero net positions), we have
\begin{equation} \label{e:D is HD}
    D_{i}=\sum_{k\neq i}Q_{ik}D_{k}+\ell-\theta y    ,
\end{equation}
for any B$_i$ for $i \neq j$, where, remember, B$_j$ is the not-shocked bank. Dividing both sides of equation \eqref{e:D is HD} by $\ell - \theta y$ says that $D_i/(\ell - \theta y)$ solves $x_i = 1 + \sum_{k \neq i} Q_{ik} x_k$ for all $i$. By the definition (and uniqueness) of the harmonic distance, that implies that
    \begin{equation}
        d_{j \to i} = \frac{D_i}{\ell - \theta y}.
    \end{equation}
As $D_i < F$ by hypothesis,  $d_{j\to i} < F/(\ell - \theta y) \equiv d^{LT},$ as desired.  \qed

\subsection{Proof of Proposition \ref{p:bottleneck} (\nameref{p:bottleneck})}

We prove the two statements of the proposition sequentially. Statement (i) relies on Lemma \ref{l:beta and d} above (which, recall, depends only on the network structure, not the maturity of debt despite being stated within the benchmark of Section~\ref{s:ST}). Statement (ii) instead uses a direct cut-payment argument.

\textbf{Proof of statement (i).}   For $\beta > \beta^{LT} \equiv 4 \sqrt{ \frac{\ell - \theta y }{N F} } $, we have, from Lemma \ref{l:beta and d}, that
    \begin{equation}
        d_{j \to i} \leq \frac{16}{N \beta^2} < \frac{ 16}{N (\beta^{LT})^2}
            = \frac{F}{\ell - \theta y} \equiv d^{LT},
    \end{equation}
    per the definition of $d^{LT}$ in Proposition \ref{p:radius}. That result implies that when  B$_j$ is not shocked, then B$_i$ is not liquidated.

\textbf{Proof of statement (ii).} Let $\mathscr B$ be a nonempty proper subset attaining the minimum in Definition~\ref{d:bottleneck}. Since the network is symmetric, replacing $\mathscr B$ with $\mathscr B^c$ leaves the cut value unchanged, so choose the minimizing side such that the unique not-shocked bank B$_j$ is not in $\mathscr B$. Thus every bank in $\mathscr B$ is shocked.

Suppose, in anticipation of a contradiction, that no bank is liquidated. A shocked bank cannot repay in full: By symmetry and regularity, it receives at most its total interbank claims $F$, while $\ell>\theta y$ means that repaying its total liabilities $F$ would require an incoming payment strictly greater than $F$. Hence every bank in $\mathscr B$ defaults without being liquidated, and equation~\eqref{e:total repayment} implies
\begin{equation}
    R_{i\leftleftarrows}-R_{i\rightrightarrows}=\ell-\theta y
    \quad\text{for every }i\in\mathscr B.
\end{equation}
Summing over $\mathscr B$, cancelling payments internal to $\mathscr B$, using $R_{k\to i}\leq F_{k\to i}$, and then using symmetry, regularity, and the definition of $\beta$ gives
\begin{align}
    (\ell-\theta y)|\mathscr B|
    &=\sum_{i\in\mathscr B}\sum_{k\in\mathscr B^c}
    \big(R_{k\to i}-R_{i\to k}\big)\\
    &\leq\sum_{i\in\mathscr B}\sum_{k\in\mathscr B^c}F_{k\to i}\\
    &=F\beta|\mathscr B||\mathscr B^c|\\
    &\leq FN\beta|\mathscr B|.
\end{align}
But $\beta<\beta_{LT}\leq(\ell-\theta y)/(2NF)$ makes the last expression strictly less than $(\ell-\theta y)|\mathscr B|/2$, a contradiction. Thus at least one bank in $\mathscr B$ is liquidated. \qed

\subsection{Proof of Lemma~\ref{l:constrained efficiency} (\nameref{l:constrained efficiency})}

The argument is in the text. \qed

\subsection{Proof of Lemma~\ref{l:high debt} (\nameref{l:high debt})}

    We allow bank heterogeneity by indexing $\ell$ and $y$ with subscripts since the lemma is invoked in the proof of Proposition \ref{p:exponential_implement_greedy}.
    We show that if a not-shocked bank, say B$_i$, makes net payment less than $\theta y_i$ no matter how high $\alpha$ is, it is impossible  that another bank, say B$_j$, is liquidated.
        From equation \eqref{eq:optimal_repayment_cases} with proper indexing,
         the not-shocked bank's net payment is
        \begin{equation}
         R_{i\rightrightarrows}-R_{i\leftleftarrows}
         =
        \left\{
            \renewcommand\arraystretch{1.2}
            \begin{array}{cl}
            \theta y_i & \text{if defaults,} \\
                \alpha F_{i\rightrightarrows}-R_{i\leftleftarrows} & \text{otherwise.}
            \end{array}
            \right.
        \end{equation}

        Since the network is connected and has zero net positions, it is strongly connected,\footnote{To see this, contract each strongly connected component into one node.  If there were more than one component, the resulting finite directed acyclic graph would have a source component $\mathscr C$.  No liabilities enter $\mathscr C$, while zero net positions, summed over its banks, imply that total liabilities leaving $\mathscr C$ equal total liabilities entering it and hence are also zero.  This contradicts (weak) connectivity unless $\mathscr C$ is the whole network.} meaning there is a directed path
            \(
            i=i_0\to i_1\to\cdots\to i_m=j
            \)
        with $F_{i_t\to i_{t+1}}>0$ on every edge.

        Next we show along this path, if $R_{i_{t-1}\to i_t}\geq\alpha a_t-b_t$ for some $a_t>0$ and finite $b_t\geq0$ independent of $\alpha$, then $R_{i_t\to i_{t+1}}\geq\alpha a_{t+1}-b_{t+1}$, for some $a_{t+1}>0$ and finite $b_{t+1}\geq0$ independent of $\alpha$. To see this, set $a_1:=F_{i_0\to i_1}$ and $b_1:=0$, and recursively
                \[
                \begin{aligned}
                a_{t+1}
                &:=\hat F_{i_t\to i_{t+1}}
                \min\left\{F_{i_t\rightrightarrows},a_t\right\}>0,
                \\
                b_{t+1}
                &:=\hat F_{i_t\to i_{t+1}}
                \left[b_t+\ell_{i_t}\sigma_{i_t}-\theta_{i_t}y_{i_t}\right]^+<\infty.
                \end{aligned}
                \]

        Since B$_{i_0}$ repays in full,
                \(
                    R_{i_0\to i_1}=\alpha F_{i_0\to i_1}.
                \)
         We now apply the repayment equation successively along the path.  If $R_{i_{t-1}\to i_t}\geq\alpha a_t-b_t$ for some $a_t>0$ and finite $b_t\geq0$, then
                \begin{align*}
                R_{i_t\to i_{t+1}}
                &=\hat F_{i_t\to i_{t+1}}R_{i_t\rightrightarrows} \\
                &\geq
                \hat F_{i_t\to i_{t+1}}
                \left[
                \min\left\{
                \alpha F_{i_t\rightrightarrows},
                \alpha a_t-b_t+\theta_{i_t}y_{i_t}-\ell_{i_t}\sigma_{i_t}
                \right\}
                \right]^+ \qquad \text{since } R_{i_{t-1}\to i_t}\geq\alpha a_t-b_t\\
                &\geq\alpha a_{t+1}-b_{t+1} \qquad \text{since $[\min\{\alpha u,\alpha v-c\}]^+\geq\alpha\min\{u,v\}-[c]^+$ for $u,v\geq0$}.
                \end{align*}
         Induction therefore implies $R_{i_{m-1}\to j}\geq\alpha a_m-b_m$.  For sufficiently large $\alpha$, this exceeds B$_j$'s finite liquidity deficit, contradicting its liquidation.  Taking the largest of these finitely many thresholds over ordered pairs $(i,j)$ and shock profiles $\bm{\sigma}$ proves that, if any bank is liquidated, every not-shocked bank makes net payment $\theta y_i$.

\subsection{Proof of Corollary \ref{c:high debt} (\nameref{c:high debt})}

Suppose (in anticipation of a contradiction) that at least one  bank is liquidated. From Lemma \ref{l:high debt} and equation \eqref{eq:optimal_repayment_cases}, we know that for each B$_i$ the net payment is
\begin{equation}
    R_{i \rightrightarrows}- R_{i \leftleftarrows}
    \renewcommand\arraystretch{1.5}
    \left\{
    \begin{array}{cl}
        = \theta y - \ell \sigma_i & \text{if B$_i$ is not liquidated}, \\
        > \theta y - \ell \sigma_i & \text{if B$_i$ is liquidated.}
    \end{array}
    \right.
    \end{equation}
Combining market clearing (equation \eqref{eq:payment_clearing}) with the expression above, we have that
\begin{equation}
0 =  \sum  \big(  R_{i \rightrightarrows} - R_{i \leftleftarrows} \big) > \sum \big( \theta y - \ell  \sigma_i \big)
= N \theta y - S \ell ,
\end{equation}
contradicting the hypothesis that $N \theta y > S \ell $. Therefore no bank can be liquidated, as desired. \qed

\subsection{Proof of Lemma~\ref{lm:exponentially_dominated} (\nameref{lm:exponentially_dominated})}
Use the pro rata condition that $R_{i \to j} = \hat F_{i \to j } R_{i \rightrightarrows}$  for any $j$ (including $j = i^*$), by Definitions~\ref{d:assortativity} and~\ref{d:s-dominance}, after relabeling banks according to their common hierarchy,
    \begin{align}
    R_{i\to j}
    &=\hat F_{i\to j}R_{i\rightrightarrows}\\
    &\leq s^{j-i^*}\hat F_{i\to i^*}R_{i\rightrightarrows}\\
    &=s^{j-i^*}R_{i\to i^*}.
    \end{align}
    Since \(B_{i^*}\) is liquidated, \(R_{i^*\to j}=0\). Therefore,
        \begin{align}
        R_{j\leftleftarrows}
        &=\sum_{i\neq j,i^*}R_{i\to j}\\
        &\leq s^{j-i^*}\sum_{i\neq j,i^*}R_{i\to i^*}\\
        &\leq s^{j-i^*}R_{i^*\leftleftarrows}.
        \end{align}
          \qed

\subsection{Proof of Lemma~\ref{lm:total_liquidity} (\nameref{lm:total_liquidity})}

    Since liquidated banks make zero repayments, $R_{i \rightrightarrows} = 0$ for $i \in \mathscr{L}$, equation \eqref{eq:optimal_repayment_cases} implies that each receives payment $R_{i \leftleftarrows} < \ell - \theta y.$
    Applying this to B$_{i^*}$, the highest-ranked liquidated bank, and using Lemma~\ref{lm:exponentially_dominated}, we have the following:
    \begin{align}
        \sum_{i\in\mathscr{L}}R_{i\leftleftarrows}&\leq \sum_{i\in\mathscr{L}} s^{i-i^*} R_{i^*\leftleftarrows}
        \\
        &\leq R_{i^*\leftleftarrows}\sum_{i=0}^{\infty} s^{i}
        \\
        &=R_{i^*\leftleftarrows}\frac{1}{1-s}\\
        &<\frac{\ell-\theta y}{1-s}. \label{e:lemma 9 proof}
    \end{align}
\qed

\subsection{Proof of Proposition \ref{p:constrained efficiciency} (\nameref{p:constrained efficiciency})}

As the exponential network is connected, we know from Lemma \ref{l:constrained efficiency} and Corollary \ref{c:high debt} that if $L^* = 0$ then no bank is liquidated as long as  $\alpha$ is high.
     Hence we focus on the case in which at least one bank is liquidated in the constrained-efficient outcome, $L^* \geq 1$.

    Recall that in this case it suffices to show the following:

        \begin{itemize}

            \item[(i)] Each bank that is not liquidated makes the maximum net payment it can without being liquidated, $R_{i\rightrightarrows} - R_{i \leftleftarrows} = \theta y - \ell \sigma_i$, for B$_i$ not liquidated.

            \item[(ii)] The banks that are liquidated receive a total net payment that would be insufficient to save any one of them, $-\sum_{i \in \mathscr{L}} \big(R_{i\rightrightarrows} - R_{i \leftleftarrows}\big) < \ell -\theta y $.

        \end{itemize}

        \noindent The first property follows from Lemma \ref{l:high debt} and equation \eqref{eq:optimal_repayment_cases}.

        The second property follows from two steps. The first is to use Lemma \ref{lm:total_liquidity} and the definition of $s^*$ to bound the liquidated banks' net payment in terms of $L^*$:
        \begin{align}
        -\sum_{i \in \mathscr{L}} \big(R_{i\rightrightarrows} - R_{i \leftleftarrows}\big)
        &< \frac{\ell - \theta y}{1 - s}
        \\
        & \leq \frac{\ell - \theta y}{1 -s^*}
        \\
        &= N \theta y - S \ell + (1 +L^*)(\ell - \theta y) . \label{e:net payments bound}
        \end{align}
        The second step is to use market clearing, $\sum_{i=1}^N \big(R_{i\rightrightarrows} - R_{i \leftleftarrows}\big)  = 0$ by equation \eqref{eq:payment_clearing}, to write  the LHS above in terms of the number of liquidated banks $L$, using that (i) not-shocked banks make net payment $\theta y$ (by Lemma~\ref{l:high debt})  and (ii) shocked, not-liquidated banks make net payment $\theta y - \ell $ (by equation~\eqref{eq:optimal_repayment_cases}):
        \begin{align} \label{eq:ce_insufficient_liquidity}
        - \sum_{i \in \mathscr{L}} \big(R_{i\rightrightarrows} - R_{i \leftleftarrows}\big)
         &= \sum_{i \in \mathscr{L}^c} \big(R_{i\rightrightarrows} - R_{i \leftleftarrows}\big)
         \\
         &=
           \
            \sum_{i\in \mathscr{L}^c\, : \, \sigma_i = 0}
         \big(R_{i\rightrightarrows} - R_{i \leftleftarrows}\big)
         \ +
         \sum_{i \in \mathscr{L}^c\, : \, \sigma_i = 1}
         \big(R_{i\rightrightarrows} - R_{i \leftleftarrows}\big)
         \\
         &= (N-S) \theta y + (S - L) ( \theta y - \ell).
        \end{align}
        Combining this with the bound in equation \eqref{e:net payments bound} and canceling terms says $L < 1 + L^*$. As $L$ and $L^*$ are integers, and $L^* \leq L$ by Lemma \ref{l:constrained efficiency}, it must be that $L = L^*$.

         (The assumption that $\frac{S \ell - N \theta y}{\ell - \theta y}$ not be an integer was required for $s^* > 0$ and thus for the exponential network with base $s < s^*$ to be well defined.)
        \qed

\subsection{Proof of Proposition \ref{p:exponential almost} (\nameref{p:exponential almost})}

With the weaker notion of efficiency, we need to show only that the banks that are liquidated  receive a total net payment  insufficient to save any \emph{two} of them, $-\sum_{i \in \mathscr{L}} \big(R_{i\rightrightarrows} - R_{i \leftleftarrows}\big) < 2\big(\ell -\theta y\big)$. Given the proof of Proposition \ref{p:constrained efficiciency}, that is all we need to show.
As in that proof, it follows Lemma \ref{lm:total_liquidity} along with the definitions of $s^* (=1/2)$ and $L^*$:
        \begin{align}
        -\sum_{i \in \mathscr{L}} \big(R_{i\rightrightarrows} - R_{i \leftleftarrows}\big)
        & < \frac{\ell - \theta y}{1 -s^*}
        \\
        &= 2 (\ell - \theta y)
        \end{align}
    as desired. \qed

\subsection{Proof of Lemma~\ref{lm:inefficiency other networks} (\nameref{lm:inefficiency other networks})}

 Suppose (in anticipation of a contradiction) that  a fully connected network $\mathbf{F}$ achieves constrained efficiency, i.e.\ that the number of liquidated banks is $L^* = \frac{S\ell-N\theta y}{\ell-\theta y}$, having used the assumptions that $\frac{S\ell-N\theta y}{\ell-\theta y}$ is an integer and that $S \ell > N \theta y$ in conjunction with the definition of $L^*$ (equation \eqref{e:L*}).

 As each bank that is not shocked pays at most $\theta y$ and each that is shocked but not liquidated pays exactly  $\theta y-\ell  \sigma_i$ (equation \eqref{eq:optimal_repayment_cases}), we can use market clearing to bound the total payment to the liquidated banks as follows:
    \begin{align}
    - \sum_{i \in \mathscr{L}} \big( R_{i \rightrightarrows} - R_{i \leftleftarrows}
    \big)
    &= \sum_{i \not \in \mathscr{L}} \big( R_{i \rightrightarrows} - R_{i \leftleftarrows}
    \big)
    \\
    &\leq \sum_{i \not \in \mathscr{L}} (\theta y - \sigma_i \ell )
    \\
    &= (N - S) \theta y + (S - L^*) ( \theta y - \ell )
    \\
    &= 0.
    \end{align}
    I.e.\ liquidated banks receive no (positive) payment.

    But shocked, not-liquidated banks must receive positive payment (otherwise they would be liquidated by equation \eqref{eq:optimal_repayment_cases}).
    Given the hypothesis that the network is fully connected, that contradicts the assumption that payments are pro rata (equation \eqref{eq:pro_rata}): Any bank that has debt to a not-liquidated bank must have debt to a liquidated bank too and it cannot make a positive payment to one but not the other. \qed

\subsection{Proof of Proposition \ref{p:exponential_not_pairwise} (\nameref{p:exponential_not_pairwise})}

Since the $\alpha$ is sufficiently large, all not-shocked banks default and obtain a payoff of $(1-\theta)y$, and if exactly one of B$_{N-1}$ and B$_N$ is shocked, it is liquidated. We show they can deviate to large mutual debts which help them avoid liquidation when one of them is shocked and do not affect their payoff when both or neither of them is shocked.

Let
\(
    O_i=\alpha\sum_{k\notin\{N-1,N\}}F_{i\to k}
\) for $i\in\{N-1,N\}$
denote the aggregate liabilities of B\(_i\) to banks outside the pair. Consider a deviation $F'_{N-1\to N}=F'_{N\to N-1}=F$ such that
\[
    F>
    \max_{i\in\{N-1,N\}}
    \left\{
        \alpha F_{N-1\to N},
        \alpha F_{N\to N-1},
        \theta y,
        \frac{(\ell-\theta y) O_i}{2\theta y-\ell}
    \right\}.
\]

Consider first a state in which only one of the two banks, B$_i$, is shocked and the other bank B\(_j\) is not shocked. Since \(L^*\geq1\), B\(_i\) is liquidated and  obtains a  payoff of zero.

After the deviation, B\(_j\)'s total liabilities are \(F+O_j>\theta y\). And it pays at least $\frac{F}{F+O_j}\theta y>\ell-\theta y$ to B$_i$
where the strict inequality follows from \(F(2\theta y-\ell)>(\ell-\theta y) O_j\). Thus, B\(_i\) avoids liquidation. The not-shocked bank B\(_j\) is not worse off in this state since it obtains at least $(1-\theta)y$.  The same argument applies symmetrically when B\(_j\) is shocked and B\(_i\) is not.

If neither bank in the pair is shocked, both obtain at least the original payoff $(1-\theta)y$.

When both are shocked, if \(L^*\geq2\), both are liquidated, so neither can be made worse off. Suppose instead \(L^*=1\). Then B\(_{N-1}\) is saved and B\(_N\) is liquidated in the original equilibrium. We claim that B\(_{N-1}\) still survives after the deviation. If B\(_{N-1}\) were liquidated, then B\(_N\)'s only incoming payments would come from banks outside the pair. Given the exponential structure, \(R'_{N\leftleftarrows}\leq s R'_{N-1\leftleftarrows}<\ell-\theta y\), so B\(_N\) would be liquidated too. But once B\(_{N-1}\) and B\(_N\) are both liquidated, neither of them makes any payment, and the debts between every other pair remain unchanged, so the equilibrium payment vector is also the solution to the payment equation before deviation. But this contradicts that all payment equilibria induce the same set of liquidated banks (see the proof of Proposition~\ref{p:existence}): The constructed payment vector is an equilibrium of the pre-deviation network in which B$_{N-1}$ is liquidated, whereas in the original equilibrium it is not.

Since the shock distribution assigns positive probability to both states in which only one of B$_{N-1}$ and B$_N$ is shocked, their expected payoffs are strictly improved.

\subsection{Proof of Proposition \ref{p:star stable} (\nameref{p:star stable})}
We first present the equilibrium of the star network for different values of $F$.
\begin{lemma}
    Suppose $1<S<N$ and $\ell-\theta y\leq (N-S)\theta y$ so that the system has enough liquidity to save at least one bank.\footnote{We only need this for the case when the core bank is shocked.} Let $\bar{F}:=\frac{N}{S}\theta y$, $\bar{F}^c:=\frac{N\theta y-\ell}{S-1}$,  $\underline{F}^c:=\frac{\ell-\theta y}{N-S}$ and $\underline{F}:=\frac{\theta y}{S}$.
    \begin{itemize}
        \item When only peripheral banks are shocked, all of them are liquidated. The core bank defaults if $F>\underline{F}$ and the not-shocked peripheral banks default if $F>\bar{F}$.
        \item When the core bank is also shocked, all the shocked peripheral banks are liquidated. The core bank is liquidated if $F<\underline{F}^c$. The not-shocked banks default when $F> \bar{F}^c.$
    \end{itemize}
\end{lemma}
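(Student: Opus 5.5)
The approach is guess-and-verify. In each state I conjecture a liquidation/default pattern, solve the resulting piecewise-linear clearing system in closed form, and check that every bank's repayment lies in the regime of equation~\eqref{eq:optimal_repayment_cases} that was assumed. Proposition~\ref{p:existence} then does the rest: the equilibrium is generically unique, and at the knife edge all equilibria are outcome-equivalent, so the verified candidate pins down who is liquidated and who defaults.

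\textbf{Reduction by symmetry.} In the star, the core (call it B$_c$) owes $F$ to each periphery bank and each periphery bank owes $F$ only to the core. So $F_{c\rightrightarrows}=(N-1)F$, and the core pays $x:=R_{c\rightrightarrows}/(N-1)$ to every periphery bank. A periphery bank B$_p$ pays $R_{p\rightrightarrows}=[\min\{F,\theta y-\ell\sigma_p+x\}]^+$, all of it to the core. Within a state, all shocked periphery banks face the same equation, and so do all not-shocked ones. The system therefore collapses to at most three unknowns: $x$, a common repayment $R_s$ of shocked periphery banks, and a common repayment $R_n$ of not-shocked periphery banks. Every shocked periphery bank receives the same $x$, so they are saved or liquidated together.

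\textbf{Shocked periphery banks are liquidated.} I would prove this by an aggregate liquidity count. If all $S$ shocked banks survived, summing net payments and using market clearing (as in the proof of Corollary~\ref{c:high debt}) would require $S(\ell-\theta y)\le$ the total net provision of the other banks, which is at most $(N-S)\theta y$. That means $S\ell\le N\theta y$. I would invoke the standing assumption $S\ell>N\theta y$, which is implied by the hypothesis $S\ell>(2N-S)\theta y$ of Proposition~\ref{p:star stable} where the lemma is used; this is the point I would flag explicitly. The same count applies when the core is also shocked: the shocked periphery banks (now $S-1$ of them) would need $x\ge\ell-\theta y$. In that case the core must additionally cover its own shock, and I would check that $x<\ell-\theta y$ given the hypotheses.

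\textbf{Solving and reading off thresholds.} With $R_s=0$, the remaining system is two-dimensional.

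When the core is not shocked, its repayment is
\begin{equation*}
R_{c\rightrightarrows}=\min\{(N-1)F,\;\theta y+(N-S-1)R_n\},
\end{equation*}
and each not-shocked periphery bank repays $R_n=\min\{F,\;\theta y+x\}$. Inspecting the corners gives the thresholds. The core repays in full only if its inflow covers $(N-1)F$. Its $S$ liquidated counterparties pay nothing while absorbing $SF$ of its promised payments, so the core defaults exactly when $SF>\theta y$, i.e.\ $F>\underline F$. Similarly, a not-shocked periphery bank defaults when $\theta y+x<F$. Substituting the core's default payment $x=\big(\theta y+(N-S-1)R_n\big)/(N-1)$ gives the cutoff $F>\bar F=N\theta y/S$.

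When the core is shocked, it survives iff $\theta y-\ell+(N-S)R_n\ge0$. If it defaults, this quantity equals its repayment, so $x=\big(\theta y-\ell+(N-S)R_n\big)/(N-1)$. With $R_n=\min\{F,\theta y+x\}$, the core is liquidated iff $(N-S)F<\ell-\theta y$, i.e.\ $F<\underline F^c$. The not-shocked periphery banks default iff $F$ exceeds the fixed point of $F=\theta y+x$, which solves to $\bar F^c=(N\theta y-\ell)/(S-1)$. The assumption $\ell-\theta y\le(N-S)\theta y$ ensures that this regime is nonempty.

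\textbf{Main obstacle.} The hard step is the case bookkeeping: checking that in each $F$-range the conjectured regime is self-consistent and that no other regime combination solves the system. I would handle this by ordering the thresholds and, for each range, verifying the candidate's inequalities directly. Rather than ruling out other regimes by hand, I would lean on uniqueness from Proposition~\ref{p:existence}: once one consistent candidate is found, it is the equilibrium outcome.
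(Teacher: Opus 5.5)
Your proposal is correct and follows essentially the paper's route: it reduces the star by symmetry, gets the shocked periphery banks' liquidation from the aggregate liquidity shortage, and reads off $\underline F$, $\bar F$, $\underline F^c$, $\bar F^c$ by evaluating the relevant bank's resources at the verge of default or liquidation. You are right to flag that "all shocked periphery banks are liquidated" needs $S\ell>N\theta y$, which is missing from the lemma's stated hypotheses; the paper's proof uses it implicitly (``given insufficient liquidity in the network''), and it follows from the condition $S\ell>(2N-S)\theta y$ of Proposition~\ref{p:star stable}.
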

\begin{proof}
    The liquidation of the peripheral banks comes from the symmetry argument given insufficient liquidity in the network.

    The not-shocked peripheral banks default only if the core defaults on them and the default is large enough. When the core is not shocked, it defaults iff its payment at the verge of default is weakly less than its liabilities
    \begin{equation}
        R_{1\rightrightarrows}=(N-S-1)F+\theta y\leq (N-1)F \iff F\geq \underline{F}.
    \end{equation}
    The peripheral banks default iff
    \begin{equation}
        \frac{R_{1\rightrightarrows}}{N-1}+\theta y< F\iff F> \bar{F}.
    \end{equation}

    When the core is  shocked, it is liquidated if it doesn't receive enough liquidity at the verge of liquidation
    \begin{equation}
        (N-S)\min\{F,\theta y\}< \ell-\theta y \iff F< \underline{F}^c \quad \text{ given }\quad \ell-\theta y\leq (N-S)\theta y.
    \end{equation}
    As long as the not-shocked peripheral banks repay in full, and hence at the verge of their default, the core pays $R_{1\rightrightarrows}=(N-S)F+\theta y-\ell$ once it survives.
    The not-shocked peripheral banks default if
    \begin{equation}
        \frac{R_{1\rightrightarrows}}{N-1}+\theta y< F \iff F> \bar{F}^c.
    \end{equation}
\end{proof}

    Without deviation, given that $F\geq \underline{F}^c> \underline{F}$, the core bank is never liquidated, but always defaults and obtains  $(1-\theta)y$ in every shock state. We show that the core bank can never obtain a higher payoff in every state, no matter the deviation, by showing that the maximum residual pledgeable assets (i.e., pledgeable assets minus liabilities assuming the core does not default) are negative.  Let's consider a deviation between the core bank B$_1$ and a peripheral bank B$_2$  to $F_{1\to2}$ and $F_{2\to1}$.     Denote the liquidity deficit $D:=\ell-\theta y$. The assumptions $S\ell>(2N-S)\theta y$ and $S>1$ imply
    \begin{equation}
        \frac{D}{N-S}>\frac{2\theta y}{S}\geq\frac{\theta y}{S-1}.
    \end{equation}
    \begin{itemize}
        \item Suppose first that the core is not shocked and B$_2$ is shocked.  If the core were solvent, it would pay $F_{1\to2}$ to B$_2$ and $F$ to each of the other $N-2$ peripheral banks. Hence the core's residual pledgeable assets  are at most
        \begin{align*}
            & \theta y+(F_{1\to2}-D)^+ +(S-1)(F-D)^+ +(N-S-1)F-F_{1\to2}-(N-2)F.\\
            = & \theta y+(F_{1\to2}-D)^+-F_{1\to 2}+(S-1)[(F-D)^+-F]
        \end{align*}
        When $F\leq D$, the expression is smaller than $\theta y-(S-1)F$\footnote{We used the (self-evident) auxiliary inequality $(x-y)^+-x<0\quad \forall x,y>0$ here and below.} which is negative since $F>\frac{\theta y}{S-1}$; when $F>D$, the expression is smaller than $\theta y-(S-1)D$ which is smaller than $0$ since $D>\frac{(N-S)\theta y}{S-1}\geq \frac{\theta y}{S-1}$.

        \item Next suppose both the core and B$_2$ are not shocked (only possible when $N-S\geq 2$). Since B$_2$ can repay at most $\theta y+F_{1\to2}$, the core's residual pledgeable assets are at most
        \begin{align*}
            &\theta y+(\theta y+F_{1\to2})+S(F-D)^+ +(N-S-2)F-F_{1\to2}-(N-2)F.\\
        = &2\theta y +S[(F-D)^+-F]
        \end{align*}
        When $F\geq D$, the expression equals $2\theta y-SD<0$ since $D>\frac{2\theta y}{S}(N-S)\geq\frac{2\theta y}{S}$; when $F<D$, the expression equals $2\theta y-SF<0$ since $F\geq \frac{D}{N-S}>\frac{2\theta y}{S}$.

        \item Now suppose both the core and B$_2$ are shocked. The core's residual pledgeable assets are at most
        \begin{align*}
            &\theta y-\ell+(F_{1\to2}-D)^+ +(S-2)(F-D)^+ +(N-S)F-F_{1\to2}-(N-2)F=\\
            =&-D+[(F_{1\to2}-D)^+-F_{1\to2}]+(S-2)[(F-D)^+-F]<-D<0.
        \end{align*}

        \item Finally, suppose the core is shocked and B$_2$ is not shocked. The core's residual pledgeable assets are at most
        \begin{align*}
            &\theta y-\ell+(\theta y+F_{1\to2})+(S-1)(F-D)^+ +(N-S-1)F-F_{1\to2}-(N-2)F.\\
            =&2\theta y-\ell +(S-1)[(F-D)^+-F]\\
            =& \{2\theta y+S[(F-D)^+-F]\}-\{\ell+[(F-D)^+-F]\}
        \end{align*}
        We have proven that the term in the first curly bracket is negative and that the term in the second, $\ell+[(F-D)^+-F]\geq \ell-D=\theta y>0$, is positive, so the entire expression is negative. Intuitively, being directly hit is worse than being defaulted on by a shocked bank.
        \end{itemize}

    Thus, after any possible core--periphery deviation, the core bank's payoff is at most its original payoff $(1-\theta)y$ in every shock state, and hence in expectation.

\subsection{Proof of Proposition~\ref{p:star still stable} (\nameref{p:star still stable})}
 The large-shock condition
    \(
        N\theta y<(N-1)\ell,
    \)
    is equivalent to $\ell-\theta y>\theta y/(N-1)$. Since $F=\ell-\theta y<\theta y/(N-2)\leq \theta y$, a not-shocked peripheral can repay $\ell-\theta y$ in full, which saves the shocked core. Since $\ell-\theta y>\theta y/(N-1)$, the core cannot save all $N-1$ shocked peripheral banks in the original star but it defaults. The core obtains $(1-\theta)y$ in every state without deviation. A peripheral bank, say, B$_2$, obtains a positive payoff only when it is the unique unshocked peripheral; in that state it pays $\ell-\theta y$ to the core, receives no repayment from the core, and obtains
    \(
        y-(\ell-\theta y).
    \)

    By symmetry, consider a deviation $F_{1\to2}$ and $F_{2\to1}$ between B$_1$ and B$_2$. We will show that i) the core bank B$_1$ can only strictly benefit in the state when it is not shocked, and it must be that $F_{1\to2}<\ell-\theta y$, and thus, B$_2$ remains liquidated whenever shocked; ii) for B$_1$ to strictly benefit in expectation, it must not be liquidated in the state when B$_2$ is not shocked, and it requires $F_{2\to 1}\geq \ell-\theta y$, but then B$_2$ cannot strictly gain in that state.

    First, suppose B$_1$ is the unique not-shocked bank. It is the only state in which B$_1$ can strictly gain.\footnote{After deviation, \(B_1\) need not have zero net position when shocked and may not default even when shocked. Nevertheless, only the unique not-shocked peripheral supplies external pledgeable liquidity, bounded by \(\theta y\); payments recycled through shocked peripherals cannot create positive net resources. For \(B_1\) to become solvent, this external liquidity must cover its liquidity shortfall and all \(N-2\) unchanged liabilities, requiring at least \((N-1)(\ell-\theta y)>\theta y\), which is impossible. Hence \(B_1\) cannot strictly gain whenever it is shocked.}  A non-deviating shocked peripheral can receive at most $\ell-\theta y$ from the core; even if it receives exactly $\ell-\theta y$, it only covers its liquidity shortfall and has no resources left to repay the core. Hence, if the core is solvent, its residual pledgeable assets are at most
    \[
        \theta y+(F_{1\to2}-(\ell-\theta y))^+ -F_{1\to2}-(N-2)(\ell-\theta y).
    \]
    If $F_{1\to2}\geq \ell-\theta y$, this upper bound is $\theta y-(N-1)(\ell-\theta y)<0$. If $F_{1\to2}<\ell-\theta y$, it is $\theta y-F_{1\to2}-(N-2)(\ell-\theta y)$. Thus the core can be strictly better than $(1-\theta)y$ in this state only if
    \[
        F_{1\to2}<\theta y-(N-2)(\ell-\theta y).
    \]
    This inequality implies $F_{1\to2}<\ell-\theta y$ because $\ell-\theta y>\theta y/(N-1)$.

    Second, suppose B$_2$ is the unique not-shocked peripheral. If $F_{2\to1}<\ell-\theta y$, then the core is liquidated and loses its baseline payoff $(1-\theta)y$. And it is no less than its possible gain $\theta y-(N-2)(\ell-\theta y)$. Hence, a deviation with $F_{2\to1}<\ell-\theta y$ cannot strictly raise the core's expected payoff given uniform distribution. Suppose instead $F_{2\to1}\geq\ell-\theta y$. If the core is liquidated in this state, then B$_2$ receives no payment from the core; B$_2$ either repays at least $\ell-\theta y$ and obtains a payoff at most $y-(\ell-\theta y)$, or defaults and obtains $(1-\theta)y<y-(\ell-\theta y)$. If the core is not liquidated, let $R_{1\to2}$ and $R_{2\to1}$ be the realized payments between B$_1$ and B$_2$. Since B$_1$ is shocked, any payment it makes to interbank creditors must come from resources left after covering the liquidity shortfall $\ell-\theta y$. The other shocked peripheral banks have no resources to repay the core. Therefore,
    \(
        R_{1\to2}\leq R_{2\to1}-(\ell-\theta y).
    \)
    If B$_2$ is solvent, its payoff is
    \[
        y+R_{1\to2}-R_{2\to1}\leq y-(\ell-\theta y).
    \]
    If B$_2$ defaults, it obtains $(1-\theta)y<y-(\ell-\theta y)$. Thus, B$_2$ cannot strictly gain in the state in which it is the unique not-shocked peripheral when $F_{2\to1}\geq \ell-\theta y$.

    Combining the two points above, any deviation that strictly benefits B$_1$ leaves B$_2$ weakly worse off in every shock state. Hence no bilateral deviation can strictly benefit both banks in expectation.

    Finally, we show B$_1$ can avoid default. Take the deviation $F_{1\to2}=0$ and $F_{2\to1}=\ell-\theta y$. In the state in which all peripheral banks are shocked, the core owes only $\ell-\theta y$ to each of B$_3,\ldots,$ B$_N$, repays all these liabilities in full, has residual pledgeable assets $\theta y-(N-2)(\ell-\theta y)>0$, and obtains more than $(1-\theta)y$. B$_2$ is weakly indifferent across states. Thus pairwise stability does not come from the core always defaulting after deviations.

\subsection{Proof of Proposition~\ref{p:complete stable large shock} (\nameref{p:complete stable large shock})}

Let
\[
    \bar F=\frac{N}{N-2}\theta y,
\]
and fix \(F\geq\bar F\), a shock distribution \(G\), and a bilateral deviation between B$_1$ and B$_2$. In any shock state with \(S=|\bm\sigma|\in\{1,\ldots,N-1\}\), the condition \(\ell-\theta y>(N-1)\theta y\) implies \(L^*=S\): even all \(N-S\) not-shocked banks together have at most \((N-S)\theta y\leq(N-1)\theta y<\ell-\theta y\), which is not enough to save one shocked bank. Hence every shocked bank is liquidated before and after the deviation.

Relabeling the two deviating banks if necessary, suppose \(F_{1\to2}\geq F_{2\to1}\) after the deviation.

In the original uniform complete network, every not-shocked bank defaults in such a state. Indeed, its incoming payment is at most \((N-S-1)F\), while its liabilities are \((N-1)F\), and
\(
    (N-1)F>\theta y+(N-S-1)F
\)
because \(SF\geq F\geq\bar F>\theta y\). Thus every not-shocked bank's net payment is \(\theta y\), and every shocked bank's payoff is zero.

Now suppose exactly one deviating bank, say B$_1$, is shocked. Since B$_1$ is liquidated, B$_1$ cannot strictly gain. We show that B$_2$ cannot strictly gain either. The deviation most favorable to B$_2$ in this state sets B$_2$'s liability to B$_1$ equal to zero; B$_1$'s liability to B$_2$ is irrelevant because B$_1$ is liquidated. Let \(U\) be the set of not-shocked banks outside the deviating pair, and let \(m=|U|=N-S-1\). If \(m>0\), let \(\bar R\) be the largest total repayment among banks in \(U\). A bank in \(U\) receives at most \(F\) from B$_2$ and at most \((m-1)\bar R/(N-1)\) from the other banks in \(U\). Hence
\[
    \bar R\leq \theta y+F+\frac{m-1}{N-1}\bar R,
    \qquad\text{so}\qquad
    \frac{\bar R}{N-1}\leq \frac{\theta y+F}{S+1}.
\]
Thus B$_2$'s incoming payment from \(U\) is at most \(m(\theta y+F)/(S+1)\). If B$_2$ defaults, its net payment is \(\theta y\), as in the original network. If B$_2$ is solvent, its net payment is at least
\[
    (N-2)F-\frac{N-S-1}{S+1}(\theta y+F).
\]
This lower bound is at least \(\theta y\) because
\[
    (N-2)F-\frac{N-S-1}{S+1}(\theta y+F)-\theta y
    =
    \frac{[S(N-1)-1]F-N\theta y}{S+1}
    \geq0,
\]
where the last inequality follows from \(F\geq\bar F=N\theta y/(N-2)\), since \(S(N-1)-1\geq N-2\). Hence B$_2$ cannot strictly gain. The same argument applies when B$_2$ is shocked and B$_1$ is not.

It remains to consider states in which both deviating banks have the same shock status. If both are shocked and at least one bank is not shocked, the argument above shows that both are liquidated. If all banks are shocked, aggregate payment clearing likewise rules out any non-liquidated bank, since such a bank would require a positive net inflow \(\ell-\theta y\) but there is no source of positive net outflow. Thus B$_1$ cannot gain when both deviating banks are shocked. Suppose instead that both are not shocked and that \(S\) banks outside the pair are shocked. If B$_1$ defaults after the deviation, its payoff is \((1-\theta)y\), equal to its original payoff when \(S\geq1\) and below its original payoff \(y\) when \(S=0\). If B$_1$ is solvent, it receives at most \(F_{2\to1}\) from B$_2$ and at most \(F\) from each of the \(N-S-2\) not-shocked banks outside the pair. Since its total liabilities are \((N-2)F+F_{1\to2}\), its net payment is at least
\[
    (N-2)F+F_{1\to2}-\big[(N-S-2)F+F_{2\to1}\big]
    =SF+F_{1\to2}-F_{2\to1}.
\]
For \(S\geq1\), this is at least \(F>\theta y\), so B$_1$'s payoff is no greater than its original payoff \((1-\theta)y\). For \(S=0\), it is at least \(F_{1\to2}-F_{2\to1}\geq0\), so B$_1$'s payoff is no greater than its original payoff \(y\). Thus B$_1$ is weakly worse off in every shock state. Therefore the deviation cannot strictly increase both banks' expected payoffs under any \(G\).

\subsection{Proof of Proposition \ref{p:star inefficient} (\nameref{p:star inefficient})}

Since the system doesn't have enough liquidity to save all shocked banks, when the core bank is not shocked, all shocked banks are peripheral, and they are all liquidated by symmetry. When the core bank is shocked, it is not possible for both the core and all peripheral banks to avoid liquidation. The core can avoid liquidation if it receives enough liquidity from the $(N-S)$ not-shocked peripheral banks, but given the low interbank debt $F<(\ell-\theta y)/(N-S)$, the maximum liquidity the core bank can receive is $(N-S)F<\ell-\theta y$. So it cannot avoid liquidation. Both scenarios are not constrained efficient when $L^*<S$ by definition.

\subsection{Proof of Proposition \ref{p:complete inefficient} (\nameref{p:complete inefficient})}
Since there is not enough liquidity to save all shocked banks given $S\ell>N\theta y$, all shocked banks are liquidated by symmetry and equilibrium uniqueness (Proposition~\ref{p:existence}).

\subsection{Proof of Proposition \ref{p:PP=KP} (\nameref{p:PP=KP})}
We prove each implication in turn.

    \textbf{(i) $\hat{\textbf{t}}$ solves PP $\implies$ $\hat{\textbf{x}}$ solves KP.} We show that $\hat{\textbf{x}}$ is feasible and optimal in turn.

        \begin{itemize}
            \item $\hat{\textbf{x}}$ \emph{is feasible.}
            Observe that, since,  by equation \eqref{eq:pp:liquidity_constraint}, $\hat t_i \geq -\theta y_i$,  $\hat x_i\sigma_i(\ell_i-\theta y_i) \leq (1-\sigma_i)\theta y_i+ \hat t_i$. Thus, using the liquidity conservation constraint (equation~\eqref{eq:pp:liquidity_conservation}),  we have that
                \begin{align}
                    \sum_{i=1}^N \hat x_i \sigma_i(\ell_i-\theta y_i)\leq \sum _{i=1}^N \Big(  (1-\sigma_i)\theta y_i+ \hat t_i \Big) \leq  \sum _{i=1}^N (1-\sigma_i)\theta y_i,
                \end{align}
        confirming that $\hat {\textbf{x}}$ is feasible.

            \item  $\hat{\textbf{x}}$\emph{ is optimal.}
            Suppose, in anticipation of a contradiction, that $\hat{\textbf{x}}$ is not optimal, i.e.\ that there is a feasible $\hat{\textbf{x}}'$ that yields lower deadweight loss (equation \eqref{e:x DWL}).

             We now show that such an $\hat{\textbf{x}}'$ cannot exist, because, if it does, $\hat{\textbf{t}}$  cannot be a solution to PP. Specifically, $\hat{\textbf{t}}\!\!{\phantom{t}}'$ with $\hat t_i' := \sigma_i \hat x_i' (\ell_i - \theta y_i )  - (1 - \sigma_i) \theta y_i$ is feasible and yields a lower objective:

    \begin{itemize}

        \item {Feasibility:}

            \begin{itemize}

                \item  We show that $\hat{\textbf{t}}\!\!{\phantom{t}}'$ satisfies the liquidity constraint \eqref{eq:pp:liquidity_constraint} for $\sigma_i = 0$  and $\sigma_i = 1$ in turn: If $\sigma_i = 0$, then  $\hat t_i' =-\theta y_i =\min\{\sigma_i\ell_i-\theta y_i,0\}$ and if $\sigma_i =1$, then $
                    \hat t_i' = \hat x_i' (\ell_i-\theta y_i) \geq  \min\{\sigma_i\ell_i-\theta y_i,0\}.$

                \item We show that  $\hat{\textbf{t}}\!\!{\phantom{t}}'$  satisfies the liquidity conservation constraint \eqref{eq:pp:liquidity_conservation} as
                                \begin{align}
                            \sum_{i=1}^N \hat t_i' =
             \sum_{i=1}^N \Big(  \sigma_i \hat x_i' (\ell_i-\theta y_i)-(1-\sigma_i)\theta y_i \Big) \leq 0,
                        \end{align}
            since $\hat{\textbf{x}}'$ must satisfy the constraint \eqref{eq:aggregate_pledge} by its definition as a solution to KP.

            \end{itemize}

        \item {Optimality:}          Observe, from the definitions of $\hat x_i'$ and $\hat t_i$ that
            \begin{equation}
            \sum_{i=1}^N \mathbbm{1}_{\{\theta y_i+ \hat t_i' < \ell_i\sigma_i\}}  \Delta_i
            \leq
            \sum_{i=1}^N ( 1- \hat x_i' ) \Delta_i
            <
            \sum_{i =1}^N ( 1 - \hat x_i ) \Delta_i
            = \sum_{i=1}^N \mathbbm{1}_{\{\theta y_i+ \hat t_i < \ell_i\sigma_i\}}  \Delta_i,
            \end{equation}
            contradicting the optimality of $\hat{\textbf{t}}\!\!{\phantom{t}}$.

    \end{itemize}
         \noindent Therefore $\hat {\textbf{x}}$ solves KP.

 \textbf{(ii) $\check{\textbf{x}}$ solves KP $\implies$ $\check{\textbf{t}}$ solves PP.} We show that $\check{\textbf{t}}$ is feasible and optimal in turn.

        \begin{itemize}
            \item $\check{\textbf{t}}$\emph{ is feasible.}
                          Observe that $\check t_i$ satisfies B$_i$'s liquidity  constraint \eqref{eq:pp:liquidity_constraint} by construction:    $\check t_i = \sigma_i \check x_i (\ell_i-\theta y_i)-(1-\sigma_i)\theta y_i\geq \min\{\sigma_i\ell_i-\theta y_i,0\}.$

            \item  $\check{\textbf{t}}$\emph{ is optimal.}
            Suppose, in anticipation of a contradiction, that $\check{\textbf{t}}$ is not optimal, i.e.\ that there is a feasible $\check{\textbf{t}}'$ that yields lower deadweight loss (equation \eqref{e:t DWL}).

             We now show that such an $\check{\textbf{t}}'$ cannot exist because, if it does, $\check{\textbf{x}}$ cannot be a solution to KP. Specifically, $\check{\textbf{x}}'$ with $\check x_i' := \mathbbm{1}_{\{\theta y_i+ \check t_i' -\ell_i\sigma_i\geq 0\}} $ is feasible and yields a lower objective:

    \begin{itemize}

        \item {Feasibility:}  We show that $\check x_i'$ satisfies the liquidity conservation constraint~\eqref{eq:aggregate_pledge} as
                    \begin{equation}
                        \sum_{i=1}^N \check x_i' \sigma_i(\ell_i-\theta y_i)\leq \sum_{i=1}^N \Big( (1-\sigma_i)\theta y_i + \check t_i' \Big) \leq \sum_{i=1}^N (1-\sigma_i)\theta y_i,
                    \end{equation}
                    since $\check{\textbf{t}}'$ must satisfy the constraint \eqref{eq:pp:liquidity_conservation} by its definition as a solution to PP.

        \item {Optimality:}      Observe, from the definitions of $\check x_i'$ and $\check t_i$ that
        \begin{equation}
    \hspace{-.8cm}    \sum_{i=1}^N
        (1 - \check x_i') \Delta_i
        \leq
        \sum_{i=1}^N
    \mathbbm{1}_{\{\theta y_i + \check t_i' < \ell_i \sigma_i\}} \Delta_i
    <
        \sum_{i=1}^N    \mathbbm{1}_{\{ \theta y_i + \check t_i < \ell_i \sigma_i\}} \Delta_i
    \leq     \sum_{i=1}^N
        (1 - \check x_i) \Delta_i ,
        \end{equation}
        contradicting the optimality of $\check{\textbf{x}}.$

    \end{itemize}

              \noindent Therefore $\check {\textbf{t}}$ solves PP. \qed

    \end{itemize}

    \end{itemize}

\subsection{Proof of Proposition \ref{p:exponential_implement_greedy} (\nameref{p:exponential_implement_greedy})}
 First note that Lemma~\ref{l:high debt}, Lemma~\ref{lm:exponentially_dominated}, and Lemma~\ref{lm:total_liquidity} hold with heterogeneous banks. Their proofs are essentially unchanged, with $y$ and $\ell$ replaced with $y_i$ and $\ell_i$ everywhere, except in the last line of the proof of Lemma~\ref{lm:total_liquidity}, when they are replaced by $y_{i^*}$ and $\ell_{i^*}$. We apply these results freely throughout the proof.

Now we prove the result in three steps.
    \begin{itemize}
        \item [Step 1:] \textbf{For all $i$ such that $\sigma_i=0$, $x_i=1.$} This is immediate from equation \eqref{eq:optimal_repayment_cases}, which implies that the bank is not liquidated if it is not shocked.

        \item [Step 2:] \textbf{Existence of critical index.}
    Now show that there exists a critical index $i^*$ such that a shocked bank B$_i$ is liquidated if and only if $i \geq i^*$.
    Suppose, in anticipation of a contradiction, that, to the contrary, there are shocked B$_i$ and B$_j$ with $i<j$ such that $x_i=0$ and $x_j=1$.
    Then, by equation \eqref{eq:optimal_repayment_cases}, it must be that $R_{i\leftleftarrows}<\ell_i - \theta y_i$ and $R_{j\leftleftarrows}\geq \ell_j - \theta y_j$ and, therefore, by  Lemma~\ref{lm:exponentially_dominated}, that
            \begin{equation}
            \ell_j - \theta y_j \leq R_{j\leftleftarrows}\leq s^{j-i}R_{i\leftleftarrows} < s^{j-i}( \ell_i -\theta y_i)  .
            \end{equation}
            The inequality is violated if $s< \sqrt[j-i]{(\ell_j - \theta y_j)/(\ell_i - \theta y_i)}$, a contradiction.

        \item [Step 3:] \textbf{Same critical index.}  We now show that the critical index delivered by the exponential network coincides with that delivered by the greedy algorithm (for every state $\bm \sigma$) or, equivalently, that $i^*$ satisfies the following two inequalities:
            \begin{equation}
                \sum _{i=1}^{i^*}\sigma_i(\ell_i-\theta y_i) > \sum_{i=1}^N (1-\sigma_i) \theta y_i
            \end{equation}
            and
            \begin{equation}
                \sum _{i=1}^{i^*-1}\sigma_i(\ell_i-\theta y_i) \leq  \sum_{i=1}^N(1-\sigma_i)\theta y_i  .
            \end{equation}
            The second is immediate, as it is implied by the aggregate liquidity constraint, which holds strictly by hypothesis.
            To prove the first,  we invoke Lemma~\ref{l:high debt}, which implies that, as long as debts are sufficiently high, each not-shocked bank pays $\theta y_i$ in net payment, so conservation of liquidity requires:
                \begin{align}
                \sum_{i=1}^N (1-\sigma_i)\theta y_i
                & = -\sum_{i \, :\, \sigma_i=0}(R_{i\leftleftarrows}-R_{i\rightrightarrows})
                \\
                & = \sum_{i\, : \, \sigma_i=1} (R_{i\leftleftarrows}-R_{i\rightrightarrows})
                \\
                & = \sum_{i\in\mathscr{L}^c \, :\, \sigma_i=1} (R_{i\leftleftarrows}-R_{i\rightrightarrows}) +\sum_{i\in\mathscr{L}} (R_{i\leftleftarrows}-R_{i\rightrightarrows})
                 \\
                 &\leq \sum_{i\in\mathscr{L}^c \, :\, \sigma_i=1} (R_{i\leftleftarrows}-R_{i\rightrightarrows})
                 + \sum_{i\in\mathscr{L}}R_{i\leftleftarrows}
                 \\
                 &< \sum_{i\in\mathscr{L}^c \, :\, \sigma_i=1} (R_{i\leftleftarrows}-R_{i\rightrightarrows})
                 + \frac{R_{i^* \leftleftarrows}}{1-s}
                 \\
                 &\leq  \sum_{i\in\mathscr{L}^c \, :\, \sigma_i=1} (\ell_i - \theta y_i)
                 + \frac{\ell_{i^*} - \theta y_{i^*}}{1-s}
                 \\
                 &= \sum_{i=1}^{i^*}\sigma_i(\ell_i-\theta y_i)
                 + \frac{s}{1-s} (\ell_{i^*} - \theta y_{i^*})
                \end{align}
                having used  Lemma~\ref{lm:total_liquidity} to bound the sum over liquidated banks.
                Letting $s$ be sufficiently small and recalling that the inequality must be strict by assumption gives the result.
                \qed

    \end{itemize}

\subsection{Proof of Corollary~\ref{c:optimality of greedy} (\nameref{c:optimality of greedy})}

The result is immediate from the definitions of equivalence of the planner's problem to the knapsack problem and of the greedy algorithm to the exponential network; see Proposition \ref{p:PP=KP}, Definition \ref{d:kp}, and Proposition \ref{p:exponential_implement_greedy}.  (See also the discussion following the statement of the corollary.) \qed

\subsection{Proof of Corollary~\ref{c:greedy small} (\nameref{c:greedy small})}

The result follows immediately from the so-called Dantzig bound (\cite{dantzig1957discrete}), which we state as a lemma:

\begin{lemma} \label{l:dantzig} Suppose, w.l.o.g., that banks are ordered by their profitability indices ($\text{PI}_i \geq \text{PI}_j$ for $i \leq j$) and let  $\check{\bm{\mathrm{x}}}$ be a solution of KP (Definition \ref{d:kp}).   We have that
    \begin{equation}  \label{e:dantzig}
        \sum_{i=1}^N   \sigma_i \check x_i \Delta_i \leq \sum_{i = 1}^{i^*- 1} \sigma_i \Delta_i  + \Delta_{i^*} .
    \end{equation}

\end{lemma}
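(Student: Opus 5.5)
The plan is to prove the bound through the linear-programming relaxation of KP, as in \cite{dantzig1957discrete}. Only shocked banks enter the constraint~\eqref{eq:aggregate_pledge} and the left side of inequality~\eqref{e:dantzig}, because each term carries the factor $\sigma_i$. So we can drop the not-shocked banks and write $w_i := \ell_i-\theta y_i>0$ for the weight of shocked B$_i$ and $C:=\sum_{i}(1-\sigma_i)\theta y_i$ for the capacity. Maximizing saved value $\sum_i\sigma_i x_i\Delta_i$ is equivalent to minimizing the deadweight loss~\eqref{e:x DWL}. We take $i^*$ to be the critical index, defined as the first shocked bank in profitability-index order at which the cumulative weight $\sum_{i\le i^*}\sigma_i w_i$ exceeds $C$. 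This is the bank at which the greedy algorithm of Definition~\ref{d:greedy} first fails. If no such bank exists, every shocked bank fits, and the bound holds trivially with the convention $\Delta_{i^*}=0$.

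\textbf{Step 1 (relaxation).} Replace $x_i\in\{0,1\}$ with $x_i\in[0,1]$. Every KP-feasible vector is LP-feasible, so the LP optimum $V^{LP}$ is at least $\sum_i \sigma_i\check x_i\Delta_i$.

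\textbf{Step 2 (solve the LP).} We claim that the fractional greedy vector is LP-optimal. This vector sets $x_i=1$ for shocked $i<i^*$, sets
\[
x_{i^*}=\Big(C-\sum_{i<i^*}\sigma_i w_i\Big)\Big/w_{i^*}\in[0,1),
\]
and sets $x_i=0$ for $i>i^*$. The argument is an exchange. Take any feasible $\mathbf{x}$ with $x_j<1$ and $x_k>0$ for shocked $j<k$. Moving weight $\varepsilon$ from $k$ to $j$, that is, raising $x_j$ by $\varepsilon/w_j$ and lowering $x_k$ by $\varepsilon/w_k$, keeps the constraint intact. It changes the objective by $\varepsilon(\mathrm{PI}_j-\mathrm{PI}_k)\ge0$, by the ordering $\mathrm{PI}_j\ge \mathrm{PI}_k$. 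Also, filling the unused capacity never lowers the objective, since $\Delta_i\ge0$. Repeating these moves transforms any feasible point into the greedy one without decreasing the objective.

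A cleaner alternative is a duality certificate. Set the multiplier $\lambda=\mathrm{PI}_{i^*}$ on the capacity constraint and $\mu_i=\sigma_i w_i(\mathrm{PI}_i-\lambda)^+$ on the bounds $x_i\le1$, and verify weak duality: $V^{LP}\le \lambda C+\sum_i\mu_i$, which equals the greedy value. I would probably present this version, because it avoids the limiting argument.

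\textbf{Step 3 (conclude).} The greedy LP value equals $\sum_{i<i^*}\sigma_i\Delta_i+x_{i^*}\Delta_{i^*}$, which is at most $\sum_{i<i^*}\sigma_i\Delta_i+\Delta_{i^*}$ because $x_{i^*}<1$ and $\Delta_{i^*}\ge0$. Combining this with Step 1 gives inequality~\eqref{e:dantzig}.

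The main obstacle is Step 2, the LP optimality of the fractional greedy vector. Either the exchange argument (with care about ties and about termination) or the duality certificate has to be written out carefully. I would use the dual certificate: by the PI ordering, $\mu_i=0$ for $i\ge i^*$ and $\mu_i\ge0$ for $i<i^*$, and dual feasibility $\sigma_i\Delta_i\le \lambda\sigma_i w_i+\mu_i$ reduces, bank by bank, to $\mathrm{PI}_i\le \max\{\mathrm{PI}_i,\lambda\}$.
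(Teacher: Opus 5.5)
Your proof is correct and follows essentially the same route as the paper, which simply cites \cite{Martello-Toth-1990}, Theorem~2.1. That theorem is Dantzig's LP-relaxation result: the fractional greedy vector with critical item $i^*$ solves the relaxation, and the bound then follows from $x_{i^*}<1$, exactly as in your argument. Your dual certificate ($\lambda=\mathrm{PI}_{i^*}$, $\mu_i=\sigma_i w_i(\mathrm{PI}_i-\lambda)^+$) is a clean self-contained substitute for the textbook exchange argument, and it handles ties in the PI ordering without the strict-ordering assumption that argument uses.
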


\begin{proof} See \cite{Martello-Toth-1990}, Theorem 2.1. \end{proof}

\noindent The result implies that the difference between the objective at the optimum (represented by the LHS of equation \eqref{e:dantzig}) and at the greedy algorithm's approximation of it (represented by the sum on the RHS) is at most $\Delta_{i^*}$, as desired. \qed

\subsection{Proof of Proposition~\ref{p:endogenous theta}
(\nameref{p:endogenous theta})}

We prove the result in three steps: We first characterize which banks are liquidated as a function of the profile $(\theta_1,\theta_2)$, which reduces the planner's problem to a comparison of four candidate profiles; we then compare those profiles; and we finally check that no bank wants to deviate from any of them. Throughout, we label banks w.l.o.g.\ so that the concentrated profile is $(0,\theta^a)$ rather than its permutation.

\textbf{Liquidation and candidate profiles.} Suppose exactly one bank, say B$_1$, is shocked. Since $F\geq\ell$, it can raise $\theta_1y+\min\{F,\theta_2y\}$, so it survives if and only if $\theta_1+\theta_2\geq\theta^a$. Suppose instead both banks are shocked. A bank with $\theta_i\geq\theta^a$ meets its shock unaided and repays $\theta_iy-\ell$; a bank with $\theta_i<\theta^a$ receives nothing from a liquidated counterparty and so, by equation~\eqref{eq:optimal_repayment_cases}, is itself liquidated unless its counterparty survives and repays enough. Hence both banks survive if and only if $\theta_1+\theta_2\geq2\theta^a$, and otherwise only a bank with $\theta_i\geq\theta^a$ does.

 Up to terms independent of pledgeability, the
planner maximizes
\begin{equation}
-(y-\ell)\,\mathbb{E}\big[\#\text{liquidated banks}\big]
-c(\theta_1)-c(\theta_2).
\end{equation}
By the previous step, profiles fall into four classes by liquidation outcome,
and within each class the planner takes the cheapest. If
$\theta_1+\theta_2<\theta^a$, no bank is ever saved, and $(0,0)$ is cheapest.
If $\theta_1+\theta_2\geq\theta^a$ with $\theta_i<\theta^a$ for both banks,
shocked banks are saved only in single-shock states, and strict convexity
makes the equal split $(\theta^a/2,\theta^a/2)$ uniquely cheapest. If
$\theta_i\geq\theta^a$ for one bank and $\theta_1+\theta_2<2\theta^a$, one
bank is also saved in the double-shock state, and $(0,\theta^a)$ is uniquely
cheapest. If $\theta_1+\theta_2\geq2\theta^a$, both banks are always saved,
and strict convexity again makes $(\theta^a,\theta^a)$ uniquely cheapest,
ruling out $(0,2\theta^a)$ in particular. It is therefore enough to compare
these four profiles.

\textbf{Social optimum.} Taking $(0,0)$ as the baseline, we have the following expression for the incremental
surpluses:
\begin{equation}
\text{incremental surplus}
=
\renewcommand\arraystretch{1.5}
\left\{
\begin{array}{cl}
2p(1-p)(y-\ell)-2c(\theta^a/2) & \text{ for } (\theta^a/2,\theta^a/2),\\
\big[2p(1-p)+p^2\big](y-\ell)-c(\theta^a) & \text{ for } (0,\theta^a)\\
2p(y-\ell)-2c(\theta^a) &\text{ for } (\theta^a,\theta^a)
\end{array}
\right.
\end{equation}
The maintained condition $c(\theta^a)\leq p(y-\ell)$ makes the second at least $p(1-p)(y-\ell)>0$, so $(0,0)$ is not socially optimal. Pairwise comparison of the other three gives the following  classification:

\begin{enumerate}

\item[(i)] $(\theta^a,\theta^a)$ is weakly socially optimal if and only if
$c(\theta^a)\leq p^2(y-\ell)$.

\item[(ii)] $(0,\theta^a)$ is weakly socially optimal if and only if $c(\theta^a)\geq p^2(y-\ell)$ and $c(\theta^a)-2c(\theta^a/2)\leq p^2(y-\ell)$.

\item[(iii)] $(\theta^a/2,\theta^a/2)$ is weakly socially optimal if and only
if $c(\theta^a)-2c(\theta^a/2)\geq p^2(y-\ell)$.
\end{enumerate}
Each displayed inequality is the comparison with $(0,\theta^a)$; the remaining comparison, between $(\theta^a,\theta^a)$ and $(\theta^a/2,\theta^a/2)$, follows from it in each case, since $c(\theta^a/2) \geq 0$ implies $c(\theta^a)-c(\theta^a/2)\leq c(\theta^a)$ in (i) and $c(\theta^a)-c(\theta^a/2)\geq c(\theta^a)-2c(\theta^a/2)$ in (iii). These cases exhaust the parameters satisfying the maintained condition.

\textbf{Best response/equilibrium.} We check each social optimum in turn.

\begin{enumerate}

\item[(i)] At $(\theta^a,\theta^a)$, suppose B$_i$ cuts to $\theta<\theta^a$. The pair still holds at least $\theta^a$, so B$_i$ keeps its single-shock protection, but the pair holds less than $2\theta^a$ and $\theta<\theta^a$, so B$_i$ is liquidated when both banks are shocked, losing $p^2(y-\ell)$. The saving $c(\theta^a)-c(\theta)$ is largest at $\theta=0$, and condition (i) rules that out. Raising pledgeability gains B$_i$ nothing,  as it is already saved in every state.

\item[(ii)] At $(0,\theta^a)$, suppose B$_1$ raises to $\theta$. If $\theta<\theta^a$ it remains liquidated when both banks are shocked, so it gains nothing; if $\theta\geq\theta^a$ it gains $p^2(y-\ell)$ at a cost of at least $c(\theta^a)$, which condition (ii) rules out. Suppose instead B$_2$ cuts to $\theta<\theta^a$. The pair then holds less than $\theta^a$, so B$_2$ is liquidated whenever it is shocked, losing $p(y-\ell)$ and saving at most $c(\theta^a)$; the maintained condition rules that out. Raising its pledgeability further gains B$_2$ nothing, as it is already saved whenever it is shocked.

\item[(iii)] At $(\theta^a/2,\theta^a/2)$, suppose B$_i$ cuts to $\theta<\theta^a/2$. The pair then holds less than $\theta^a$, so B$_i$ is liquidated in the state in which it alone is shocked, losing $p(1-p)(y-\ell)$ and saving at most $c(\theta^a/2)$. Condition (iii) and the maintained condition give
\begin{equation}
c(\theta^a/2)\leq\frac{c(\theta^a)-p^2(y-\ell)}{2}
\leq\frac{p(1-p)(y-\ell)}{2},
\end{equation}
so the deviation is unprofitable. Suppose instead B$_i$ raises its pledgeability to $\theta$. It gains only if it survives when both banks are shocked, which requires $\theta\geq\theta^a$ and hence costs
\begin{equation}
c(\theta)-c(\theta^a/2)\geq c(\theta^a)-c(\theta^a/2)
\geq c(\theta^a)-2c(\theta^a/2)\geq p^2(y-\ell),
\end{equation}
which is the gain.

\end{enumerate}

Thus every weakly socially optimal profile is a Nash equilibrium. \qed

\section{Ring Example} \label{a:ring}

Here we show that the ring network is not constrained efficient with four banks. We keep $y = 2$ and $\theta = 1/2$ but increase the shock to $\ell = 12/5$, so that no single bank's liquidity covers a shortfall, $\theta y = 1 < \ell - \theta y = 7/5$, although pooling two banks' liquidity does, $\ell - \theta y \leq 2 \theta y$. With $S = 2$ banks shocked, exactly one bank is liquidated in the constrained-efficient outcome, $L^* = 1$ (Lemma \ref{l:constrained efficiency}).

In the ring, each B$_i$ owes $F$ to B$_{i+1}$ (indices modulo four) and nothing to the others:
\begin{equation}
\renewcommand\arraystretch{.75}
\mathbf{F} = F
\left[
\begin{array}{cccc}
0 & 1 & 0 & 0 \\
0 & 0 & 1 & 0 \\
0 & 0 & 0 & 1 \\
1 & 0 & 0 & 0
\end{array}
\right].
\end{equation}
Whether the ring is efficient depends on where the shocks fall.

Suppose first that the shocked banks are adjacent, say B$_1$ and B$_2$. The clearing vector solves
\begin{equation}
\renewcommand\arraystretch{1.5}
\left\{ \begin{array}{l}
R_{1\rightrightarrows} =
\max\Big\{ \, 0 \, , \, \min\left\{ \, - \frac 7 5 + R_{4 \rightrightarrows}
\, , \,
F
\, \right\} \Big\},
\\
R_{2\rightrightarrows} =
\max\Big\{ \, 0 \, , \, \min\left\{ \, - \frac 7 5 + R_{1 \rightrightarrows}
\, , \,
F
\, \right\} \Big\},
\\
R_{3\rightrightarrows} =
\max\Big\{ \, 0 \, , \, \min\left\{ \, 1 + R_{2 \rightrightarrows}
\, , \,
F
\, \right\} \Big\},
\\
R_{4\rightrightarrows} =
\max\Big\{ \, 0 \, , \, \min\left\{ \, 1 + R_{3 \rightrightarrows}
\, , \,
F
\, \right\} \Big\},
\end{array}
\right.
\end{equation}
with solution
$\mathbf{R}_{\rightrightarrows} = \left( \frac{3}{5}, 0, 1, 2 \right)$
for $F \geq 2$. The not-shocked banks' payments chain through the ring---B$_3$ pays B$_4$, allowing B$_4$ to pay $2 \theta y = 2$ to B$_1$---so B$_1$ survives and only B$_2$ is liquidated: The ring is constrained efficient in this state.

Now suppose the shocked banks alternate with the not-shocked ones, say B$_1$ and B$_3$. The clearing vector solves
\begin{equation}
\renewcommand\arraystretch{1.5}
\left\{ \begin{array}{l}
R_{1\rightrightarrows} =
\max\Big\{ \, 0 \, , \, \min\left\{ \, - \frac 7 5 + R_{4 \rightrightarrows}
\, , \,
F
\, \right\} \Big\},
\\
R_{2\rightrightarrows} =
\max\Big\{ \, 0 \, , \, \min\left\{ \, 1 + R_{1 \rightrightarrows}
\, , \,
F
\, \right\} \Big\},
\\
R_{3\rightrightarrows} =
\max\Big\{ \, 0 \, , \, \min\left\{ \, - \frac 7 5 + R_{2 \rightrightarrows}
\, , \,
F
\, \right\} \Big\},
\\
R_{4\rightrightarrows} =
\max\Big\{ \, 0 \, , \, \min\left\{ \, 1 + R_{3 \rightrightarrows}
\, , \,
F
\, \right\} \Big\},
\end{array}
\right.
\end{equation}
with solution
$\mathbf{R}_{\rightrightarrows} = \left( 0, 1, 0, 1 \right)$.
Each not-shocked bank pays a different shocked bank, so each shocked bank receives only $\theta y = 1 < \frac 7 5$ and both are liquidated. The planner instead pools the two not-shocked banks' liquidity, ($2 \theta y=2$), and transfers enough to one shocked bank to save it. The ring is thus efficient for some states ($\bm{\sigma}$) but not for others. Its problem is that the direction of liquidity depends on where the shocked and not-shocked banks happen to lie in the ring: When they alternate, liquidity is divided between the two shocked banks rather than concentrated on one of them.

\section{Notations}

To the extent possible, we use bold face letters for matrices and vectors and use italics for scalars;  we use single-arrow subscripts for liabilities from one bank to another and double-arrow subscripts for total liabilities from one to many banks. E.g., $\mathbf{F} = [F_{i \to j}]_{ij}$ is the matrix of interbank liabilities between individual banks; $\mathbf{F}_{\rightrightarrows} = [F_{i \rightrightarrows}]_i$ is the vector of banks' total interbank liabilities, i.e.\ the vector of row sums of $\mathbf{F}$. We use B$_i$ for individual banks and script letters for sets; $\text{B}_i \in \mathscr{B}$  and $i \in \mathscr{B}$ are synonymous. We summarize our notations in Table~\ref{t:notations}, separating those used in the main text from those used only in extensions or proofs.

\footnotesize \setlength{\LTcapwidth}{\textwidth}
\begin{longtable}{rll}
\caption{Notations. \label{t:notations}}  \\
\toprule
Notation & Meaning & Parametric restriction\\
\midrule
\endfirsthead
Notations (continued)\\
\midrule
Notation & Meaning & Parametric restriction \\
\midrule
\endhead

\bottomrule{{Continued on next page}} \\
\endfoot

\bottomrule
\endlastfoot

        $y$ & Long-term real asset value & $y > 0$\\
        $\ell$ & Size of liquidity shock & $\theta y<\ell<y$\\
        $\theta$ & Pledgeable fraction of $y$ &  $0 < \theta < 1$\\
        $\sigma_i$& Indicator of B$_i$'s shock & $\sigma_i\in\{0,1\}$\\
        $\bm{\sigma}\equiv \{\sigma_i\}_i$ &Vector of shocks/Aggregate state  &$\bm{\sigma}\in \{0,1\}^N$\\
        $\mathrm{B}_{i}$ &  $i$th bank & \\
        $\mathscr{B}$ &  A set of banks & \\
        $\mathscr{B}^c$ & Complement of $\mathscr{B}$& \\
        $\mathscr{L}$ & Set of banks that are liquidated  & $\mathscr{L}\subset \mathscr{D}$\\
        $N$ & Number of banks & \\
        $S = \sum \sigma_i $ & Number of shocked banks
        &  \\
        $L = \lvert \mathscr{L} \rvert$ & Number of liquidated banks & \\
        $L^*$ & Minimum $L$ (Lemma \ref{l:constrained efficiency}) & \\
        $F_{i\to j}$& $\mathrm{B}_i$'s liability to $\mathrm{B}_j$& $F_{i\to j} \geq 0$\\
        $\textbf{F} \equiv [F_{i\to j}]_{ij}$ & Matrix of interbank debts&\\
        $F_{i\rightrightarrows}\equiv \sum_{j\neq i}F_{i\to j}$ & $\mathrm{B}_i$'s total  interbank liabilities &\\
        $F$ & Each bank's total liabilities $F \equiv F_{i \rightrightarrows} $  in a regular network; &\\
         & per-link face value in the star and uniform-complete sections &\\
        $F_{i\leftleftarrows}\equiv \sum_{j\neq i}F_{j\to i}$ & $\mathrm{B}_i$'s total interbank claims &\\
        $\mathbf{F}_{\rightrightarrows}\equiv \{F_{i\rightrightarrows}\}_i $& Vector of each bank's total interbank liabilities&\\
        $\hat{F}_{i\to j}\equiv F_{i\to j}/F_{i\rightrightarrows}$ & $\mathrm{B}_i$'s liability to $\mathrm{B}_j$ as a fraction of its total liabilities& $0\leq\hat{F}_{i\to j}\leq 1$\\
        $\mathbf{\hat{F}}\equiv[\hat{F}_{i\to j}]_{ij}$& Matrix of  interbank debts & $\sum_i \hat F_{i \to j} = 1$ if $F_{i\rightrightarrows} > 0$\\
        $R_{i\to j}$& $\mathrm{B}_i$'s equilibrium repayment to $\mathrm{B}_j$& $0\leq R_{i\to j} \leq F_{i\to j}$\\
        $R_{i\rightrightarrows}\equiv\sum_{j\neq i}R_{i\to j}$ & $\mathrm{B}_i$'s total repayment to other banks &\\
        $R_{i\leftleftarrows}\equiv\sum_{j\neq i}R_{j\to i}$ & $\mathrm{B}_i$'s total repayment received from other banks &\\
        $\mathbf{R}_{\rightrightarrows}\equiv\{R_{i\rightrightarrows}\}_i $& Vector of each bank's total equilibrium repayment& $\mathbf{0}\leq \mathbf{R}_{\rightrightarrows}\leq \mathbf{F}_{\rightrightarrows}$\\
        $\mathbf{R}_{\leftleftarrows}\equiv\{R_{i\leftleftarrows}\}_i $& Vector of each bank's total payment received & $\mathbf{R}_{\leftleftarrows}=\mathbf{\hat{F}}^{\top}\mathbf{R}_{\rightrightarrows}$\\
        $\alpha$& Scale of debts used in, e.g., Proposition~\ref{p:netting} & $\alpha>0$\\
        $\beta$ & Bottleneck parameter (Definition \ref{d:bottleneck})& \\
        $d_{i\to j}$ & Harmonic distance from $\mathrm{B}_i$ to $\mathrm{B}_j$ (Definition~\ref{d:HD})& $d_{i\to j}\geq 0$\\
        $\delta$ & Connectedness parameter (Definition \ref{d:delta})& $0<\delta<1$\\
        $d^{ST}, d^{LT}$& Default and salvation radii in Lemma~\ref{l:ST radius} and Proposition~\ref{p:radius}&\\
        $\beta^{ST},\beta_{ST},\beta^{LT},\beta_{LT}$ & Thresholds in Lemma~\ref{l:ST bottleneck} and Proposition~\ref{p:bottleneck}&\\
            $s$ & Dominance parameter (Definition \ref{d:s-dominance})& $0<s<1$\\
        $s^*$ &Threshold in Proposition \ref{p:constrained efficiciency}&\\
        $t_i$&Transfer to $\mathrm{B}_i$ in Definition \ref{d:constrained efficiency}&\\
        $i^*$ &Index of highest-ranked liquidated bank in Section~\ref{s:exp} and Section~\ref{s:hetero}&\\
        $\pi_i$ &Permutation of banks keeping $\mathrm{B}_i$ fixed  &\\
        $[\cdot]^+=\max\{\cdot,0\}$& Maximum of variable and zero&\\
        $\lceil\cdot\rceil,\lfloor\cdot\rfloor$ &Ceiling and floor functions&\\
        &&\\
            \hline
        \multicolumn{3}{c}{Notations Used Only in Extensions}
        \\ \hline
                $y^*$ & Efficient liquidation threshold in Section~\ref{s:risky assets}&\\
                $\Delta_i$ & Efficiency loss if $\mathrm{B}_i$ is liquidated in Section~\ref{s:hetero}&\\
                $\textbf{x}\equiv \{x_i\}_i$ & Vector of indicators of banks not being liquidated in Section~\ref{s:hetero} & $\textbf{x} \in \{0,1\}^N$\\
        $\hat{\textbf{t}},\check{\textbf{t}},\hat{\textbf{x}},\check{\textbf{x}}$& Optimizers in Section~\ref{s:hetero} & \\
                $\pi$ &Ranking of banks for greedy algorithm in Section~\ref{s:hetero}  &\\
            $y_i$ & B$_i$'s long-term real asset value in Section~\ref{s:hetero} & $y_i > 0$\\
        $\ell_i$ & Size of B$_i$'s  liquidity shock in Section \ref{s:hetero} & $\theta y_i<\ell_i<y_i$\\          $\theta_i$ & B$_i$'s chosen pledgeability in Section~\ref{s:endogenous theta} & $0\leq\theta_i\leq 1$\\         $\theta^a\equiv\ell/y$ & Pledgeability needed to meet a shock unaided, Section~\ref{s:endogenous theta} &\\         $c(\cdot)$ & Cost of pledgeability in Section~\ref{s:endogenous theta} & $c(0)=0$, $c',c''>0$\\         $p$ & Probability a bank is shocked in Section~\ref{s:endogenous theta} & $0<p<1$\\
                \ & &  \\
        \hline
        \multicolumn{3}{c}{Notations Used Only in Proofs}
        \\ \hline
     $\textbf{Q}$ & A matrix, usually shorthand for  $\mathbf{\hat{F}}^{\top}$   &\\
        $\textbf{Q}_{\mathscr{B}_1,\mathscr{B}_2}\equiv[Q_{ij}]_{i\in\mathscr{B}_1,j\in\mathscr{B}_2}$ & Block matrix with rows in $\mathscr{B}_1$ and columns in $\mathscr{B}_2$&\\
        $\mathbf{\tilde Q}$&Matrix in Lemma~\ref{l:radius 2}&\\
         $    \mathbf{0} , \mathbf{1}$ & Vectors of zeros and ones ($(0,..., 0)$ and $(1,..., 1)$) & \\
        $\mathbf{d}_{i\to \mathscr{B}}\equiv\{d_{i\to j}\}_{j\in \mathscr{B}}$&Vector of $\mathrm{B}_i$'s harmonic distance to banks in $\mathscr{B}$&\\
        $D_i\equiv F_{i\rightrightarrows}-R_{i\rightrightarrows}$& $\mathrm{B}_i$'s shortfall&\\
        $\mathbf{D}\equiv\mathbf{F}_{\rightrightarrows}-\mathbf{R}_{\rightrightarrows}$&Vector of each bank's shortfall&\\
                $\mathscr{D}$ & Set of banks that default & \\
       $\Phi^{\alpha},\Psi^{\alpha}$& Mappings  used in  Lemma~\ref{l:Phi} and Lemma~\ref{lm:shortfall}&\\
        $\mathscr{I}^{\alpha}, \mathscr{H}^{\alpha}$ & Restricted domains of   $\Phi^{\alpha}$ and  $\Psi^{\alpha}$ & \\ $\prod_{i=1}^N X_i=X_1\times\cdots\times X_N$& Cartesian product of sets $X_1,...,X_N$ &
        \\
         & & \\
\end{longtable}

\newpage
\bibliographystyle{chicago}

\bibliography{Netting}

\begin{thebibliography}{}

\bibitem[\protect\citeauthoryear{Acemoglu, Ozdaglar, and
  Tahbaz-Salehi}{Acemoglu et~al.}{2015}]{AOT}
Acemoglu, D., A.~Ozdaglar, and A.~Tahbaz-Salehi (2015).
\newblock Systemic risk and stability in financial networks.
\newblock {\em American Economic Review\/}~{\em 105\/}(2), 564--608.

\bibitem[\protect\citeauthoryear{Allen, Babus, and Carletti}{Allen
  et~al.}{2009}]{allen2009financial}
Allen, F., A.~Babus, and E.~Carletti (2009).
\newblock Financial crises: Theory and evidence.
\newblock {\em Annu. Rev. Financ. Econ.\/}~{\em 1\/}(1), 97--116.

\bibitem[\protect\citeauthoryear{Allen, Babus, and Carletti}{Allen
  et~al.}{2012}]{ALLE/BABU/CARL/12}
Allen, F., A.~Babus, and E.~Carletti (2012).
\newblock {Asset commonality, debt maturity and systemic risk}.
\newblock {\em Journal of Financial Economics\/}~{\em 104\/}(3), 519--534.

\bibitem[\protect\citeauthoryear{Allen and Gale}{Allen and
  Gale}{1998}]{Allen-Gale-1998}
Allen, F. and D.~Gale (1998).
\newblock Optimal financial crises.
\newblock {\em The Journal of Finance\/}~{\em 53}, 1245--1284.

\bibitem[\protect\citeauthoryear{Allen and Gale}{Allen and
  Gale}{2000}]{Allen-Gale-2000}
Allen, F. and D.~Gale (2000).
\newblock {Financial contagion}.
\newblock {\em Journal of Political Economy\/}~{\em 108\/}(1), 1--33.

\bibitem[\protect\citeauthoryear{Allen and Walther}{Allen and
  Walther}{2021}]{Allen-Walther-2021}
Allen, F. and A.~Walther (2021).
\newblock Financial architecture and financial stability.
\newblock {\em Annual Review of Financial Economics\/}~{\em 13\/}(1), 129--151.

\bibitem[\protect\citeauthoryear{Angeletos}{Angeletos}{2002}]{Angeletos-2002}
Angeletos, G.-M. (2002).
\newblock Fiscal policy with noncontingent debt and the optimal maturity
  structure.
\newblock {\em The Quarterly Journal of Economics\/}~{\em 117\/}(3),
  1105--1131.

\bibitem[\protect\citeauthoryear{Berman and Plemmons}{Berman and
  Plemmons}{1979}]{Berman-Plemmons-1979}
Berman, A. and R.~Plemmons (1979).
\newblock {\em Nonnegative Matrices in the Mathematical Sciences}.
\newblock Elsevier Inc.

\bibitem[\protect\citeauthoryear{Bernard, Capponi, and Stiglitz}{Bernard
  et~al.}{2022}]{Bernard-Capponi-Stiglitz-2022}
Bernard, B., A.~Capponi, and J.~E. Stiglitz (2022).
\newblock Bail-ins and bailouts: Incentives, connectivity, and systemic
  stability.
\newblock {\em Journal of Political Economy\/}~{\em 130\/}(7), 1805--1859.

\bibitem[\protect\citeauthoryear{Bjerre}{Bjerre}{1999}]{Bjerre-1999}
Bjerre, C.~S. (1999).
\newblock Secured transactions inside out: Negative pledge covenants, property
  and perfection.
\newblock {\em Cornell Law Review\/}~{\em 84\/}(2), 305–341.

\bibitem[\protect\citeauthoryear{Bluhm, Georg, and Krahnen}{Bluhm
  et~al.}{2016}]{Bluhm-et-al-2016}
Bluhm, M., C.-P. Georg, and J.-P. Krahnen (2016).
\newblock {Interbank intermediation}.
\newblock Working paper, Deutsche Bundesbank, Research Centre.

\bibitem[\protect\citeauthoryear{Bolton and Oehmke}{Bolton and
  Oehmke}{2019}]{Bolton-Oehmke-2019}
Bolton, P. and M.~Oehmke (2019).
\newblock Bank resolution and the structure of global banks.
\newblock {\em The Review of Financial Studies\/}~{\em 32\/}(6), 2384--2421.

\bibitem[\protect\citeauthoryear{Calvin and Leung}{Calvin and
  Leung}{2003}]{calvin2003average}
Calvin, J.~M. and J.~Y.-T. Leung (2003).
\newblock Average-case analysis of a greedy algorithm for the 0/1 knapsack
  problem.
\newblock {\em Operations Research Letters\/}~{\em 31\/}(3), 202--210.

\bibitem[\protect\citeauthoryear{Capponi, Corell, and Stiglitz}{Capponi
  et~al.}{2022}]{Capponi-Corell-Stiglitz-2022}
Capponi, A., F.~Corell, and J.~E. Stiglitz (2022).
\newblock Optimal bailouts and the doom loop with a financial network.
\newblock {\em Journal of Monetary Economics\/}~{\em 128}, 35--50.

\bibitem[\protect\citeauthoryear{Craig and Ma}{Craig and
  Ma}{2022}]{Craig-Ma-2021}
Craig, B. and Y.~Ma (2022).
\newblock Intermediation in the interbank lending market.
\newblock {\em Journal of Financial Economics\/}~{\em 145\/}(2), 179--207.

\bibitem[\protect\citeauthoryear{Craig and Von~Peter}{Craig and
  Von~Peter}{2014}]{Craig-vonPeter-2014}
Craig, B. and G.~Von~Peter (2014).
\newblock Interbank tiering and money center banks.
\newblock {\em Journal of Financial Intermediation\/}~{\em 23\/}(3), 322--347.

\bibitem[\protect\citeauthoryear{Cs{\'o}ka and Herings}{Cs{\'o}ka and
  Herings}{2021}]{csoka2021axiomatization}
Cs{\'o}ka, P. and P.~J.-J. Herings (2021).
\newblock An axiomatization of the proportional rule in financial networks.
\newblock {\em Management Science\/}~{\em 67\/}(5), 2799--2812.

\bibitem[\protect\citeauthoryear{Dantzig}{Dantzig}{1957}]{dantzig1957discrete}
Dantzig, G.~B. (1957).
\newblock Discrete-variable extremum problems.
\newblock {\em Operations Research\/}~{\em 5\/}(2), 266--288.

\bibitem[\protect\citeauthoryear{DeMarzo and Fishman}{DeMarzo and
  Fishman}{2007}]{DeMarzo-Fishman-2007b}
DeMarzo, P.~M. and M.~J. Fishman (2007).
\newblock Optimal long-term financial contracting.
\newblock {\em Review of Financial Studies\/}~{\em 20\/}(6), 2079--2128.

\bibitem[\protect\citeauthoryear{Diamond}{Diamond}{1993}]{Diamond-1993}
Diamond, D.~W. (1993).
\newblock {Seniority and maturity of debt contracts}.
\newblock {\em Journal of Financial Economics\/}~{\em 33\/}(3), 341--368.

\bibitem[\protect\citeauthoryear{Donaldson, Gromb, and Piacentino}{Donaldson
  et~al.}{2020}]{Paradox}
Donaldson, J.~R., D.~Gromb, and G.~Piacentino (2020).
\newblock The paradox of pledgeability.
\newblock {\em Journal of Financial Economics\/}~{\em 137\/}(3), 591--605.

\bibitem[\protect\citeauthoryear{Donaldson, Gromb, and Piacentino}{Donaldson
  et~al.}{2021}]{Reallocation}
Donaldson, J.~R., D.~Gromb, and G.~Piacentino (2021).
\newblock Collateral reallocation.
\newblock Working paper, Washington University in St. Louis.

\bibitem[\protect\citeauthoryear{Donaldson, Gromb, and Piacentino}{Donaldson
  et~al.}{2025}]{Donaldson-Gromb-Piacentino-2025}
Donaldson, J.~R., D.~Gromb, and G.~Piacentino (2025).
\newblock Conflicting priorities: A theory of covenants and collateral.
\newblock {\em The Journal of Finance\/}~{\em 80\/}(3), 1739--1768.

\bibitem[\protect\citeauthoryear{Donaldson and Micheler}{Donaldson and
  Micheler}{2018}]{donaldson2018resaleable}
Donaldson, J.~R. and E.~Micheler (2018).
\newblock Resaleable debt and systemic risk.
\newblock {\em Journal of Financial Economics\/}~{\em 127\/}(3), 485--504.

\bibitem[\protect\citeauthoryear{Donaldson and Piacentino}{Donaldson and
  Piacentino}{2018}]{Netting}
Donaldson, J.~R. and G.~Piacentino (2018).
\newblock Netting.
\newblock Working paper, Washington University in St. Louis.

\bibitem[\protect\citeauthoryear{Donaldson and Piacentino}{Donaldson and
  Piacentino}{2026}]{Collateral}
Donaldson, J.~R. and G.~Piacentino (2026).
\newblock Collateral.
\newblock {\em Annual Review of Financial Economics\/}.

\bibitem[\protect\citeauthoryear{Dubey, Geanakoplos, and Shubik}{Dubey
  et~al.}{1988}]{Dubey-et-al-1988}
Dubey, P., J.~Geanakoplos, and M.~Shubik (1988).
\newblock Default and efficiency in a general equilibrium model with incomplete
  markets.
\newblock Cowles Foundation Discussion Papers 879R, Cowles Foundation for
  Research in Economics, Yale University.

\bibitem[\protect\citeauthoryear{Duffie and Skeel}{Duffie and
  Skeel}{2012}]{Duffie-Skeel-2012}
Duffie, D. and D.~Skeel (2012).
\newblock A dialogue on the costs and benefits of automatic stays for
  derivatives and repurchase agreements.
\newblock In {\em Bankruptcy Not Bailout}. Hoover Institution, Stanford.

\bibitem[\protect\citeauthoryear{Eisenberg and Noe}{Eisenberg and
  Noe}{2001}]{Eisenberg-Noe-2001}
Eisenberg, L. and T.~Noe (2001).
\newblock Systemic risk in financial systems.
\newblock {\em Management Science\/}~{\em 47\/}(2), 236–249.

\bibitem[\protect\citeauthoryear{Elliott, Golub, and Jackson}{Elliott
  et~al.}{2014}]{ELLI/GOLU/JACK/14}
Elliott, M., B.~Golub, and M.~Jackson (2014).
\newblock Financial networks and contagion.
\newblock {\em American Economic Review\/}~{\em 104\/}(10), 3115--3153.

\bibitem[\protect\citeauthoryear{Erol}{Erol}{2019}]{Erol-2019}
Erol, S. (2019).
\newblock Network hazard and bailouts.
\newblock Working paper, Carnegie Mellon University.

\bibitem[\protect\citeauthoryear{Farboodi}{Farboodi}{2023}]{Farboodi-2021}
Farboodi, M. (2023).
\newblock Intermediation and voluntary exposure to counterparty risk.
\newblock {\em Journal of Political Economy\/}~{\em 131\/}(12), 3267--3309.

\bibitem[\protect\citeauthoryear{Farboodi, Jarosch, and Shimer}{Farboodi
  et~al.}{2017}]{Farboodi-et-al-2017}
Farboodi, M., G.~Jarosch, and R.~Shimer (2017).
\newblock The emergence of market structure.
\newblock {NBER} Working Paper 23234, National Bureau of Economic Research.

\bibitem[\protect\citeauthoryear{Flannery}{Flannery}{2014}]{Flannery-2014}
Flannery, M.~J. (2014).
\newblock Contingent capital instruments for large financial institutions: A
  review of the literature.
\newblock {\em Annual Review of Financial Economics\/}~{\em 6}, 225--240.

\bibitem[\protect\citeauthoryear{Gabrieli and Georg}{Gabrieli and
  Georg}{2014}]{gabrieli2014network}
Gabrieli, S. and C.-P. Georg (2014).
\newblock A network view on interbank market freezes.
\newblock Bundesbank Discussion Paper 44/2014, Deutsche Bundesbank.

\bibitem[\protect\citeauthoryear{Gale}{Gale}{1990}]{Gale-1990}
Gale, D. (1990).
\newblock The efficient design of public debt.
\newblock In R.~Dornbusch and M.~Draghi (Eds.), {\em Public Debt Management:
  Theory and History}, pp.\  14–47. Cambridge University Press.

\bibitem[\protect\citeauthoryear{Glasserman and Young}{Glasserman and
  Young}{2016}]{glasserman2016contagion}
Glasserman, P. and H.~P. Young (2016).
\newblock Contagion in financial networks.
\newblock {\em Journal of Economic Literature\/}~{\em 54\/}(3), 779--831.

\bibitem[\protect\citeauthoryear{Goyal}{Goyal}{2005}]{Goyal-2005}
Goyal, V.~K. (2005).
\newblock Market discipline of bank risk: Evidence from subordinated debt
  contracts.
\newblock {\em Journal of Financial Intermediation\/}~{\em 14\/}(3), 318--350.

\bibitem[\protect\citeauthoryear{Hart and Moore}{Hart and
  Moore}{1995}]{Hart-Moore-1995}
Hart, O. and J.~Moore (1995).
\newblock {Debt and Seniority: An Analysis of the Role of Hard Claims in
  Constraining Management}.
\newblock {\em American Economic Review\/}~{\em 85\/}(3), 567--85.

\bibitem[\protect\citeauthoryear{He and Li}{He and Li}{2022}]{He-Li-2022}
He, Z. and J.~Li (2022).
\newblock Intermediation via {{Credit Chains}}.
\newblock NBER Working Paper 29632, National Bureau of Economic Research.

\bibitem[\protect\citeauthoryear{Holmstr\"{o}m and Tirole}{Holmstr\"{o}m and
  Tirole}{1998}]{Holmstrom-Tirole-1998}
Holmstr\"{o}m, B. and J.~Tirole (1998).
\newblock Private and public supply of liquidity.
\newblock {\em Journal of Political Economy\/}~{\em 106\/}(1), 1--40.

\bibitem[\protect\citeauthoryear{Holmstr\"{o}m and Tirole}{Holmstr\"{o}m and
  Tirole}{2011}]{Holmstrom-Tirole-2011}
Holmstr\"{o}m, B. and J.~Tirole (2011).
\newblock {\em Inside and Outside Liquidity}.
\newblock The MIT Press.

\bibitem[\protect\citeauthoryear{Jackson and Pernoud}{Jackson and
  Pernoud}{2021}]{jackson2021systemic}
Jackson, M.~O. and A.~Pernoud (2021).
\newblock Systemic risk in financial networks: A survey.
\newblock {\em Annual Review of Economics\/}~{\em 13}, 171--202.

\bibitem[\protect\citeauthoryear{Jackson and Pernoud}{Jackson and
  Pernoud}{2024}]{Jackson-Pernoud-2024}
Jackson, M.~O. and A.~Pernoud (2024).
\newblock Credit freezes, equilibrium multiplicity, and optimal bailouts in
  financial networks.
\newblock {\em The Review of Financial Studies\/}~{\em 37\/}(7), 2017--2062.

\bibitem[\protect\citeauthoryear{Jorion}{Jorion}{2000}]{Jorion-2000}
Jorion, P. (2000).
\newblock {Risk management lessons from Long-Term Capital Management}.
\newblock {\em European Financial Management\/}~{\em 6\/}(3), 277--300.

\bibitem[\protect\citeauthoryear{Kanik}{Kanik}{2020}]{Kanik-2020}
Kanik, Z. (2020).
\newblock From {Lombard} {Street} to {Wall} {Street}: Systemic risk, rescues,
  and stability in financial networks.
\newblock Working paper, University of Glasgow.

\bibitem[\protect\citeauthoryear{Kanik}{Kanik}{2022}]{Kanik-2022}
Kanik, Z. (2022).
\newblock A new era for financial networks: Mandatory bail-ins.
\newblock Working paper, University of Glasgow.

\bibitem[\protect\citeauthoryear{Kuo, Skeie, Vickery, and Youle}{Kuo
  et~al.}{2014}]{Kuo-et-al-2014}
Kuo, D., D.~Skeie, J.~Vickery, and T.~Youle (2014).
\newblock Identifying term interbank loans from {Fedwire} payments data.
\newblock Staff Report 603, Federal Reserve Bank of New York.

\bibitem[\protect\citeauthoryear{Kusnetsov and Veraart}{Kusnetsov and
  Veraart}{2019}]{Kusnetsov-Veraart-2018}
Kusnetsov, M. and L.~Veraart (2019).
\newblock Interbank clearing in financial networks with multiple maturities.
\newblock {\em {SIAM} Journal on Financial Mathematics\/}~{\em 10\/}(1),
  37--67.

\bibitem[\protect\citeauthoryear{Leitner}{Leitner}{2005}]{Leitner-2005}
Leitner, Y. (2005).
\newblock Financial networks: Contagion, commitment, and private sector
  bailouts.
\newblock {\em The Journal of Finance\/}~{\em 60\/}(6), 2925--2953.

\bibitem[\protect\citeauthoryear{Martello and Toth}{Martello and
  Toth}{1990}]{Martello-Toth-1990}
Martello, S. and P.~Toth (1990).
\newblock {\em Knapsack Problems: Algorithms and Computer Implementations}.
\newblock John Wiley \& Sons, Inc.

\bibitem[\protect\citeauthoryear{Perotti and Spier}{Perotti and
  Spier}{1993}]{Perotti-Spier-1993}
Perotti, E. and K.~Spier (1993).
\newblock Capital structure as a bargaining tool: The role of leverage in
  contract renegotiation.
\newblock {\em American Economic Review\/}~{\em 83\/}(5), 1131--41.

\bibitem[\protect\citeauthoryear{Plemmons}{Plemmons}{1977}]{Plemmons-1977}
Plemmons, R. (1977).
\newblock {$M$-matrix characterizations. I---Nonsingular $M$-matrices}.
\newblock {\em Linear Algebra and its Applications\/}~{\em 18\/}(2), 175--188.

\bibitem[\protect\citeauthoryear{Rogers and Veraart}{Rogers and
  Veraart}{2013}]{Rogers-Veraart-2013}
Rogers, L. C.~G. and L.~A.~M. Veraart (2013).
\newblock Failure and rescue in an interbank network.
\newblock {\em Management Science\/}~{\em 59\/}(4), 882--898.

\bibitem[\protect\citeauthoryear{Rose, Bergstresser, and Lane}{Rose
  et~al.}{2009}]{BearCase}
Rose, C.~S., D.~B. Bergstresser, and D.~Lane (2009).
\newblock The tip of the iceberg: {JP Morgan Chase and Bear Stearns (A)}.

\bibitem[\protect\citeauthoryear{Roukny, Battiston, and Stiglitz}{Roukny
  et~al.}{2018}]{Roukny-Battiston-Stiglitz-2018}
Roukny, T., S.~Battiston, and J.~E. Stiglitz (2018).
\newblock Interconnectedness as a source of uncertainty in systemic risk.
\newblock {\em Journal of Financial Stability\/}~{\em 35}, 93--106.

\bibitem[\protect\citeauthoryear{Schwartz}{Schwartz}{1989}]{Schwartz-1989}
Schwartz, A. (1989).
\newblock A theory of loan priorities.
\newblock {\em The Journal of Legal Studies\/}~{\em 18\/}(2), 209–261.

\bibitem[\protect\citeauthoryear{Shu}{Shu}{2025}]{Shu-2025}
Shu, C. (2025).
\newblock Endogenous risk exposure and systemic instability.
\newblock {\em Management Science\/}~{\em 71\/}(7), 5511--5528.

\bibitem[\protect\citeauthoryear{Stulz and Johnson}{Stulz and
  Johnson}{1985}]{Stulz-Johnson-1985}
Stulz, R.~M. and H.~Johnson (1985).
\newblock {An analysis of secured debt}.
\newblock {\em Journal of Financial Economics\/}~{\em 14\/}(4), 501--521.

\bibitem[\protect\citeauthoryear{Sundaresan and Wang}{Sundaresan and
  Wang}{2015}]{Sundaresan-Wang-2015}
Sundaresan, S.~M. and Z.~Wang (2015).
\newblock On the design of contingent capital with a market trigger.
\newblock {\em The Journal of Finance\/}~{\em 70\/}(2), 881--920.

\bibitem[\protect\citeauthoryear{Upper and Worms}{Upper and
  Worms}{2004}]{upperEstimatingBilateralExposures2004}
Upper, C. and A.~Worms (2004).
\newblock Estimating bilateral exposures in the {German} interbank market: Is
  there a danger of contagion?
\newblock {\em European Economic Review\/}~{\em 48\/}(4), 827--849.

\bibitem[\protect\citeauthoryear{Valukas}{Valukas}{2010}]{LehmanReport}
Valukas, A.~R. (2010).
\newblock {Lehman Brothers Holdings Inc.} ch.\ 11 proceedings.

\bibitem[\protect\citeauthoryear{Zame}{Zame}{1993}]{Zame-1993}
Zame, W.~R. (1993).
\newblock Efficiency and the role of default when security markets are
  incomplete.
\newblock {\em American Economic Review\/}~{\em 83\/}(5), 1142--1164.

\end{thebibliography}

\end{document}